\includeonly{
  HomotopyTheory,
  UnstableOrientations,
  TopologicalPhases, 
  Conclusion
}

\documentclass{amsart}

\usepackage{caption}
\captionsetup{
  font=small
}

    \usepackage{enumitem}
    \setlist[enumerate,1]{label=\bf  (\roman*)}
    
\usepackage{multirow}
\usepackage{hhline}

\usepackage{tabularray}
\UseTblrLibrary{booktabs}

\usepackage{adjustbox}
\usepackage{tcolorbox}
\newtcolorbox{standout}{
  colback=gray!15,
  boxrule=0pt,
  left=.3cm,
  right=.3cm,
  top=.18cm,
  bottom=.18cm,
  boxsep=0pt
}

\usepackage{mathtools}
\usepackage{amssymb}
\usepackage{amsthm}
\usepackage{tensor}
\usepackage{stmaryrd} 
\usepackage[safe]{tipa} 

\theoremstyle{plain}
\newtheorem{theorem}{Theorem}[section]
\newtheorem{lemma}[theorem]{Lemma}
\newtheorem{proposition}[theorem]{Proposition}
\newtheorem{corollary}[theorem]{Corollary}
\theoremstyle{definition}
\newtheorem{definition}[theorem]{Definition}
\newtheorem{example}[theorem]{Example}
\newtheorem{notation}[theorem]{Notation}
\newtheorem{fact}[theorem]{Fact}
\theoremstyle{remark}
\newtheorem{remark}[theorem]{Remark}
%
%

\usepackage[
  colorlinks=true,
  linkcolor=darkgreen,
  citecolor=darkgreen,
  urlcolor=darkgreen
]{hyperref}

\usepackage{cleveref}
\crefname{equation}{}{}
\crefname{section}{\S}{\S\S}
\crefname{subsection}{\S}{\S\S}
\crefname{subsubsection}{\S}{\S\S}
\crefname{definition}{Def.}{Defs.}
\crefname{theorem}{Thm.}{Thms.}
\crefname{corollary}{Cor.}{Cors.}
\crefname{lemma}{Lem.}{Lems.}
\crefname{proposition}{Prop.}{Props.}
\crefname{remark}{Rem.}{Rems.}
\crefname{notation}{Ntn.}{Ntns.}
\crefname{fact}{Fact}{Fact}
\crefname{example}{Ex.}{Exs.}
\crefname{figure}{Fig.}{Figs.}
\crefname{table}{Tab.}{Tabs.}

\usepackage[
  backend=biber,
  style=alphabetic,
  backref=true
]{biblatex}
\addbibresource{refs.bib}

\usepackage{tikz}
\usetikzlibrary{
  cd,
  calc,
  arrows.meta,
  backgrounds,
  decorations,
  decorations.pathmorphing, 
}

\tikzcdset{
  arrow style=tikz, 
  diagrams={
    trim left=
    {([xshift=5pt]current bounding box.west)},
    trim right=
    {([xshift=-5pt]current bounding box.east)},
    %
    baseline=
    {([yshift=-3pt]current bounding box.center)},
    %
    >={
      Computer Modern Rightarrow[
        length=4pt, width=4pt
      ]
    },
    %
    hook/.style={
      {Hooks[right, length=2pt, width=8pt]}->
    },
    hook'/.style={
      {Hooks[left, length=2pt, width=8pt]}->
    },
    %
    shorten=0pt,
  }
}

\definecolor{darkblue}{rgb}{0.05,0.25,0.65}
\definecolor{darkgreen}{RGB}{20,140,10}
\definecolor{lightgray}{rgb}{0.9,0.9,0.9}
\definecolor{darkorange}{RGB}{200,100,5}
\definecolor{darkyellow}{rgb}{.91,.91,0}


\thickmuskip=8mu plus 3mu

\let\originalsslash\sslash
\renewcommand{\sslash}{\mathord{\originalsslash}}

\setlength{\arraycolsep}{3pt}


\renewcommand{\setminus}{-}

\newcommand{\backsslash}{\backslash\mspace{-6.5mu}\backslash}

\tikzset{
  snake left/.style={
    rounded corners,
    to path={
      let \p1 = (\tikztostart.east),
          \p2 = (\tikztotarget.west),
          \p3 = ($(\p1)!0.5!(\p2)$),
          \n1 = {8pt} 
      in
      (\p1)
      -- (\x1 + \n1, \y1)
      -- (\x1 + \n1, \y3)
      -- (\x2 - \n1, \y3) \tikztonodes
      -- (\x2 - \n1, \y2)
      -- (\p2)
    }
  }
}

\tikzset{
  uphordown/.style={
    rounded corners,
    to path={
      let \p1 = (\tikztostart.north),
          \p2 = (\tikztotarget.north),
          \n1 = {max(\y1,\y2) + 8pt}
      in
      (\p1)
      -- (\x1, \n1)
      -- (\x2, \n1) \tikztonodes 
      -- (\p2)
    }
  }
}

\tikzset{
  downhorup/.style={
    rounded corners,
    to path={
      let \p1 = (\tikztostart.south),
          \p2 = (\tikztotarget.south),
          \n1 = {min(\y1,\y2) - 8pt}
      in
      (\p1)
      -- (\x1, \n1)
      -- (\x2, \n1) \tikztonodes 
      -- (\p2)
    }
  }
}

\tikzset{
  rightvertleft/.style={
    rounded corners,
    to path={
      let \p1 = (\tikztostart.east),
          \p2 = (\tikztotarget.east),
          \n1 = {max(\x1,\x2) + 8pt}
      in
      (\p1)
      -- (\n1, \y1)
      -- (\n1, \y2) \tikztonodes 
      -- (\p2)
    }
  }
}

\tikzset{
  leftvertright/.style={
    rounded corners,
    to path={
      let \p1 = (\tikztostart.west),
          \p2 = (\tikztotarget.west),
          \n1 = {min(\x1,\x2) - 8pt}
      in
      (\p1)
      -- (\n1, \y1)
      -- (\n1, \y2) \tikztonodes 
      -- (\p2)
    }
  }
}

\newcommand{\DTopology}[1]{{#1}_D}

\newcommand{\shape}{%
  \hspace{.7pt}%
  \raisebox{0.8pt}{\rm\normalfont\textesh}%
  \hspace{1pt}%
}

\newcommand{\defneq}{\equiv}

\newcommand{\hotype}[1]{\mathcal{#1}}

\newcommand{\cpt}{\hspace{.8pt}{\adjustbox{scale={.5}{.77}}{$\cup$} \{\infty\}}}

\newcommand{\grayunderbrace}[2]{\mathcolor{gray}{\underbrace{\mathcolor{black}{#1}}}_{\mathcolor{gray}{#2}}}

\newcommand{\grayoverbrace}[2]{\mathcolor{gray}{\overbrace{\mathcolor{black}{#1}}}^{\mathcolor{gray}{#2}}}

\newcommand{\RealHopfFibration}{p_{{}_{\hspace{-.5pt}\mathbb{R}}}}
\newcommand{\ComplexHopfFibration}{p_{{}_{\hspace{-.5pt}\mathbb{C}}}}
\newcommand{\QuaternionicHopfFibration}{p_{{}_{\hspace{-.5pt}\mathbb{H}}}}

\newcommand{\TwistorFibration}{t_{{}_{\hspace{-.5pt}\mathbb{H}}}}

\newcommand{\FactoredCHopfFibration}{t_{{}_{\hspace{-.5pt}\mathbb{C}}}}

\newcommand{\EUnit}{1^{\!E}}

\newcommand{\acts}{%
  \hspace{1.3pt}%
  \raisebox{1.2pt}{%
    \rotatebox[origin=c]{90}{$%
      \curvearrowright%
    $}%
  }%
  \hspace{.7pt}%
}

\newcommand{\UnitaryGroup}{%
  \mathrm{U}%
}

\newcommand{\OrthogonalGroup}{%
  \mathrm{O}%
}

\newcommand{\HilbertSpace}{%
  \mathcal{H}%
}

\newcommand{\GradedHilbertSpace}{%
  \HilbertSpace_{\mathrm{gr}}%
}

\newcommand{\HilbertSpaces}{%
  \mathrm{Hilb}%
}

\newcommand{\BoundedOperators}{%
  \mathcal{B}%
}

\newcommand{\CompactOperators}{%
  \mathcal{K}%
}

\newcommand{\FredholmOperator}{%
  F%
}

\newcommand{\FredholmOperators}{%
  \mathrm{Fred}%
}

\newcommand{\GradedFredholmOperators}{%
  \FredholmOperators_{\mathrm{gr}}%
}

\newcommand{\GradedVectorSpaces}{%
  \mathbb{C}\mathrm{Vec}^{\mathrm{gr}}%
}

\newcommand{\UnitaryOperator}{%
  U%
}

\newcommand{\UH}{%
  \UnitaryGroup(\HilbertSpace)%
}

\newcommand{\antiUH}{%
  \UnitaryGroup_{\mathrm{anti}}%
  (\HilbertSpace)%
}

\newcommand{\GradedUH}{%
  \UnitaryGroup_{\mathrm{gr}}(\GradedHilbertSpace)%
}

\newcommand{\OH}{%
  \OrthogonalGroup\big(%
    \HilbertSpace
      ^{\mathbb{Z}^{\tSymmetry}_2}
  \big)%
}

\newcommand{\PUH}{%
  \mathrm{P}\UH%
}

\newcommand{\antiPUH}{%
  \mathrm{P}\antiUH%
}

\newcommand{\QuantumSymmetries}{%
  \mathrm{QS}%
}

\newcommand{\POH}{%
  \mathrm{P}\OH%
}

\newcommand{\GradedPUH}{%
  \mathrm{P}\UnitaryGroup_{\mathrm{gr}}(\GradedHilbertSpace)%
}

\newcommand{\CliffordElement}{\gamma}

\newcommand{\pSymmetry}{(p)}
\newcommand{\cSymmetry}{(c)}
\newcommand{\tSymmetry}{(t)}
\newcommand{\pOperator}{P}
\newcommand{\cOperator}{C}
\newcommand{\tOperator}{T}

\newcommand{\ComplexConjugation}[1]
  { #1^\ast }


\begin{document}

\title
[Orientations of Orbi-K-Theory Measuring Charges]
{Orientations of Orbi-K-Theory measuring 
Topological Phases and Brane Charges}

\thanks{\emph{Funding} by Tamkeen UAE under the 
NYU Abu Dhabi Research Institute grant {\tt CG008}.}

\author{Hisham Sati}
\address{Mathematics Program and Center for Quantum and Topological Systems, New York University Abu Dhabi}
\curraddr{}
\email{hsati@nyu.edu}
\thanks{}

\author{Urs Schreiber           }
\address{Mathematics Program and Center for Quantum and Topological Systems, New York University Abu Dhabi}
\curraddr{}
\email{us13@nyu.edu}
\thanks{}

\subjclass[2020]{
Primary:
55N32, 
55Q55, 
55N15, 
19L47, 
55R37, 
55S05, 
Secondary:
55R10, 
81V70, 
19L64, 
19L50, 
}

\keywords{
topological phases of matter,
topological charges,
nodal lines, probe branes,
equivariant cohomology,
orbifold cohomology, 
complex-oriented cohomology,
fragile topological phases,
nonabelian generalized cohomology,
Cohomotopy,
topological K-theory
}

\date{\today}

\dedicatory{
  \href{https://ncatlab.org/nlab/show/Center+for+Quantum+and+Topological+Systems}{\includegraphics[width=3.1cm]{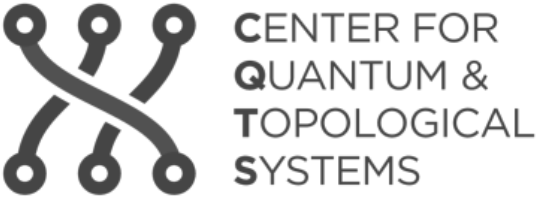}}
}

\begin{abstract}
Topological Chern phases of quantum materials, as well as brane charges on M-theory orbifolds, have famously been argued to be classified by (orbi) topological K-theory, or possibly by other stable and, notably, complex-oriented cohomology theories, such as elliptic cohomology or Morava K-theory.

However, closer inspection reveals that the most fine-grained “fragile” microscopic classification in both cases is in (orbi) Cohomotopy, which is the primordial “unstable” or nonabelian generalized cohomology. Coarsening takes the latter (fragile) to the former (stable) cohomology along nonabelian cohomology operations. But what then is the role of complex orientation on the stable side?

We observe here (i) that over gapped nodal lines in the 2D Brillouin torus and on probe M5-branes in 11D spacetime, the cohomotopical phases/charges lift through the complex/quaternionic Hopf fibration, and (ii) that measuring this fragile situation in stable cohomology means equivalently to ask for universal complex/quaternionic orientation on stable cohomology in four/ten dimensions!

Then we give an explicit realization of such unstable four/ten-dimensional complex/quaternionic orientation in $\mathrm{U}(2)/\mathrm{Sp}(2)$-equivariant K-theory, using real division-algebraic tools within a new model of twisted orbifold K-theory in cohesive homotopy theory; and we explain this as an extraordinary character map from orbi Cohomotopy-twisted Cohomotopy to relative orbi K-theory.

Finally, we discuss an application to the classification of 2-band crystalline topological insulator phases 
sensitive to the topology in the gapping process of their nodal line semimetal parent phase, and to the measurement of M-string charges inside M5-brane probes.
\end{abstract}

\maketitle

\setcounter{tocdepth}{3}
\tableofcontents

\newpage

\textbf{Outline:}
\begin{itemize}
  \item[\cref{OverviewChargesInCohomology}:]
  Bird's eye view on the role of twisted relative nonabelian cohomology, specifically of Cohomotopy, in classifying fragile crystalline Chern phases and microscopic M-brane charges, and of how their measurement in abelian cohomology theories corresponds to
  universal low-dimensional orientations. 

  \item[\cref{SomeCohesiveHomotopyTheory}:] Streamlined introduction to topological groupoids/ stacks, culminating in a general definition of (generalized nonabelian) twisted orbifold cohomology. 

  \item[\cref{OrientationsInOrbiKTheory}:] Based on this, a slick construction of twisted orbi-orientifold K-theory and of its four/ten-dimensional $\mathbb{C}$/$\mathbb{H}$-orientation in the $\mathrm{KU}$-sector.

  \item[\cref{FragilePhasesAndMicroscopicCharges}:] Novel applications, using this machinery, to the classification of gappings of nodal line semimetal phases to Chern topological insulators, and to the measurement of M-string charges on M5-brane worldvolumes.
\end{itemize}



\section{Overview: Charges in Nonabelian Cohomology}
\label{OverviewChargesInCohomology}

\subsection{Cohomology in Quantum Systems}
\label{CohomologyAndPhysics}

In algebraic topology and homotopy theory, \emph{cohomology} is, quite generally, about \emph{deformation classes} of structures fibered over spaces \parencites{nLab:AlgebraicTopology}[\S I]{FSS23-Char}{SS25-Bun}{SS26-Orb}. Our central perspective is that this is ultimately encoded via \emph{classifying maps} (cf. \cref{TableOfNotionsOfCohomology} below) in the notion of \emph{homotopy} (recalled in \cref{OnHomotopy}), namely of continuous deformations between continuous maps $f,g :  X \to \hotype{A}$, 
making them have the same \emph{homotopy class} $\pi_0(-)$:
\begin{equation}
  \label{Homotopy}
  \begin{tikzcd}
    X
    \ar[
      rr,
      bend left=30,
      "{
        f
      }",
      "{\ }"{swap, name=s}
    ]
    \ar[
      rr,
      bend right=30,
      "{
        g
      }"{swap},
      "{\ }"{name=t}
    ]
    \ar[
      from=s,
      to=t,
      dashed,
      Rightarrow
    ]
    &&
    Y
  \end{tikzcd}
  \;\;\;\;\;
  \mbox{exists iff}
  \;\;\;\;\;
  [f] = [g]
  \;\in\;
  \pi_0\, \mathrm{Map}(X,Y)
  \,.
\end{equation}

In physics, cohomology describes global dynamical invariants of quantum systems \cite{nlab:TopologicalPhasesOfMatter}: Strongly interacting quantum materials in their ground states may occupy globally non-trivial configurations classified by cohomology classes of the space occupied by the sample (or dually of its Brillouin space of crystal momenta). Moreover, in the ``geometric engineering'' of such quantum systems on higher dimensional gravitating ``branes'' \parencites{nLab:GeometricEngineering}{SS25-Srni}{GSS25-Embedding},  older arguments suggest that some form of generalized cohomology measures the charges of higher gauge field fluxes sourced by such branes (cf. \cref{ChoicesOfChargeCohomology} for more). 

\smallskip 
These suggestions follow up on the classical observation, going back to a famous insight by Dirac from almost a century ago, that the totality of ordinary magnetic flux through surfaces is classified (in a modern parlance) by ordinary integral cohomology \parencites{Alvarez1985}[\S16.4e]{Frankel2011}[\S 2.1]{SS25-Flux}. 

\smallskip 
It was a major development (first in brane physics \cite{nlab:DBraneChargeInKTheory}, then in topological quantum materials \cite{nlab:KClassificationOfTopologicalPhases}) to realize that more general topological charges may plausibly be classified more accurately in ``extraordinary'' or ``generalized'' cohomology theories (in a sense going back to Whitehead, cf. \cite{nLab:WhiteheadGeneralizedCohomology}), more fine-grained than ordinary cohomology --- such as notably in the famous example of topological ``K-cohomology theory'' (traditionally just called topological \emph{K-theory}), cf. \cite{nLab:TopologicalKTheory}. 

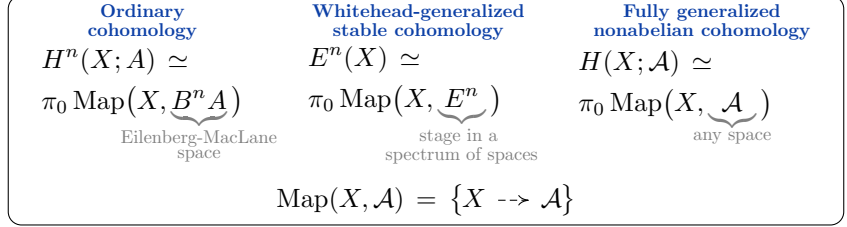
\begin{figure}[htb]
\caption{
  \label{TableOfNotionsOfCohomology}
  The unifying perspective on general \emph{cohomology} is as homotopy classes $\pi_0(-)$ of maps $\mathrm{Map}(-,-)$ into a \emph{classifying space} $\hotype{A}$.
  \protect\footnotemark
}

\centering

\adjustbox{
  rndfbox=5pt
}{
$
\begin{array}{c}
\begin{array}{ccc}
  \substack{
    \scalebox{.7}{\bf\color{darkblue}Ordinary}
    \\
    \scalebox{.7}{\bf\color{darkblue}cohomology}
  }
  &
  \qquad 
  \substack{
    \scalebox{.7}{\bf\color{darkblue}Whitehead-generalized}
    \\
    \scalebox{.7}{\bf\color{darkblue}stable cohomology}
  }
  &
  \qquad 
  \substack{
    \scalebox{.7}{\bf\color{darkblue}Fully generalized}
    \\
    \scalebox{.7}{\bf\color{darkblue}nonabelian cohomology}
  }
  \\[+3pt]
  \begin{aligned}
  &H^n(
    X;
    A
  )
  \simeq
  \\
  &
  \pi_0 \, \mathrm{Map}\big(
    X,
    \grayunderbrace{
      B^n A
    }{
      \mathclap{
        \substack{
          \scalebox{.7}{Eilenberg-MacLane}
          \\
          \scalebox{.7}{space}
        }
      }
    }
  \big)
  \end{aligned}
  &
  \begin{aligned}
  & \quad 
  E^n(
    X
  )
  \simeq
  \\
  &
  \quad \pi_0 \, \mathrm{Map}\big(
    X,
    \grayunderbrace{
      E^n
    }{
      \mathclap{
      \substack{
        \scalebox{.7}{stage in a }
        \\
        \scalebox{.7}{spectrum of spaces}
      }
      }
    }
  \big)
  \end{aligned}
  &
  \begin{aligned}
  &H(
    X
    ;
    \hotype{A}
  )
  \simeq
  \\
  &
  \pi_0 \, \mathrm{Map}\big(
    X,
    \grayunderbrace{
      \hotype{A}
    }{
      \mathclap{
      \substack{
        \scalebox{.7}{any space}
        \\
        \phantom{\scalebox{.7}{A}}
      }
      }
    }
  \big)
  \end{aligned}
\end{array}
\\
\\[-7pt]
\mathrm{Map}(
  X, 
  \hotype{A}
)
=
\big\{
\begin{tikzcd}[sep=12pt]
  X 
  \ar[
    r,
    dashed
  ]
  &
  \hotype{A}
\end{tikzcd}
\big\}
\end{array}
$
}

\end{figure}

\footnotetext{
  More generally (in \cref{TableOfNotionsOfCohomology}), we are to consider \emph{pointed} spaces --- the base point of the domain $X$ regarded as its ``point at infinity'' and the base point of the coefficient space $\hotype{A}$ regarded as zero ---  and maps $c \in \mathrm{Map}^\ast(X,\hotype{A})$ that preserve these base-points --- thus literally implementing the condition that solitonic charges \emph{vanish at infinity}, $c(\infty) = 0$ (cf. \parencites[\S 2.2]{SS25-Flux}[\S A.2]{SS24-Obs}[Ntn. 3.3]{SS23-Mf}). We disregard this extra structure here just for brevity of the exposition; the pointed generalization of all statements is straightforward. 

  Similarly, another important generalization: For \emph{geometric} cohomology (such as {\'e}tale cohomology or differential cohomology) the  classifying space $\hotype{A}$ is generalized to a \emph{moduli stack} $\mathbf{A}$, cf. \cite{SS26-Orb}. This is particularly relevant for refining the discussion here from topological sectors to actual gauge field configurations (cf. \parencites[\S 3.3]{SS25-Flux}{SS24-Phase}), but again we disregard it here for brevity of exposition.
}

But there are yet more fine-grained \emph{unstable} generalized cohomology theories called  \emph{non-abelian cohomology} (\parencites[Def. 6.0.6]{Toen2002}[Def. 2.3]{Schreiber2009OWR}[Def. 6]{Lurie2014}, cf. \parencites[\S 1]{FSS23-Char}[\S 1]{SS25-TEC}), which properly capture also non-linear Gauss laws of flux densities \cite{SS25-Flux, SS24-Obs}. In full generality, all these notions of cohomology are neatly understood as being about homotopy classes
\cref{Homotopy} of classifying maps to some \emph{classifying space}, see \cref{TableOfNotionsOfCohomology}.

\begin{figure}[htb]
\caption{
  \label{NotionsOfTwistedRelativeCohomology}
  The general notion of \emph{twisted relative non-abelian cohomology}, in evident generalization of \cref{TableOfNotionsOfCohomology}, has cocycles given by squares of maps commuting up to specified homotopy.\protect\footnotemark
  \\
  On the right, the expression ``$(-)\sqcup_{(-)} (-)$'' denotes a homotopy pushout and ``$(-)\times_{(-)} (-)$'' a homotopy pullback.
  The ordinary notion of \emph{relative cohomology} is recovered for $\phi$ a cofibration and $\hotype{B} \defneq \ast$, whence $X \sqcup_{\Sigma} \hotype{B} \simeq X/\Sigma$. But in general the homotopy pushout is richer (cf. \cref{MeasuringRelativeChargesInQCohom}).
}
\centering

\adjustbox{
  rndfbox=5pt,
  scale=.94
}{
$
\;
\begin{aligned}
  \underset{
    \mathclap{
      \adjustbox{
        scale=.7,
        raise=-5pt
      }{
        \bf\color{darkblue}%
        \renewcommand{\arraystretch}{.9}
        \begin{tabular}{c}
          Twisted relative 
          \\
          cohomology
        \end{tabular}
      }
    }
  }{
  H(
    \phi; p
  )
  }
  &
  \defneq
  \pi_0
  \left\{
  \adjustbox{raise=2.5pt}{
  \begin{tikzcd}[
    sep=20pt,
  ]
    \Sigma
    \ar[
      dd,
      "{ 
        \phi 
      }"{swap}
    ]
    \ar[
      rr,
      dashed
    ]
    &&
    \hotype{B}
    \ar[
      dd,
      "{ 
        p 
      }"
    ]
    \ar[
      ddll,
      Rightarrow,
      dashed,
      shorten=13pt,
      "{ 
        \sim 
      }"{sloped, swap},
    ]
    \\
    \\
    X
    \ar[
      rr,
      dashed
    ]
    &&
    \hotype{A}
  \end{tikzcd}
  }
  \right\}
  \;\xleftrightarrow{}\hspace{5pt}
  \begin{tikzcd}[row sep=20pt, column sep=24pt]
    \substack{
      \scalebox{.7}{\color{darkblue} \bf Probe}
      \\
      \scalebox{.7}{\color{darkblue}\bf brane}
    }
    \ar[
      dd,
      "{
        \scalebox{.7}{\color{darkgreen}``embbeding}
      }"{sloped},
      "{
        \scalebox{.7}{\color{darkgreen}field''}
      }"{sloped, swap}
    ]
    \ar[
      rr,
      "{
        \substack{
          \scalebox{.7}{\color{darkgreen}charges}
          \\
          \scalebox{.7}{\color{darkgreen}on brane}
        }          
      }"
    ]
    &&
    \substack{
      \scalebox{.7}{\color{darkblue}\bf Brane}
      \\
      \scalebox{.7}{\color{darkblue}\bf coeffs}
    }    
    \ar[
      dd,
      "{
        \scalebox{.7}{\color{darkgreen}fibered over}
      }"{sloped, yshift=2pt}
    ]
    \\
    \\
    \substack{
      \scalebox{.7}{\color{darkblue}\bf Bulk}
      \\
      \scalebox{.7}{\color{darkblue}\bf space}
    }    
    \ar[
      rr,
      "{
        \substack{
          \scalebox{.7}{\color{darkgreen}charges}
          \\
          \scalebox{.7}{\color{darkgreen}in bulk}
        }          
      }"
    ]
    &&
    \substack{
      \scalebox{.7}{\color{darkblue}\bf Bulk}
      \\
      \scalebox{.7}{\color{darkblue}\bf coeffs}
    }    
  \end{tikzcd}
  \\
  \underset{
    \mathclap{
      \adjustbox{
        scale=.7,
        raise=-3pt,
      }{
        \bf\color{darkblue}%
        Relative cohomology
      }
    }
  }{
    H_{p_\ast\mathcolor{purple}\rho}\big(
      X,\Sigma; \hotype{A}
    \big)
  }
  &
  \defneq
  \pi_0
  \left\{
  \adjustbox{raise=2.5pt}{
  \begin{tikzcd}[
    sep=20pt,
  ]
    \Sigma
    \ar[
      dd,
      "{ 
        \phi 
      }"{swap}
    ]
    \ar[
      rr,
      "{
        \mathcolor{purple}{\rho}
      }"
    ]
    &&
    \hotype{B}
    \ar[
      dd,
      "{ 
        p 
      }"
    ]
    \ar[
      ddll,
      Rightarrow,
      dashed,
      shorten=13pt,
      "{ 
        \sim 
      }"{sloped, swap},
    ]
    \\
    \\
    X
    \ar[
      rr,
      dashed
    ]
    &&
    \hotype{A}
  \end{tikzcd}
  }
  \right\}
  \simeq
  \pi_0
  \left\{
  \adjustbox{raise=2.5pt}{
  \begin{tikzcd}[
    row sep=13pt,
    column sep=8
  ]
    \Sigma
    \ar[
      dd,
      "{ 
        \phi 
      }"{swap},
      "{\ }"{name=t}
    ]
    \ar[
      rr,
      "{
        \mathcolor{purple}{\rho}
      }",
      "{\ }"{name=s, swap}
    ]
    &&
    \hotype{B}
    \ar[
      dd,
      "{ 
        p 
      }"
    ]
    \ar[
      dl
    ]
    \\
    &
    X 
      \smash{\underset{\Sigma}{\sqcup}}
    \hotype{B}
    \ar[
      dr,
      dashed
    ]
    \\
    X
    \ar[
      ur
    ]
    \ar[
      rr,
      dashed
    ]
    &&
    \hotype{A}
  \end{tikzcd}
  }
  \right\}
  \\
  \underset{
    \mathclap{
      \adjustbox{
        scale=.7,
        raise=-3pt
      }{%
        \bf\color{darkblue}%
        Twisted cohomology
      }
    }
  }{
    H^{\phi^\ast\mathcolor{purple}\tau}\big(
      \Sigma; \hotype{B}, \hotype{A}
    \big)
  }
  &
  \defneq
  \pi_0
  \left\{
  \adjustbox{raise=2.5pt}{
  \begin{tikzcd}[
    sep=20pt,
  ]
    \Sigma
    \ar[
      dd,
      "{ 
        \phi 
      }"{swap}
    ]
    \ar[
      rr,
      dashed
    ]
    &&
    \hotype{B}
    \ar[
      dd,
      "{ 
        p 
      }"
    ]
    \ar[
      ddll,
      Rightarrow,
      dashed,
      shorten=13pt,
      "{ 
        \sim 
      }"{sloped, swap},
    ]
    \\
    \\
    X
    \ar[
      rr,
      "{ \mathcolor{purple}{\tau} }"
    ]
    &&
    \hotype{A}
  \end{tikzcd}
  }
  \right\}
  \simeq
  \pi_0
  \left\{
  \adjustbox{raise=2.5pt}{
  \begin{tikzcd}[
    row sep=13pt,
    column sep=8
  ]
    \Sigma
    \ar[
      dd,
      "{ 
        \phi 
      }"{swap},
      "{\ }"{name=t}
    ]
    \ar[
      rr,
      dashed,
      "{\ }"{name=s, swap}
    ]
    \ar[
      dr,
      dashed
    ]
    &&
    \hotype{B}
    \ar[
      dd,
      "{ 
        p 
      }"
    ]
    \ar[
      dl
    ]
    \\
    &
    X 
      \smash{\underset{\hotype{A}}{\times}}
    \hotype{B}
    \\
    X
    \ar[
      ur
    ]
    \ar[
      rr,
      "{
        \mathcolor{purple}{\tau}
      }"
    ]
    &&
    \hotype{A}
  \end{tikzcd}
  }
  \right\}
\end{aligned}
$
}

\end{figure}

\footnotetext{
  All 2-dimensional diagrams we show, here and in the following,  are filled by homotopies \cref{Homotopy}, but we display only some of these homotopies explicitly, for emphasis.
}

\subsection{Choices of Charge Cohomology}
\label{ChoicesOfChargeCohomology}

The mathematical situation in \cref{CohomologyAndPhysics} highlights a general question (largely open) in physics theory building: 
\begin{standout}
\textit{Which generalized (nonabelian) cohomology theories reflect the topological phases/charges of a given quantum system, both microscopically as well as at some level of coarse graining?}
\end{standout}

This is the question for the choice of \emph{flux quantization} \cite{SS25-Flux}. Here we discuss this question in parallel for three situations, shown in \cref{TheThreeTopics},  which are quite distinct at face value but turn out to be intimately related.

\begin{figure}[htb]
\captionsetup{width=.95\linewidth}
\caption{
  \label{TheThreeTopics}
  We discuss here the refined cohomological description \cite{SS25-WilsonLoops}/\parencites{FSS20-H,FSS21-Hopf} of microscopic charges in 5D/11D Supergravity with probe L1/M5-branes, minimally flux-quantized in Cohomotopy relatively twisted by the Hopf fibration \cref{TheHopfFibration}. 
  We find (cf. \cref{MeasuringRelativeChargesInQCohom}) that measurement of these unstable charges in a stable cohomology theory $E$ is equivalent to complex/quaternionic four/ten-dimensional orientation in $E$ -- such as exists in particular on complex K-theory, $E = \mathrm{KU}$ (discussed in \S\ref{OrientationsInOrbiKTheory}).
}
\centering
\adjustbox{
  rndfbox=5pt
}{
\begin{tabular}{@{}l|l|c@{}}
  {\bf Topological phases} of 
  & 
  {\bf Brane charges} on 
  &
  {\bf Brane charges} on 
  \\
  crystalline
  2-band insulators
  &
  $D$=5 SuGra orbifolds 
  &
  $D$=11 SuGra orbifolds
  \\
  from gapping nodal lines
  &
  with probe L1-branes
  &
  with probe M5-branes
  \\
\multicolumn{2}{c|}{
$
  \begin{tikzcd}[
    ampersand replacement=\&,
    sep=24pt
  ]
    \Sigma^1
    \ar[
      dd,
      "{ 
        \phi_{\mathrlap{\mathrm{L1}}} 
      }"{description},
      "{
        \scalebox{.7}{\color{darkgreen} 
          L1 in 5D SuGra
        }  
      }"{sloped, swap, yshift=-6pt}
    ]
    \ar[
      rr,
      dashed,
      "{ 
        H_1^\pi 
      }"
    ]
    \&\&
    S^3
    \ar[
      dd,
      "{
        \ComplexHopfFibration
      }"{description},
      "{
        \scalebox{.7}{\color{darkgreen} 
          \scalebox{1.35}{$\mathbb{C}$}-Hopf fib}
      }"{sloped, yshift=3pt}
    ]
    \\
    \\
    X^4
    \ar[
      rr,
      dashed,
      "{
        (F_2^\pi, F_3^\pi)
      }"
    ]
    \&\&
    S^2
  \end{tikzcd}
$
}
&
$
  \begin{tikzcd}[
    ampersand replacement=\&,
    sep=22pt
  ]
    \Sigma^5
    \ar[
      dd,
      "{ 
        \phi_{\mathrlap{\mathrm{M5}}} 
      }"{description},
      "{
        \scalebox{.7}{\color{darkgreen} 
          M5 in 11D SuGra
        }  
      }"{sloped, swap, yshift=-7pt}
    ]    
    \ar[
      rr,
      dashed,
      "{ 
        H_3^\pi 
      }"
    ]
    \&\&
    S^7
    \ar[
      dd,
      "{
        \QuaternionicHopfFibration
      }"{description},
      "{
        \scalebox{.7}{\color{darkgreen} 
          \scalebox{1.35}{$\mathbb{H}$}-Hopf fib}
      }"{sloped, yshift=3pt}
    ]
    \\
    \\
    X^{10}
    \ar[
      rr,
      dashed,
      "{
        (F_4^\pi, F_7^\pi)
      }"
    ]
    \&\&
    S^4
  \end{tikzcd}
$
\end{tabular}
}
\end{figure}
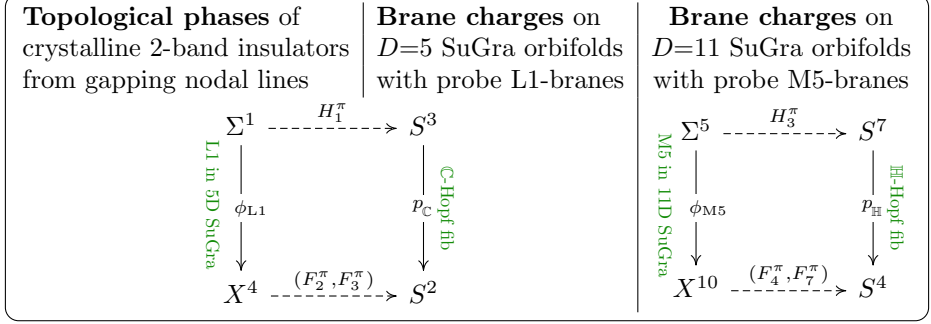

Traditionally, topological phases and brane charges have been conjectured to be classified by stable generalized cohomology theories (with classifying spaces given by stages of spectra $E$, cf. again \cref{TableOfNotionsOfCohomology}) such as: 
\begin{itemize}
\item \emph{ordinary cohomology} (cf. \cite[Ex. 3.10]{SS25-Flux}),

\item \emph{complex K-cohomology} (cf. \cite{nlab:DBraneChargeInKTheory,nLab:KTheoryClassOfTopPhases}),
\
\item \emph{elliptic cohomology} (cf. \parencites[\S6]{Segal1988}{SK04-IIA}{SK05-IIB}{Sati2006}{StolzTeichner2011}{Huan2020}{Huan2025})

\item \emph{Morava K-cohomology} (cf. \cite{SK04-IIA, SatiWesterland05}). 
\end{itemize}
Curiously, all of these proposed cohomology theories are \emph{complex oriented} (cf. \cref{NotionsOfOrientations}) --- which is remarkable in view of our first main observation in \cref{OrientationsMeasuringRelativeCharges} below.

\begin{figure}[htb]
\caption{
  \label{NotionsOfOrientations}
  Notions of universal orientations
  of fibers of vector bundles 
  in a multiplicative stable cohomology theory $E$ (cf. \cite[\S3.8]{SS23-Mf}):
  \\
  \textbf{Top left:} A \emph{complex orientation} is a choice of 2-class $c_1^{\!E}$ on $\mathbb{C}P^\infty$ which restricts to the unit on $\mathbb{C}P^1 \simeq S^2$ (cf. \parencites[\S3.2]{TamakiKono2006}[\S 4.3]{Kochman1996}).
  \\
  \textbf{Top right:}
  A \emph{quaternionic orientation} is a choice of 4-class $\tfrac{1}{2}p_1^{\!E}$ on $\mathbb{H}P^\infty$ restricting to the unit on $\mathbb{H}P^1 \simeq S^4$ (cf. \cite[\S3.9]{TamakiKono2006}).
  \\
  \textbf{Bottom row:} A complex orientation \emph{over 4-dimensions} is a choice of 2-class $\gamma_1^{\!E}$ on $\mathbb{C}P^2$ restricting to the unit on $\mathbb{C}P^1$; and a quaternionic  orientation \emph{over 10-dimensions} is a choice of 4-class $\kappa_1^{\!E}$ on $\mathbb{H}P^2$ restricting to the unit on $\mathbb{H}P^1$ \parencites[\S1.2]{Hopkins1984}[\S3.8]{SS23-Mf}.
  \\
  Note that complex orientations in $4k+2$ dimensions induce quaternionic orientations in the same dimension \cite[Thm. 3.99]{SS23-Mf}.
}
\centering
\adjustbox{
  rndfbox=5pt,
  scale=.95
}{
\setlength\tabcolsep{0pt}
\begin{tabular}{ccc}
\begin{tikzcd}[
  sep=15pt
]
  &
  \mathbb{C}P^\infty
  \ar[
    dr,
    dashed,
    shorten=-2pt,
    "{
      \,\mathclap{c_1^{\!E}}\;
    }"
  ]
  \\
  \mathbb{C}P^1
  \ar[
    ur,
    hook
  ]
  \ar[
    rr,
    "{
      \Sigma^2 \EUnit
    }"{description}
  ]
  &&
  E^2
\end{tikzcd}
&
\quad 
$\Rightarrow$
&
\quad 
\begin{tikzcd}[
  sep=15pt
]
  &
  \mathbb{H}P^\infty
  \ar[
    dr,
    dashed,
    shorten=-2pt,
    "{
      \;\mathclap{\scalebox{.7}{$\tfrac{1}{2}p_1^E$}}\;
    }"
  ]
  \\
  \mathbb{H}P^1
  \ar[
    ur,
    hook
  ]
  \ar[
    rr,
    "{
      \Sigma^4 \EUnit
    }"{description}
  ]
  &&
  E^4
\end{tikzcd}
\\[+23pt]
\rotatebox[origin=c]{-90}{$\Rightarrow$}
&&
\rotatebox[origin=c]{-90}{$\Rightarrow$}
\\[+4pt]
\begin{tikzcd}[
  sep=15pt
]
  &
  \mathbb{C}P^2
  \ar[
    dr,
    dashed,
    shorten=-2pt,
    "{
      \,\mathclap{\gamma_1^E}\;
    }"
  ]
  \\
  \mathbb{C}P^1
  \ar[
    ur,
    hook
  ]
  \ar[
    rr,
    "{
      \Sigma^2 \EUnit
    }"{description}
  ]
  &&
  E^2
\end{tikzcd}
&&
\begin{tikzcd}[
  sep=15pt
]
  &
  \mathbb{H}P^2
  \ar[
    dr,
    dashed,
    shorten=-2pt,
    "{
      \,\mathclap{\kappa_1^E}\;
    }"
  ]
  \\
  \mathbb{H}P^1
  \ar[
    ur,
    hook
  ]
  \ar[
    rr,
    "{
      \Sigma^2 \EUnit
    }"{description}
  ]
  &&
  E^2
\end{tikzcd}
\end{tabular}
}
\end{figure}

\smallskip 
On the other hand, closer analysis reveals \cite{SS25-FQAH, SS25-WilsonLoops}/\cite{FSS20-H, GSS24-SuGra, GSS25-M5} (following \parencites[\S 2.5]{Sati2018}{FSS17-Sphere}, reviewed in  \parencites{FSS19-RationalM}[\S 4.4]{SS25-Flux}) that the most fine-grained microscopic description of topological insulator phases and of brane charges in D=5/D=11 supergravity is in a non-abelian cohomology theory, namely in \emph{Cohomotopy} $\pi^n(-)$ \parencites[\S VII]{STHu59}[Ex. 2.7]{FSS23-Char}, 
\begin{equation}
  \pi^n(-)
  :=
  \pi_0
  \,
  \mathrm{Map}(
    -;
    S^n
  )
  \mathrlap{\,,}
\end{equation}
specifically in 2-Cohomotopy/4-Cohomotopy, whose classifying space is the 2-sphere/ 4-sphere $S^2/S^4$ --- and that in the presence of gapped nodal lines and of probe L1/M5-branes this becomes \emph{twisted Cohomotopy relative to the brane} (recalling \cref{NotionsOfTwistedRelativeCohomology}) 
classified by the complex/quaternionic Hopf fibration $\ComplexHopfFibration/\QuaternionicHopfFibration$ \cref{TheHopfFibration} -- this is shown in \cref{CohomotopicalFluxQuantization}.

\smallskip 
This ``proper'' flux quantization in Cohomotopy --- properly reflecting the non-linear Gauss laws of the flux densities --- has the remarkable consequence that it reflects the actual phase space structure of the (globally completed) gauge fields \cite{SS24-Phase}. Moreover, when spacetime has an ``M-fiber'' $\mathbb{R}^1_{\mathrm{M}}$ and the probe $p$-brane ``wraps'' it, in that
\begin{equation}
  \label{WrappingTheMFiber}
  \phi
  :
  \begin{tikzcd}
    \grayoverbrace{
    \mathbb{R}^1_{\mathrm{M}}
    \times
    \Sigma^{p-1}
    }{
      \scalebox{.8}{\color{black}$\Sigma^p$}
    }
    \ar[
      rr,
      "{
        \mathrm{id}
        \times
        \phi'
      }"
    ]
    &&
    \grayoverbrace{
    \mathbb{R}^1
    \times
    X^{d-1}
    }{
      \mathcolor{black}{X^d}
    }
  \end{tikzcd}
\end{equation}
then:
\begin{standout}
\emph{The topological quantum observables on solitonic field configurations are completely determined from the flux quantization law.} \cite{SS24-Obs,SS25-WilsonLoops}
\end{standout}
(Concretely, with the flux quantization given by its classifying space $\hotype{A}$, the algebra of topological quantum observables is the Pontrjagin algebra of the mapping space of a Cauchy surface of spacetime into $\hotype{A}$.)

In summary then, the microscopic flux quantization of 5D/11D supergravity with probe L1/M5 branes that we consider corresponds to the following set of charges in twisted relative Cohomotopy:
\begin{equation}
  \label{MicroscopicChargesWithMFiber}
  H\big(
    \phi'_{\mathrm{L1}}
    ;
    \ComplexHopfFibration
  \big)
  =
  \pi_0
  \left\{
  \begin{tikzcd}[
    column sep=12pt, 
    row sep=7pt
  ]
    \Sigma^0
    \ar[
      rr,
      dashed,
      "{
        H_1^\pi
      }"
    ]
    \ar[
      dd,
      "{ \phi'_{\mathrm{L1}} }"
    ]
    &&
    S^3
    \ar[
      dd,
      "{
        \ComplexHopfFibration
      }"
    ]
    \\
    \\
    X^{3}
    \ar[
      rr,
      dashed,
      "{
        F_2^\pi
      }",
      "{
        F_3^\pi
      }"{swap}
    ]
    &&
    S^2
  \end{tikzcd}
  \right\}
  ,
  \;\;
  H\big(
    \phi'_{\mathrm{M5}}
    ;
    \QuaternionicHopfFibration
  \big)
   =
  \pi_0
  \left\{
  \begin{tikzcd}[
    column sep=12pt, 
    row sep=7pt]
    \Sigma^4
    \ar[
      rr,
      dashed,
      "{
        H_3^\pi
      }"
    ]
    \ar[
      dd,
      "{ \phi'_{\mathrm{M5}} }"
    ]
    &&
    S^7
    \ar[
      dd,
      "{
        \QuaternionicHopfFibration
      }"
    ]
    \\
    \\
    X^{9}
    \ar[
      rr,
      dashed,
      "{
        F_4^\pi
      }",
      "{
        F_7^\pi
      }"{swap}
    ]
    &&
    S^4
  \end{tikzcd}
  \right\}
  \mathrlap{.}
\end{equation}

\begin{figure}[htb]
\captionsetup{width=.95\linewidth}
\caption{
  \label{CohomotopicalFluxQuantization}
  \textbf{1st row}:
  On a Cauchy surface $X^4/X^{10}$ of spacetime and compatibly of the probe brane, $\Sigma^1/\Sigma^5 \hookrightarrow X^4/X^{10}$,  the relative Gauss laws of the flux densities 
  on probe L1/M5-branes in 5D/11D supergravity are equivalent \cite{SS24-Phase} to the closure condition (cl) on differential forms $\Omega_{\mathrm{dR}}$ with coefficients in the real Whitehead $L_\infty$-algebra $\mathfrak{l}\ComplexHopfFibration/\mathfrak{l}\QuaternionicHopfFibration$ of the $\mathbb{C}/\mathbb{H}$-Hopf fibration (cf. \cite[\S5, \S12]{FSS23-Char}).
  \\
  \textbf{2nd row}: The proper quantization of these fluxes is therefore \cite{SS25-Flux} in those twisted relative nonabelian cohomology theories whose classifying fibration $p$ has the same $\mathfrak{l}p$. The minimal and hence most fine-grained choice among all these is the $\mathbb{C}/\mathbb{H}$ Hopf fibration $\ComplexHopfFibration/\QuaternionicHopfFibration$ itself. The dashed maps shown classify the charges under this twisted relative cohomotopical flux quantization \cite[\S 4.1.2]{SS25-WilsonLoops}/\parencites[\S 3.7]{FSS20-H}{FSS21-Hopf}{FSS21-StrStruc}.
  \\
  \textbf{3rd row}: But other admissible proper flux quantization laws exist. For instance, the factorization $\FactoredCHopfFibration$ \cref{FactoringTheCHopfFibration} of the $\mathbb{C}$-Hopf fibration through $\mathbb{R}P^3$ has the same relative real Whitehead $L_\infty$-algebra, $\mathfrak{l}\FactoredCHopfFibration \simeq \mathfrak{l}\ComplexHopfFibration$. Choosing this for flux quantization turns out to classify topological classes of Bloch Hamiltonians $H_{(-)}$ of 2-band topological insulators (TI) sensitive to the topology of the mass term $M_{(-)}$ which creates the TI phase from a parent nodal line topological semimetal phase (NLSM) -- we discuss this in \cref{GappedNodalLinesIn2BandChernInsulators}.
}
\centering
\adjustbox{
  rndfbox=5pt,
  scale=.9
}{
\begin{tabular}{@{}r|r@{}}
    \begin{tikzcd}[
      ampersand replacement=\&,
      column sep=2pt,
      row sep=14pt
    ]
      \smash{
      \left.
      \mathrm{d}
      H_1 
        \!=\! 
      \phi^\ast F_2\;\;\;
      \rule{0pt}{10pt}
      \right\}
      }
      \&[-5pt]
      \Sigma^1
      \ar[
        rr,
        dashed,
        shorten=-3pt,
        "{ 
          H_1 
        }"
      ]
      \ar[
        dd,
        hook,
        "{ \phi }"{description}
      ]
      \&\hspace{2pt}\&
      \Omega^1_{\mathrm{dR}}\big(
        -;
        \mathfrak{l}S^3
      \big)_{\mathrm{cl}}
      \ar[
        dd,
        ->>,
        "{
          (
            \mathfrak{l}
            \ComplexHopfFibration
          )_\ast
        }"
      ]
      \\
      \\
      \smash{
      \left.
      \begin{aligned}
        \mathrm{d} F_2 & \! =\! 0
        \\
        \mathrm{d} F_3 & 
        \!=\! 
        \tfrac{1}{2} F_2 \wedge F_2
      \end{aligned}
      \right\}
      }
      \&
      X^4
      \ar[
        rr,
        dashed,
        shorten=-3pt,
        "{
          F_2
        }"{pos=.4},
        "{
          F_3
        }"{pos=.4,swap}
      ]
      \&\&
      \Omega^1_{\mathrm{dR}}\big(
        -;
        \mathfrak{l}S^2
      \big)_{\mathrm{cl}}
    \end{tikzcd}
   &
    \begin{tikzcd}[
      ampersand replacement=\&,
      column sep=2pt,
      row sep=17pt
    ]
      \smash{
      \left.
      \mathrm{d}
      H_3 
      = \phi^\ast F_4\;\;\;
      \rule{0pt}{10pt}
      \right\}
      }
      \&[-5pt]
      \Sigma^5
      \ar[
        rr,
        dashed,
        shorten=-3pt,
        "{ 
          H_3 
        }"
      ]
      \ar[
        dd,
        hook,
        "{ 
          \phi 
        }"{description}
      ]
      \&\hspace{2pt}\&
      \Omega^1_{\mathrm{dR}}\big(
        -;
        \mathfrak{l}S^7
      \big)_{\mathrm{cl}}
      \ar[
        dd,
        ->>,
        "{
          (
            \mathfrak{l}
            \QuaternionicHopfFibration
          )_\ast
        }"
      ]
      \\
      \\
      \smash{
      \left.
      \begin{aligned}
        \mathrm{d}F_4 & \!= \! 0
        \\
        \mathrm{d} F_7 & 
        \!= \!
        \tfrac{1}{2} F_4 \wedge F_4
      \end{aligned}
      \right\}
      }
      \&
      X^{10}
      \ar[
        rr,
        dashed,
        shorten=-3pt,
        "{
          F_4
        }"{pos=.4},
        "{
          F_7
        }"{swap, pos=.4}
      ]
      \&\&
      \Omega^1_{\mathrm{dR}}\big(
        -;
        \mathfrak{l}S^4
      \big)_{\mathrm{cl}}
    \end{tikzcd}
  \\
  &
  \\
  \hline
  &
  \\[-10pt]
  $
  \underset{
   \mathclap{
    \substack{
      \scalebox{.7}{\bf\color{darkblue}Microscopic charges}
      \\
      \scalebox{.7}{\bf\color{darkblue}of L1 in 5D SuGra}
    }
    }
  }{
  H\big(
    \phi_{\mathrm{L1}}
    ;
    \ComplexHopfFibration
  \big)
  =
  }
  \pi_0
  \left\{
  \begin{tikzcd}[
    column sep=36pt,
    row sep=28pt
  ]
    \Sigma^1
    \ar[
      r,
      dashed,
      "{
        H_1^{\pi}
      }"
    ]
    \ar[
      d,
      hook,
      "{ 
        \phi_{\mathrm{L1}} 
      }"
    ]
    &
    S^3
    \ar[
      d,
      ->>,
      "{
        \ComplexHopfFibration
      }"
    ]
    \\
    X^4
    \ar[
      r,
      dashed,
      "{
        {F_2^{\pi}}
      }",
      "{
        F_3^{\pi}
      }"{swap}
    ]
    &
    S^2
  \end{tikzcd}
  \right\}
  $
  &
  $
  \underset{
    \mathclap{
      \substack{
      \scalebox{.7}{\bf\color{darkblue}Microscopic charges}
      \\
      \scalebox{.7}{\bf\color{darkblue}of M5 in 11D Sugra}
      }
    }
  }{
  H\big(
    \phi_{\mathrm{M5}}
    ;
    \QuaternionicHopfFibration
  \big)
   =
  }
  \pi_0
  \left\{
  \begin{tikzcd}[
    column sep=36pt,
    row sep=28pt
  ]
    \Sigma^5
    \ar[
      r,
      dashed,
      "{
        H_3^{\pi}
      }"
    ]
    \ar[
      d,
      hook,
      "{ 
        \phi_{\mathrm{M5}} 
      }"
    ]
    &
    S^7
    \ar[
      d,
      ->>,
      "{
        \QuaternionicHopfFibration
      }"
    ]
    \\
    X^{10}
    \ar[
      r,
      dashed,
      "{
        F_4^{\pi}
      }",
      "{
        F_7^{\pi}
      }"{swap}
    ]
    &
    S^4
  \end{tikzcd}
  \right\}
  $
  \\[-6pt]
  &
  \\
  \hline
  &
  \\[-8pt]
  $
  \underset{
   \mathclap{
    \;\;
    \substack{
      \scalebox{.7}{\bf\color{darkblue}
      TI Bloch Hamltn, rela-}
      \\
      \scalebox{.7}{\bf\color{darkblue}
      tive to NLSM parent}
    }
    }
  }{
  H\big(
    \phi_{\mathrm{NL}}
    ;
    \FactoredCHopfFibration
  \big)
  =
  }
  \pi_0
  \left\{
  \adjustbox{
    raise=3pt
  }{
  \begin{tikzcd}[
    column sep=36pt,
    row sep=28pt
  ]
    \Sigma^1
    \ar[
      r,
      dashed,
      "{
        M_{(-)}
      }"
    ]
    \ar[
      d,
      hook,
      "{ 
        \phi_{\mathrm{NL}} 
      }"
    ]
    &
    \mathbb{R}P^3
    \ar[
      d,
      ->>,
      "{
        \FactoredCHopfFibration
      }"
    ]
    \\
    \widehat{T}{}^d
    \ar[
      r,
      dashed,
      "{
        H_{(-)}
      }"
    ]
    &
    S^2
  \end{tikzcd}
  }
  \right\}
  $
  &
\end{tabular}
}
\end{figure}

\subsection{Cohomology Operations on Charges}

 Cohomotopy is rich and may contain more information than necessary in a given situation. A nonabelian \emph{cohomology operation} \cite[Def. 2.3]{FSS23-Char} from $n$-Cohomotopy to a stable cohomology theory $E$, hence a natural transform $\pi^n(-) \to E^n(-)$, may be understood as a \emph{coarse graining} or extraordinary \emph{character map}, which retains less but potentially more pertinent information. By the Yoneda lemma, such cohomology operations are given simply by postcomposition with maps 
$o : S^n \to E^n$ between the corresponding classifying spaces:
\begin{equation}
  \begin{tikzcd}[row sep=-3pt, column sep=0pt]
    \pi^n(X)
    \ar[
      rr,
      "{ o_\ast }"
    ]
    &&
    E^n(X)
    \\
  \big[
      X 
        \dashrightarrow
      S^n
    \big]
    &\longmapsto&
    \big[
      X 
        \dashrightarrow
      S^n
        \overset{o}{\longrightarrow}
      E^n
    \big]
    \mathrlap{\,.}
  \end{tikzcd}
\end{equation}

The fundamental (but most coarse) example is the (real, for our purpose) \emph{Chern-Dold character} \parencites{nLab:ChernDoldCharacter}[\S 7]{FSS23-Char} on Cohomotopy (seen through its stabilization), which extracts its degree=$n$ class in $\mathbb{R}$-rational cohomology  
\begin{equation}
  \label{ChernDoldCharacterOnCohomotopy}
  \begin{array}{c}
  \begin{tikzcd}[sep=0pt]
    \pi^n(X)
    \ar[
      rr,
      "{  }"
    ]
    &\phantom{---}&
    H^n(X;\mathbb{R})
  \end{tikzcd}
  \\
    \phantom{--}
    \big[
      \begin{tikzcd}[sep=15pt]
      X 
      \ar[r, dashed]
      &
      S^n
      \end{tikzcd}
    \big]
    \mapsto
    \big[
      \begin{tikzcd}[sep=18pt]
      X 
      \ar[r, dashed]
      &[+2pt]
      S^n
      \ar[
        r,
        "{
          \Sigma^n 1
        }"
      ]
      &
      B^n \mathbb{R}
      \end{tikzcd}
    \big]
  \end{array}
\end{equation}
by composing with the $\mathbb{R}$-rational unit class
\begin{equation}
  \Big[
  \begin{tikzcd}[sep=20pt]
     S^n
     \ar[
       r,
       "{
         \Sigma^n 1
       }"
     ]
     &
     B^n \mathbb{R}
  \end{tikzcd}
  \Big]
  =
  1 
  \in
  \mathbb{R}
  \simeq
  H^n(
    S^n
    ;
    \mathbb{R}
  )
  \mathrlap{\,.}
\end{equation}
In view of flux quantization of the $F_2/F_4$ flux in 5D/11D supergravity, this character map \cref{ChernDoldCharacterOnCohomotopy} witnesses how Cohomotopy indeed quantizes the total flux, in that it forces its de Rham class to be the rational image of a Cohomotopy class (cf. \cite[\S 3]{SS25-Flux}).

This notion of nonabelian cohomology operations has an evident generalization \cite[Def. 3.6]{FSS23-Char} to twisted relative nonabelian cohomology (\cref{NotionsOfTwistedRelativeCohomology}), where a cohomology operation is thus given by ``pasting'' \cref{PastingOfHomotopies} of homotopy-commuting squares, in our case as follows:
\begin{equation}
  \label{TwistedRelativeCharacterMap}
  \begin{array}{c}
  \begin{tikzcd}
    H(
      \phi
      ;
      p
    )
    \ar[
      rr
    ]
    &&
    E^n\big(
      X^d
      ,
      \Sigma^p
    \big)
  \end{tikzcd}
  \\
  \left[
  \begin{tikzcd}[
    column sep=12pt
  ]
    \Sigma^p
    \ar[
      d,
      hook,
      "{ \phi }"{swap}
    ]
    \ar[
      r,
      dashed
    ]
    &
    S^{2n-1}
    \ar[
      d,
      ->>,
      "{ p }"{swap,pos=.4}
    ]
    \\
    X^d
    \ar[
      r,
      dashed
    ]
    &
    S^n
  \end{tikzcd}
  \right]
  \longmapsto
  \left[
  \begin{tikzcd}[
    column sep=12pt
  ]
    \Sigma^p
    \ar[
      d,
      hook,
      "{ \phi }"{swap}
    ]
    \ar[
      r,
      dashed
    ]
    &
    S^{2n-1}
    \ar[
      d,
      ->>,
      "{ p }"{swap, pos=.4}
    ]
    \ar[
      r
    ]
    &
    \ast
    \ar[
      d,
      "{ 0 }"
    ]
    \ar[
      dl,
      shorten=8pt,
      Rightarrow,
      "{\sim}"{sloped, swap, pos=.46},
    ]
    \\
    X^d
    \ar[
      r,
      dashed
    ]
    &
    S^n
    \ar[
      r,
      "{ o }"
    ]
    &
    E^n
  \end{tikzcd}
  \right]
  .
  \end{array}
\end{equation}
With the left-hand side of \cref{TwistedRelativeCharacterMap} understood as the set of microscopic brane charges (\cref{CohomotopicalFluxQuantization}), we may think of the character map \cref{TwistedRelativeCharacterMap} as ``measuring'' or coarse-graining these to the extent reflected in the given stable $E$-cohomology.

\newpage

\subsection{Orientations measuring Brane Charges}
\label{OrientationsMeasuringRelativeCharges}

The first main observation we highlight now (following \cite[\S2.8, \S3.8]{SS23-Mf}) is the following, whose proof is shown in \cref{MeasuringRelativeChargesInQCohom}:

\begin{standout}\textit{Measuring \cref{TwistedRelativeCharacterMap}
in a stable cohomology theory $E$ the fragile charges of 2-band insulators in the presence of nodal lines, or the microscopic brane charges in 5D/11D supergravity in the presence of probe L1/M5-branes \emph{(see \cref{CohomotopicalFluxQuantization})}, is equivalent to having four/ten-dimensional complex/quaternionic \emph{$E$-orientation} \textup{(\cref{NotionsOfOrientations})}.}
\end{standout}

\begin{figure}[htb]
\captionsetup{width=.9\linewidth}
\caption{
  \label{MeasuringRelativeChargesInQCohom}
  How measuring \cref{TwistedRelativeCharacterMap} in stable cohomology $E$ the {\color{purple}chiral flux} on L1/M5-branes, microscopically in relative twisted Cohomotopy (\cref{CohomotopicalFluxQuantization}), is equivalently a four/ten-dimensional $\mathbb{C}/\mathbb{H}$-orientation in $E$-cohomology (following \cite[\S2.8, \S3.8]{SS23-Mf}):
  \\
  \textbf{Top row:} The generator $\mathcolor{purple}{h_1}/\mathcolor{purple}{h_3}$ of the Sullivan model of the $\mathbb{C}/\mathbb{H}$ Hopf fibration $\ComplexHopfFibration/\QuaternionicHopfFibration$, relative to that of the base, exhibits a null homotopy of the pullback of the generator $\mathcolor{darkblue}{f_2}/\mathcolor{darkblue}{f_4}$, the latter giving the unit map to the rational classifying space.
  \\
  \textbf{2nd row:} Lifting this situation from rational cohomology to any multiplicative stable cohomology theory $E$ means to ask for a null homotopy $\mathcolor{purple}{h_1^{\!E}}/\mathcolor{purple}{h_3^{\!E}}$ of the pullback of the $E$-unit $\mathcolor{darkblue}{\Sigma^2\EUnit}/\mathcolor{darkblue}{\Sigma^4\EUnit}$. Indicated in gray is how this defines a character cohomology operation 
  \cref{TwistedRelativeCharacterMap}
  from $\ComplexHopfFibration$-twisted Cohomotopy, to $E$-cohomology relative to the probe L1/M5-brane, by forming pasting composites of homotopy squares \cref{PastingOfHomotopies}.
  \\
  \textbf{3rd row:} Factoring through the homotopy pushout (po) 
  exhibits (\cref{CofibersOfHopfFibration})
  these null homotopies as equivalent to maps $\mathcolor{purple}{\gamma_1^E}/\mathcolor{purple}{\kappa_1^E}$ from $\mathbb{C}P^2/\mathbb{H}P^2$ to $E^2/E^4$, whose restriction to $\mathbb{C}P^1/\mathbb{H}P^1$ is (homotopic to) the $E$-unit: These are four/ten-dimensional 
  $\mathbb{C}/\mathbb{H}$-orientations in $E$-cohomology (cf. \cref{NotionsOfOrientations}). We spell this out for $E = \mathrm{KU}$ below in \cref{TheEquivariantOrientation}.
  \\
  \textbf{4th row:} By the pasting law (\cref{PastingLaw}) and by \cref{CofibersOfHopfFibration}, this statement remains true when the microscopic brane flux is quantized instead in $\mathbb{R}P^3/\mathbb{R}P^7$ (cf. 3rd row of \cref{CohomotopicalFluxQuantization}), if now the stable coefficients are taken to be the pullback of the $E$-orientation to $\mathbb{R}^4/\mathbb{R}^8$:
}
\centering
\adjustbox{
  rndfbox=5pt,
  scale=.92
}{
\begin{tabular}{@{}r|r@{}}
$
  \begin{tikzcd}[
    ampersand replacement=\&,
    column sep=0pt
  ]
    \smash{
    \left.
    \begin{aligned}
      \mathrm{d}\, 
      \mathcolor{purple}{h_1}
      & 
      = 
      \ComplexHopfFibration^\ast f_2
    \end{aligned}%
    \rule{0pt}{12pt}%
    \right\}
    }
    \&
    S^3
    \ar[
      dd,
      ->>,
      "{
        \ComplexHopfFibration
      }"{description}
    ]
    \ar[
      rr
    ]
    \&\phantom{---}\&
    \ast
    \ar[
      dd,
      "{ 0 }"{description}
    ]
    \ar[
      ddll,
      Rightarrow,
      shorten=10pt,
      "{
        \mathcolor{purple}{h_1} 
      }"{sloped, description}
    ]
    \\
    \\
    \smash{
    \left.
    \begin{aligned}
      \mathrm{d}
      \,
      \mathcolor{darkblue}{f_2} 
      &
      \!=\! 0 
      \\
      \mathrm{d}
      \,
      f_3
      &
      \!=\! \tfrac{1}{2} f_2 \wedge f_2
    \end{aligned}
    \right\}
    }
    \&
    S^2
    \ar[
      rr,
      "{
        \mathcolor{darkblue}{f_2}
      }"{description}
    ]
    \&\&
    B^2 \mathbb{Q}
  \end{tikzcd}
$
&
$
  \begin{tikzcd}[
    ampersand replacement=\&,
    column sep=0pt
  ]
    \smash{
    \left.
    \begin{aligned}
      \mathrm{d}\, 
      \mathcolor{purple}{h_3}
      & 
      = 
      \QuaternionicHopfFibration^\ast
      f_4
    \end{aligned}%
    \rule{0pt}{12pt}%
    \right\}
    }
    \&
    S^7
    \ar[
      dd,
      ->>,
      "{
        \QuaternionicHopfFibration
      }"{description}
    ]
    \ar[
      rr
    ]
    \&\phantom{---}\&
    \ast
    \ar[
      dd,
      "{ 0 }"{description}
    ]
    \ar[
      ddll,
      Rightarrow,
      shorten=10pt,
      "{
        \mathcolor{purple}{h_3} 
      }"{sloped, description}
    ]
    \\
    \\
    \smash{
    \left.
    \begin{aligned}
      \mathrm{d}
      \,
      \mathcolor{darkblue}{f_4} 
      &
      \!=\! 0 
      \\
      \mathrm{d}
      \,
      f_7
      &
      \!=\! \tfrac{1}{2} f_4 \wedge f_4
    \end{aligned}
    \right\}
    }
    \&
    S^4
    \ar[
      rr,
      "{
        \mathcolor{darkblue}{f_4}
      }"{description}
    ]
    \&\&
    B^4 \mathbb{Q}
  \end{tikzcd}
$
\\
&
\\
\hline
  \begin{tikzcd}[
    ampersand replacement=\&,
    column sep=0pt
  ]
  \mathcolor{gray}{%
    \Sigma^1%
  }
  \ar[
    dd, 
    gray,
    "{
      \phi_{\mathrlap{\mathrm{L1}}}
    }"{description}
  ]
  \ar[
    rr,
    dashed,
    gray,
    "{
      H_3^\pi
    }"{description}
  ]
  \&\phantom{---}\&
  S^3
  \ar[
    rr
  ]
  \ar[
    dd,
    ->>,
    "{
      \ComplexHopfFibration
    }"{description}
  ]
  \ar[
    ddll,
    Rightarrow,
    gray,
    shorten=15pt
  ]
  \&\phantom{---}\&
  \ast
  \ar[
    dd,
    "{ 0 }"{description}
  ]
  \ar[
    ddll,
    Rightarrow,
    dashed,
    shorten=10pt,
    "{
      \mathcolor{purple}{h_1^{\!E}}
    }"{sloped, description, pos=.46}
  ]
  \\
  \\
  \mathcolor{gray}{X^4}
  \ar[
    rr,
    gray,
    dashed,
    "{
      F_2^\pi
    }", 
    "{
      F_3^\pi
    }"{swap}
  ]
  \&\&
  S^2
  \ar[
    rr,
    "{
      \mathcolor{darkblue}{\Sigma^2 \EUnit}
    }"{description}
  ]
  \&\&
  E^2
  \&{}
\end{tikzcd}
&
  \begin{tikzcd}[
    ampersand replacement=\&,
    column sep=0pt
  ]
  \mathcolor{gray}{\Sigma^5}
  \ar[
    rr,
    gray,
    dashed,
    "{
      H_7^{\pi}
    }"{description}
  ]
  \ar[
    dd,
    gray,
    "{ 
      \phi_{\mathrlap{\mathrm{M5}}} 
    }"{description}
  ]
  \&\phantom{---}\&
  S^7
  \ar[
    ddll,
    Rightarrow,
    dashed,
    gray,
    shorten=15pt
  ]
  \ar[
    rr
  ]
  \ar[
    dd,
    ->>,
    "{
      \QuaternionicHopfFibration
    }"{description}
  ]
  \&\phantom{---}\&
  \ast
  \ar[
    dd,
    "{ 0 }"{description}
  ]
  \ar[
    ddll,
    Rightarrow,
    dashed,
    shorten=10pt,
    "{
      \mathcolor{purple}{h_3^{\!E}}
    }"{sloped, description, pos=.46}
  ]
  \\
  \\
  \mathcolor{gray}{X^{10}}
  \ar[
    rr,
    dashed,
    gray,
    "{
      F_4^\pi
    }",
    "{
      F_7^\pi
    }"{swap}
  ]
  \&\&
  S^4
  \ar[
    rr,
    "{
      \mathcolor{darkblue}{\Sigma^4 \EUnit}
    }"{description}
  ]
  \&\&
  E^4
  \&{}
\end{tikzcd}
\\
\hline
$
\simeq
\;
\begin{tikzcd}[
  row sep=17,
  column sep=14
]
  S^3 
  \ar[
    rr
  ]
  \ar[
    dd,
    "{
      \ComplexHopfFibration
    }"{description}
  ]
    \ar[
      dr,
      phantom,
      "{
        \mathrm{(po)}
      }"{scale=.75, pos=.55}
    ]
    && 
  \ast
  \ar[
    dd,
    "{
      0
    }"{description}
  ]
  \ar[
    dl
  ]
  \\
  &
  \mathbb{C}P^{\mathrlap{2}}
  \ar[
    dr,
    dashed,
    shorten=-1.5pt,
    "{
      \,\mathclap{
      \mathcolor{purple}{
        \gamma_1^E
      }
      }
      \;
    }"{description, pos=.43}
  ]
  \\
  S^2
  \ar[
    rr,
    "{
      \Sigma^2 \EUnit
    }"{description}
  ]
  \ar[
    ur
  ]
  &&
  E^2
\end{tikzcd}
$
&
$
\simeq
\;
\begin{tikzcd}[
  row sep=17,
  column sep=14
]
  S^7
  \ar[
    rr
  ]
  \ar[
    dd,
    "{
      \QuaternionicHopfFibration
    }"{description}
  ]
    \ar[
      dr,
      phantom,
      "{
        \mathrm{(po)}
      }"{scale=.75, pos=.55}
    ]
    && 
  \ast
  \ar[
    dd,
    "{
      0
    }"{description}
  ]
  \ar[
    dl
  ]
  \\
  &
  \mathbb{H}P^{\mathrlap{2}}
  \ar[
    dr,
    dashed,
    shorten=-1.5pt,
    "{
      \,\mathclap{
      \mathcolor{purple}{
        \kappa_1^E
      }
      }
      \;
    }"{description, pos=.43}
  ]
  \\
  S^4
  \ar[
    rr,
    "{
      \Sigma^4 \EUnit
    }"{description}
  ]
  \ar[
    ur
  ]
  &&
  E^4
\end{tikzcd}
$
\\
\hline
& 
\\[-8pt]
$\simeq$
\begin{tikzcd}[
  column sep=10pt,
  row sep=15pt
]
  S^3 
  \ar[
    d,
    ->>
  ]
  \ar[
    rr
  ]
  \ar[
    drr,
    phantom,
    "{
      \mathrm{(po)}
    }"{pos=.7, scale=.65}
  ]
  &&
  \ast
  \ar[
    d
  ]
  \\[-5pt]
  \mathbb{R}P^3
  \ar[
    dr,
    phantom,
    "{ 
      \mathrm{(po)} 
    }"{pos=.7, scale=.65}
  ]
  \ar[
    dd,
    "{ 
      \FactoredCHopfFibration 
    }"{description}
  ]
  \ar[
    rr,
    hook
  ]
  &&
  \mathbb{R}P^4
  \ar[
    dd
  ]
  \ar[
    dl,
    ->>
  ]
  \\[-2pt]
  &
  \mathbb{C}P^2
  \ar[
    dr,
    shorten <=-2pt,
    dashed,
    "{ 
      \,
      \mathclap{%
        \mathcolor{purple}{\gamma_1^E}%
      }
      \,
    }"{description, pos=.4}
  ]
  \\[-2pt]
  S^2
  \ar[
    ur
  ]
  \ar[
    rr,
    "{
      \Sigma^2 \EUnit
    }"{description}
  ]
  &&
  E^2
\end{tikzcd}
\hspace{-8pt}
&
$\simeq$
\begin{tikzcd}[
  column sep=10pt,
  row sep=15pt
]
  S^7 
  \ar[
    d,
    ->>
  ]
  \ar[
    rr
  ]
  \ar[
    drr,
    phantom,
    "{
      \mathrm{(po)}
    }"{pos=.7, scale=.65}
  ]
  &&
  \ast
  \ar[
    d
  ]
  \\[-5pt]
  \mathbb{R}P^7
  \ar[
    dr,
    phantom,
    "{ 
      \mathrm{(po)} 
    }"{pos=.7, scale=.65}
  ]
  \ar[
    dd,
    "{ 
    }"{description}
  ]
  \ar[
    rr,
    hook
  ]
  &&
  \mathbb{R}P^8
  \ar[
    dd
  ]
  \ar[
    dl,
    ->>
  ]
  \\[-2pt]
  &
  \mathbb{H}P^2
  \ar[
    dr,
    shorten <=-2pt,
    dashed,
    "{ 
      \,
      \mathclap{%
        \mathcolor{purple}{\kappa_1^E}%
      }
      \,
    }"{description, pos=.4}
  ]
  \\[-2pt]
  S^4
  \ar[
    ur
  ]
  \ar[
    rr,
    "{
      \Sigma^4 \EUnit
    }"{description}
  ]
  &&
  E^4
\end{tikzcd}
\hspace{-8pt}
\end{tabular}
}

\end{figure}

\begin{example}
Every elliptic curve $C$ (over any base ring) entails an elliptic cohomology theory $E_C$ (cf. \cite[(5.2)]{Segal1988}), which is complex oriented (cf. \cite[Ex. p. 197]{Segal1988}) and hence also quaternionic-oriented (cf. \cite[Prop. 3.98]{SS23-Mf}). Therefore the result of \cref{MeasuringRelativeChargesInQCohom},
in the situation \cref{WrappingTheMFiber} of an M5-brane wrapped on the M-fiber, 
says that a choice of an elliptic curve and of a sub-4-manifold   
determines an extraordinary character map from microscopic M-brane charges to the elliptic cohomology of the bulk spacetime relative to the brane locus:
\begin{equation}
  \left.
    \begin{array}{rl}
      \mbox{elliptic curve} 
      &
      C
      \\
      \mbox{4-manifold}
      &
      \Sigma^4 \xrightarrow{\phi} X^9
    \end{array}
  \!\right\}
  \;\;
  \Rightarrow
  \;\;
  \begin{tikzcd}[
  ]
    H\big(
      \phi'_{\mathrm{M5}}
      ;
      \QuaternionicHopfFibration
    \big)
    \ar[r]
    &
    E_C^0\big(
      X^{9},
      \Sigma^4
    \big)    
    \mathrlap{\,.}
  \end{tikzcd}
\end{equation}
This result is similar to the situation expected by informal arguments in \cite{Gukov2021}.
\end{example}
\begin{example}
Consider specifically the situation of an M5-brane worldvolume of the form $\Sigma^{1,5} =  \mathbb{R}^{1,2} \times S^3$ inside 11D Minkowski spacetime $\mathbb{R}^{1,10}$. Since the latter is equivalent to the point, in this situation the microscopically quantized charges of the 3-form field $H_3$ on $\Sigma^{1,5}$ are in plain 3-Cohomotopy, classified by the fiber of the $\mathbb{H}$-Hopf fibration, hence are given by some integer $c \in \mathbb{Z} \simeq \pi^3(S^3)$. Measuring this situation in a stable cohomology theory $E$ which is complex-oriented in 10d by some $h_3^E$ (according to \cref{MeasuringRelativeChargesInQCohom}) sees a charge $H_3^E$ of the 3-form field $H_3$ classified by $\Omega E^4 \simeq E^3$ and being the $c$ fold multiple of the generator there:
\begin{equation}
  \begin{tikzcd}[
   row sep=40pt, 
   column sep=65pt
  ]
    & 
    S^3
    \ar[
      r,
      "{
        \Sigma^3 \EUnit
      }"{description, pos=.46}
    ]
    \ar[
     d,
     shorten >=-3pt
    ]
    &
    \Omega E^4
    \ar[d]
    \\[-10pt]
    \mathbb{R}^2 \times S^3
    \ar[
      ur,
      "{ c \cdot p_{S^3} }"{description}
    ]
    \ar[
      urr,
      crossing over,
      "{
        H_3^E
      }"{description, pos=.67}
    ]
    \ar[d]
    \ar[
      r,
      "{
        H_3^\pi
      }"{description}
    ]
    &
    S^7
    \ar[r]
    &[+5pt]
    \ast
    \ar[
      dl,
      Rightarrow,
      "{
        h_3^E
      }"{description},
      shorten=5pt
    ]
    \ar[
      dll,
      shorten <=12pt,
      Rightarrow,
      "{ 
        H_3^E 
      }"{description, pos=.8}
    ]
    \ar[
      d, 
      "{
        0
      }"{description}
    ]
    \\
    \ast 
      \simeq
    \mathbb{R}^{10}
    \ar[
      r, 
      "{ 
        F^\pi_4 
      }"{description}
    ]
    &
    S^4
    \ar[
      from=u,
      crossing over,
      "{
        \QuaternionicHopfFibration
      }"{description, pos=.2}
    ]
    \ar[
      r,
      "{
        \Sigma^4 \EUnit
      }"{description}
    ]
    &
    E^4
    \mathrlap{\,.}
  \end{tikzcd}
\end{equation}
This means that, in this simple situation, the coarsened charges seen in $E$ cohomology still reflect the full microscopic charges on the M5 in Cohomotopy iff the unit class $[\EUnit]$ is non-torsion.
\end{example}

\clearpage

\subsection{Tangential Twisting and Orbifolding}
\label{TangentialTwistingAndOrbifolding}

Finally, all these considerations are to be generalized to include \emph{tangential twisting} of the charge cohomology theory by tangential $G$-structure of the spacetime domain.

Concretely, the $\mathbb{C}/\mathbb{H}$-Hopf fibration $\ComplexHopfFibration/\QuaternionicHopfFibration$ is equivariant with respect to a canonical $\mathrm{Spin}^3(3)/\mathrm{Spin}(5) \simeq \mathrm{U}(2)/\mathrm{Sp}(2)$ action (Def. \cref{EquivarianceOfHopfFibration} below), so that the microscopic brane charges may and should be \cite{FSS20-H, FSS21-Hopf, SS21-M5Anomaly}
twisted by tangential $\mathrm{U}(2)/\mathrm{Sp}(2)$-structure $\tau$ (\cref{TwistedEquivariantOrientation}). By \cite[Thm. 6.2.6]{SS26-Orb} this implies that the charges are in the correspondingly $\mathrm{RO}$-graded equivariant cohomology in the vicinity of orbifold singularities.

It is this tangentially twisted/equivariantized version of the construction of orientations as extraordinary characters that we establish below in \cref{OrientationsInOrbiKTheory}.

\begin{figure}[htb]
\captionsetup{width=.95\linewidth}
\caption{
  \label{TwistedEquivariantOrientation}
  {\bf Top row:}
  With coupling to background gravity taken into account --- whose topological charges are encoded in the class of the tangent bundle $T X$ --- the relative brane charges are to be further twisted by tangential $\mathrm{SU}(2)/\mathrm{Sp}(2)$-structure $\tau$, hence equivalently by $\mathrm{Spin}(3)/\mathrm{Spin}(5)$-structure on spacetime. \protect\footnotemark
  \\
  If we understand the homotopy quotients $(-)\sslash(-)$ and deloopings $\mathbf{B}(-) \simeq\!\ast\!\sslash (-)$ in topological groupoids (stacks), then this tangentially twisted cohomology automatically reduces to RO-graded equivariant cohomology in the vicinity of orbi-singularities \cite[Thm. 6.2.6]{SS26-Orb}.
  \\
  {\bf Bottom row:} In this situation, the $E$-valued orientation character maps (\cref{MeasuringRelativeChargesInQCohom}) are to be equivariantized accordingly. This is what we construct, for $E = \mathrm{KU}$, in \cref{OrientationsInOrbiKTheory}.
}
\centering

\adjustbox{
  rndfbox=5pt
}{
\begin{tabular}{cc}
\begin{tikzcd}[
  column sep=20pt,
  row sep=18pt
]
  \Sigma^1
  \ar[
    dd,
    "{ 
      \phi
    }"{description}
  ]
  \ar[
    rr,
    dashed,
    "{
      H_3^\pi
    }"{description}
  ]
  &&[-20pt]
  S^3 \sslash \mathrm{U}(2)
  \ar[
    dd,
    "{
      \ComplexHopfFibration
      \sslash 
      \mathrm{U}(2)
    }"{description}
  ]
  \\
  \\
  X^{4}
  \ar[
    d,
    "{
      \vdash T X^4
    }"{sloped, swap, yshift=-2pt}
  ]
  \ar[
    dr,
    "{ \tau }"{description}
  ]
  \ar[
    rr,
    dashed,
    "{
      F_2^\pi
    }",
    "{
      F_3^\pi
    }"{swap}
  ]
  &&
  S^2 \sslash \mathrm{U}(2)
  \ar[dl]
  \\[+4pt]
  \mathbf{B}\mathrm{GL}(4)
  \ar[
    r,
    <-,
    shorten=-1pt
  ]
  & 
  \mathbf{B}\mathrm{U}(2)
\end{tikzcd}
&
\begin{tikzcd}[
  column sep=20pt,
  row sep=18pt
]
  \Sigma^5
  \ar[
    dd,
    "{ 
      \phi
    }"{description}
  ]
  \ar[
    rr,
    dashed,
    "{
      H_3^\pi
    }"{description}
  ]
  &&[-20pt]
  S^7 \sslash \mathrm{Sp}(2)
  \ar[
    dd,
    "{
      \QuaternionicHopfFibration
      \sslash 
      \mathrm{Sp}(2)
    }"{description}
  ]
  \\
  \\
  X^{10}
  \ar[
    d,
    "{
      \vdash T X^{10}
    }"{sloped, swap, yshift=-2pt}
  ]
  \ar[
    dr,
    "{ \tau }"{description}
  ]
  \ar[
    rr,
    dashed,
    "{
      F_4^\pi
    }",
    "{
      F_7^\pi
    }"{swap}
  ]
  &&
  S^4 \sslash \mathrm{Sp}(2)
  \ar[dl]
  \\[+4pt]
  \mathbf{B}\mathrm{GL}(10)
  \ar[
    r,
    <-,
    shorten=-1pt,
    "{
      \scalebox{.6}{
        \cref{Triality}
      }
    }"
  ]
  & 
  \mathbf{B}\mathrm{Sp}(2)
\end{tikzcd}
\\
\hline
\begin{tikzcd}[
  column sep=16pt
]
  S^3 \sslash \mathrm{U}(2)
  \ar[
    rr
  ]
  \ar[
    dd,
    "{
      \ComplexHopfFibration
      \sslash
      \mathrm{U}(2)
    }"{description}
  ]
  &&
  \ast \sslash \mathrm{U}(2)
  \ar[
    dd,
    "{
      0 \sslash \mathrm{U}(2)
    }"{description}
  ]
  \ar[
    ddll,
    Rightarrow,
    dashed, 
    shorten=12pt,
    "{
      \mathcolor{purple}{h_1^E}
      \sslash 
      \mathrm{U}(2)
    }"{description}
  ]
  \\
  \\
  S^2 \sslash \mathrm{U}(2)
  \ar[
    rr,
    "{
      \Sigma^2 \EUnit
      \sslash 
      \mathrm{U}(2)
    }"{description}
  ]
  \ar[
    dr,
  ]
  &&
  E^2 \!\sslash \mathrm{U}(2)
  \ar[
    dl
  ]
  \\
  &
  \mathbf{B}\mathrm{U}(2)
\end{tikzcd}
&
\;\;\;
\begin{tikzcd}[
  column sep=16pt
]
  S^7 \sslash \mathrm{Sp}(2)
  \ar[
    rr
  ]
  \ar[
    dd,
    "{
      \QuaternionicHopfFibration
      \sslash
      \mathrm{Sp}(2)
    }"{description}
  ]
  &&
  \ast \sslash \mathrm{Sp}(2)
  \ar[
    dd,
    "{
      0 \sslash \mathrm{Sp}(2)
    }"{description}
  ]
  \ar[
    ddll,
    Rightarrow,
    dashed, 
    shorten=12pt,
    "{
      \mathcolor{purple}{h_3^E}
      \sslash 
      \mathrm{Sp}(2)
    }"{description}
  ]
  \\
  \\
  S^4 \sslash \mathrm{Sp}(2)
  \ar[
    rr,
    "{
      \Sigma^4 \EUnit
      \sslash 
      \mathrm{Sp}(2)
    }"{description}
  ]
  \ar[
    dr,
  ]
  &&
  E^4 \!\sslash \mathrm{Sp}(2)
  \ar[
    dl
  ]
  \\
  &
  \mathbf{B}\mathrm{Sp}(2)
\end{tikzcd}\end{tabular}
}

\end{figure}

\footnotetext{
  Beware the crucial subtlety (cf. \cref{TwistedEquivariantOrientation}) that $\mathrm{Spin}(5) \simeq \mathrm{Sp}(2)$ as abstract Lie groups, but that as subgroups of $\mathrm{Spin}(8)$ (and hence of the full $\mathrm{Spin}(1,10)$) they are \emph{not} equivalent --- but related by the less widely appreciated form of \emph{triality} \cite[\S 2.3]{FSS20-H}:  
  \begin{equation}
   \label{Triality}
    \begin{tikzcd}[
      ampersand replacement=\&,
      column sep=25pt, 
      row sep=9pt
    ]
      \mathrm{Spin}(5)
      \ar[
        r,
        "{
          \sim
        }"
      ]
      \ar[
        d,
        hook
      ]
      \&
      \mathrm{Sp}(2)
      \ar[
        d, 
        hook,
        "{ \iota \, }"{swap}
      ]
      \\
      \mathrm{Spin}(8)
      \ar[
        r,
        "{ \sim }",
        "{
          \mathrm{triality}
        }"{swap, yshift=-1pt}
      ]
      \&
      \mathrm{Spin}(8)
      \mathrlap{\,.}
    \end{tikzcd}%
  \end{equation}%
  \vspace{-.3cm}%
}



\section{Topological Stacks and Orbifold Cohomology}
\label{SomeCohesiveHomotopyTheory}

We give a pedagogical and practical new account of a streamlined theory of \emph{topological stacks} (in \cref{TopologicalGroupoidsAndStacks}, as a faithful fragment of the smooth $\infty$-groupoids briefly recalled in \cref{SmoothInfinityGroupoids}) neatly supporting a general notion of \emph{twisted orbifold cohomology} (in \cref{TwistedOrbifoldCohomology} below).

Here ``topological stacks'' (cf. \parencites{Carchedi2012}[\S 4.2]{SS25-Bun}) refers to the \emph{geometric homotopy theory} (cf. \cref{SmoothInfinityGroupoids}) of \emph{groupoids} (cf. \cite{IbortRodriguez2021} and \cref{TopologicalGroupoids}) with topological structure (\emph{topological groupoids}, cf. \parencites[\S II]{Mackenzie1987}[\S 2.2.1]{SS25-Bun}), subject to \emph{Morita equivalences} (discussed in \cref{TopologicalStacks}). This may be viewed as the first-stage enhancement of classical general topology to include \emph{gauge transformations} between points of topological spaces: such as the isotropy group actions in orbifolds and the quantum symmetry actions in spaces of Fredholm operators --- which in fibered combination makes for the \emph{twisted orbifold K-theory} discussed in \cref{OrbifoldKTheory}.

For perspective, afterwards we briefly indicate (in \cref{SmoothInfinityGroupoids}) how this theory of topological stacks is a full fragment of the more encompassing \emph{cohesive homotopy theory} of \emph{smooth $\infty$-groupoids} (\cite[\S 4.3]{SS25-Bun}\cite[\S 4.1]{SS26-Orb}, going back to \cite[\S 3.1]{SSS12}\cite{Sc13-dcct}).
This is in the general context of ``geometric homotopy theory'' (\emph{$\infty$-topos theory} \cite{ToenVezzosi2005, Lurie2009}\cite[\S 1]{FSS23-Char}) --- see exposition for mathematical physicists in \cite{Schreiber2025}.

\subsection{Topological Groupoids and Orbifolds}
\label{TopologicalGroupoidsAndStacks}

Discussion of topological stacks in traditional literature may tend to look mysterious to the newcomer and cumbersome to the expert. We spell out an approach (in specialization of \cref{SmoothInfinityGroupoids}) which is transparent and practically useful.

\subsubsection{Topological Spaces}

To set up notation, first some quick paragraphs on topological spaces in general, cf. \parencites{Sc2017-Topology}. (Beware that from \cref{DTopologicalSpaces} on we will be entirely concerned only with the special case of \emph{D-topological spaces}.)

For $X$ a topological space, we write 
\begin{itemize}
\item $\pi_0 X$ for its set of path-connected components,
\item $\flat X$ for its underlying set of points, 
\end{itemize}
both regarded as  discrete topological spaces, if necessary.
We write: 
\begin{itemize}
\item 
``$\ast$'' for the singleton space (the \emph{point}),
\item ``$\varnothing$'' for the empty set regarded as a topological space.
\end{itemize}

The archetypical topological spaces for our purposes (cf. the next \cref{DTopologicalSpaces}) are the \emph{Cartesian spaces} $\mathbb{R}^n$ with their usual Euclidean topology.
For  a pair of topological spaces $(X, Y)$, an arrow $X \to Y$ denotes a \emph{continuous function} between them, called a \emph{map}, for short. On every space $X$ there is the identity map 
$X \xrightarrow{\mathrm{id}} X$. A \emph{homeomorphism} is a map $f$ which has an inverse map $f^{-1}$, denoted $f : X \xrightarrow{ \sim } Y$, hence such that $f^{-1} \circ f = \mathrm{id}$ and $f \circ f^{-1} = \mathrm{id}$. A most basic but important example for our purpose is the homeomorphy of Cartesian spaces with their own open balls of any radius $\epsilon \in \mathbb{R}_{> 0}$:
\begin{equation}
  \label{CartesianSpaceHomeomorphicToOpenBall}
  \begin{tikzcd}[
    row sep=-3pt,
    column sep=0pt
  ]
    \mathbb{R}^n
    \ar[rr, "{ \sim }"]
    &&
    \mathbb{D}^n_\epsilon
    :=
    \big\{
      x \in \mathbb{R}^n
      \,\big\vert\;
      \vert x \vert < \epsilon
    \big\}
    \\
    x 
      &\longmapsto&
    \epsilon
    \frac{x}{ 1 + \vert x \vert }
    \mathrlap{\,.}
  \end{tikzcd}
\end{equation}

A \emph{diagram} of maps is always meant/understood to \emph{commute} in that all composite maps between any pair of spaces in the diagram are equal. Notably a \emph{commuting square} is:
\begin{equation}
  \label{CommutingSquare}
  \begin{tikzcd}[
   row sep=15pt,
   column sep=30pt
  ]
    Y
    \ar[
      d,
      "{
        \phi
      }"{swap}
    ]
    \ar[
      r,
      "{ c }"
    ]
    & 
    B
    \ar[
      d,
      "{ p }"
    ]
    \\
    X 
    \ar[
      r,
      "{ b }"
    ]
      & 
    A
  \end{tikzcd}
  \;\;\;\;\;\;\;
  \Leftrightarrow
  \;\;\;\;\;\;\;
  p \circ c
  =
  b \circ \phi
  \mathrlap{\,.}
\end{equation}

The following are some \emph{universal constructions} on topological spaces (cf. \cite[\S I.6]{Sc2017-Topology}) that we need:

\begin{definition}
  \label[definition]{FiberProduct}
  For $\begin{tikzcd}[sep=small] X \ar[r, "{f_1}"] & B \ar[r, <-, "{f_2}"] & Y 
  \end{tikzcd}$ a pair of coincident maps, their \emph{fiber product}, or \emph{pullback} (pb) of one along the other, is $\begin{tikzcd} X \underset{B}{\times} Y \ar[r, "{(p_1, p_2)}"] & X \times Y \end{tikzcd}$, unique up to compatible homeomorphism, which makes the following bottom right square of maps commute \cref{CommutingSquare}, and universally so in that it uniquely factors (shown by the dashed map) every other completion to a commuting square:
  \begin{equation}
  \label{PullbackSquare}
   \begin{tikzcd}[row sep=small, column sep=large]
     Q
     \ar[
       drr,
       bend left=30,
       "{ \forall }"{description}
     ]
     \ar[
       ddr,
       bend right=30,
       "{ \forall }"{description}
     ]
     \ar[
       dr,
       dashed,
       "{
         \exists !
       }"{description}
     ]
     \\
     &
     X
     \underset{B}{\times}
     Y
     \ar[
       dr,
       phantom,
       "{ 
         \lrcorner 
        }"{pos=.1}
     ]
     \ar[
       d,
       "{ 
         p_1 
       }"{swap, pos=.4}
     ]
     \ar[
       r,
       "{
         p_2
       }"
     ]
     &
     Y
     \ar[
       d,
       "{
         f_2
       }"
     ]
     \\
     &
     X
     \ar[
       r,
       "{ f_1 }"{swap}
     ]
     &
     B
     \mathrlap{\,.}
   \end{tikzcd}
  \end{equation}

  Dually, for $\begin{tikzcd}[sep=small] X \ar[r, <-, "{f_1}"] & T \ar[r,  "{f_2}"] & Y 
  \end{tikzcd}$ a pair of co-emanent maps, their \emph{cofiber product}, or \emph{pushout} (po) of one along the other, is $\begin{tikzcd} X \overset{B}{\sqcup} Y \ar[r, <-, "{(q_1, q_2)}"] & X \times Y \end{tikzcd}$, unique up to compatible homeomorphism, which makes the following top left square of maps commute \cref{CommutingSquare}, and universally so in that it uniquely factors (shown by the dashed map) every other completion to a commuting square:
  \begin{equation}
    \label{PushoutSquare}
   \begin{tikzcd}[row sep=small, column sep=large]
     T
     \ar[
       dr,
       phantom,
       "{ 
         \ulcorner
       }"{pos=.9}
     ]
     \ar[
       r, "{ f_2 }"
     ]
     \ar[
       d,
       "{ 
         f_1 
       }"{swap}
     ]
     & 
     Y
     \ar[
       d,
       "{
         q_2
       }"
     ]
     \ar[
       ddr,
       bend left=30pt,
       "{ \forall }"{description}
     ]
     \\
     X
     \ar[
       r,
       "{ 
         q_1 
       }"{swap}
     ]
     \ar[
       drr,
       bend right=30pt,
       "{ \forall }"{description}
     ]
     &
     X 
       \overset{Y}{\sqcup}
     Y
     \ar[
       dr,
       dashed,
       "{ 
         \exists! 
        }"{description}
     ]
     \\
     & & 
     Q
     \mathrlap{\,.}
   \end{tikzcd}
  \end{equation}

\end{definition}

\begin{definition}
For a pair of parallel maps $\begin{tikzcd} T \ar[r, shift left=4pt, "{f_1}"{description, pos=.4}] \ar[r, shift right=4pt, "{f_2}"{description, pos=.4}] & X \end{tikzcd}$, their \emph{coequalizer}
is $\begin{tikzcd}[sep=small] X \!\ar[r, "{q}"] & \! X/T \end{tikzcd}$, unique up to compatible homeomorphism, which makes the following horizontal composites agree, and universally so in that it uniquely factors (shown by the dashed map) every other such coequalizing completion: 
\begin{equation}
  \label{CoequalizerDiagram}
  \begin{tikzcd}[row sep=10pt, column sep=huge]
    T 
    \ar[
      r, 
      shift left=4pt, 
      "{f}"{description}
    ] 
    \ar[
      r, 
      shift right=4pt,
      "{g}"{description}
    ] 
    \ar[
      dr,
      shift right=2pt,
      shorten <=2pt,
      "{ \forall }"
      {swap}
    ]
    & 
    X
    \ar[
      r, "{q}"
    ]
    \ar[
      d,
      "{ 
        \forall 
      }"{pos=.4, swap}
    ]
    &
    X/T\mathrlap{\,.}
    \ar[
      dl,
      dashed,
      "{ \exists! }"{description}
    ]
    \\
    & 
    Q
  \end{tikzcd}
\end{equation}
Concretely, $X/T$ is the quotient space by the smallest equivalence relation $\!\!\tensor[_{f_1}]{\sim}{_{f_2}}$ on $X$ for which $f_1(x) \sim f_2(x)$ for all $x \in X$.
\end{definition}

\begin{example}[{cf. \parencites[p. 217]{Bott1982}[\S3.1]{AguilarGitlerPrieto2002}}]
  \label[example]
    {CellAttachment}
  For $n \in \mathbb{N}$ a pushout \cref{PushoutSquare} along the boundary inclusion of the closed $n$-dimensional unit ball, 
    $S^{n-1} =  S(\mathbb{R}^n) \hookrightarrow  D(\mathbb{R}^n) = D^n$, 
  is called an \emph{$n$-cell attachment}
  \begin{equation}
    \label{PushoutForCellAttachment}
    \begin{tikzcd}[row sep=small, column sep=large]
      S(\mathbb{R}^n)
      \ar[r, hook]
      \ar[
        d,
        "{ f }"{swap}
      ]
      \ar[
        dr,
        phantom,
        "{ 
          \ulcorner 
        }"{pos=.7}
      ]
      &
      D(\mathbb{R}^n)
      \ar[d]
      \\
      X
      \ar[
        r,
        hook
      ]
      &
      X \cup_f D^n
    \end{tikzcd}
  \end{equation}
  with \emph{attaching map} $f$.
  A topological space that arises from $\varnothing$ via (possibly transfinite) sequences of such $n$-cell attachments is called a \emph{cell complex} and a \emph{CW-complex} if the cell dimension $n$ is increasing monotonically in the process.
\end{example}

\begin{proposition}[Pasting law, {cf. \cite{nLab:PastingLawForPullbacks}}]
  \label[proposition]{PastingLaw}
  Given a ``pasting'' diagram 
  of commuting squares, \cref{CommutingSquare}
  $
  \adjustbox{scale=.65}{
    \begin{tikzcd}
      {}
      \ar[r]
      \ar[d]
      &
      {}
      \ar[r]
      \ar[d]
      &
      {}
      \ar[d]
      \\
      {}
      \ar[r]
      &
      {}
      \ar[r]
      &
      {}
    \end{tikzcd}
    }
  $
  then:
  \begin{enumerate}
    \item
    If the right square is a pullback \cref{PullbackSquare} then the left square is so iff the total rectangle is.
    \item
    If the left square is a pushout \cref{PushoutSquare} then the right square is so iff the total rectangle is.
  \end{enumerate}
\end{proposition}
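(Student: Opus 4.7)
The plan is to establish part (1) by a direct diagram chase using the universal property of the right pullback, and then obtain part (2) for free by categorical duality: reversing all arrows in a pasting rectangle exchanges pullbacks with pushouts and swaps the roles of the left and right squares, so the two statements are formally equivalent. So I would focus entirely on part (1). Label the top row $A \xrightarrow{f} B \xrightarrow{g} C$ and the bottom row $D \xrightarrow{h} E \xrightarrow{k} F$, with vertical maps $a, b, c$, and assume throughout that the right square (on corners $B, C, E, F$) is a pullback.

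For the forward direction of (1), assume the left square is also a pullback, and let a test space $Q$ come equipped with maps $q_D \colon Q \to D$ and $q_C \colon Q \to C$ satisfying $c \circ q_C = k \circ h \circ q_D$. I would extract the factorization through $A$ in two stages: first apply the right pullback to the cone $(h \circ q_D, q_C)$ to produce a unique $q_B \colon Q \to B$ with $b \circ q_B = h \circ q_D$ and $g \circ q_B = q_C$; then apply the left pullback to $(q_D, q_B)$ to obtain the desired $q_A \colon Q \to A$. Uniqueness propagates by running both extractions in reverse.

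The converse direction is the slightly more subtle half. Assume the outer rectangle is a pullback and take a cone $(q_D, q_B)$ over the left square, so $b \circ q_B = h \circ q_D$. Postcomposing $q_B$ with $g$ yields a cone over the outer rectangle, producing a unique $q_A$ with $a \circ q_A = q_D$ and $g \circ f \circ q_A = g \circ q_B$. The only real content is to verify $f \circ q_A = q_B$: both maps sit in $B$ and, upon postcomposition with $b$ and with $g$, agree with $h \circ q_D$ and with $g \circ q_B$ respectively, hence by the uniqueness clause of the right pullback they must coincide. This small diagram chase, invoking right-pullback \emph{uniqueness} rather than existence, is the only place where the standing hypothesis on the right square is used in this direction, and is what I expect to be the most finicky step to state cleanly, though it is routine rather than a genuine obstacle.

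Part (2) then follows without additional argument: since \cref{FiberProduct} defines pushouts as pullbacks in the opposite category, and since passing to the opposite category reverses all arrows in the pasting diagram and thereby exchanges the left and right squares, the statement of (2) is precisely the statement of (1) read in the opposite category. No separate proof is needed.
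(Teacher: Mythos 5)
Your proof is correct, and it is the standard textbook argument: two applications of the right pullback's universal property in the forward direction, a uniqueness chase against the right pullback's cone projections in the converse direction, and formal duality for the pushout half. The paper itself does not actually supply a proof of this Proposition --- it states it with a citation to the nLab --- so there is nothing in the source to compare against; your argument is a faithful reconstruction of the argument that reference supplies.

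One small caveat worth spelling out in a final write-up: in the converse direction you also need to verify that the cone $(q_D, g \circ q_B)$ you feed to the outer pullback actually commutes, namely that $k \circ h \circ q_D = c \circ g \circ q_B$; this holds because $k \circ h \circ q_D = k \circ b \circ q_B = c \circ g \circ q_B$, using first the hypothesis on the given cone over the left square and then commutativity of the right square. You implicitly use this but do not state it. Similarly, the uniqueness of $q_A$ in the converse direction deserves an explicit sentence (any competing $q_A'$ with $a \circ q_A' = q_D$ and $f \circ q_A' = q_B$ factors the same outer cone after postcomposing with $g$, hence equals $q_A$ by the outer pullback's uniqueness). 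Neither point is a gap in the reasoning, just places where the finished text should make the chase fully explicit.
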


\subsubsection{D-Topological Spaces}
\label{DTopologicalSpaces}

The topological spaces that one actually cares about in geometry are \emph{probeable} (cf. \cite{Schreiber2025}) by \emph{Cartesian spaces} $\mathbb{R}^n$ ($n \in \mathbb{N}$) with their standard Euclidean topology, in that they are \emph{D-topological} (cf. \parencites{nLab:DeltaGeneratedTopologicalSpace}[Ntn. 4.3.19]{SS25-Bun}):

\begin{definition}
\label[definition]{DTopologicalSpace}
A topological space $X$ is called \emph{D-topological} (traditionally: \emph{Delta-generated} or \emph{numerically generated}) if their subsets $S \subset X$ are open iff their preimages under all maps of the form $\mathbb{R}^n \to X$ are open in Cartesian space.
\end{definition}

Basic classes of examples D-topological spaces include, in increasing generality:
\begin{itemize}
  \item topological manifolds,
  \item cell complexes (\cref{CellAttachment}),
  \item retracts of cell complexes,
\end{itemize}
and thereby all the cofibrant spaces of algebraic topology (cf. \cite{nLab:ClassicalModelStructureOnTop}).

More generally, all pushouts \cref{PushoutSquare} (and generally: colimits) of D-topological spaces are themselves D-topological. 

Moreover, the topology of topological spaces $A$ may be refined to their induced D-topology 
\begin{equation}
  \label{DTopologization}
  \begin{tikzcd}[sep=15pt]
    \DTopology{X} 
    \ar[
      rr, 
      "{ x \,\mapsto\, x}"
     ] 
     &&
     X
\end{tikzcd}
\end{equation}
without changing the maps into it out of D-topological spaces:
\begin{equation}
  \mbox{$X$ is D-topological}
  \;\;\;
  \Rightarrow
  \;\;\;
  \big\{
    \begin{tikzcd}[sep=small]
      X
      \ar[
        r, 
        dashed
      ]
      &
      A
    \end{tikzcd}
  \big\}
  \simeq
  \big\{
    \begin{tikzcd}[sep=small]
      X
      \ar[
        r, 
        dashed
      ]
      &
      \DTopology{A}
    \end{tikzcd}
  \big\}
  \mathrlap{\,.}
\end{equation}
In particular this means that for the discussion of cohomology in terms of maps from manifolds/CW-complexes into classifying spaces (as surveyed in \cref{CohomologyAndPhysics}) the topology on the classifying spaces may without restriction be taken to be refined to their D-topology.

For some pairs $(X,Y)$ of exotic D-topological spaces, their fiber products $X \times_B Y$ \cref{PullbackSquare}  may fail to be D-topological --- but the D-topologization $\DTopology{(X \times_B Y)}$ \cref{DTopologization} still satisfies the universal property \cref{PullbackSquare} among D-topological spaces, and hence is the correct fiber product (generally: limit) in the category of D-topological spaces.

Finally, the central property of the category of D-topological spaces is its \emph{cartesian closure}, meaning that for $(X,Y)$ a pair of D-topological spaces, the set 
\begin{equation}
  \label{MappingSpace}
  \mathrm{Map}(X,Y)
  \defneq
  \big\{
    \begin{tikzcd}[column sep=small]
      X
      \ar[r, dashed]
      &
      Y
    \end{tikzcd}
  \big\}
\end{equation}
of all maps $ X \to Y$ becomes a D-topological space (with the D-topologization of the \emph{compact-open topology}) such that there are natural homeomorphisms (cf. \cite[Rem. 2.0.17]{SS25-Bun})
\begin{equation}
  \label{InternalHomAdjunction}
  \begin{tikzcd}[row sep=-3pt, column sep=0pt]
  \mathrm{Map}\big(
    Z \times X
    ,
    Y
  \big)
  \ar[
    rr,
    "{
      \widetilde{(-)}
    }",
    "{ \sim }"{swap}
  ]
  &&
  \mathrm{Map}\big(
    Z 
    ,
    \mathrm{Map}(X,Y)
  \big)
  \\
  f(-,-) &\longmapsto& f(-)(-)
  \mathrlap{\,.}
  \end{tikzcd}
\end{equation}

In summary, this says that, in this technical sense, D-topological spaces form a \emph{convenient category of topological spaces} (cf. \parencites{nLab:ConvenientCategoryOfSpaces}[\S 1.1.1]{SS25-Bun}). Therefore we declare that:
\begin{notation}
\label[notation]{SpacesAreDTopological}
From now on,
\emph{by ``\emph{topological spaces}'' we mean ``\emph{D-topological spaces'' (\cref{DTopologicalSpace})}}. In consequence, we say ``\emph{topological group}'' \cref{TopologicalGroup} for ``\emph{D-topological group}'' and ``\emph{topological groupoid}'' (in \cref{TopologicalGroupoids}) for ``\emph{D-topological groupoid}'', etc.
\end{notation}

\subsubsection{Spaces with group action}
\label{SpacesWithGroupAction}

\begin{definition}[{cf. \cite[\S 2.1]{SS25-Bun}}]
\label[definition]
  {TopologicalGroupAction}
For $G$ a \emph{topological group}, hence a topological space equipped with maps of the form
\begin{equation}
  \label{TopologicalGroup}
  \begin{tikzcd}
    G \times G
    \ar[
      rr,
      "{ (-)\cdot(-) }"
    ]
    &&
    G
    \mathrlap{\,,}
  \end{tikzcd}
  \;\;\;\;\;
  \begin{tikzcd} 
     \ast 
       \ar[
         r,
         "{ \mathrm{e} }"
       ] 
     & 
    G
    \mathrlap{\,,}
  \end{tikzcd}
\end{equation}
and for $X$ a topological space, a (left) \emph{topological $G$-action on $X$} is a map 
\begin{equation}
  \label{TopologicalGroupActionMap}
  \begin{tikzcd}[sep=0pt]
  G \times X
  \ar[
    rr,
    "{
      G \acts X
    }"
  ]
  &&
  X
  \\[-2pt]
  (g, x) &\mapsto& g \cdot x
  \mathrlap{\,,}
  \end{tikzcd}
  \;\;\;\;
  \mbox{s.t.}
  \;\;
  \forall_{x \in X}
  \left\{\,
  \begin{aligned}
    & 
    \mathrm{e}\cdot x = x
    \\
    &
    \forall_{g_1, g_2 \in G}
    :
    (g_2 \cdot g_1) \cdot x
    =
    g_2 \cdot (g_1 \cdot x)
    \mathrlap{\,.}
  \end{aligned}
  \right.
\end{equation}
One also says that $G \acts \, X$ is a \emph{$G$-space}, for short. For a pair of these, $G \acts \,X$ and
$G \acts \,Y$, an \emph{equivariant map} $f$ between them is
\begin{equation}
  \label{Equivariance}
  f \in \mathrm{Map}(X,Y)
  \;\;
  \mbox{s.t.}
  \;\;
  \forall_{
    \substack{
      x \in X
      \\
      g \in G
    }
  }
  \;
  f(g \cdot x)
  =
  g \cdot f(x)
  \mathrlap{\,.}
\end{equation}
We denote the subspace of the mapping space \cref{MappingSpace} on the $G$-equivariant maps as follows:
\begin{equation}
  \label{EquivariantMappingSpace}
  \Bigg\{\!
  \adjustbox{raise=-8pt}{
  \begin{tikzcd}
    X
    \ar[
      in=60,
      out=180-60,
      looseness=4,
      "{
        \,\mathclap{G}\,
      }"{description}
    ]
    \ar[
      r, 
      dashed,
      "{ f }"
    ]
    &
    Y
    \ar[
      in=60,
      out=180-60,
      looseness=4,
      "{
        \,\mathclap{G}\,
      }"{description}
    ]
  \end{tikzcd}
  }
  \Bigg\}
  \defneq
  \mathrm{Map}(X,Y)^G
  \subset
  \mathrm{Map}(X,Y)
  \mathrlap{\,.}
\end{equation}

\end{definition}

Here for $G \acts X$ a $G$-space, the notation
\begin{equation}
  \label{FixedSubspace}
  X^G
  :=
  \big\{
    x \in X
    \,\big\vert\,
    \forall_{g \in G}
    \, 
    g \cdot x = x
  \big\}
  \subset
  X
\end{equation}
indicates the \emph{$G$-fixed subspace}.

For example, the plain mapping space between $G$-spaces becomes itself a $G$-space by the \emph{conjugation action}
\begin{equation}
  \begin{tikzcd}[row sep=-3pt, 
    column sep=0pt
  ]
    G \times
    \mathrm{Map}(X,Y)
    \ar[
      rr
    ]
    &&
    \mathrm{Map}(X,Y)
    \\
    \big(
      g, f(-)
    \big)
    &\longmapsto&
    g^{-1}\cdot f(g\cdot -)
  \end{tikzcd}
\end{equation}
and its $G$-fixed points \cref{FixedSubspace} are precisely the $G$-equivariant maps \cref{EquivariantMappingSpace}.

The most basic examples of $G$-spaces are the \emph{coset spaces} for subgroups $H \subset G$,
\begin{equation}
  \label{CosetSpace}
  G/H
  :=
  \big\{
    g \cdot H
    \,\big\vert\,
    g \in G
  \big\}
\end{equation}
equipped with their inherited $G$-action:
\begin{equation}
  \begin{tikzcd}[row sep=-3pt, column sep=0pt]
    \label{ActionOnCosetSpace}
    G \times G/H
    \ar[
      rr
    ]
    &&
    G/H
    \\
    (g',\, g \cdot H)
    &\mapsto&
    g' \cdot g \cdot H
    \mathrlap{\,.}
  \end{tikzcd}
\end{equation}
We come back to this in \cref{HoQuotientOfGModHByG} below.

\subsubsection{Homotopy}
\label{OnHomotopy}

A \emph{homotopy} (cf. \cite[\S 3]{Fomenko2016}) between a pair of parallel maps $f,g : X \to Y$ is a continuous deformation between them,
\begin{equation}
  \label{AHomotopy}
  \begin{tikzcd}
  X
  \ar[
    rr,
    bend left=30,
    "{ f }"{description, name=s}
  ]
  \ar[
    rr,
    bend right=30,
    "{ g }"{description, name=t}
  ]
  \ar[
    from=s,
    to=t,
    Rightarrow,
    dashed,
    "{ \eta }"
  ]
  &&
  Y
  \end{tikzcd}
 \;\; : \;\; 
  \begin{tikzcd}
    {[0,1]}
    \ar[r]
    &
    \mathrm{Map}(X,Y)
    \mathrlap{\,,}
  \end{tikzcd}
\end{equation}
namely a continuous path between the corresponding points $(\widetilde{f}, \widetilde{g})$ in the mapping space \cref{MappingSpace},  hence a map $\eta$ fitting into this commuting diagram of maps:
\begin{equation}
  \label{HomotopyBetweenTopologicalSpaces}
  \begin{tikzcd}[
    row sep=18pt
  ]
    X
    \ar[
      d,
      hook,
      "{
        (\mathrm{id},0)
      }"{swap}
    ]
    \ar[
      drr,
      "{ f }"
    ]
    \\
    X \times [0,1]
    \ar[
      rr,
      dashed,
      "{ \eta }"{description, pos=.4}
    ]
    &&
    Y
    \\
    X
    \ar[
      u,
      hook',
      "{
        (\mathrm{id},1)
      }"
    ]
    \ar[
      urr,
      "{ g }"{swap}
    ]
  \end{tikzcd}
  \hspace{.4cm}
  \underset{
    \scalebox{.7}{
      \eqref{InternalHomAdjunction}
    }
  }{
    \Leftrightarrow
  }
  \hspace{.6cm}
  \begin{tikzcd}[row sep=15pt]
    \{0\}
    \ar[d, hook]
    \ar[
      drr,
      "{ 
        \widetilde{f} 
      }"
    ]
    \\
    {[0,1]}
    \ar[
      rr,
      dashed,
      "{
        \widetilde{\eta}
      }"{description}
    ]
    &&
    \mathrm{Map}(X,Y)
    \mathrlap{\,.}
    \\
    \{1\}
    \ar[u, hook']
    \ar[
      urr,
      "{ \widetilde{g} }"{swap}
    ]
  \end{tikzcd}
\end{equation}

When the spaces are equipped with $G$-action \cref{TopologicalGroupActionMap}, then an \emph{equivariant homotopy} between equivariant maps \cref{Equivariance} is a homotopy \cref{HomotopyBetweenTopologicalSpaces} running inside the equivariant mapping space \cref{EquivariantMappingSpace}:
\begin{equation}
  \label{EquivariantHomotopy}
  \begin{tikzcd}[row sep=small,
    column sep=20pt
  ]
    \{0\}
    \ar[d, hook]
    \ar[
      drr,
      "{ 
        \widetilde{f} 
      }"
    ]
    \\
    {[0,1]}
    \ar[
      rr,
      dashed,
      "{
        \widetilde{\eta}
      }"{description, pos=.4}
    ]
    &&
    \mathrm{Map}(X,Y)^G
    \ar[
      r, 
      hook
    ]
    &
    \mathrm{Map}(X,Y)
    \mathrlap{\,.}
    \\
    \{1\}
    \ar[
      u,
      hook'
    ]
    \ar[
      urr,
      "{ \widetilde{g} }"
      {swap}
    ]
  \end{tikzcd}
\end{equation}

  A map is a \emph{homotopy equivalence}, to be denoted 
  \begin{equation}
    \label{AHomotopyEquivalence}
    \begin{tikzcd}
      f 
        : 
      X 
      \ar[
        r,
        "{ \sim }",
        "{ \mathrm{hmtpy} }"{swap}
      ] 
      &
      Y
    \end{tikzcd}
  \end{equation}  
   if there exists a reverse map $\begin{tikzcd}[sep=small] \overline{f} : Y \ar[r] & X \end{tikzcd}$ and homotopies \cref{AHomotopy} of this form:
  \begin{equation}
    \label{HomotopyEquivalenceOfSpaces}
    \begin{tikzcd}[
      row sep=15pt, 
      column sep=25pt
    ]
      & 
      Y
      \ar[
        dr,
        "{ \overline{f} }"{description}
      ]
      \ar[
        d,
        shorten=5pt,
        Rightarrow,
      ]
      \ar[
        rr,
        "{
          \mathrm{id}
        }"{description}
      ]
      &
      {}
      \ar[
        d,
        Rightarrow,
        shorten=2pt
      ]
      &
      Y
      \\
      X
      \ar[
        ur,
        "{ f }"{description}
      ]
      \ar[
        rr,
        "{ \mathrm{id} }"{description}
      ]
      &
      {}
      &
      X
      \mathrlap{\,.}
      \ar[
        ur,
        "{
          f
        }"{description}
      ]
    \end{tikzcd}
  \end{equation}

The \emph{vertical composition} of a composable pair of homotopies \cref{AHomotopy} is the evident concatenation of these paths of maps
\begin{equation}
  \begin{tikzcd}
  X
  \ar[
    rr,
    bend left=60,
    "{ f }"{description, name=s}
  ]
  \ar[
    rr,
    "{ g }"{description, name=m}
  ]
  \ar[
    rr,
    bend right=60,
    "{ h }"{description, name=t}
  ]
  \ar[
    from=s,
    to=m,
    Rightarrow,
    dashed,
    "{\, \eta_1 }"
  ]
  \ar[
    from=m,
    to=t,
    Rightarrow,
    dashed,
    "{\, \eta_2 }"
  ]
  &&
  Y
  \end{tikzcd}
 \;\; : \;\;
  \begin{tikzcd}
    (x,s)
    \mapsto
    \left\{
    \begin{aligned}
      \eta_1(x,s) & \;
      \mbox{if $s\in [0,\tfrac{1}{2}]$}
      \\
      \eta_2(x,s) & 
      \mbox{if $s\in [\tfrac{1}{2}, 1]$\rlap{,}}
    \end{aligned}
    \right.
  \end{tikzcd}
\end{equation}
while \emph{horizontal composition} of homotopies by maps is the evident actual composition of component maps:
\begin{equation}
  \begin{tikzcd}[column sep=20pt]
  X'
  \ar[r, "{l}"]
  &
  X
  \ar[
    rr,
    bend left=30,
    "{ f }"{description, name=s}
  ]
  \ar[
    rr,
    bend right=30,
    "{ g }"{description, name=t}
  ]
  \ar[
    from=s,
    to=t,
    Rightarrow,
    dashed,
    "{ \eta }"
  ]
  &&
  Y
  \ar[r, "{ r }"]
  &
  Y'
  \end{tikzcd}
  \;: \;
  (x',s) 
    \mapsto
  r \circ \eta(-,s) \circ l (x')
  \mathrlap{\,.}
\end{equation}
Combining this, one obtains horizontal composition of homotopies themselves, as 
\begin{equation}
  \begin{tikzcd} 
    {}
    \ar[
      rr, bend left=40,
      "{}"{swap, name=s1}
    ]%
    \ar[
      rr, bend right=40,
      shorten=-3pt,
      "{}"{name=t1}
    ]%
    \ar[
      from=s1,
      to=t1,
      dashed,
      Rightarrow
    ]
    &&
    {}
    \ar[
      rr, bend left=40,
      "{}"{swap, name=s2}
    ]
    \ar[
      rr, bend right=40,
      shorten=-3pt,
      "{}"{name=t2}
    ]
    \ar[
      from=s2,
      to=t2,
      dashed,
      Rightarrow
    ]
    &&
    {}
  \end{tikzcd}
 \; := \;
  \begin{tikzcd}[
    row sep=-2pt
  ]
    {}
    \ar[
      rr, bend left=40,
      "{}"{swap, name=s1}
    ]
    &&
    {}
    \ar[
      rr, bend left=40,
      "{}"{swap, name=s2}
    ]
    \ar[
      rr, bend right=40,
      shorten=-3pt,
      "{}"{name=t2}
    ]
    \ar[
      from=s2,
      to=t2,
      dashed,
      Rightarrow
    ]
    &&
    {}
    \\
    {}
    \ar[
      rr, bend left=40,
      "{}"{swap, name=s1}
    ]
    \ar[
      rr, bend right=40,
      shorten=-3pt,
      "{}"{name=t1}
    ]
    \ar[
      from=s1,
      to=t1,
      dashed,
      Rightarrow
    ]
    &&
    {}
    \ar[
      rr, bend right=40,
      shorten=-2pt,
      "{}"{name=t2}
    ]
    &&
    {}
  \end{tikzcd}
\end{equation}
(or the other way around, which is different but higher-order homotopic),
such as in the important special case of \emph{pasting composites} of ``square'' homotopies:
\begin{equation}
  \label{PastingOfHomotopies}
  \begin{tikzcd}[
    sep=30pt
  ]
    {}
    \ar[r]
    \ar[d]
    & 
    {}
    \ar[d]
    \ar[r]
    \ar[
      dl,
      shorten=8pt,
      dashed,
      Rightarrow
    ]
    & 
    {}
    \ar[d]
    \ar[
      dl,
      shorten=8pt,
      dashed,
      Rightarrow
    ]
    \\
    {}
    \ar[r]
    & 
    {}
    \ar[r]
    & 
    {}
  \end{tikzcd}
 \;\; := \;
  \begin{array}{c}
  \begin{tikzcd}[
    sep=30pt
  ]
    {}
    \ar[r]
    &[+8pt] 
    {}
    \ar[d]
    \ar[r]
    & 
    {}
    \ar[d]
    \ar[
      dl,
      shorten=8pt,
      dashed,
      Rightarrow
    ]
    \\
    {}
    & 
    {}
    \ar[r]
    & 
    {}
  \end{tikzcd}
  \\[-40pt]
  \begin{tikzcd}[
    sep=30pt
  ]
    {}
    \ar[r]
    \ar[d]
    & 
    {}
    \ar[d]
    \ar[
      dl,
      shorten=8pt,
      dashed,
      Rightarrow
    ]
    &[8pt] 
    {}
    \\
    {}
    \ar[r]
    & 
    {}
    \ar[r]
    & 
    {}
    \mathrlap{\;\,.}
  \end{tikzcd}
  \end{array}
\end{equation}

The higher homotopy theory of such 2- and higher-dimensional diagrams of maps and homotopies is usefully captured by \emph{model category structure on topological spaces} (cf. \cite{nLab:ClassicalModelStructureOnTop}, for review in our context see \cite[\S 1]{FSS23-Char}).  Here we proceed with making explicit only the most minimum amount of technology necessary at this point.

\subsubsection{Topological Groupoids}
\label{TopologicalGroupoids}

A \emph{groupoid} (cf. \parencites{Weinstein1996Groupoids}[\S 2.1]{Sc2017-Topology}{IbortRodriguez2021}[p. 6]{Schreiber2025}) is a ``set with gauge transformations'' between its elements. 
For example, the phase space of a gauge theory is a groupoid, whose ``objects'' are the gauge field configurations and whose ``morphisms'' are the actual gauge transformations between them. This example is actually a \emph{Lie groupoid}, hence with smooth structure on its sets of objects and morphism (in physics this is best known for infinitesimal gauge transformations only, which gives the underlying \emph{Lie algebroid} whose Chevalley-Eilenberg algebra is known as the \emph{BRST complex}, cf. \cite[\S 10]{Sc2017-QFT}.) 

For the time being, we disregard smooth structure (we turn to this instead in \cref{SmoothInfinityGroupoids}) and consider groupoids in the broad generality where they are equipped with any topological structure (which here means: any \emph{D-topological structure}, by \cref{SpacesAreDTopological}, whence the following is about \emph{D-topological groupoids}):

\begin{definition}
\label[definition]{TopologicalGroupoid}
A \emph{topological groupoid} $\mathcal{X}$ (cf. \parencites[\S II.1]{Mackenzie1987}[Ntn. 2.2.1]{SS25-Bun}) is a topological space $\mathrm{Mor}(\mathcal{X})$  of ``morphisms'' and a subspace $\mathrm{Obj}(\mathcal{X})$ of ``objects'' (identity morphisms)  --- equipped with continuous maps of this form:
\begin{equation}
  \label{StructureOfATopologicalGroupoid}
    \begin{tikzcd}[column sep=large]
    \mathrm{Mor}(\mathcal{X})
      \tensor[_s]{\times}{_t}
    \mathrm{Mor}(\mathcal{X})
    \ar[
      r, 
      "{ (-)\circ(-) }"
    ]
    &
    \mathrm{Mor}(\mathcal{X})
    \ar[
      out=180-60,
      in=60,
      looseness=4,
      "{ i }"{description}
    ]
    \ar[
      r, shift left=8pt,
      "{ s }"{description}
    ]
    \ar[
      r, 
      shift right=8pt,
      "{ t }"{description}
    ]
    &
    \mathrm{Obj}(\mathcal{X})
    \ar[
      l, 
      hook',
      "{ \mathrm{e} }"{description}
    ]
    \mathrlap{\,,}
    \end{tikzcd}
\end{equation}
such that 
\begin{enumerate}
\item 
$s \circ e = t \circ e = \mathrm{id}$, which means that the morphisms form a reflexive graph over the subspace of objects,
\item 
the \emph{composition} operation $(-) \circ (-)$ --- of a morphism $f$ whose \emph{target} object $t(f)$ coincides with the \emph{source} object $s(g)$ of another morphism $g$ --- is associative, and unital with respect to the \emph{identity morphisms} $\mathrm{e}_x$ on objects $x$, and has \emph{inverse morphisms} given by $i$.
\end{enumerate}

For notational transparency, when identities and inverses are understood, it is often useful to denote topological groupoids by the set of generic pairs of composable morphisms and their composites, like this:
\begin{equation}
  \label{SimplicialNotationForGroupoids}
  \mathcal{X}
  \defneq
  \left\{
  \begin{tikzcd}[
    row sep=7pt, 
    column sep=25pt
  ]
    & 
    y
    \ar[dr, "{ g }"]
    \\
    x
    \ar[rr, "{ g \circ f }"]
    \ar[ur, "{ f }"]
    &&
    z
  \end{tikzcd}
  \;\middle\vert\;
    \begin{aligned}
      & 
      x,y,z \in \mathrm{Obj}
      \\
      & f,g \in \mathrm{Mor}
    \end{aligned}
  \right\}.
\end{equation}
Given a topological groupoid $\mathcal{X}$ we say:
\begin{enumerate}
\item The coequalizer \cref{CoequalizerDiagram} of the source and taget map is its space of \emph{isomorphism classes}:
\begin{equation}
 \label{IsomorphismClassesOfTopGroupoid}
  [\mathcal{X}]_0
  :=
  \mathrm{Obj}(\mathcal{X})\big/
  (\!\!
    \tensor[_s]{\sim}{_t}
  )
  \,.
\end{equation}

\item
The fiber of the combined source/target map
\begin{equation}
  \label{CombinedSourceAndTargetMap}
  \begin{tikzcd}
    \mathrm{Mor}(\mathcal{X})
    \ar[r, "{ (s,t) }"]
    &
    \mathrm{Obj}(\mathcal{X})^2
  \end{tikzcd}
\end{equation}
over a single object $x := (x,x) \in \mathrm{Obj}(\mathcal{X})^2$ is the \emph{isotropy group} (or \emph{automorphism group} or \emph{stabilizer group}) of $x$ 
\begin{equation}
  \label{IsotropyGroupOfTopGrpd}
  \mathcal{X}_x
  :=
  \left\{\!
    \adjustbox{
      raise=-5pt
    }{
    \begin{tikzcd}
      x
      \ar[
        in=50,
        out=180-50,
        looseness=4,
        shift right=2pt,
       "{ \,\mathclap{g}\, }"{description}
      ]
    \end{tikzcd}
    }
   \middle\vert
   \;
     g \in \mathrm{Mor}(\mathcal{X})
  \right\}
  \,,
\end{equation}
with topological group structure \cref{TopologicalGroup} inherited from the restriction of the topological groupoid structure. 
\end{enumerate}
\end{definition}

\begin{remark}
\label[remark]
 {LieGroupoidsAndSmoothGroupoids}
There are several natural variants and generalizations of the \emph{topological groupoids} of \cref{TopologicalGroupoid}:
\begin{enumerate}
\item A \emph{Lie groupoid} (cf. \parencites{Mackenzie1987}{Moerdijk2003}) is a topological groupoid whose spaces of objects and morphisms are equipped with the structure of smooth manifolds, whose structure maps are smooth maps, and whose source and target maps are submersions, so that their fiber product (of composable morphisms) also inherits the structure of a smooth manifold.

Most of the following examples and discussion apply to Lie groupoids just as well. Exceptions are mapping objects, starting with \cref{ProbingTopologicalGroupoidByRns}, which may be ``too large'' to be smooth manifolds (nor even Fr{\'e}chet manifolds, for that matter).

\item More generally, one may consider \emph{diffeological groupoids}, whose morphism space is equipped with the structure of a \emph{diffeological space} (cf. \parencites{IglesiasZemmour2013}[Ntn. 4.3.15]{SS25-Bun}{nLab:DiffeologicalSpace}) and whose structure maps are smooth maps with respect to that diffeological structure. This class faitfully subsumes both D-topological groupoids and Lie groupoids and is closed under all operations discussed here. 

\item Fully generally, as far as groupoids in differential topology are concerned, one may consider \emph{smooth groupoids} \parencites[Def. 1.2.252]{Sc13-dcct}{Eggertsson2014}, whose space of morphisms is equipped with the structure of a \emph{smooth set} \parencites[Def. 1.2.16, 1.3.58]{Sc13-dcct}{GS25-FieldsI}{Schreiber2025}[Ntn. 4.3.15]{SS25-Bun}{IbortMas2025} and whose structure maps are smooth with respect to that. This class faithfully subsumes all of the above but contains also ``non-concrete'' groupoids, like the moduli stacks of $\mathbf{B}\Gamma_{\mathrm{conn}}$ of $\Gamma$-principal bundles with connection (\parencites[Prop. 1.2.107]{Sc13-dcct}{FSS15-Stacky}[Ex. 2.11]{BeniniSchenkelSchreiber2018}, in variation of the plain moduli stack $\mathbf{B}\Gamma$ discussed below in \cref{DeloopingGroupoid,GroupoidOfPrincipalBundles,OrdinaryNonabelianCechCohomology}). 

\item From this point on it is natural to generalize, finally, to higher smooth groupoids, namely to \emph{smooth $\infty$-groupoids} \parencites[\S 4.4]{Sc13-dcct}[\S 4.3]{SS25-Bun}{SS26-Orb}, which is what we indicate in \cref{SmoothInfinityGroupoids}.
\end{enumerate}
  For the remainder of this section on topological groupoids we disregard all this further generality just for pedagogy of the exposition. The inclined reader is invited to make the evident substitutions.
\end{remark}

\begin{example}
  \label[example]{SpaceAsGroupoid}
  Given a topological space $X$, it may be regarded as a topological groupoid whose only morphisms are identities:
  \begin{equation}
    \bigg(
    \begin{tikzcd}[
      column sep=15pt
    ]
      X 
      \ar[
        rr,
        "{ \mathrm{id} }"
      ]
      &&
      X
      \ar[
        rr,
        shift left=6pt,
        "{\mathrm{id}}"
      ]
      \ar[
        rr,
        <-,
        "{\mathrm{id}}"{description}
      ]
      \ar[
        rr,
        shift right=6pt,
        "{\mathrm{id}}"{swap}
      ]
      && 
      X
    \end{tikzcd}
    \bigg)
    =
    \big\{
      x
      \,\big\vert\,
      x \in X
    \big\}
    \,,
  \end{equation}
  and we denote this groupoid still by ``$X$''.
\end{example}

\begin{example}
  The \emph{interval groupoid} $I$ has two objects, $\mathrm{Obj} = \{0,1\}$, and a single morphism and its inverse between these (hence, with the identity morphisms, a total of four morphisms):
  \begin{equation}
    \label{ArrowPictureOfIntervalGroupoid}
    I 
    =
    \big\{
    \begin{tikzcd}
      0
      \ar[
        r,
        bend left=20
      ]
      \ar[
        r,
        <-,
        bend right=20
      ]
      &
      1
    \end{tikzcd}
    \big\}
    \mathrlap{\,.}
  \end{equation}
\end{example}

\begin{example}[{cf. \cite[Ex. 2.2.6]{SS25-Bun}}]
  \label[example]{DeloopingGroupoid}
  For $\Gamma$ a topological group, its \emph{delooping groupoid} is the topological groupoid with a single object, $\Gamma$ worth of morphisms, composition given by the group operation $(-)\cdot(-) : \Gamma \times \Gamma \to \Gamma$ and inversion given by group inverses 
  $(-)^{-1} : \Gamma \to \Gamma$, hence:
  \begin{equation}
    \label{TheDeloopingGroupoid}
    \mathbf{B}G
    :=
    \Bigg(
    \begin{tikzcd}[column sep=30pt]
      \Gamma \times \Gamma
      \ar[
        r,
        "{
          (-)\cdot(-)
        }"
      ]
      &[+5pt]
      \;
      \Gamma
      \ar[
        out=60,
        in=180-60,
        looseness=4,
        "{
          \;\mathclap{(-)^{\mathrlap{-1}}}\;
        }"{description}
      ]
      \;
      \ar[
        r, 
        shift left=6pt,
      ]
      \ar[
        r,
        <-,
        "{ 
          \mathrm{e} 
        }"{description}
      ]
      \ar[
        r, 
        shift right=6pt,
      ]
      &
      \ast
    \end{tikzcd}
    \Bigg)
    =
    \left\{
    \begin{tikzcd}[
      row sep=small, 
      column sep=15pt,
    ]
      & 
      \bullet
      \ar[
        dr,
        "{ \gamma_2 }"
      ]
      \\
      \bullet
      \ar[
        ur,
        "{ \gamma_1 }"
      ]
      \ar[
        rr,
        "{
          \gamma_2 \cdot \gamma_1
        }"
      ]
      &&
      \bullet
    \end{tikzcd}
    \middle\vert\,
    \gamma_i \in \Gamma
    \right\}.
  \end{equation}
  When regarded as a \emph{topological stack}, below in \cref{TopologicalStacks}, this simple delooping groupoid is (a representation of) the \emph{moduli stack of principal $\Gamma$-bundles} (cf. \cref{GroupoidOfPrincipalBundles,OrdinaryNonabelianCechCohomology}), in fact of \emph{equivariant principal $\Gamma$-bundles} (cf. \cref{EquivariantNonabelianCechCohomology}).
\end{example}

\begin{example}
  \label{GroupoidOfPrincipalBundles}
  For $X$ a topological space $\Gamma$ a topological group \cref{TopologicalGroup}, the groupoid
  \begin{equation}
    \label{TheGroupoidOfPrincipalBundles}
    \Gamma \mathrm{PrnBdl}(X)
    \defneq
    \left\{
    \begin{tikzcd}[
      row sep=7pt
    ]
      &
      P_2
      \ar[
        dr,
        "{
          \gamma'
        }"        
      ]
      \\
      P_1
      \ar[
        ur,
        "{
          \gamma
        }"
      ]
      \ar[
        rr,
        "{
          \gamma' \circ \gamma
        }"
      ]
      &&
      P_3
    \end{tikzcd}
    \right\}
  \end{equation}
  has as objects the (discrete set of) $\Gamma$-principal bundles $P$ over $X$  (cf. \parencites[\S 4.3]{Husemoller1994}[\S 1.1]{RudolphSchmidt2017}[\S 9]{Nakahara2018}[Ntn. 2.0.25]{SS25-Bun}) and as morphisms their $\Gamma$-equivariant bundle homomorphisms,
  \begin{equation}
    \begin{tikzcd}[
      row sep=-1pt
    ]
      P_1
      \ar[
        in=60,
        out=180-60,
        looseness=4,
        "{ 
          \,\mathclap{\Gamma}\, 
        }"{description}
      ]
      \ar[
        rr,
        "{ \gamma }"
      ]
      \ar[dr]
      && 
      P_2
      \ar[
        in=60,
        out=180-60,
        looseness=4,
        "{ 
          \,\mathclap{\Gamma}\, 
        }"{description}
      ]
      \ar[dl]
      \\
      & X
    \end{tikzcd}
  \end{equation}
  hence the \emph{gauge transformations} if we think of these bundles as charge sectors of gauge fields.

  Over the point, this reduces to the delooping groupoid from \cref{DeloopingGroupoid}:
  \begin{equation}
    \Gamma\mathrm{PrnBdl}(\ast)
    =
    \mathbf{B}\Gamma
    \mathrlap{\,.}
  \end{equation}
\end{example}

\begin{example}[{cf. \cite[Ex. 2.2.6]{SS25-Bun}}]
\label[definition]{ActionGroupoid}
Given an action $G \acts X$ (\cref{TopologicalGroupAction}), its \emph{action groupoid} (or \emph{homotopy quotient}) is:
\begin{equation}
  \label{TheActionGroupoid}
  \begin{aligned}
  G \backsslash X
  &
  \simeq
  \left(
    \begin{tikzcd}[row sep=23pt,
      column sep=45pt
    ]
      G \times G \times X
      \ar[
        rr,
        "{
          (g_2, g_1, x)
          \,\mapsto\,
          (g_2 \cdot g_1, x)
        }"{yshift=+1pt}
      ]
      &&
      G \times X
      \ar[
        r,
        shift left=8pt,
        "{
          (g,x) \,\mapsto\, x
        }"
      ]
      \ar[
        r,
        <-,
        "{
          (\mathrm{e},x)
          \,\mapsfrom\,
          x
        }"{description}
      ]
      \ar[
        r,
        shift right=8pt,
        "{ 
          (g,x) \,\mapsto\, g \cdot x 
        }"{swap}
      ]
      &[+5pt]
      X
    \end{tikzcd}
  \right)
  \\
  & =
  \left\{
  \begin{tikzcd}[
    row sep=9pt, 
    column sep={between origins, 32pt}
  ]
    & 
    g_1 \cdot x
    \ar[
      dr,
      "{ g_2 }"
    ]
    \\
    x
    \ar[
      ur,
      "{ g_1 }"
    ]
    \ar[
      rr,
      "{
        g_2 \cdot g_1
      }"
    ]
    &&
    g_2 \cdot g_1 \cdot x
  \end{tikzcd}
  \;\middle\vert\;
  \begin{aligned}
    & x \in X
    \\
    & g_1, g_2 \in G
  \end{aligned}
  \right\}
  \mathrlap{.}
  \end{aligned}
\end{equation}
For an action $G \acts \ast$ on the point, the corresponding action groupoid \cref{TheActionGroupoid} is the delooping groupoid \cref{TheDeloopingGroupoid}:
\begin{equation}
  \label{DeloopingGroupoidIsPointQuotient}
  G \backsslash \ast
  =
  \mathbf{B}G  
  =
  \left\{
  \adjustbox{raise=-7pt}{
  \begin{tikzcd}
    \bullet
    \ar[
      in=55,
      out=180-55,
      looseness=5,
      shift right=2pt,
      "{ g }"{description}
    ]
  \end{tikzcd}
  }
  \Big\vert\;
    g \in G
  \right\}
.
\end{equation}
\end{example}

\begin{example}
  \label[example]
  {GeometricallyDiscreteUnderlyingGroupoid}
  Given a topological groupoid $\mathcal{X}$, its \emph{underlying topologically discrete} groupoid $\flat \mathcal{X}$ has the same objects, morphisms and structure maps as $\mathcal{X}$, but for the discrete topology:
  \begin{equation}
    \label{UnderlyingTopologicallyDiscreteGroupoid}
    \flat \mathcal{X}
    :=
    \Bigg(
    \begin{tikzcd}
      \flat \mathrm{Mor}(\mathcal{X})
      \tensor[_s]{\times}{_t}
      \flat \mathrm{Mor}(\mathcal{X})
      \ar[r, "{\circ}"]
      &
      \flat \mathrm{Mor}(\mathcal{X})
      \ar[
        out=180-60,
        in=60,
        looseness=4,
        "{
          \,\mathclap{i}\,
        }"{description}
      ]
      \ar[
        r,
        shift left=7pt,
        "{ s }"
      ]
      \ar[
        r,
        <-,
        "{ \mathrm{e} }"{description}
      ]
      \ar[
        r,
        shift right=7pt,
        "{ t }"{swap}
      ]
      &
      \flat \mathrm{Obj}(\mathcal{X})
    \end{tikzcd}
    \Bigg)
    \mathrlap{\,.}
  \end{equation}
\end{example}

\begin{example} 
  \label[example]{FundamentalGroupoid}
  Given a topological space $X$, its \emph{fundamental groupoid}, $\shape_{\!1} X$, is the topological groupoid with objects  the discrete set of points of $X$, and morphisms the homotopy classes of continuous paths between fixed endpoints, with composition by concatenation $(-) \star (-)$ of paths:
  \begin{equation}
    \begin{tikzcd}[
      column sep=50pt
    ]
      \underset{
        {
          \substack{
            x,y,z 
            \\
            \in X
          }
        }
      }{\coprod}
      \,
      \substack{
          \pi_0 
          \Big(
          \mathrm{Map}\big(
           [0,1], 
           X
          \big)
            ^y
            _x
          \Big)
          \\
          \overset{ 
            \substack{
              \times
              \\
              {}
            }
          }{\pi_0} 
          \Big(
          \mathrm{Map}\big(
           [0,1], 
           X
          \big)
            ^z
            _y
          \Big)
      }
      \ar[
        r,
        "{
          ([\gamma_1], [\gamma_2])
        }",
        "{ \mapsto }"{description},
        "{
          [\gamma_1 \star \gamma_2]        
        }"{swap}
      ]
      &[-8pt]
      \underset{
        \mathclap{x,y \in X}
      }{\coprod}
      \pi_0 
      \Big(
      \mathrm{Map}\big(
       [0,1], 
       X
      \big)
        ^y
        _x
      \Big)
      \ar[
        r,
        shift left=8pt,
        "{
            x 
              \overset{[\gamma]}{\rightsquigarrow} 
            y
          \,\mapsto\,
          x
        }"
      ]
      \ar[
        r,
        <-,
        "{
          [\mathrm{cnst}_x]
          \,\mapsfrom\,
          x
        }"{description}
      ]
      \ar[
        r,
        shift right=8pt,
        "{
            x 
              \overset{[\gamma]}{\rightsquigarrow} 
            y
          \,\mapsto\,
          y
        }"{swap}
      ]
      &[+6pt]
      \flat X
      \mathrlap{\,,}
    \end{tikzcd}
  \end{equation}
  where
  \begin{equation}
    \begin{tikzcd}[
      column sep=30pt,
      row sep=15pt,
    ]
      \mathrm{Map}\big(
        [0,1],
        \,
        X
      \big)_x^y
      \ar[d]
      \ar[r]
      \ar[
        dr,
        phantom,
        "{
          \lrcorner
        }"{pos=.3}
      ]
      &
      \mathrm{Map}\big(
        [0,1],
        \,
        X
      \big)
      \ar[
        d,
        "{
          (\mathrm{ev}_0, \mathrm{ev}_1)
        }"
      ]
      \\
      \ast
      \ar[
        r, 
        "{ (x,y) }"{description}
      ]
      &
      X \times X
    \end{tikzcd}
  \end{equation}
  is the space of paths with endpoints $(x,y)$.
  Hence in the notation \cref{SimplicialNotationForGroupoids} a fundamental groupoid looks like this:
  \begin{equation}
   \shape_{\!1} X
   =
   \left\{
   \begin{tikzcd}[
     row sep=12pt, 
     column sep=35pt
  ]
     & 
     y
     \ar[
       dr,
       "{
         y \overset{[\gamma_2]}{\rightsquigarrow} z
       }"{sloped}
     ]
     \\
     x
     \ar[
       ur,
       "{
         x \overset{[\gamma_1]}{\rightsquigarrow} y
       }"{sloped}
     ]
     \ar[
       rr,
       "{
         x \overset{
             [\gamma_1 \!\star\! \gamma_2]
           }{\rightsquigarrow} 
         y
       }"{sloped}
     ]
     &&
     z
   \end{tikzcd}
   \right\}
   \mathrlap{\,.}
  \end{equation}

  The isomorphism classes \cref{IsomorphismClassesOfTopGroupoid} of a fundamental groupoid are the connected components of the topological space
  \begin{equation}
    \label{IsoClassesOfFundamentalGroupoid}
    \big[
      \shape_{\! 1}
      X
    \big]_0
    \,=\,
    \pi_0 X
    \mathrlap{\,.}
  \end{equation}
\end{example}

\begin{example}
  \label[example]
    {ProbingTopologicalGroupoidByRns}
  For $\mathcal{X}$ a topological groupoid \cref{StructureOfATopologicalGroupoid} and $U$ a space, then forming mapping spaces \cref{MappingSpace} from $U$ into the component spaces of $\mathcal{X}$,
  \begin{equation}
    \label{MapsFromSpaceIntoGroupoid}
    \mathrm{Map}(U,\mathcal{X})
    :=
    \left(
    \begin{tikzcd}[
      column sep={between origins, 17pt}
    ]
      {
      \mathrm{Map}\big(
        U,
        \mathrm{Mor}(\mathcal{X})
      \big)
      \!\!
      \tensor[_{s_\ast}]{\times\!}{_{t_\ast}}
      \mathrm{Map}\big(
        U,
        \mathrm{Mor}(\mathcal{X})
      \big)
      }
      \ar[
        d,
        "{ \circ_\ast }"
      ]
      \\
      \mathrm{Map}\big(
        U,
        \mathrm{Mor}(\mathcal{X})
      \big)
      \ar[
        r,
        shift left=6pt,
        "{
          s_\ast
        }"
      ]
      \ar[
        r,
        <-,
        "{
          \mathrm{e}_\ast
        }"{description}
      ]
      \ar[
        r,
        shift right=6pt,
        "{
          t_\ast
        }"{swap}
      ]
      &
      \mathrm{Map}\big(
        U,
        \mathrm{Obj}(\mathcal{X})
      \big)
    \end{tikzcd}
    \hspace{-2pt}
    \right)
    \mathrlap{,}
  \end{equation}
  gives a topological groupoid which may be thought of as the groupoid of \emph{$U$-parameterized} objects of $\mathcal{X}$.

  If here $U \defneq \mathbb{R}^n$ is a Cartesian space \cref{CartesianSpaceHomeomorphicToOpenBall}, we also call $\mathrm{Map}(\mathbb{R}^n,\mathcal{X})$ the space of \emph{$n$-dimensional plots} of $\mathcal{X}$. Consider the quotient spaces 
  \begin{equation}
    \mathrm{Map}\big(
      \mathbb{G}^n,
      -
    \big)
    :=
    \mathrm{Map}\big(
      \mathbb{R}^n,
      -
    \big)\big/\!\sim_n
  \end{equation}
  (where ``$\mathbb{G}$'' is for ``germ'', which is not an actual topological space itself, but defined via the above formula)
  by the equivalence relation $\sim_n$ which identifies a pair of maps $\phi, \phi' :   \mathbb{R}^n \to (-)$ if they agree on any open ball $\mathbb{D}^n_\epsilon$ \cref{CartesianSpaceHomeomorphicToOpenBall} around the origin:
  \begin{equation}
    \phi \sim_n \phi'
    \;\;\;
    \Leftrightarrow
    \;\;\;
    \exists \; {\epsilon \in \mathbb{R}_{> 0}},
    \;\;
    \phi \vert_{ \mathbb{D}^n_\epsilon}
    =
    \phi' \vert_{ \mathbb{D}^n_\epsilon}
    \,.
  \end{equation}
  This construction extends to topological groupoids $\mathcal{X}$ as in \cref{MapsFromSpaceIntoGroupoid} to yield what we may call the \emph{groupoid of stalks of $n$-dimensional plots} of $\mathcal{X}$:
  \begin{equation}
    \label{StalkOfPlotsOfTopologicalGroupoid}
    \mathrm{Map}(\mathbb{G}^n,\mathcal{X})
    :=
    \!
    \left(
    \begin{tikzcd}[
      column sep={between origins, 12pt}
    ]
      {
      \mathrm{Map}\big(
        \mathbb{G}^n,
        \mathrm{Mor}(\mathcal{X})
      \big)
      \!\!
      \tensor[_{s_\ast}]{\times\!}{_{t_\ast}}
      \mathrm{Map}\big(
        \mathbb{G}^n,
        \mathrm{Mor}(\mathcal{X})
      \big)
      \hspace{-6pt}
      }
      \ar[
        d,
        "{ \circ_\ast }"
      ]
      \\
      \mathrm{Map}\big(
        \mathbb{G}^n,
        \mathrm{Mor}(\mathcal{X})
      \big)
      \ar[
        r,
        shift left=6pt,
        "{
          s_\ast
        }"
      ]
      \ar[
        r,
        <-,
        "{
          \mathrm{e}_\ast
        }"{description, pos=.55}
      ]
      \ar[
        r,
        shift right=6pt,
        "{
          t_\ast
        }"{swap}
      ]
      &
      \mathrm{Map}\big(
        \mathbb{G}^n,
        \mathrm{Obj}(\mathcal{X})
      \big)
    \end{tikzcd}
    \right).
  \end{equation}
  
\end{example}

\begin{definition}
\label[definition]{TopologicalFunctor}
Given a pair $(\mathcal{X}, \mathcal{Y})$ of topological groupoids (\cref{TopologicalGroupoid}), a \emph{continuous functor} (or \emph{topological functor}) $\mathcal{X} \xrightarrow{F} \mathcal{Y}$ between them is a map 
$\mathrm{Mor}(\mathcal{X}) \xrightarrow{ F_1 } \mathrm{Mor}(Y)$ which homomorphically respect all the structure \cref{StructureOfATopologicalGroupoid}:
\begin{equation}
  \begin{tikzcd}[
    row sep=2pt,
    column sep=15pt
  ]
    \mathcal{X}
    \ar[
      rr,
      "{ F }"
    ]
    &&
    \mathcal{Y}
    \\
    x
    \ar[
      dd,
      "{
        f
      }"
    ]
    \ar[
      dddd,
      leftvertright,
      "{
        g \circ f
      }"{
        sloped,
        rotate=180,
        description
      }
    ]
    &&
    F_0(x)
    \ar[
      dd,
      "{
        F(f)
      }"{swap}
    ]
    \ar[
      dddd,
      rightvertleft,
      "{
        F(g) \,\circ\, F(f)
      }"{
        sloped,
        rotate=180,
        description
      }
    ]
    \\
    & \mapsto
    \\
    y
    \ar[
      dd,
      "{ 
        g
      }"
    ]
    &&
    F_0(y)
    \ar[
      dd,
      "{ 
        F(g)
      }"{swap}
    ]
    \\
    & \mapsto
    \\
    z
    &&
    F_0(z)
  \end{tikzcd}
\end{equation}
(Here $F_0:\mathrm{Obj}(\mathcal{X}) \to \mathrm{Obj}(\mathcal{Y})$ is the restriction of $F$ to identity morphisms identified with objects.)
\end{definition}

\begin{definition}
\label[definition]{NaturalTransformations}
A continuous (``natural'') \emph{transformation} between parallel continuous functors (\cref{TopologicalFunctor}) is a map 
\begin{equation}
  \label{NaturalTransformation}
  \begin{tikzcd}
    \mathcal{X}
    \ar[
      rr,
      bend left=30,
      "{ F_1 }"{description, name=s}
    ]
    \ar[
      rr,
      bend right=30,
      "{ F_2 }"{description, name=t}
    ]
    &&
    \mathcal{Y}
    \ar[
      from=s,
      to=t,
      Rightarrow,
      "{ \eta }"
    ]
  \end{tikzcd}
  \;:\;
  \begin{tikzcd}
    \mathrm{Obj}(\mathcal{X})
    \ar[
      r
    ]
    &
    \mathrm{Mor}(\mathcal{Y})
  \end{tikzcd}
\end{equation}
whose values make these diagrams commute:
\begin{equation}
  \label{NaturalityForContinuousTransformations}
  \begin{tikzcd}[row sep=2pt,
   column sep=16pt
  ]
    x_1
    \ar[
      dd,
      "{ 
        \forall \; f 
      }"{swap}
    ]
    &[-10]&[-10]
    F_1(x_1)
      \ar[
        dd,
        "{ F_1(f) }"{swap}
      ]
      \ar[
        rr,
        "{ \eta(x_1) }"
      ]
      &&
      F_2(x_1)
      \ar[
        dd,
        "{ F_2(f) }"
      ]
    \\
    & :\;\;
    \\
    x_2
    &&
    F_1(X_2) 
    \ar[
      rr, 
      "{\eta(x_2)}" 
    ]
    &&
    F_2(x_2)
    \mathrlap{\,.}
  \end{tikzcd}
\end{equation}
The \emph{vertical composition} of composable such transformations is by composition of their component morphisms:
\begin{equation}
  \begin{tikzcd}
    \mathcal{X}
    \ar[
      rr,
      bend left=50,
      "{ F_1 }"{description, name=s}
    ]
    \ar[
      rr,
      "{ F_2 }"{description, name=m}
    ]
    \ar[
      rr,
      bend right=50,
      "{ F_3 }"{description, name=t}
    ]
    \ar[
      from=s,
      to=m,
      Rightarrow,
      "{ \, \eta_1 }"
    ]
    \ar[
      from=m,
      to=t,
      Rightarrow,
      "{ \, \eta_2 }"
    ]
    &&
    \mathcal{Y}
  \end{tikzcd}
    \;\;:\;
    x 
      \;\longmapsto\;
    \begin{tikzcd}
      F_1(x)
      \ar[r, "{ \eta_1(x) }"]
      &
      F_2(x)
      \ar[r, "{ \eta_2(x) }"]
      &
      F_3(3)
      \mathrlap{\,,}
    \end{tikzcd}
\end{equation}
while the \emph{horizontal composition} with functors is
\begin{equation}
  \label{HorizontalWhiskeringOfTransformations}
  \begin{tikzcd}
    \mathcal{X}'
    \ar[r, "{ L }"]
    &
    \mathcal{X}
    \ar[
      r,
      bend left=45,
      "{ F_1 }"{description, name=s}
    ]
    \ar[
      r,
      bend right=45,
      "{ F_2 }"{description, name=t}
    ]
    \ar[
      from=s,
      to=t,
      Rightarrow,
      "{ \eta }"
    ]
    &
    \mathcal{Y}
    \ar[r, "{ R }"]
    &
    \mathcal{Y}'
  \end{tikzcd}
  \;:\;
  x' 
    \;\longmapsto\;
  R\Big(\eta\big(L_0(x')\big)\!\Big)
  \mathrlap{\,.}
\end{equation}
\end{definition}

\begin{example}
\label{RelationBetweenHomotopiesAndTransformations}
There is a close relation between homotopies between topological spaces \cref{AHomotopy} and transformations between topological groupoids \cref{NaturalTransformation}.
Concretely, any homotopy between topological spaces gives a transformation between their fundamental groupoids (\cref{FundamentalGroupoid}).
\end{example}

\begin{example}
  Between delooping groupoids (\cref{DeloopingGroupoid}), continuous functors (Def. \ref{TopologicalFunctor}) are continuous group homomorphisms, hence continuous linear representations if the second group is linear:
  \begin{equation}
    \Big\{
    \begin{tikzcd}
      \mathbf{B}G
      \ar[r, "{ \mathbf{B}\rho }"]
      &
      \mathbf{B}\mathrm{U}(\HilbertSpace)
    \end{tikzcd}
    \Big\}
    \simeq
    \Big\{
      \rho \in \mathrm{Rep}(
        G, \HilbertSpace
      )
    \Big\}
    \mathrlap{\,.}
  \end{equation}
Transformations (\cref{NaturalTransformations}) between these are \emph{intertwiners} of representations.
\end{example}

\begin{definition}
  \label[definition]{FunctorGroupoid}
  For a pair of topological groupoids $\mathcal{X}$, $\mathcal{Y}$ (\cref{TopologicalGroupoid}), their \emph{functor groupoid}, $\mathrm{Func}(\mathcal{X}, \mathcal{Y})$, is the topological groupoid whose objects are the continuous functors $\mathcal{X} \to \mathcal{Y}$ (\cref{TopologicalFunctor}), topologized as a subspace of $\mathrm{Map}\big(\mathrm{Mor}(\mathcal{X}), \mathrm{Mor}(\mathcal{Y})\big)$ \cref{MappingSpace}, and whose morphisms are the continuous transformations \cref{NaturalTransformation}, topologized as the product space of that with $\mathrm{Map}\big(\mathrm{Obj}(\mathcal{X}), \mathrm{Mor}(\mathcal{Y})\big)$. Composition and inversion of transformations is given by composition of inversion of their component functions \cref{NaturalityForContinuousTransformations}.
  Hence in the notation \cref{SimplicialNotationForGroupoids}:
  \begin{equation}
    \label{FunctorGroupoidInArrowNotatiom}
    \mathrm{Func}(\mathcal{X}, \mathcal{Y})
    =
    \left\{\hspace{-3pt}
    \begin{tikzcd}[
      column sep=5pt,
      row sep=40pt
    ]
      & 
      \mathcolor{gray}{\mathcal{Y}}
      \\
     \mathcolor{gray}{\mathcal{X}}
     \ar[
       ur,
       bend left=95,
       gray,
       "{ F_1 }"{description, name=one}
     ]
     \ar[
       ur,
       bend right=95,
       gray,
       "{ F_3 }"{description, name=three}
     ]
     \ar[
       from=one,
       to=three,
       Rightarrow,
       shorten <=-4pt,
       shorten >=-3pt,
       shift right=5pt,
       bend right=15,
       "{
         \eta_2 \circ \eta_1
       }"{swap, sloped, scale=.8, pos=.8}
     ]
     \ar[
       ur,
       bend left=10,
       crossing over,
       gray,
       "{ F_2 }"{description, name=two}
     ]
     \ar[
       from=one,
       to=two,
       shorten=-2pt,
       bend left=10,
       Rightarrow,
       "{
         \hspace{-3pt}\eta_1
       }"{pos=0}
     ]
     \ar[
       from=two,
       to=three,
       bend left=2,
       Rightarrow,
       "{
         \eta_2
       }"{pos=.1}
     ]
    \end{tikzcd}
  \hspace{-3pt} \right\}
 .
  \end{equation}
\end{definition}

\subsubsection{Topological Stacks}
\label{TopologicalStacks}

If we think --- as we may and should --- of topological groupoids as topologized ``sets with gauge transformations between their elements'', then some of them ought to be ``the same up to gauge fixing'' and yet no invertibe continuous functors exist between them (\cref{IllustratingEquivalenceOfTopologicalGroupoids} illustrates a simple example, a special case of \cref{CechGroupoids} below).

\begin{figure}[htb]
\caption{  \label{IllustratingEquivalenceOfTopologicalGroupoids}
  Indicated on the left is a topological groupoid whose space of objects is the disjoint union of two intervals, but whose morphisms uniquely connect --- and thereby uniquely identify --- a subinterval of points in either component (a \emph{{\v C}ech groupoid}, cf. \cref{CechGroupoids}). Indicated on the right is the topological groupoid whose space of objects is the result of gluing these two interval along this subinterval, and which has no non-identity morphisms. The evident topological functor from the left to the right is an \emph{equivalence} of topological groupoids 
  (\cref{EquivalenceOfTopGroupoids})
  but no continuous functor can serve as its inverse.
}
\centering

\adjustbox{
  rndfbox=5pt
}{
\begin{tikzcd}
\Bigg\{
\begin{tikzpicture}
  \draw[
    gray,
    line width=.9
  ]
    (-1.5,-.4) to 
    (.5,-.4);
  \draw[
    gray,
    line width=.9
  ]
    (+1.5,+.4) to 
    (-.5,+.4);

\foreach \n in {-1,...,+1}{
  \draw[
    gray,
    Stealth-Stealth 
  ]
    (\n*.4,+.4) --
    (\n*.4,-.4);
}
\node[scale=.8] at (-.2,-.07) {$\cdots$};
\node[scale=.8] at (+.205,-.07) {$\cdots$};

\end{tikzpicture}
\Bigg\}
\ar[
  r,
  "{ \sim }"
]
&
\Big\{
\begin{tikzpicture}[
  baseline=(current bounding box.center)]
]
  \draw[
    gray,
    line width=.9
  ]
    (-1.5,0) to 
    (.5,0);
\end{tikzpicture}
\Big\}
\end{tikzcd}
}
\end{figure}
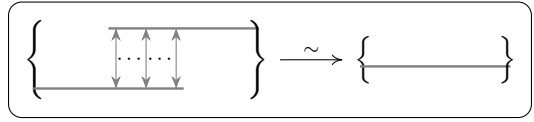

This means that topological functors by themselves are too rigid as a notion of ``maps'' between topological groupoids, and that the actual \emph{maps} (\cref{MoritaMap} below) must subsume inverses to those topological functors that ought to be equivalences of topological groupoids:
\begin{definition}
\label
  [definition]
  {EquivalenceOfTopGroupoids}
A topological functor (\cref{TopologicalFunctor}) is an \emph{equivalence}, denoted 
\begin{equation}
  \label{AnEquivalenceOfTopologicalGroupoids}
  \begin{tikzcd}[sep=8pt]
    F : 
    \mathcal{X} 
    \ar[
      rr, 
      "{\sim}"
    ] 
    &&
    \mathcal{Y}
    \,,
  \end{tikzcd}
\end{equation}
if for all $n \in \mathbb{N}$ the induced functor on stalks of $n$-dimensional plots \cref{StalkOfPlotsOfTopologicalGroupoid},
\begin{equation}
  \begin{tikzcd}[sep=small]
    F_\ast
    :
    \flat
    \mathrm{Map}\big(
      \mathbb{G}^n
      ,
      \mathcal{X}
    \big)
    \ar[rr] 
    && 
    \flat
    \mathrm{Map}\big(
      \mathbb{G}^n
      ,
      \mathcal{X}
    \big)
    \,,
  \end{tikzcd}
\end{equation}
is:
\begin{enumerate}
\item surjective on isomorphism classes \cref{IsomorphismClassesOfTopGroupoid} of objects (``essentially surjective''):
\begin{equation}
  \label{EssentiallySurjectiveOnStalks}
  \underset{n \in \mathbb{N}}{\forall}
  \;\;:\;\;
    \begin{tikzcd}[
      column sep=25pt
    ]
      \big[
        \flat
        \mathrm{Map}\big(
          \mathbb{G}^n
          ,
          \mathcal{X}
         \big)
      \big]_0
      \ar[
        rr, 
        ->>, 
        "{ 
          F_\ast
         }"
      ]
      &&
      \big[
        \flat
        \mathrm{Map}\big(
          \mathbb{G}^n
          ,
          \mathcal{Y}
         \big)
      \big]_0
      \mathrlap{\,,}
    \end{tikzcd}
\end{equation}
\item 
bijective on morphisms between pairs of objects (``fully faithful''):
\begin{equation}
  \label{FullyFaithfulOnStalks}
  \underset{
    n \in \mathbb{N}
  }{\forall}
  \;
  \underset
    {
      \substack{
        x,y \in 
        \\
        \mathrm{Obj}(\mathcal{X})
      }
    }
    {\forall}
    :
    \begin{tikzcd}
    \mathrm{Mor}\Big(
      \flat
      \mathrm{Map}(\mathbb{G}^n,\mathcal{X})
    \Big)\rule[-4pt]{0pt}{14pt}_{\mathbf{x}}^{\mathbf{y}}
    \ar[
      rr, 
      "{  
        F_\ast
      }",
      "{\sim}"{swap}
    ]
    &&
    \mathrm{Mor}\Big(
      \flat
      \mathrm{Map}(\mathbb{G}^n,\mathcal{Y})
    \Big)\rule[-4pt]{0pt}{14pt}
      _{F_\ast(\mathbf{x})}
      ^{F_\ast(\mathbf{y})}
    \mathrlap{\,,}
  \end{tikzcd}
\end{equation}
where
\begin{equation}
  \mathrm{Mor}(-)_{x}^y
  :=
  \ast \tensor[_x]{\times}{_s} 
    \mathrm{Mor}(-)
  \tensor[_t]{\times}{_y} \ast    
  \mathrlap{\,.}
\end{equation}
\end{enumerate}
denotes the subspace of morphisms between a given pair of objects.
\end{definition}

\begin{definition}
\label[definition]{MoritaMap}
A (Morita) \emph{map} between topological groupoids (as opposed to a plain continuous functor, \cref{TopologicalFunctor}) is a span of continuous functors (\cref{TopologicalFunctor}), with the left one an equivalence (\cref{EquivalenceOfTopGroupoids}):
\begin{equation}
  \label{AMoritaMap}
  \begin{tikzcd}[sep=18pt]
    \mathcal{X}
    \ar[
      r,
      <-,
      "{ \sim }"
    ]
    &
    \widehat{\mathcal{X}}
    \ar[
      r,
      "{ F }"
    ]
    &
    \mathcal{Y}
    \mathrlap{\,.}
  \end{tikzcd}
\end{equation}
In particular, a (Morita) \emph{equivalence} between topological groupoids is a span of continuous functors which are equivalences
\begin{equation}
  \label{AMoritaEquivalence}
  \begin{tikzcd}[sep=18pt]
    \mathcal{X}
    \ar[
      r,
      <-,
      "{ \sim }"
    ]
    &
    \widehat{\mathcal{X}}
    \ar[
      r,
      "{ \sim }"
    ]
    &
    \mathcal{Y}
    \mathrlap{\,.}
  \end{tikzcd}
\end{equation}
When considered up to (Morita) equivalence, topological groupoids are also referred to as \emph{topological stacks} (which is terrible terminology, but completely standard),cf. \S\ref{SmoothInfinityGroupoids}.
\end{definition}
\begin{remark}
  Equivalent topological groupoids (\cref{MoritaMap}) have homeomorphic spaces $[-]_0$ of isomorphism classes \cref{IsomorphismClassesOfTopGroupoid}:
  \begin{equation}
    \label{EquivalentGroupoidsHaveHomeomorphismIsomorphismClasses}
    \begin{tikzcd}[sep=18pt]
      \mathcal{X} 
      \ar[r, "{ \sim }"]
      &
      \mathcal{Y}
    \end{tikzcd}
    \;\;\;\;
    \Rightarrow
    \;\;\;\;
    \begin{tikzcd}[sep=18pt]
      {[\mathcal{X}]_0}
      \ar[r, "{ \sim }"]
      &
      {[\mathcal{Y}]_0}
    \end{tikzcd}
  \end{equation}
\end{remark}
\begin{example}
  \label[example]{HomotopyEquivalenceOfTopGrpds}
  If a continuous functor $\begin{tikzcd}[sep=small]F : \mathcal{X} \ar[r] & \mathcal{Y}\end{tikzcd}$ is a \emph{homotopy equivalence} (better terminology would be ``transformation equivalence'', cf. \cref{RelationBetweenHomotopiesAndTransformations,HomotopyEquivalenceOfSpaces}) in that there exists a reverse continuous functor $\begin{tikzcd}[sep=small]\overline{F} : \mathcal{Y} \ar[r] & \mathcal{X}\end{tikzcd}$ and continuous transformations (\cref{NaturalTransformations}) of this form:
  \begin{equation}
    \begin{tikzcd}[
      row sep=15pt, 
      column sep=25pt
    ]
      & 
      \mathcal{Y}
      \ar[
        dr,
        "{ \overline{F} }"{description}
      ]
      \ar[
        d,
        shorten=5pt,
        Rightarrow,
      ]
      \ar[
        rr,
        "{
          \mathrm{id}
        }"{description}
      ]
      &
      {}
      \ar[
        d,
        Rightarrow,
        shorten=2pt
      ]
      &
      \mathcal{Y}
      \\
      \mathcal{X}
      \ar[
        ur,
        "{ F }"{description}
      ]
      \ar[
        rr,
        "{ \mathrm{id} }"{description}
      ]
      &
      {}
      &
      \mathcal{X}
      \ar[
        ur,
        "{
          F
        }"{description}
      ]
    \end{tikzcd}
  \end{equation}
  then it is an equivalence \textup{in the sense of Def. \ref{EquivalenceOfTopGroupoids}}.
\end{example}
\begin{proof}
  After passage to $n$-dimensional stalks of plots \cref{StalkOfPlotsOfTopologicalGroupoid}, this reduces to the basic statement of category theory that essentially surjective and fully faithful functors are equivalences (cf. \cite[p. 93]{MacLane1998}).
\end{proof}

\begin{example}
For $\begin{tikzcd}[sep=small]H \ar[r, hook, "{\iota}"] & G\end{tikzcd}$ a topological subgroup \cref{TopologicalGroup}, consider the coset space \cref{CosetSpace}
\begin{equation}
  G/H 
    = 
  \big\{ 
    g \cdot H 
    \,\big\vert\, 
    g \in G 
  \big\}
  \,,
\end{equation}
with its canonical left $G$-action \cref{ActionOnCosetSpace}.
Then the corresponding homotopy quotient \cref{TheActionGroupoid} is equivalent (\cref{EquivalenceOfTopGroupoids}) to the delooping of $H$ (\cref{DeloopingGroupoid}):
\begin{equation}
  \label{HoQuotientOfGModHByG}
  \renewcommand{\arraystretch}{1.4}  
  \setlength{\arraycolsep}{-2pt}
  \begin{array}{ccc}
  \mathbf{B}H
  &
  \begin{tikzcd}
    \ar[
      r,
      "{ \sim }"
    ]
    &
    {}
  \end{tikzcd}
    &
    G \backsslash G/H
    \\[2pt]
    \left(
    \begin{tikzcd}
     \bullet
     \ar[
       d,
       "{ h }"{swap}
     ]
     \\
     \bullet
   \end{tikzcd}
   \right)
   &\longmapsto&
   \left(
    \begin{tikzcd}
     \mathrm{e}\cdot H
     \ar[
       d,
       shorten >=-1pt,
       "{ h }"{swap}
     ]
     \\
     \mathrm{e}\cdot H
   \end{tikzcd}   
   \right)
 \!.
  \end{array}
\end{equation}
\end{example}

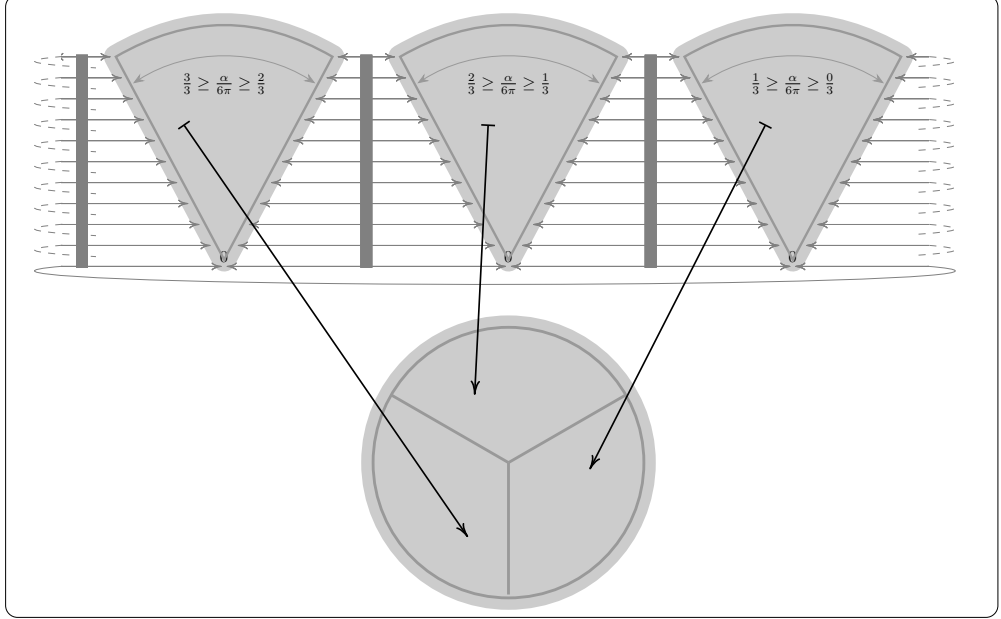
\begin{figure}[htb]
\caption{
 \label{ASimpleCechGroupoid}
 {\bf Top row:}
 The {\v C}ech groupoid (\cref{CechGroupoids}) of a (good, \cref{GoodOpenCover}) open cover of the open disk $\mathbb{D}^2_{1 + \epsilon}$. The space of objects is shown in light gray, the space of non-identity morphisms is shown in dark gray ($\alpha$ denotes an angular coordinate function).
 \\
 {\bf Bottom row:}
 The original disk, as a groupoid with only identity morphisms (\cref{SpaceAsGroupoid}), receiving the canonical projection functor \cref{CechGroupoidEquivalentToOriginalSpace} from the {\v C}ech groupoid.
}

\centering

\adjustbox{
  rndfbox=5pt,
  scale=.9
}{
\hspace{-.8cm}
\begin{tikzpicture}

\begin{scope}
\clip
 (-.8,.8) rectangle 
 (12,-3.3);

\foreach \n [count=\np] in {0,1,2,3} {

\begin{scope}[
 shift={(10-\n*4.2,0)}
]
\filldraw[
    gray!40,
    line width=10pt,
    line join=round
  ]
    (-1.6,0) to[bend left=30]
    (+1.6,0) --
    (0,-3) -- cycle;
  \node[
    scale=.7
  ] at (0,-2.95) {$0$};
  \draw[
    gray!80,
    line width=1pt
  ]
    (-1.6,0) to[bend left=30]
    (+1.6,0) --
    (0,-3) -- cycle;
  \draw[
    shift={(0,-2.5)},
    gray!80,
    line width=.4,
    Stealth-Stealth
  ]
    (90-32.6:2.5) arc 
    (90-32.6:90+32.6:2.5);
  \node[scale=.7]
    at (0,-.4) {
      $
        \tfrac{\np}{3}
        \!\geq\!
        \frac{
          \alpha
        }{6\pi}
        \!\geq\!
        \tfrac{\n}{3}
      $
    };

\end{scope}

\foreach \n [count=\np] in {-1,0,1,2,3} {
\begin{scope}[
 shift={(10-\n*4.2,0)}
]
\begin{scope}[shift={(-2.1,0)}]
 \filldraw[
     black!50,
     line width=5pt,
     line join=round
   ]
     (0,.03) --
     (0,-3.12)
      -- cycle;
 
 \foreach \k in {0,...,10} {
   \draw[
     {<[length=4pt,width=3pt]}-{>[length=4pt,width=3pt]},
     black!50,
     line width=.2
   ]
     ({-\k*.164-.42},{-\k*.31}) --
     ({+\k*.164+.42},{-\k*.31});
 }
\end{scope}
\end{scope}

}}

\end{scope}

\draw[
   shift={(5.6,-3.166)},
   black!50,
   line width=.4
]
  (90+70:6.8 and .2) arc
  (90+70:360+90-70:6.8 and .2);

\begin{scope}
\clip
  (-1.3,.2)  rectangle
  (-.3,-3.4);

\foreach \n in {0,...,9} {
\draw[
   shift={(5.6,-2.855+\n*.31)},
   black!50,
   line width=.2,
   dashed
]
  (90+70:6.8 and .2) arc
  (90+70:360+90-70:6.8 and .2);
}
\end{scope}

\begin{scope}
\clip[
  shift={(13.1,0)}
]
  (-1.3,.2)  rectangle
  (-.3,-3.4);

\foreach \n in {0,...,9} {
\draw[
   shift={(5.6,-2.855+\n*.31)},
   black!50,
   line width=.2,
   dashed
]
  (90+70:6.8 and .2) arc
  (90+70:360+90-70:6.8 and .2);
}
\end{scope}

\begin{scope}[
  shift={(5.8,-6)}
]

\filldraw[
    gray!40,
    line width=10pt,
    line join=round
  ]
 (0,0) circle (2);

\draw[
    gray!80,
    line width=1.2pt,
]
  (0,0) circle (2);

\draw[
    gray!80,
    line width=1.2pt,
]
 (0,0) -- (30:2);
\draw[
    gray!80,
    line width=1.2pt,
]
 (0,0) -- (30+120:2);
\draw[
    gray!80,
    line width=1.2pt,
]
 (0,0) -- (30+240:1.95);
  
\end{scope}

\draw[
  |-{>[length=6pt,width=4pt]},
  line width=.7pt
]
  (5.5,-1) -- 
  (5.3,-5);

\draw[
  |-{>[length=6pt,width=4pt]},
  line width=.7pt
]
  (1,-1) -- 
  (5.2,-7.1);
\draw[
  |-{>[length=6pt,width=4pt]},
  line width=.7pt
]
  (9.6,-1) -- 
  (7,-6.1);

\end{tikzpicture}
}

\end{figure}

A key class of examples of equivalences of topological groupoids (\cref{EquivalenceOfTopGroupoids}) which are not homotopy equivalences (\cref{HomotopyEquivalenceOfTopGrpds}) are projections out of {\v C}ech groupoids (cf. \cref{IllustratingEquivalenceOfTopologicalGroupoids} and \cref{ProjectionFromCechGroupoidIsEquivalence}):
\begin{example}[{\v C}ech groupoids]
\label[example]{CechGroupoids}
For a topological manifold $X$ equipped with an \emph{open cover} 
\begin{equation}
  \label{AnOpenCover}
  \mathcal{U} 
    := 
  \big\{
    \begin{tikzcd}[sep=20] 
      U_i 
     \ar[
       r, 
       hook, 
       shorten=-1pt, 
       "{ \iota_i }",
       "{ \mathrm{open} }"{
         swap, yshift=-1pt, scale=.9
       }
     ] 
     & 
     X
  \end{tikzcd}
  \big\}_{i \in I}
  ,\,\;\;\;
  \begin{tikzcd}[
   column sep=35
  ]
    \underset{i}{\bigsqcup}
    \,
    U_i
    \ar[
      r,
      ->>,
      "{
        (\iota_i)_{i \in i}
      }"
    ]
    &
    X
    \mathrlap{\,,}
  \end{tikzcd}
\end{equation}
then the corresponding \emph{{\v C}ech groupoid} (cf. \cref{ASimpleCechGroupoid}) is the topological groupoid \cref{StructureOfATopologicalGroupoid} given by
\begin{equation}
  \label{StructureOfChechGroupoid}
  X_{\mathcal{U}}
  :=
  \left(
  \begin{tikzcd}[column sep=38pt]
    \underset{i, j, k}{\bigsqcup} 
      U_{i j k}
    \ar[
      rr,
      "{
        (x,i,j,k) \,\mapsto\, (x,i,k)
      }"
    ]
    &&    
    \quad
    \underset{i,j}{\bigsqcup} 
    \, U_{i j}
    \ar[
      out=180-60,
      in=60,
      looseness=4,
      "{\scalebox{0.7}{$
        (x,i,j) 
        \mapsto
        (x,j,i)$}
      }"
    ]
    \quad 
    \ar[
      rr,
      shift left=8pt,
      "{ 
        (x,i,j) \,\mapsto\, (x,i) 
      }"
    ]
    \ar[
      rr,
      <-,
      "{
        (x,i,i) \,\mapsfrom\, (x,i)
      }"{description}
    ]
    \ar[
      rr,
      shift right=8pt,
      "{ 
         (x,i,j) \,\mapsto\, (x,j) 
      }"{swap}
    ]
    &&
    \;
    \underset{i}{\bigsqcup} 
    \,
    U_i
  \end{tikzcd}
  \right)
  ,
\end{equation}
where we abbreviate $U_{i j} := U_i \cap U_j$ and $U_{i j k} := U_i \cap U_j \cap U_k$, and where $(x,i)$ denotes a point $x \in X$ but regarded as the corresponding point of $U_i$, etc. Hence in the notation \cref{SimplicialNotationForGroupoids}, a {\v C}ech groupoid looks like this:
\begin{equation}
  X_{\mathcal{U}}
  =
  \left\{
  \begin{tikzcd}[
    row sep=8pt, 
    column sep=25pt
  ]
    &
    (x,j)
    \ar[
      dr,
      "{
        (x,j,k)
      }"{sloped}
    ]
    \\
    (x,i)
    \ar[
      ur,
      "{ 
        (x,i,j) 
      }"{sloped}
    ]
    \ar[
      rr,
      "{
        (x,i,k)
      }"
    ]
    &&
    (x,k)
  \end{tikzcd}
  \right\}.
\end{equation}
This comes with a continuous functor (\cref{TopologicalFunctor}) to the original space (regarded as a groupoid per \cref{SpaceAsGroupoid}); which is (see \cref{ProjectionFromCechGroupoidIsEquivalence}) an equivalence
(\cref{EquivalenceOfTopGroupoids}):
\begin{equation}
  \label{CechGroupoidEquivalentToOriginalSpace}
  \begin{tikzcd}[
    row sep=0pt,
    column sep=14pt
  ]
    X_{\mathcal{U}}
    \ar[
      rr,
      "{ \sim }"
    ]
    &&
    X
    \\[1pt]
    (x, i)
    \ar[
      dd,
      "{
        (x,i,j)
      }"
    ]
    &&
    x
    \ar[
      dd,
      equals
    ]
    \\
    & \longmapsto
    \\
    (x,j)
    &&
    x
    \mathrlap{\,.}
  \end{tikzcd}
\end{equation}
  To appreciate the relevance of this fact, consider a topological group $\Gamma$ and observe that there is a unique continuous functor $\begin{tikzcd}[sep=small]X \ar[r] & \mathbf{B}\Gamma\end{tikzcd}$ to its  delooping groupoid (\cref{DeloopingGroupoid}), and that unique functor is trivial (constant). But, due to the equivalence \cref{CechGroupoidEquivalentToOriginalSpace}, \emph{maps} of this form, in the sense of \cref{MoritaMap}, subsume the continuous functors out of the {\v C}ech groupoid \cref{StructureOfChechGroupoid} of any open cover, and these are identified with the cocycles, relative to $\mathcal{U}$, of nonabelian \emph{{\v C}ech cohomology} $H^1(X;\Gamma)$ (for which cf. \parencites[\S 7]{Wedhorn2016}[\S 4]{Alvarez1985}):
  \begin{equation}
    \label{CechCocycleAsFunctorOnCechGroupoid}
    \begin{array}{c}
    \begin{tikzcd}[
      row sep=2pt,
      column sep=27pt
    ]
      X
      \ar[
        rr,
        <-,
        "{ \sim }"
      ]
      &&
      X_{\mathcal{U}}
      \ar[
        rr,
        "{ \gamma }"
      ]
      &&
      \mathbf{B}\Gamma
      \\
      x
      \ar[
       dd,
       equals
      ]
      && 
      (x,i)
      \ar[
        dddd,
        leftvertright,
        "{
          (x,i,k)
        }"{description, sloped, rotate=180}
      ]
      \ar[
        dd,
        "{
          (x,i,j)
        }"
      ]
      &&
      \bullet
      \ar[
        dd,
        "{
          \gamma_{i j}(x)
        }"{swap}
      ]
      \ar[
        dddd,
        rightvertleft,
        "{
          \gamma_{ik}(x)
        }"{description, sloped, rotate=180}
      ]
      \\
      & \mapsfrom && \mapsto &
      \\
      x 
      \ar[
        dd,
        equals
      ]
      &&
      (x,j)
      \ar[
        dd,
        "{
          (x,j,k)
        }"
      ]
      &&
      \bullet
      \ar[
        dd,
        "{
          \gamma_{j k}(x)
        }"{swap}
      ]
      \\
      &\mapsfrom& & \mapsto &
      \\
      x 
        && 
      (x,k)
        &&
      \bullet
    \end{tikzcd}
    \end{array}
  \end{equation}
  while their transformations (\cref{NaturalTransformations}) are identified with {\v C}ech coboundaries:
\begin{equation}
  \label{CechCoboundaryAsFunctorOnCechGroupoid}
  \begin{tikzcd}[
    row sep=1,
    column sep=10pt
  ]
    X_{\mathcal{U}}
    \ar[
      rrr,
      bend left=22,
      "{ g }"{description, name=s}
    ]
    \ar[
      rrr,
      bend right=22,
      "{ g' }"{description, name=t}
    ]
    &&&
    \mathbf{B}G
    \ar[
      from=s,
      to=t,
      Rightarrow,
      "{ h }"
    ]
    \\[9pt]
    (x,i)
    \ar[
      dd,
      "{
        (x,i,j)
      }"
    ]
    &\qquad \mapsto &
 \bullet 
    \ar[
      rr,
      "{
        h_i(x)
      }"{description}
    ]
    \ar[
      dd,
      "{
        \gamma_{ij}(x)
      }"
    ]
      && 
    \bullet
    \ar[
      dd,
      "{
        \gamma'_{ij}(x)
      }"
    ]
    \\
    & \phantom{-}
    \\
    (x,j)
    &\qquad \mapsto&
    \bullet 
    \ar[
      rr,
      "{ 
        h_j(x) 
      }"{swap}
    ]
    &&
    \bullet
    \mathrlap{\,.}
  \end{tikzcd}
\end{equation}
Below, we generalize this example further as \cref{EquivariantCechGroupoids}. But first we make explicit the argument we just used, since it is an instructive illustration of the general machinery at play here:
\end{example}
\begin{lemma}
\label[lemma]
{ProjectionFromCechGroupoidIsEquivalence}
 The above continuous functor \textup{\cref{CechGroupoidEquivalentToOriginalSpace}} is indeed an equivalence of topological groupoids \textup{(\cref{EquivalenceOfTopGroupoids})}. 
\end{lemma}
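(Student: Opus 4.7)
The plan is to verify directly the two conditions of \cref{EquivalenceOfTopGroupoids} --- essential surjectivity \cref{EssentiallySurjectiveOnStalks} and full faithfulness \cref{FullyFaithfulOnStalks} on stalks of $n$-dimensional plots \cref{StalkOfPlotsOfTopologicalGroupoid} --- for the projection functor $p : X_{\mathcal{U}} \to X$ that sends $(x,i) \mapsto x$ on objects and $(x,i,j) \mapsto \mathrm{id}_x$ on morphisms.

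First I would unwind the stalk groupoids: since $\mathbb{R}^n$ is path-connected and only the behaviour on arbitrarily small balls $\mathbb{D}^n_\epsilon$ \cref{CartesianSpaceHomeomorphicToOpenBall} about the origin matters, every germ of a map $\mathbb{R}^n \to \bigsqcup_i U_i$ is represented by a map factoring through a single summand $U_i$, and every germ of a morphism similarly factors through some $U_{ij}$; on the $X$-side the stalk groupoid is just the set of germs $\mathbb{R}^n \to X$ equipped with only identity morphisms.

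For essential surjectivity \cref{EssentiallySurjectiveOnStalks}, given a germ represented by $\phi : \mathbb{R}^n \to X$, I choose any $i$ with $\phi(0) \in U_i$; openness of $U_i \subset X$ yields $\epsilon > 0$ with $\phi(\mathbb{D}^n_\epsilon) \subset U_i$, so $\phi\vert_{\mathbb{D}^n_\epsilon}$ (reparametrized via the homeomorphism $\mathbb{R}^n \cong \mathbb{D}^n_\epsilon$ of \cref{CartesianSpaceHomeomorphicToOpenBall}) represents an object of $\mathrm{Map}(\mathbb{G}^n, X_{\mathcal{U}})$ whose image under $p_\ast$ is the given germ, strictly and hence a fortiori up to isomorphism. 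For full faithfulness \cref{FullyFaithfulOnStalks}, I fix germs $[\phi], [\phi']$ in $\mathrm{Map}(\mathbb{G}^n, X_{\mathcal{U}})$ with representatives factoring through $U_i$ and $U_j$; any morphism germ between them factors, after shrinking the ball, through $U_{ij}$ and must coincide with $x \mapsto (\phi(x), i, j)$. Hence such a morphism exists iff $\phi$ and $\phi'$ agree as germs of maps into $X$ --- equivalently iff $p_\ast[\phi] = p_\ast[\phi']$ --- and is then unique. On the target side, $\mathrm{Map}(\mathbb{G}^n, X)$ has only identity morphisms, so its hom-sets between parallel germs are singletons when source and target agree and empty otherwise, matching the count on the {\v C}ech side exactly.

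The main technical point to handle carefully is the passage from germs valued in the disjoint union $\bigsqcup_i U_i$ to germs factoring through a single $U_i$ (and analogously for morphisms through $U_{ij}$). This is the only step where the \emph{stalk} nature of \cref{StalkOfPlotsOfTopologicalGroupoid} --- rather than a global mapping space --- is doing real work, and is precisely why \cref{EquivalenceOfTopGroupoids} must be formulated at the level of germs of plots rather than globally: globally there is of course no continuous section $X \to \bigsqcup_i U_i$ in general, which is exactly the obstruction that renders $p$ a Morita equivalence \cref{AMoritaMap} rather than a homotopy equivalence \cref{HomotopyEquivalenceOfTopGrpds}.
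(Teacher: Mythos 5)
Your overall strategy matches the paper's exactly: verify essential surjectivity and full faithfulness at the level of germs of $n$-dimensional plots, shrinking to a ball that lands in a single patch, and comparing morphism counts on the two sides. The full-faithfulness half is correct, and your closing remark about why the germ-level formulation is what allows $p$ to be an equivalence in the Morita sense but not a homotopy equivalence is a genuine and helpful observation.

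There is, however, a small but real gap in the essential-surjectivity step. You reparametrize $\phi\vert_{\mathbb{D}^n_\epsilon}$ by precomposing with the homeomorphism $\mathbb{R}^n \cong \mathbb{D}^n_\epsilon$ of \cref{CartesianSpaceHomeomorphicToOpenBall} and assert that the resulting plot's image under $p_\ast$ is the germ $[\phi]$ \emph{strictly}. But the specific homeomorphism recorded there, $x \mapsto \epsilon\,x/(1+\vert x\vert)$, is not the identity on any neighborhood of the origin; for $\phi$ injective near $0$, the germ of $\phi\circ h$ at $0$ differs from the germ of $\phi$ at $0$. Since the target groupoid $X$ has only identity morphisms, isomorphism of germs in $\flat\,\mathrm{Map}(\mathbb{G}^n,X)$ is equality of germs, so the image under $p_\ast$ would not in general be the required class. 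The paper's proof sidesteps this by choosing instead a homeomorphism $\mathbb{R}^n \xrightarrow{\sim} \mathbb{D}^n_\epsilon$ that restricts to the identity on a smaller ball $\mathbb{D}^n_{\epsilon'} \subset \mathbb{D}^n_\epsilon$ (such homeomorphisms exist, e.g. by a radial monotone interpolation), which leaves the germ at $0$ unchanged. With that substitution your argument goes through; as written, the "strictly" claim does not hold for the cited map.
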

\begin{proof}
\label{ProofThatFunctorFromCechGroupoidIsToSpaceIsEquivalence}
First to see that the functor is essentially surjective \cref{EssentiallySurjectiveOnStalks} on $n$-dimensional stalks: For $\mathbf{x} : \mathbb{R}^n \to X$ a continuous map (a \emph{plot}), the open cover property of $\mathcal{U}$ \cref{AnOpenCover} implies that there exists $i \in I$ and an open neighborhood $U_x \subset X$ of $\mathbf{x}(0)$ such that $U_x \subset U_i$. Therefore the preimage $\mathbf{x}^{-1}(U_x)$ is an open neighborhood of $0$ in $\mathbb{R}^n$, and by the Euclidean topology on $\mathbb{R}^n$ this contains an open ball $D^n_\epsilon \subset \mathbf{x}^{-1}(U_x)$ around 0, such that $\mathbf{x}\vert_{ D^n_\epsilon}$ factors through $U_i$. Precomposed with any map 
$\mathbb{R}^n \xrightarrow{\sim} \mathbb{D}^n_\epsilon$ which is the identity on a smaller open ball $D^n_{\epsilon'} \subset D^n_\epsilon$, this is a plot $(\mathbf{x},i)$ of objects of the {\v C}ech groupoid whose germ maps to the germ of the given $\mathbf{x}$.

Then to see that the functor is fully faithful \cref{FullyFaithfulOnStalks} on $n$-dimensional stalks: Given a pair of plots of the {\v C}ech groupoid, $\mathbf{x}_i : \mathbb{R}^n \to U_i$ and 
$\mathbf{x}_j : \mathbb{R}^n \to U_j$, there are two cases: regarded as maps to $X$ their germs either coincide --- in which case there is a unique morphism between them in $X$ regarded as a groupoid (\cref{SpaceAsGroupoid}), namely the identity --- or their germs do not coincide, in which case the set of morphisms between them in $X$ is empty. We need to see that the same two cases hold for morphisms between the germs of these plots regarded in the {\v C}ech groupoid. In the second case this is immediate from the definition, while in the first case it follows from the open cover property that the germs of $\mathbf{x}_i$ and $\mathbf{x}_j$ both factor through $U_i \cap U_j$, which constitutes the required unique morphism between them in the {\v C}ech groupoid.
\end{proof}

\subsubsection{Mapping Stacks and Nonabelian Cohomology}
\label{MappingStacksAndNonabelianCohomology}

\begin{definition}[Good open cover]
  \label[definition]
    {GoodOpenCover}
  An open cover  
  $
    \big\{
    \begin{tikzcd}[sep=20] 
      U_i 
     \ar[
       r, 
       hook, 
       shorten=-1pt, 
       "{ \iota_i }",
       "{ \mathrm{open} }"{
         swap, yshift=-1pt, scale=.9
       }
     ] 
     & 
     X
  \end{tikzcd}
  \big\}_{i \in I}
  $
  \cref{AnOpenCover}
  of an $n$-dimensional manifold is \emph{good} if all finite intersections of its patches are either empty or homeomorphic to $\mathbb{R}^n$:
  \begin{equation}
    \label{HomeomorphismsOfIntersectionsOfGoodOpenCovers}
    \forall_{k \in \mathbb{N}_{\geq 1}}
    \;
    \forall_{
      i_i, \cdots, i_k \in I
    }
    \;:\;
    U_{i_1}
    \cap 
      \cdots 
    \cap
    U_{i_k}
    \simeq
    \left\{
    \begin{array}{l}
      \varnothing \; \mbox{or}
      \\
      \mathbb{R}^n
      \mathrlap{\,.}
    \end{array}
    \right.
  \end{equation}
  We say that this is \emph{differentially good} if these homeomorphisms \cref{HomeomorphismsOfIntersectionsOfGoodOpenCovers} exist even as \emph{diffeomorphisms}.
\end{definition}
For general topological manifolds the existence of good open covers is not known, but we have:
\begin{lemma}[{cf. \parencites[Thm. 5.1]{Bott1982}[Prop. A.1]{FSSt12-DiffClasses}}]
  \label[lemma]{ExistenceOfGoodOpenCovers}
  Every \emph{smooth} manifold admits a differentiably good open cover \textup{(\cref{GoodOpenCover})}.
\end{lemma}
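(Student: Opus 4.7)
The plan is to produce the good cover from Riemannian geometry, via the classical construction of geodesically convex neighborhoods.

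\textbf{Step 1 (geometric setup).} First I would equip the smooth manifold $M$ with an auxiliary Riemannian metric $g$. Such a metric exists by the standard partition-of-unity argument, since smooth manifolds are, by convention, paracompact Hausdorff and hence admit smooth partitions of unity subordinate to any open cover. This reduces the topological problem to a purely Riemannian one.

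\textbf{Step 2 (geodesically convex neighborhoods).} I would then invoke Whitehead's theorem in Riemannian geometry: every point $x \in M$ admits arbitrarily small open neighborhoods $V_x \subset M$ which are \emph{geodesically convex}, meaning that any two points $y,z \in V_x$ are joined by a unique minimizing $g$-geodesic lying entirely in $V_x$, and moreover $V_x$ is contained in a normal neighborhood of each of its points. In particular, for each $x$ one can choose $V_x$ to be a geodesically convex ball $B_{\epsilon_x}(x)$ of sufficiently small radius, so that the exponential map at $x$ provides a diffeomorphism $\exp_x : B_{\epsilon_x}(0) \subset T_x M \xrightarrow{\ \sim\ } V_x$. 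Since an open Euclidean ball in $T_x M \cong \mathbb{R}^n$ is diffeomorphic to $\mathbb{R}^n$ itself (cf.\ \cref{CartesianSpaceHomeomorphicToOpenBall}), this gives $V_x \cong_{\mathrm{diff}} \mathbb{R}^n$.

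\textbf{Step 3 (closure under finite intersections).} The key virtue of geodesic convexity is that it is preserved under intersection: if $V, V' \subset M$ are geodesically convex and $y, z \in V \cap V'$, then the unique minimizing geodesic from $y$ to $z$ lies in both $V$ and $V'$, hence in $V \cap V'$. By induction, any nonempty finite intersection $V_{x_1} \cap \cdots \cap V_{x_k}$ of the chosen neighborhoods is again geodesically convex. To conclude that such an intersection is diffeomorphic to $\mathbb{R}^n$ it suffices to observe that it is a star-shaped open neighborhood of any of its points with respect to normal coordinates centered at that point: explicitly, pick $y \in V_{x_1} \cap \cdots \cap V_{x_k}$ and pull back the intersection through $\exp_y$, which sends it diffeomorphically to an open, star-shaped (hence diffeomorphic to $\mathbb{R}^n$) subset of $T_y M$.

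\textbf{Step 4 (assembly).} Choosing one such $V_x$ for every $x \in M$ yields an open cover $\mathcal{U} = \{V_x\}_{x \in M}$ (which may be thinned by paracompactness to a locally finite subcover, though this is not needed for the statement) all of whose nonempty finite intersections are diffeomorphic to $\mathbb{R}^n$, hence a differentiably good open cover in the sense of \cref{GoodOpenCover}. The main technical obstacle is really in Step 2 --- Whitehead's convexity theorem and the attendant control on the exponential map --- but this is a standard result of Riemannian geometry and may simply be cited. The rest of the argument is a short combinatorial consequence.
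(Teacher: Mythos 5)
Your argument is exactly the standard proof that the paper references (the paper itself does not reprove the lemma but cites Bott--Tu Thm.~5.1 and the appendix of Fiorenza--Schreiber--Stasheff, and both of those proceed precisely as you do: auxiliary Riemannian metric, Whitehead's geodesically convex neighborhoods, stability of convexity under finite intersection, and the exponential map to land in a star-shaped open subset of $\mathbb{R}^n$). So the approach matches.

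One spot deserves a sharper acknowledgment. You write ``star-shaped (hence diffeomorphic to $\mathbb{R}^n$)'' as though this were an evident consequence, but the statement \emph{every open star-shaped subset of $\mathbb{R}^n$ is diffeomorphic to $\mathbb{R}^n$} is a genuine and surprisingly delicate theorem in its own right: the naive radial rescaling fails because the ``distance to the boundary along a ray'' function on the unit sphere need only be lower semicontinuous, not continuous, so one cannot just reparametrize along rays. This is precisely the technical content that [FSSt12-DiffClasses, Prop.~A.1] is careful about (Bott--Tu sidestep it because their notion of ``good cover'' asks only for contractibility, which is immediate from star-shapedness; whereas \cref{GoodOpenCover} here asks for diffeomorphism to $\mathbb{R}^n$). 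In a finished write-up you should cite this lemma explicitly rather than fold it into a ``hence'', since it is actually the hardest single step in the whole proof, not the geodesic-convexity input you flag in Step~2.
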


It turns out (in \cref{SmoothInfinityGroupoids}) that the {\v C}ech groupoids (\cref{CechGroupoids}) of \emph{good} open covers (\cref{GoodOpenCover}) are ``fine enough'' (technical term: \emph{cofibrant}) to represent all maps out of topological manifolds into delooping groupoids (\cref{DeloopingGroupoid}). Therefore:
\begin{definition}
  \label[definition]
  {MappingStackFromManifoldToDeloopingGroupoid}
  Let $X$ be a topological manifold which admits a good open cover $\mathcal{U}$ (\cref{GoodOpenCover}, such as any smooth manifold does, by \cref{ExistenceOfGoodOpenCovers}), and let $\Gamma$ be a topological group, then the \emph{mapping stack} 
  \footnote{
    In the literature, the mapping stack \cref{MappingStackOutOfGoodOpenCoverIntoBG} may also be called the \emph{derived internal hom}, or similar. We are tacitly using here that $X_{\mathcal{U}}$ is cofibrant (\cite[Ex. 4.3.42]{SS25-Bun}) and that $\mathbf{B}\Gamma$ is fibrant (\cite[Lem 4.3.30]{SS25-Bun}) in the local projective model structure of simplicial presheaves over the site of Cartesian spaces. This is is discussed in \cref{SmoothInfinityGroupoids} below.
  }
  from $X$ to the delooping groupoid $\mathbf{B}\Gamma$ \cref{TheDeloopingGroupoid} is, up to equivalence, the functor groupoid $\mathrm{Func}(-,-)$ (\cref{FunctorGroupoid}) into $\mathbf{B}\Gamma$ out of the {\v C}ech groupoid $X_{\mathcal{U}}$ (\cref{CechGroupoids}) of the good open cover $\mathcal{U}$:
  \begin{equation}
    \label{MappingStackOutOfGoodOpenCoverIntoBG}
    \begin{aligned}
    \mathrm{Map}\big(
      X,\,
      \mathbf{B}\Gamma
    \big)
    & :=
    \mathrm{Func}\big(
      X_{\mathcal{U}}
      ,\,
      \mathbf{B}\Gamma
    \big)
    \\[-2pt]
    & =
    \big\{
      \begin{tikzcd}[sep=15pt]
        X 
        \ar[
          r,
          dashed
        ]
        &
        \mathbf{B}\Gamma
      \end{tikzcd}
    \big\}
    \mathrlap{\,,}
    \end{aligned}
  \end{equation}
  where in the second line we are showing a more suggestive notation which highlights again that this is to be thought of as maps  out of $X$ itself, in the sense of \cref{MoritaMap}.
  
  The underlying topologically discrete groupoid \cref{UnderlyingTopologicallyDiscreteGroupoid} we denote by a boldface $\mathbf{H}(-,-)$:
  \begin{equation}
    \label{MappingGroupoidOutOfGoodOpenCoverIntoBG}
    \mathbf{H}(X,\mathbf{B}\Gamma)
    :=
    \flat \, \mathrm{Map}(X,\mathbf{B}\Gamma)
    \mathrlap{\,,}
  \end{equation}
  since this is (cf. \cite{Jardine2009}) the \emph{cocycle groupoid} (objects are {\v C}ech cocycles, morphisms are coboundaries)  whose isomorphism classes \cref{IsomorphismClassesOfTopGroupoid} are the \emph{nonabelian cohomology sets} $H(-,-)$, as discussed in \cref{OrdinaryNonabelianCechCohomology}.
\end{definition}

\begin{example}[Ordinary nonabelian cohomology]
  \label[example]{OrdinaryNonabelianCechCohomology}
  We have seen in \cref{CechGroupoids} \cref{CechCocycleAsFunctorOnCechGroupoid,CechCoboundaryAsFunctorOnCechGroupoid} that \eqref{MappingStackOutOfGoodOpenCoverIntoBG} is the groupoid whose objects are 1-cocycles and whose morphisms are coboundaries in nonabelian {\v C}ech cohomology $H^1(X;\Gamma)$, relative to the cover $\mathcal{U}$ (for which cf. \parencites[\S 7]{Wedhorn2016}[\S 4]{Alvarez1985}). In general, for these sets of cocycles modulo coboundaries to give the full cohomology set $H^1(X;\Gamma)$ one has to take their colimit over cover refinements. But in the presence of a good open cover it is sufficient to evaluate there!
  (For the case of abelian groups $\Gamma$ this is \cite[\S 4.18]{Bredon1997}, in general it follows immediately from the cofibrancy of the {\v C}ech groupoid.) Therefore we have:
  \begin{equation}
    \label{CechCohomologyAsIsoclassesOfMaps}
    H^1(X; \Gamma)
    \simeq
    \big[
      \mathbf{H}(X,\mathbf{B}\Gamma)
    \big]_0
    \defneq
    \big[
      \flat 
      \mathrm{Map}(X, \mathbf{B}\Gamma)
    \big]_0
    \mathrlap{\,.}
  \end{equation}
  In words: The isomorphism classes of the mapping stack from a space $X$ to $\mathbf{B}\Gamma$ is identified with the nonabelian 1-cohomology of $X$ with coefficients in $\Gamma$, and in components this identification incarnates as {\v C}ech cohomology.

  These classes are equivalently the \emph{isomorphism classes} of \emph{$\Gamma$-principal bundles} over $X$. In fact, before passing to isomorphism classes, 
  \begin{equation}
    \Gamma \mathrm{PrnBdl}(X)
    \simeq
    \mathbf{H}(X,\mathbf{B}\Gamma)
  \end{equation}
  is equivalent (cf. \cref{PrincipalBundleAsPullbackAlongCechCocycle} and \cite[\S 5.1.2]{SS25-Bun}) to the groupoid of $\Gamma$-principal bundles (\cref{GroupoidOfPrincipalBundles}).

  This means that the delooping groupoid $\mathbf{B}\Gamma$ serves not just as a coefficient object for nonabelian cohomology, in \cref{CechCohomologyAsIsoclassesOfMaps}, but as the \emph{moduli stack} of $\Gamma$-principal bundles (cf. \parencites{NSS2015a}{FSS15-Stacky}{SS25-Bun}).

  On the other hand, there is also the  \emph{classifying space} of $\Gamma$ (\cite{Milgram1967}, cf. \cite[Prop. 3.3.4]{SS25-Bun}), the ``bar construction'' or ``topological realization'' $\vert-\vert$ of $\mathbf{B}\Gamma$ (cf. \cite[Ntn. 2.2.28]{SS25-Bun}), traditionally denoted:
  \begin{equation}
    \label{ClassifyingSpaceBG}
    B\Gamma 
      := 
    \vert \mathbf{B}\Gamma\vert
    \mathrlap{\,.}
  \end{equation}
Over paracompact topological spaces $X$ (such as manifolds and CW complexes, \cref{CellAttachment}) we have that homotopy classes of maps into $B G$ coincide with transformation classes of maps into $\mathbf{B}G$ and hence give the same nonabelian cohomology (cf. \parencites[Thm. 3.5.1]{RudolphSchmidt2017}[Thm. 5.1.13]{SS25-Bun}):
\begin{equation}
  \label{NonabelianCohomologyRepresentedByBGAndbfBG}
  \begin{tikzcd}
    \pi_0 
    \mathrm{Map}\big(
      X
      ,\,
      B G
    \big)
    \underset{
      \scalebox{.7}{\cref{IsoClassesOfFundamentalGroupoid}}
    }{=}
    \big[
      \shape \mathrm{Map}(X, B G)
    \big]_0
    \simeq
    \big[
      \flat \mathrm{Map}(X,\mathbf{B}G)
    \big]_0
    \underset{
      \scalebox{.7}{\cref{CechCohomologyAsIsoclassesOfMaps}}
    }{=}
    H^1(X,G)
    \mathrlap{\,.}
  \end{tikzcd}
\end{equation}
To note here the two kind of classifying objects that appear:
\begin{itemize}
  \item $\mathbf{B}\Gamma$ is a topological groupoid with trivial space of objects,
  \item $B \Gamma$ is, as a topological space, a topological groupoid with only identity morphisms.
\end{itemize}
Hence, general topological groupoids, regarded as classifying objects for cohomology, unify these two extremes. We will see this in action from \cref{SlicingAndTwisting} on.
\end{example}

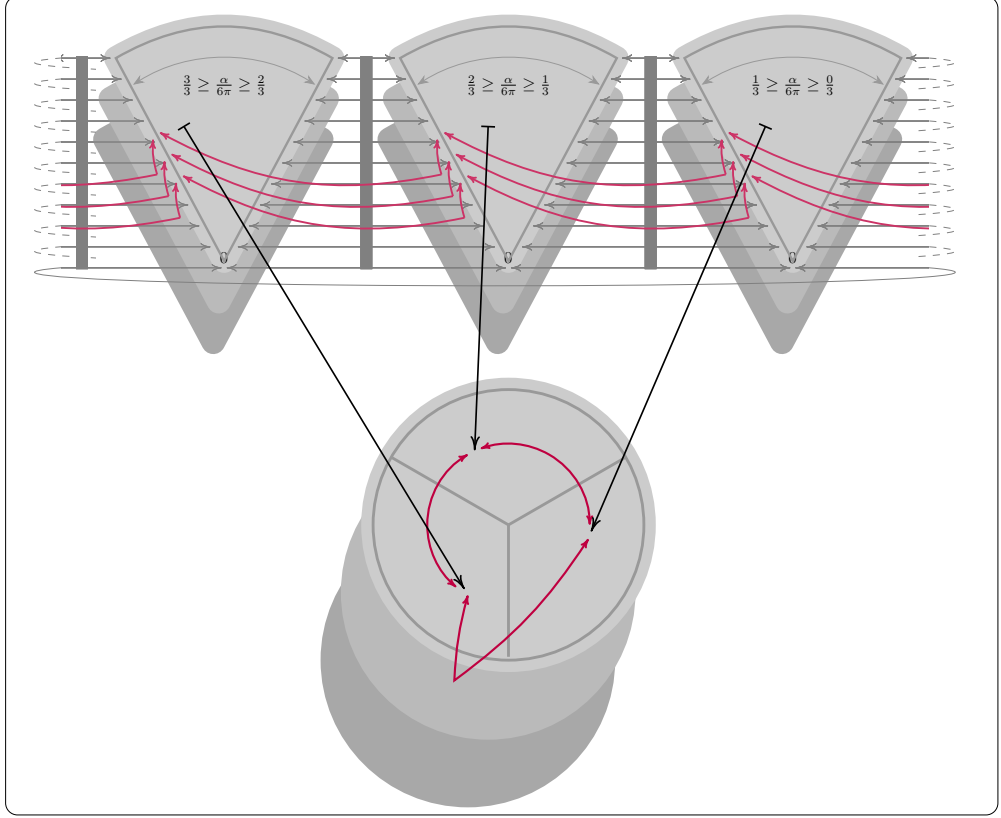
\begin{figure}[htb]
\caption{
 \label{ASimpleEquivariantCechGroupoid} 
 In equivariant generalization of \cref{ASimpleCechGroupoid}:
 \\
 {\bf Top row:}
 The equivariant {\v C}ech groupoid (\cref{EquivariantCechGroupoids}) of a (good, \cref{GoodOpenCover}) equivariant open cover of the open disk $\mathbb{D}^2_{1 + \epsilon}$ equipped with rigid $\mathbb{Z}_3$-rotation action. The space of objects is shown in light gray, the space of non-identity morphisms is shown in darker shades of gray.
 \\
 {\bf Bottom row:}
 The $\mathbb{Z}_3$-action groupoid \cref{TheActionGroupoid} of the disk, receiving the canonical projection functor \cref{EquivariantCechGroupoidEquivalentToOriginalGSpace} from the equivariant {\v C}ech groupoid.
}

\centering

\adjustbox{
  rndfbox=5pt,
  scale=.9
}{
\hspace{-.8cm}
\begin{tikzpicture}

\begin{scope}
\clip
 (-.8,.8) rectangle 
 (12,-4.5);

\foreach \n [count=\np] in {0,1,2,3} {

\begin{scope}[
 shift={(10-\n*4.2,0)}
]

\filldraw[
    gray!68,
    line width=10pt,
    line join=round,
    shift={(-.16,-1.2)}
  ]
    (-1.6,0) to[bend left=30]
    (+1.6,0) --
    (0,-3) -- cycle;
\filldraw[
    gray!54,
    line width=10pt,
    line join=round,
    shift={(-.08,-.6)}
  ]
    (-1.6,0) to[bend left=30]
    (+1.6,0) --
    (0,-3) -- cycle;

\filldraw[
    gray!40,
    line width=10pt,
    line join=round
  ]
    (-1.6,0) to[bend left=30]
    (+1.6,0) --
    (0,-3) -- cycle;
  \node[
    scale=.7
  ] at (0,-2.95) {$0$};
  \draw[
    gray!80,
    line width=1pt
  ]
    (-1.6,0) to[bend left=30]
    (+1.6,0) --
    (0,-3) -- cycle;
  \draw[
    shift={(0,-2.5)},
    gray!80,
    line width=.4,
    Stealth-Stealth
  ]
    (90-32.6:2.5) arc 
    (90-32.6:90+32.6:2.5);
  \node[scale=.7]
    at (0,-.4) {
      $
        \tfrac{\np}{3}
        \!\geq\!
        \frac{
          \alpha
        }{6\pi}
        \!\geq\!
        \tfrac{\n}{3}
      $
    };

\end{scope}

\foreach \n [count=\np] in {-1,0,1,2,3} {
\begin{scope}[
 shift={(10-\n*4.2,0)}
]
\begin{scope}[shift={(-2.1,0)}]
 \filldraw[
     black!50,
     line width=5pt,
     line join=round
   ]
     (0,.03) --
     (0,-3.12)
      -- cycle;
 
 \foreach \k in {0,...,10} {
   \draw[
     {<[length=4pt,width=3pt]}-{>[length=4pt,width=3pt]},
     black!50,
     line width=.2
   ]
     ({-\k*.164-.42},{-\k*.31}) --
     ({+\k*.164+.42},{-\k*.31});
 }
\end{scope}
\end{scope}

}}

\foreach \n [count=\np] in {-1,0,1,2} {
\begin{scope}[
 shift={(10-\n*4.2,0)}
]
\foreach \k in {3,...,5} {
   \draw[
     shift={(-\k*.17,\k*.32)},
     -{>[length=4pt,width=3pt]},
     purple!80,
     line width=.8
   ]
  (-.14,-3.32) to[bend left=20] 
  (-4.3,-2.7);
   \draw[
     shift={(-\k*.17,\k*.32)},
     -{>[length=4pt,width=3pt]},
     purple!80,
     line width=.8
   ]
  (-.14,-3.32) to[bend left=10] 
  (-.2,-2.8);
}
\end{scope}
}

\end{scope}

\draw[
   shift={(5.6,-3.166)},
   black!50,
   line width=.4
]
  (90+70:6.8 and .2) arc
  (90+70:360+90-70:6.8 and .2);

\begin{scope}
\clip
  (-1.3,.2)  rectangle
  (-.3,-3.4);

\foreach \n in {0,...,9} {
\draw[
   shift={(5.6,-2.855+\n*.31)},
   black!50,
   line width=.2,
   dashed
]
  (90+70:6.8 and .2) arc
  (90+70:360+90-70:6.8 and .2);
}
\end{scope}

\begin{scope}
\clip[
  shift={(13.1,0)}
]
  (-1.3,.2)  rectangle
  (-.3,-3.4);

\foreach \n in {0,...,9} {
\draw[
   shift={(5.6,-2.855+\n*.31)},
   black!50,
   line width=.2,
   dashed
]
  (90+70:6.8 and .2) arc
  (90+70:360+90-70:6.8 and .2);
}
\end{scope}

\begin{scope}[
  shift={(5.8,-6-.9)}
]

\filldraw[
    gray!68,
    line width=10pt,
    line join=round,
    shift={(-.6,-2)}
  ]
 (0,0) circle (2);

\filldraw[
    gray!54,
    line width=10pt,
    line join=round,
    shift={(-.3,-1)}
  ]
 (0,0) circle (2);

\filldraw[
    gray!40,
    line width=10pt,
    line join=round
  ]
 (0,0) circle (2);

\draw[
    gray!80,
    line width=1.2pt,
]
  (0,0) circle (2);

\draw[
    gray!80,
    line width=1.2pt,
]
 (0,0) -- (30:2);
\draw[
    gray!80,
    line width=1.2pt,
]
 (0,0) -- (30+120:2);
\draw[
    gray!80,
    line width=1.2pt,
]
 (0,0) -- (30+240:1.95);

\foreach \k in {0,2} {
  \draw[
    {<[length=4pt,width=3pt]}-{>[length=4pt,width=3pt]},
    purple,
    line width=.9
  ]
    (115+\k*120+5:1.2) arc 
    (115+\k*120+5:115+120+\k*120-5:1.2);
}

  \draw[
    {<[length=4pt,width=3pt]}-{>[length=4pt,width=3pt]},
    purple,
    line width=.9
  ]
    (115+120+5:1.2) to[bend right=6]
    (-.8,-2.3) to[bend right=10]
    (115+120+120-5:1.2);

\end{scope}

\draw[
  |-{>[length=6pt,width=4pt]},
  line width=.7pt
]
  (5.5,-1) -- 
  (5.3,-5-.8);
\draw[
  |-{>[length=6pt,width=4pt]},
  line width=.7pt
]
  (1,-1) -- 
  (5.15,-7.1-.75);
\draw[
  |-{>[length=6pt,width=4pt]},
  line width=.7pt
]
  (9.6,-1) -- 
  (7.02,-6.1-.9);

\end{tikzpicture}
}

\end{figure}

In equivariant generalization of \cref{CechGroupoids}, we have:
\begin{example}[Equivariant {\v C}ech groupoids]
\label[example]{EquivariantCechGroupoids}
For $G \acts X$ a topological group action (\cref{TopologicalGroupAction}), consider an open cover 
$\mathcal{U} := \big\{\begin{tikzcd}[sep=15pt]U_i \ar[r, hook, "{ \iota_i }"] & X\end{tikzcd}\big\}_{i \in I}$ \cref{AnOpenCover} which is \emph{equivariant} in that the disjoint union of its patches is equipped with an action 
$G \acts \, \bigsqcup_i U_i$,  and compatibly so in that the covering map is equivariant \cref{Equivariance}:
\begin{equation}
  \label{EquivariantOpenCover}
  \begin{tikzcd}
    \underset{i}{\bigsqcup}
    \, 
    U_i
    \ar[
      out=180-60,
      in=60,
      looseness=4,
      "{
        \,\mathclap{G}\,
      }"{description}
    ]
    \ar[
      rr,
      ->>,
      "{
        (\iota_i)_{i \in i}
      }"
    ]
    &&
    X
    \ar[
      out=180-60,
      in=60,
      looseness=4,
      "{
        \,\mathclap{G}\,
      }"{description}
    ]
  \end{tikzcd}
\end{equation}
then the corresponding \emph{equivariant {\v C}ech groupoid} is
\begin{equation}
  \label{TheEquivariantChechGorupoid}
  (G \backsslash X)_{\mathcal{U}}
  \!=\!
  \left\{
  \begin{tikzcd}[
    row sep=24pt,
    column sep=35pt
  ]
    & 
    (
      g_1 \cdot x,\, 
      g_1 \cdot j
    )
    \ar[
      dr,
      "{
        \left(
          (
            g_1 \cdot x,\, 
            g_1\cdot j,\, 
            g_1 \cdot k
          ),\,
          g_2
        \right)
      }"{sloped, pos=.55        }
    ]
    \\
    (x,i)
    \ar[
      rr,
      "{
        \left(
          (x,i,j),\, 
          g_2 \cdot g_1
        \right)
      }"
    ]
    \ar[
      ur,
      "{
        \left(
          (x,i,j)
          ,\,
          g_1
        \right)
      }"{sloped, pos=.4}
    ]
    &&
    (
      g_2 \cdot g_1\cdot x,\,
      g_2 \cdot g_1 \cdot k
    )
  \end{tikzcd}
  \right\},
\end{equation}
where the general morphisms may usefully be thought of as composites of pure {\v C}ech morphisms and pure gauge morphisms, in either order:
\begin{equation}
  \begin{tikzcd}[
    row sep=5pt,
    column sep=60pt
  ]
    &[+5pt]
    (x,j)
    \ar[
      dr,
      "{
        (
          (x,j,j),
          g
        )
      }"{sloped}
    ]
    \\
    (x,i)
    \ar[
     dr,
     "{
       (
         (x,i,i),
         g 
       )
     }"{swap, sloped}
    ]
    \ar[
      ur,
      "{ 
        ((x,i,j), \mathrm{e}) 
      }"{sloped}
    ]
    \ar[
      rr,
      "{
        \big(
          (x,i,j), g
        \big)
      }"{description}
    ]
    &&
    (g \cdot x,\, g \cdot j)   
    \mathrlap{\,.}
    \\
    &
    (g\cdot x, g \cdot i)
    \ar[
      ur,
      "{ 
        (
        (g\cdot x, g\cdot i, g \cdot j),
        \mathrm{e}
        )
      }"{swap, sloped}
    ]
  \end{tikzcd}
\end{equation}
In equivariant generalization of \cref{CechGroupoidEquivalentToOriginalSpace}, the canonical functor from the equivariant {\v C}ech groupoid to the original action groupoid is an equivalence (\cref{EquivalenceOfTopGroupoids}, cf. \cite[Ex. 4.3.45]{SS25-Bun})
\begin{equation}
  \label{EquivariantCechGroupoidEquivalentToOriginalGSpace}
  \begin{tikzcd}[
    row sep=-1pt,
    column sep=15pt
  ]
    (G \backsslash X)_{\mathcal{U}}
    \ar[
      rr,
      "{ \sim }"
    ]
    &&
    G \backsslash X
    \\[-1pt]
    (x, i)
    \ar[
      dd,
      "{
        \left(
          (x,i,j)
          ,
          g
        \right)
      }"
    ]
    &&
    x
    \ar[
      dd,
      "{
        g
      }"{swap}
    ]
    \\
    & \longmapsto
    \\
    \big(
      (g\cdot x,\, g \cdot j)
    \big)
    &&
    g \cdot x
    \mathrlap{\,,}
  \end{tikzcd}
\end{equation}
so that, in generalization of \cref{CechCocycleAsFunctorOnCechGroupoid},
a Morita map now is a cocycle in nonabelian \emph{equivariant {\v C}ech cohomology} (cf. \cite[Rem. 4.3.46]{SS25-Bun}): 
\begin{equation}
  \label{EquivariantCechCocycle}
  \begin{tikzcd}[
    row sep={between origins,35pt},
    column sep={between origins,43pt}
  ]
    &&
    x
    \ar[
      dr,
      equals
    ]
    \ar[
      rrr,
      "{ g }"{description}
    ]
    &[20pt] 
    &[-20pt]
    &
    g \cdot x
    \ar[
      dr,
      equals
    ]
    &[20pt]
    \\
    G \backsslash X
    \ar[
      dddd,
      <-,
      "{ \sim }"{sloped}
    ]
    &&& 
    x
    \ar[
      rrr,
      "{ g }"{description, pos=.6}
    ]
    &&&
    g \cdot x
    \\
     &
     x
     \ar[
       uur,
       equals
     ]
     \ar[
       urr,
       equals
     ]
     \ar[
       rrr,
       "{ g }"{description}
     ]
     &&& 
    g \cdot x
    \ar[
      uur,
      crossing over,
      equals
    ]
    \ar[
      urr,
      equals
    ]
    \\[-10pt]
    &&&
    \rotatebox[origin=c]{90}{$\mapsto$}
    \\[-10pt]
    &
    & 
    (x,j)
    \ar[
      dr,
      shorten=-2pt,
      "{
        ((x,j,k),\mathrm{e})
      }"{swap, sloped, scale=.9}
    ]
    \ar[
      rrr,
      "{
        (
        (x,j,j),
        g
        )
      }"{description}
    ]
    &&&
    (g \cdot x, g \cdot j)
    \ar[
      dr,
      shorten >=-3pt,
      "{
        (
        (g\cdot x, g \cdot j, g \cdot k),
        \mathrm{e}
        )
      }"{sloped, scale=.75, pos=.7, yshift=1pt}
    ]
    \\
    (G \backsslash X)_{\mathcal{U}}
    \ar[
      dddd,
      "{
        \gamma
      }"
    ]
    &&&
    (x,k)
    \ar[
      rrr,
      "{
        ((x,k,k),g)
      }"{description, pos=.67}
    ]
    &&&
    (g \cdot x, g\cdot k)
    \\
    &
    (x,i)
    \ar[
      uur,
      "{
        ((x,i,j), \mathrm{e})
      }"{sloped}
    ]
    \ar[
      urr,
      "{
        ((x,i,k), \mathrm{e})
      }"{description, sloped}
    ]
    \ar[
      rrr,
      "{
        (
        (x,i,i),
        g
        )
      }"{description}
    ]
    &&&
    (g\cdot x, g \cdot i)
    \ar[
      uur,
      crossing over,
      "{
        \colorbox{white}{$
        (
        (g\cdot x, g\cdot i, g\cdot j),
        \mathrm{e}
        )
        $}
      }"{sloped, scale=.85, pos=.45}
    ]
    \ar[
      urr,
      "{
        (
        (g\cdot x, g\cdot i, g\cdot k),
        \mathrm{e}
        )
      }"{swap, sloped}
    ]
    \\[-15pt]
    &&&
    \rotatebox[origin=c]{-90}{$\mapsto$}
    \\[-13pt]
    & &
    \bullet
    \ar[
      dr,
      "{
        \gamma_{jk}(x)
      }"{description, sloped}
    ]
    \ar[
      rrr,
      "{
        \rho^g_j(x)
      }"{description}
    ]
    &&&
    \bullet
    \ar[
      dr,
      "{
        \gamma_{g\cdot j\, g\cdot k}(g \cdot x)
      }"{sloped, description}
    ]
    \\
    \mathbf{B}\Gamma
    &&&
    \bullet
    \ar[
      rrr,
      "{
        \rho^g_k(x)
      }"{description, pos=.58}
    ]
    &&&
    \bullet
    \\
    &
    \bullet
    \ar[
      uur,
      "{
        \gamma_{ij}(x)
      }"{description, sloped}
    ]
    \ar[
      urr,
      "{
        \gamma_{ik}(x)
      }"{description, sloped}
    ]
    \ar[
      rrr,
      "{
        \rho^g_i(x)
      }"{description}
    ]
    &&&
    \bullet
    \ar[
      uur,
      crossing over,
      "{
        \gamma_{g\cdot i\, g\cdot j}(g\cdot x)
      }"{sloped, description}
    ]
    \ar[
      urr,
      "{
        \gamma_{g\cdot i\, g\cdot k}(g\cdot x)
      }"{description, sloped}
    ]
    \\[-20pt]
  \end{tikzcd}
\end{equation}
\end{example}

In equivariant generalization of \cref{ExistenceOfGoodOpenCovers}, we have:
\begin{lemma}[{\cite{Yang2014}} 
\footnote{
In fact, \cite{Yang2014} proves something much stronger than \cref{ExistenceOfEquivariantGoodOpenCovers}, namely that good equivariant \emph{regular} covers exist (cf. also \cite[Def. 2.1.24]{SS25-Bun}). This is needed when expressing equivariance in terms of systems of $H$-fixed loci for subgroups $H \subset G$, cf. \cite[\S 6.2.1]{SS25-Bun}.}
]
\label[lemma]
  {ExistenceOfEquivariantGoodOpenCovers}
For $G$ a finite group, $X$ a smooth manifold and $G \acts X$ smooth action \textup{(\cref{TopologicalGroupAction})} there exists a good open cover of $X$ \textup{(\cref{GoodOpenCover})} which is equivariant \cref{EquivariantOpenCover}.
\end{lemma}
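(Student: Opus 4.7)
The plan is to adapt the classical construction of a good open cover via geodesically convex balls (as in the proof of \cref{ExistenceOfGoodOpenCovers}) so that the resulting patches are permuted by the $G$-action. The crucial enabling ingredient, specific to the finite-group setting, is the Haar-style averaging of Riemannian metrics: starting from any smooth Riemannian metric $\mathsf{g}_0$ on $X$, the averaged tensor $\mathsf{g} \defneq \tfrac{1}{|G|}\sum_{g \in G} g^\ast \mathsf{g}_0$ is smooth, positive-definite, and $G$-invariant, so that $G$ acts on $(X,\mathsf{g})$ by isometries.

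With this invariant metric in hand, at each point $x \in X$ I would invoke Whitehead's theorem to choose a radius $r_x > 0$ such that the open geodesic ball $U_x \defneq B_{r_x}(x)$ is geodesically convex (and convexity is preserved under further shrinking). Because the isotropy group $G_x \subset G$ acts by $\mathsf{g}$-isometries fixing $x$, the ball $U_x$ is automatically $G_x$-invariant; for any $g \in G$ the translate $g \cdot U_x = B_{r_x}(g\cdot x)$ is again a convex geodesic ball centered on the $G$-orbit of $x$, so that $\{g \cdot U_x\}_{g \in G/G_x}$ is a $G$-orbit of convex open sets indexed by the coset space $G/G_x$.

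Next I would pass to a $G$-equivariant locally finite subcover. Since $G$ is finite, the orbit space $X/G$ is paracompact, so the projected cover $\{\pi(U_x)\}_{x \in X}$ of $X/G$ admits a locally finite refinement $\{V_\alpha\}$; lifting by choosing representatives $x_\alpha \in \pi^{-1}(V_\alpha)$ with $\pi(U_{x_\alpha}) \subseteq V_\alpha$ (shrinking $r_{x_\alpha}$ as needed) and setting $\mathcal{U} \defneq \{g \cdot U_{x_\alpha}\}_{g \in G,\,\alpha}$ yields a locally finite open cover of $X$ whose patches are permuted by $G$ --- precisely the equivariance data of \cref{EquivariantOpenCover}. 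Finally, any finite intersection of patches of $\mathcal{U}$ is a finite intersection of $\mathsf{g}$-convex open sets, hence either empty or again convex, and in the latter case diffeomorphic to $\mathbb{R}^{\dim X}$ via the exponential map centered at any interior point, which gives the good-cover condition (\cref{GoodOpenCover}).

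The step I expect to be the main obstacle is the $G$-equivariant local finiteness: near points with non-trivial isotropy the radii $r_x$ must be shrunk uniformly across a neighborhood of the orbit so that translates $g \cdot U_x$ with $g \notin G_x$ do not overlap $U_x$ in a way that would destroy either the convexity of their intersections or the local finiteness of the cover. For finite $G$ this reduces to routine bookkeeping, but it is also the point where the extension to compact Lie group actions (as in the stronger statement of Yang) genuinely requires the slice theorem to control the action near strata with non-discrete isotropy.
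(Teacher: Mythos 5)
The paper does not prove this lemma: it is cited wholesale from \cite{Yang2014}, and the footnote notes that Yang in fact proves the stronger existence of good $G$-\emph{regular} covers. There is therefore no in-paper proof to compare against. Your sketch is the natural equivariant upgrade of the classical non-equivariant construction that the paper cites for \cref{ExistenceOfGoodOpenCovers} (Bott--Tu as patched in \cite[Prop.~A.1]{FSSt12-DiffClasses}): average to a $G$-invariant metric, take Whitehead geodesically convex balls, observe that the orbit-translates remain convex balls, and extract a locally finite equivariant refinement through the paracompact quotient $X/G$. As far as the lemma as stated goes, this is essentially sound.

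Two points worth tightening. First, ``diffeomorphic to $\mathbb{R}^{\dim X}$ via the exponential map centered at any interior point'' elides the nontrivial step that an open star-shaped subset of $T_p X$ is diffeomorphic to $\mathbb{R}^n$; that is precisely the gap in Bott--Tu which \cite[Prop.~A.1]{FSSt12-DiffClasses} closes, so you should invoke it explicitly rather than wave at the exponential map. Second, your closing worry about overlaps near points with non-trivial isotropy is partly misplaced: intersections of geodesically convex sets are automatically geodesically convex, so $U_x \cap g U_x$ never threatens the good-cover condition, and local finiteness of the $G$-translated family follows from local finiteness downstairs plus finiteness of $G$. The bookkeeping that actually needs care is arranging (by shrinking radii) that the set-stabilizer of each patch $U_{x_\alpha}$ equals the point-isotropy $G_{x_\alpha}$, so that the index set $\bigsqcup_\alpha G/G_{x_\alpha}$ carries the well-defined $G$-action required by \cref{EquivariantOpenCover}. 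Finally, your construction does not obviously yield a \emph{regular} $G$-cover (the stronger property \cite{Yang2014} establishes, which the paper relies on when passing to the $H$-fixed-locus/Elmendorf picture), but the lemma as stated asks only for an equivariant good cover, and your sketch delivers that.
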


In equivariant generalization of \cref{MappingStackFromManifoldToDeloopingGroupoid}, we have:
\begin{definition}
  \label[definition]
    {MappingStackFromfHomotopyQuotToBGamma}
  For $G \acts X$ a topological $G$-space (\cref{TopologicalGroupAction}) that admits a good open cover $\mathcal{U}$ (\cref{GoodOpenCover}) which is $G$-equivariant \cref{EquivariantOpenCover} (such as is the case for $G$ a finite group acting smoothly on a smooth manifold $G$, by \cref{ExistenceOfEquivariantGoodOpenCovers}), and for $\Gamma$ a topological group, we say that
  \begin{equation}
    \label{MappingStackOutOfEquivariantCechGroupoid}
    \begin{aligned}
      \mathrm{Map}\big(
        G \backsslash X
        ,\,
        \mathbf{B}\Gamma
      \big)
      &
      :=
      \mathrm{Func}\big(
        (G \sslash X)_{\mathcal{U}}
        ,\,
        \mathbf{B}\Gamma
      \big)
      \\
      & 
      \defneq
      \big\{
        \begin{tikzcd}[sep=15pt]
          G \backsslash X 
          \ar[r, dashed]
          &
          \mathbf{B}\Gamma
        \end{tikzcd}
      \big\}
    \end{aligned}
  \end{equation}
  is the (up to equivalence) \emph{mapping stack} out of the homotopy quotient $G \backsslash X$ \cref{TheActionGroupoid} into the delooping $\mathbf{B}\Gamma$ (\cref{DeloopingGroupoid}), where $(G \backsslash X)_{\mathcal{U}}$ is the equivariant {\v C}ech groupoid from \cref{EquivariantCechGroupoids}.

  The underlying topologically discrete groupoid \cref{UnderlyingTopologicallyDiscreteGroupoid} we denote again by boldface $\mathbf{H}(-,-)$ \cref{MappingGroupoidOutOfGoodOpenCoverIntoBG}:
  \begin{equation}
      \mathbf{H}\big(
        G \backsslash X
        ,\,
        \mathbf{B}\Gamma
      \big)
      :=
      \flat\, 
      \mathrm{Map}\big(
        G \backsslash X
        ,\,
        \mathbf{B}\Gamma
      \big)
      \,,
  \end{equation}
  since this is now the \emph{cocycle groupoid} of ordinary nonabelian \emph{equivariant {\v C}ech cohomology}, see \cref{EquivariantNonabelianCechCohomology}.
\end{definition}

In generalization of \cref{OrdinaryNonabelianCechCohomology} we have:
\begin{example}[Equivariant nonabelian cohomology and equivariant bundles {\cite{SS25-Bun}}]
  \label[example]{EquivariantNonabelianCechCohomology}
  In the situation of \cref{MappingStackFromfHomotopyQuotToBGamma},
  we have that 
  \begin{equation}
    G \mathrm{Equ}\Gamma\mathrm{PrnBdl}(X)
    \simeq
    \mathbf{H}\big(
      G \backsslash X
      ,\,
      \mathbf{B}\Gamma
    \big)
  \end{equation}
  is equivalently the groupoid of \emph{$G$-equivariant $\Gamma$-principal bundles} on $X$ (see \cref{PrincipalBundleAsPullbackAlongCechCocycle} for the construction), and its connected components (hence equivalently the isomorphism classes of these bundles) is the \emph{equivariant nonabelian {\v C}ech cohomology} of $X$:
  \begin{equation}
    H^1_G\big(
      X
      ;\,
      \Gamma
    \big)
    \simeq
    \big[
    \mathbf{H}(
      G \backsslash X
      ,\,
      \mathbf{B}\Gamma
    )
    \big]_0
  \end{equation}
\end{example}

These constructions of mapping stacks \cref{MappingStackOutOfGoodOpenCoverIntoBG,MappingStackOutOfEquivariantCechGroupoid} do not actually depend, up to equivalence, on the use of good {\v C}ech groupoids, these are just a particularly nice choice (when they exist) of general \emph{cofibrant resolutions}, of which a larger class is the following:

\begin{definition}[{cf. \cite[Prop. 4.2.37]{SS25-Bun}}]
  \label[definition]
   {CofibResolutionOfTopologicalGroupoid}
  A Dugger-\emph{cofibrant resolution} of a topological groupoid $\mathcal{X}$ is an equivalence (\cref{EquivalenceOfTopGroupoids})
  \begin{equation}
    \begin{tikzcd}[sep=15pt]
      \widehat{\mathcal{X}}
      \ar[r, "{\sim}"]
      &
      \mathcal{X}
    \end{tikzcd}
  \end{equation}
  with a topological groupoid $\widehat{\mathcal{X}}$ for which:
  \begin{enumerate}
    \item all three component spaces $\mathrm{Obj}(\widehat{\mathcal{X}})$, $\mathrm{Mor}(\widehat{\mathcal{X}})$, $\mathrm{Mor}(\widehat{\mathcal{X}}) \tensor[_s]{\times}{_t} \mathrm{Mor}(\widehat{\mathcal{X}})$ are homeomorphic to disjoint unions of Cartesian spaces \cref{CartesianSpaceHomeomorphicToOpenBall},
    \item
    the identity-morphisms including maps maps like $\begin{tikzcd}\mathrm{Obj}(\widehat{\mathcal{X}}) \ar[r, hook,  "{ \mathrm{e} }"] & \mathrm{Mor}(\widehat{\mathcal{X}})\end{tikzcd}$ 
    and $\begin{tikzcd}[column sep=32pt]\mathrm{Mor}(\widehat{\mathcal{X}}) \tensor[_s]{\times}{_t} \mathrm{Obj}(\widehat{\mathcal{X}}) \ar[r, "{ \mathrm{id} \tensor[_s]{\times}{_t} \mathrm{e} }"] & \mathrm{Mor}(\widehat{\mathcal{X}}) \tensor[_s]{\times}{_t} \mathrm{Mor}(\widehat{\mathcal{X}}) \end{tikzcd}$, etc., are the inclusions of disjoint summands.
  \end{enumerate}
\end{definition}

Therefore in further generalization of \cref{MappingStackFromfHomotopyQuotToBGamma} we set:
\begin{definition}
  \label[definition]
   {MappingStackFromDuggerCofibrantToBGamma}
  For $\begin{tikzcd}[sep=small] \widehat{\mathcal{X}} \ar[r, "{ \sim }"] & \mathcal{X} \end{tikzcd}$ a Dugger-cofibrant resolution (\cref{CofibResolutionOfTopologicalGroupoid}) of a topological groupoid and $\Gamma$ a topological group, then the \emph{mapping stack} from $\mathcal{X}$ to the delooping $\mathbf{B}\Gamma$ is
  \begin{equation}
    \begin{aligned}
    \mathrm{Map}\big(
      \mathcal{X}
      ,\,
      \mathbf{B}\Gamma
    \big)
    & 
    :=
    \mathrm{Func}\big(
      \widehat{\mathcal{X}}
      ,\,
      \mathbf{B}\Gamma
    \big)
    \\
    & \defneq
    \big\{
    \begin{tikzcd}[sep=15pt]
      \mathcal{X}
      \ar[r, dashed]
      &
      \mathbf{B}\Gamma
    \end{tikzcd}
    \big\}.
    \end{aligned}
  \end{equation}
\end{definition}
\begin{lemma}
  \label[lemma]
  {MoritaEquivalenceOfMappingStack}
  The mapping stack $\mathrm{Map}(\mathcal{X}, \mathbf{B}\Gamma)$ in \cref{MappingStackFromDuggerCofibrantToBGamma} is well-defined in that it depends, up to Morita equivalence \textup{(\cref{MoritaMap})}, only on the equivalence class of $\mathcal{X}$:
  \begin{equation}
    \label{WitnessingMoritaInvarianceOfMappingSStacl}
    \begin{tikzcd}[sep=18pt]
      \mathcal{X}'
      \ar[
        r,
        "{ F }",
        "{ \sim }"{swap}
      ]
      &
      \mathcal{X}
    \end{tikzcd}
    \;\;\;\;\;
    \Rightarrow
    \;\;\;\;\;
    \begin{tikzcd}[sep=18pt]
      \mathrm{Map}\big(
        \mathcal{X}
        ,
        \mathbf{B}\Gamma
      \big)
      \ar[
        r,
        "{ F^\ast }",
        "{ \sim }"{swap}
      ]
      &
      \mathrm{Map}\big(
        \mathcal{X}'
        ,
        \mathbf{B}\Gamma
      \big).
    \end{tikzcd}
  \end{equation}
\end{lemma}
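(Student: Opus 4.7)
The plan is to reduce the statement to the standard model-categorical fact that the derived mapping object depends only on the equivalence class of its source, made precise here in terms of Dugger-cofibrant resolutions (\cref{CofibResolutionOfTopologicalGroupoid}). Concretely, we use the interpretation, indicated in the footnote to \cref{MappingStackFromManifoldToDeloopingGroupoid}, that $\mathbf{B}\Gamma$ is fibrant in the local projective model structure on (simplicial) presheaves over Cartesian spaces, while Dugger-cofibrant resolutions provide cofibrant replacements in that structure.

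First I would pick Dugger-cofibrant resolutions $p : \widehat{\mathcal{X}} \xrightarrow{\sim} \mathcal{X}$ and $p' : \widehat{\mathcal{X}}' \xrightarrow{\sim} \mathcal{X}'$. Since $\widehat{\mathcal{X}}'$ is cofibrant and $p$ is an acyclic fibration (being the chosen resolution), the standard lifting property supplies a continuous functor $\widehat{F} : \widehat{\mathcal{X}}' \to \widehat{\mathcal{X}}$ together with a transformation exhibiting $p \circ \widehat{F} \simeq F \circ p'$. Applying the two-out-of-three property for equivalences of topological groupoids (\cref{EquivalenceOfTopGroupoids}, verified stalkwise on plots via \cref{EssentiallySurjectiveOnStalks}--\cref{FullyFaithfulOnStalks}) to this commutative-up-to-transformation square shows that $\widehat{F}$ is itself an equivalence.

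Next I would show that precomposition with the equivalence $\widehat{F}$ between cofibrant topological groupoids induces an equivalence of functor groupoids into the fibrant $\mathbf{B}\Gamma$:
\begin{equation}
\begin{tikzcd}[sep=18pt]
\widehat{F}^\ast :
\mathrm{Func}\big(\widehat{\mathcal{X}}, \mathbf{B}\Gamma\big)
\ar[r, "{ \sim }"]
&
\mathrm{Func}\big(\widehat{\mathcal{X}}', \mathbf{B}\Gamma\big)
\mathrlap{\,.}
\end{tikzcd}
\end{equation}
This is Ken Brown's lemma / the pushout-product axiom applied to the Quillen adjunction $(-\times\widehat{\mathcal{X}}') \dashv \mathrm{Func}(\widehat{\mathcal{X}}', -)$: the right adjoint $\mathrm{Func}(-, \mathbf{B}\Gamma)$, evaluated at the fibrant object $\mathbf{B}\Gamma$, sends equivalences between cofibrants to equivalences. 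By \cref{MappingStackFromDuggerCofibrantToBGamma}, the left-hand side computes $\mathrm{Map}(\mathcal{X},\mathbf{B}\Gamma)$ and the right-hand side computes $\mathrm{Map}(\mathcal{X}',\mathbf{B}\Gamma)$, yielding \cref{WitnessingMoritaInvarianceOfMappingSStacl}.

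The main obstacle is the last step: for a reader not willing to invoke the full projective model structure, one must provide a hands-on construction of a (weak) inverse $\widehat{F}_\ast$ to $\widehat{F}^\ast$ at the level of functor groupoids. This can be done essentially surjectively by using that every $\widehat{\mathcal{X}}$-cocycle into $\mathbf{B}\Gamma$ may be transported back along $\widehat{F}$ patchwise (using essential surjectivity of $\widehat{F}$ on plots to find local preimages, \cref{EssentiallySurjectiveOnStalks}, as in the proof of \cref{ProjectionFromCechGroupoidIsEquivalence}) and glued via the full faithfulness \cref{FullyFaithfulOnStalks} of $\widehat{F}$ on morphisms of stalks, which coherently produces the gauge transformations patching the local pullbacks into a global $\widehat{\mathcal{X}}'$-cocycle. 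Fully faithfulness analogously yields the bijection on transformations (coboundaries) between cocycles, establishing that $\widehat{F}^\ast$ is essentially surjective and fully faithful and hence an equivalence.
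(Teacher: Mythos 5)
Your proof is correct and takes essentially the same route as the paper's, whose own proof simply defers to \cref{ComputingMappingStacksInSmplPSh}: identify $\mathrm{Map}(\mathcal{X},\mathbf{B}\Gamma)$ with a derived internal hom via Dugger-cofibrancy of $\widehat{\mathcal{X}}$ and local-projective fibrancy of $\mathbf{B}\Gamma$, then invoke homotopy invariance of derived mapping objects (Ken Brown's lemma). One small imprecision worth noting: \cref{CofibResolutionOfTopologicalGroupoid} only asks the Dugger resolution $p$ to be an equivalence, not an acyclic fibration, so the lift $\widehat{F}$ is obtained only up to transformation (as you in fact allow for) rather than by a strict lifting-property argument.
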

\begin{proof}
  Discussed below in \cref{ComputingMappingStacksInSmplPSh} of \cref{SmoothInfinityGroupoids}.
\end{proof}

\subsubsection{Slice mapping stacks and Twisted cohomology}
\label{SlicingAndTwisting}

\begin{definition}
  \label[definition]{GlobalFibrations}
  Say that a continuous functor $F$ (\cref{TopologicalFunctor}) between topological groupoids 
  is a \emph{global fibration}, to be denoted as on the left here:
  \begin{equation}
    \label{GlobalFibration}
    \begin{tikzcd}[
      row sep=3pt
    ]
      \mathcal{X}
      \ar[
        dd,
        "{ F }"{swap},
        "{ \in \mathrm{Fib} }"
      ]
      &&
      \forall \, x
      \ar[
        r,
        dashed,
        "{ \exists \, \widehat{f} }"
      ]
      \ar[
        dd,
        shorten=3pt,
        |->
      ]
      &
      t(\widehat{f})
      \ar[
        dd,
        shorten=3pt,
        |->
      ]
      \\
      & \Leftrightarrow
      \\
      \mathcal{Y}
      &&
      s(f) 
      \ar[
        r, 
        "{ \forall\, f }"
      ]
      &
      t(f)
    \end{tikzcd}
  \end{equation}
  if, as indicated on the right, for all
  \begin{enumerate}
    \item $n \in \mathbb{N}$, $f : \mathbb{R}^n \to \mathrm{Mor}(\mathcal{Y})$, 
    a plot of morphisms in $\mathbb{Y}$,
    
    \item $x : \mathbb{R}^n \to \mathrm{Obj}(\mathcal{Y})$,
    a plot of objects of $\mathcal{Y}$
    such that $F(x) = s(f)$,
  \end{enumerate}
  there exists 
  $\widehat{f} : \begin{tikzcd}[sep=small] \mathbb{R}^n \ar[r] & \mathcal{X}\end{tikzcd}$ such that
  $F(\widehat{f}) = f$ and $s(\widehat{f}) = x$ (a ``lift'').
\end{definition}

\begin{example}
\label[example]
  {HomotopyQuotientFibration}
For an action $\Gamma \acts \, X$ \textup{(\cref{TopologicalGroupAction})}, its homotopy quotient \textup{(\cref{ActionGroupoid})} comes with a canonical functor to $\mathbf{B}\Gamma$ \cref{DeloopingGroupoidIsPointQuotient}, which is a global fibration \cref{GlobalFibration}:
\begin{equation}
  \label{FibrationFromActionGroupoidToBG}
  \begin{tikzcd}[
    column sep=12pt
  ]
    \Gamma
      \backsslash 
    X
    \ar[
      d,
      "{
        p
          ^{\scalebox{.7}{$\Gamma \acts X$}}
          _{\mathrm{univ}}
      }"{swap},
      "{
        \in 
        \mathrm{Fib}
      }"
    ]
    &
    x 
      \ar[
        rr, 
        "{ \gamma }"{description}
      ]
    &
    {}
    \ar[
      d,
      |->,
      shorten <=2pt,
      shorten >=6pt
    ]
    &
    \gamma \cdot x
    \\
    \mathbf{B}\Gamma
    &
    \bullet 
    \ar[
      rr, 
      "{ \gamma }"{description,pos=.43}
    ] 
    &{}& 
    \bullet
    \mathrlap{\,.}
  \end{tikzcd}
\end{equation}
Here the subscript notation indicates that this simple fibration is in fact is the (stacky) \emph{universal $\Gamma$-associated $X$-fiber bundle} and even the universal \emph{globally equivariant} such: see \cref{PrincipalBundleAsPullbackAlongCechCocycle}.
\end{example}

\begin{definition}[Homotopy fiber product of topological groupoids]
  \label[definition]
  {HomotopyFiberProductOfTopGroupoids}
  Given a pair of coincident continuous functors, $\begin{tikzcd}[sep=small] \mathcal{X} \ar[r, "{ F_1 }"] & \mathcal{B} \ar[r, <-, "{ F_2 }"] & \mathcal{Y}\end{tikzcd}$, the corresponding fiber product (\cref{FiberProduct}) of their morphisms spaces canonically inherits the structure of a topological groupoid, the \emph{fiber product of topological groupoids}: 
  \begin{equation}
    \label{FiberProductOfTopologicalGroupoids}
    \begin{tikzcd}[column sep=large, row sep=small]
      \mathcal{X}
      \underset{\mathcal{B}}{\times}
      \mathcal{Y}
      \ar[
        dr,
        phantom,
        "{ \lrcorner }"{pos=.05}
      ]
      \ar[d]
      \ar[r]
      &
      \mathcal{Y}
      \ar[d, "{ F_2 }"]
      \\
      \mathcal{X}
      \ar[r, "{ F_1 }"{swap}]
      &
      \mathcal{B}
      \mathrlap{\,,}
    \end{tikzcd}
  \end{equation}
  hence
  \begin{equation}
    \begin{aligned}
    \mathrm{Mor}\big(
      \mathcal{X}
      \times_{\mathcal{B}}
      \mathcal{Y}
    \big)
    & =
      \mathrm{Mor}(\widehat{\mathcal{X}})
      \;\;
      \underset{
        \mathclap{
          \mathrm{Mor}(\widehat{\mathcal{B}})
        }
      }{\times}
      \;\;
      \mathrm{Mor}(\widehat{\mathcal{Y}})
      \mathrlap{\,,}
    \\
    \mathrm{Obj}\big(
      \mathcal{X}
      \times_{\mathcal{B}}
      \mathcal{Y}
    \big)
    & =
      \mathrm{Obj}(\widehat{\mathcal{X}})
      \;\;
      \underset{
        \mathclap{
          \mathrm{Obj}(\widehat{\mathcal{B}})
        }
      }{\times}
      \;\;
      \mathrm{Obj}(\widehat{\mathcal{Y}}) 
      \mathrlap{\,,}
    \end{aligned}
  \end{equation}
  etc.,
  and the structure maps on $\mathcal{X} \times_{\mathcal{B}} \mathcal{Y}$ are induced entry-wise from the given ones on $\mathcal{X}$, $\mathcal{Y}$ and $\mathcal{B}$.
  
  If here $F_1$ or $F_2$ are global fibrations (\cref{GlobalFibrations}) then this is a representation of the (up to equivalence) \emph{homotopy fiber product} of the corresponding topological stacks, or the \emph{homotopy pullback} of one either along the other. 
  \footnote{
    We are tacitly using here that stackification preserves finite homotopy limits, so that a representative for the homotopy fiber product in topological stacks is given already by a homotopy fiber product of topological groupoids, hence already by a fiber product with a global fibration, not necessarily a local fibration.
  }
\end{definition}

\begin{example}
  \label[example]
  {FibrationOfHomotopyQuotientOverBG}
  The homotopy fiber (\cref{HomotopyFiberProductOfTopGroupoids}) of $\begin{tikzcd}[sep=small] G \backsslash X \ar[r] & \mathbf{B}G\end{tikzcd}$ \cref{FibrationFromActionGroupoidToBG} is $X$ (via \cref{SpaceAsGroupoid}):
  \begin{equation}
  \label{HomotopyFiberOfHoQuotientProjection}
  \begin{tikzcd}[column sep=large]
    X 
    \ar[r]
    \ar[d]
    \ar[
      dr,
      phantom,
      "{ 
        \lrcorner
      }"{pos=.2}
    ]
    &
    G \backsslash X
    \ar[
      d,
      "{
        \in \mathrm{Fib}
      }"
    ]
    \\
    \ast
    \ar[r, "{ \exists ! }"]
    &
    \mathbf{B}G
  \end{tikzcd}
  \end{equation}
  Conversely (\cite[Prop. 4.2.77]{SS25-Bun}), every fibration over $\mathbf{B}G$ with fiber $X$ exhibits a $G$-action on $X$ this way!
  More generally (cf. \cite[Prop. 4.2.79]{SS25-Bun}), for $\phi : \begin{tikzcd}[sep=small] G' \ar[r] & G\end{tikzcd}$ a homomorphism of topological groups, and noting that this induces from the $G$ action on $X$ also $G'$-action on $X$, via
  \begin{equation}
    \begin{tikzcd}[sep=0pt]
      G' \times X
      \ar[rr]
      &&
      X
      \\[-2pt]
      (g',x)
      &\longmapsto&
      \phi(g'\,)\cdot x
      \mathrlap{\,,}
    \end{tikzcd}
  \end{equation}
  then the homotopy quotient by $G'$ is the following homotopy pullback of that by $G$:
  \begin{equation}
    \label{PullbackAction}
    \begin{tikzcd}[column sep=large]
      G' \backsslash X
      \ar[r]
      \ar[d]
      \ar[
        dr,
        phantom,
        "{ 
          \lrcorner 
        }"{pos=.1}
      ]
      &
      G \backsslash X
      \ar[
        d,
        "{
          p
        }"{swap},
        "{
         \in \mathrm{Fib}
        }"
      ]
      \\
      \mathbf{B}G'
      \ar[
        r,
        "{
          \mathbf{B}\phi
        }"
      ]
      &
      \mathbf{B}G
        \mathrlap{\,.}
    \end{tikzcd}
  \end{equation}
\end{example}

\begin{example}[Stacky universal fiber bundles]
  \label[example]
    {PrincipalBundleAsPullbackAlongCechCocycle}
  For $G \acts X$ a finite group action on a smooth manifold, and for $\begin{tikzcd}[sep=small](G \backsslash X)_{\mathcal{U}} \ar[r, "{\sim}"] & G \backsslash G\end{tikzcd}$ the {\v C}ech groupoid resolution (\cref{EquivariantCechGroupoids}) of an equivariant good open cover (\cref{ExistenceOfEquivariantGoodOpenCovers}), and given a map $\begin{tikzcd}[sep=small] G \backsslash X \ar[r, "{ \gamma }"] & \mathbf{B}\Gamma\end{tikzcd}$ to the delooping of a topological group $\Gamma$, hence an equivariant {\v C}ech cocycle \cref{EquivariantCechCocycle}, then the homotopy pullback (\cref{HomotopyFiberProductOfTopGroupoids}) along $\gamma$ of the homotopy quotient $\Gamma \backsslash \Gamma$ \cref{FibrationFromActionGroupoidToBG}
  is, up to equivalence (homotopy $G$-quotient of), the $G$-equivariant $\Gamma$-principal bundle
  \footnote{
    With the conventions used here, the principal bundle
    $\mathcal{P}$ in \cref{PullbackOfUniversalStackyPrincipalBundleAlongCechCocycle} carries a \emph{right} $\Gamma$-action.
  }
  $\begin{tikzcd}[sep=small]\mathcal{P} \ar[r, "{ p }" ] & X\end{tikzcd}$ corresponding to $\gamma$ under the {\v C}ech theory of \cref{OrdinaryNonabelianCechCohomology,EquivariantNonabelianCechCohomology}:
  \begin{equation}
    \label{PullbackOfUniversalStackyPrincipalBundleAlongCechCocycle}
    \begin{tikzcd}[
      column sep=25pt
    ]
      G \backsslash \mathcal{P}
      \ar[
        d,
        "{
          G \backsslash p
        }"
      ]
      \ar[
        r,
        <-,
        "{ \sim }"
      ]
      &
      G \backsslash 
      \widehat{\mathcal{P}}
      \ar[r]
      \ar[d]
      \ar[
        dr,
        phantom,
        "{ \lrcorner }"{pos=.1}
      ]
      &
      \Gamma \backsslash \Gamma
      \ar[
        d,
        "{ 
          p
            ^{\scalebox{.6}{$\Gamma \acts \Gamma$}}
            _{\mathrm{univ}} 
        }"{swap},
        "{
          \in \mathrm{Fib}
        }"
      ]
      \\
      G \backsslash X
      \ar[
        r,
        <-,
        "{ \sim }"
      ]
      &
      (G \backsslash X)_{\mathcal{U}}
      \ar[r, "{ \gamma }"]
      &
      \mathbf{B}\Gamma
      \mathrlap{\,.}
    \end{tikzcd}
  \end{equation}
  It is straightforward but instructive to unwind the definitions to see that this diagram reduces in components exactly to the traditional constructions relating principal bundles to their {\v C}ech data. A comprehensive discussion is in \cite[\S 5.1]{SS25-Bun}.

  In fact this situation is universal in the following way:
  Given a $\Gamma$-principal bundle $\mathcal{P}$ and any action $\Gamma \acts F$ then there is classically the \emph{$\mathcal{P}$-associated $F$-fiber bundle}
  \begin{equation}
    \mathcal{P} 
    \!\otimes_\Gamma\!
    F
    :=
    \big\{
      (\pi,\phi)
      \in
      \mathcal{P}
      \times
      F
    \big\}
    \big/
    \big(
      \forall_{\gamma \in \Gamma}
      \;
      (\pi \cdot \gamma, \phi)
      \sim 
      (\pi, \gamma \cdot \phi)
    \big)
    \mathrlap{\,,}
  \end{equation}
  and this is equivalently just the pullback of $\Gamma \backsslash F$ \cref{FibrationFromActionGroupoidToBG} along the cocycle $\gamma$ for $\mathcal{P}$ (cf. \cite[\S 5.2.2]{SS25-Bun}):
  \begin{equation}
    \label{PullbackOfUniversalStackyFiberBundleAlongCechCocycle}
    \begin{tikzcd}[column sep=huge]
      G \backsslash 
      \big(
        \mathcal{P}
        \!\otimes_\Gamma\!
        F
      \big)
      \ar[
        d,
        "{
          G \backsslash p
        }"{swap},
        "{
          \in \mathrm{Fib}
        }"
      ]
      \ar[
        r,
        <-,
        "{ \sim }"
      ]
      &
      G \backsslash 
      \widehat{
      \big(
        \mathcal{P}
        \!\otimes_\Gamma\!
        F
      \big)
      }
      \ar[r]
      \ar[d]
      \ar[
        dr,
        phantom,
        "{ \lrcorner }"{pos=.1}
      ]
      &
      \Gamma \backsslash F
      \ar[
        d,
        "{ 
          p
            ^{\scalebox{.6}{$\Gamma \acts F$}}
            _{\mathrm{univ}} 
        }"{swap},
        "{
          \in \mathrm{Fib}
        }"
      ]
      \\
      G \backsslash X
      \ar[
        r,
        <-,
        "{ \sim }"
      ]
      &
      (G \backsslash X)_{\mathcal{U}}
      \ar[r, "{ \gamma }"]
      &
      \mathbf{B}\Gamma
      \mathrlap{\,.}
    \end{tikzcd}
  \end{equation}
\end{example}

In variation of \cref{MappingStackFromManifoldToDeloopingGroupoid}, we set:
\begin{definition}[Slice mapping stack,
  {cf \cite[4.2.66]{SS25-Bun}}]
\label[definition]
  {SliceMappingStack}
Given 
\begin{enumerate}
\item $\Gamma \acts \, Y$ a topological group action (\cref{TopologicalGroupAction}),
\item $\begin{tikzcd}[sep=15pt]\widehat{\mathcal{X}} \ar[r, "{\sim}"] & \mathcal{X}\end{tikzcd}$ a cofibrant resolution (\cref{CofibResolutionOfTopologicalGroupoid}),

\item
$\begin{tikzcd}[sep=15pt]\widehat{\mathcal{X}} \ar[r, "{ \tau }"] & \mathbf{B}\Gamma\end{tikzcd}$
a continuous functor (the \emph{twist}),
\end{enumerate}
then the \emph{slice mapping stack} from $\mathcal{X}$ to the homotopy quotient $\Gamma \backsslash Y$ over $\mathbf{B}\Gamma$ \cref{FibrationFromActionGroupoidToBG} is this fiber product \cref{FiberProductOfTopologicalGroupoids}:
\begin{equation}
  \label{SliceMappingStackAsFiberProduct}
  \hspace{1.5cm}
  \begin{tikzcd}[
    column sep=25pt
  ]
  &&
  \mathllap{
  \mathrm{Map}\big(
    \mathcal{X}
    ,
    \Gamma \backsslash Y
  \big)_{\!\mathbf{B}\Gamma}
  :=
  \;
  }
  \mathrm{Func}\big(
    \widehat{\mathcal{X}}
    ,
    \Gamma \backsslash Y
  \big)
  \tensor
    [_{p_\ast\!\!}]
    {\times}
    {_{\widetilde{\tau}} }
  \ast
    \ar[d]
    \ar[r]
    \ar[
      dr,
      phantom,
      "{ 
        \lrcorner 
      }"{pos=.2}
    ]
    &
    \mathrm{Func}\big(
      \widehat{\mathcal{X}}
      ,
      \Gamma \backsslash Y
    \big)
    \ar[
      d,
      "{
        p_\ast
      }"
    ]
    \\
  &&  \ast
    \ar[
      r,
      "{ \widetilde{\tau} }"
    ]
    &
    \mathrm{Func}\big(
      \widehat{\mathcal{X}}
      ,
      \mathbf{B}\Gamma
    \big)
    \mathrlap{\,.}
  \end{tikzcd}
\end{equation}
Following \cref{MappingStackOutOfGoodOpenCoverIntoBG}, we will more suggestively denote this also as follows:
\begin{equation}
  \mathrm{Map}\big(
    \mathcal{X},
    \Gamma \backsslash Y
  \big)_{\mathbf{B}\Gamma}
  \defneq
  \left\{
  \begin{tikzcd}[
    row sep=13pt, 
    column sep=37pt
  ]
    &
    \Gamma \backsslash Y
    \ar[d, "{p}"]
    \\
    \mathcal{X}
    \ar[
      ur,
      dashed,
    ]
    \ar[
      r,
      "{ \tau }"
    ]
    &
    \mathbf{B}\Gamma
  \end{tikzcd}
  \right\}.
\end{equation}
\end{definition}

\begin{remark}
  \label[remark]
  {SliceManningSpaceIntoQuotientStackOverBG}
  A slice mapping stack of the form \cref{SliceMappingStack} is just a topological space, in that it is a topological groupoid with only identity morphisms (\cref{SpaceAsGroupoid}). This is because the fiber product \cref{SliceMappingStackAsFiberProduct} enforces the horizontal composition \cref{HorizontalWhiskeringOfTransformations} of the transformations between slice maps with the projection map to be the identity on $\tau$:
  \begin{equation}
    \begin{tikzcd}
      & 
      \Gamma \backsslash Y
      \ar[
        d,
        "{ p }"
      ]
      \\
      \widehat{\mathcal{X}}
      \ar[
        ur, 
        bend left=35,
        dashed,
        "{f}"
        {description, name=s}
      ]
      \ar[
        ur, 
        bend right=30,
        dashed,
        "{f'}"{description, name=t}
      ]
      \ar[
        from=s,
        to=t,
        shorten=-2pt,
        Rightarrow,
        "{\eta}"
      ]
      &
      \mathbf{B}\Gamma
    \end{tikzcd}
    \adjustbox{raise=2pt}{$
    \quad = \quad
    $}
     \adjustbox{raise=12pt}{$
    \begin{tikzcd}[
      column sep=25
    ]
      {}
      \\
      \widehat{\mathcal{X}}
      \ar[
        r,
        bend left=25,
        "{ \tau }"{description, name=s}
      ]
      \ar[
        r,
        bend right=25,
        "{ \tau }"{description, name=t}
      ]
      \ar[
        from=s,
        to=t,
        equals
      ]
      &
      \!\!\mathbf{B}\Gamma
      \,.
    \end{tikzcd}
    $}
  \end{equation}
  This means that the component morphisms of $p \cdot \eta$ all have to be the identity 
  $\begin{tikzcd}[sep=small] \bullet \ar[r, "{\mathrm{e}}"] & \bullet\end{tikzcd}$, which here means \cref{FibrationFromActionGroupoidToBG} that already the components maps of $\eta$ have to be identities, 
  $\begin{tikzcd}[sep=small] x \ar[r, "{\mathrm{e}}"] & x\end{tikzcd}$.
\end{remark}

\begin{example}
  Given a $G$-equivariant $\Gamma$-associated $F$-fiber bundle according to \cref{PullbackOfUniversalStackyFiberBundleAlongCechCocycle} in  \cref{PrincipalBundleAsPullbackAlongCechCocycle}, then the slice mapping stack into it is equivalently its space $\Gamma_{\!X}(-)^{G}$ of equivariant \emph{sections} of this bundle:
  \begin{equation}
    \label{SectionsAsSliceMapping}
    \begin{aligned}
      \Gamma_{\!X}\big(
        \mathcal{P}
        \!\otimes_\Gamma\!
        F
      \big)^{\! G}
      &
      \simeq
      \left\{
      \begin{tikzcd}[
        column sep=15pt
      ]
        &
        G \backsslash
        \big(
          \mathcal{P}
          \!\otimes_\Gamma\!
          F
        \big)
        \ar[
          d,
          "{
            G \backsslash p
          }"
        ]
        \\
        G \backsslash X
        \ar[
          ur,
          dashed
        ]
        \ar[
          r,
          equals
        ]
        &
        G \backsslash X
      \end{tikzcd}
      \right\}
      \\
      & \simeq
      \left\{
      \begin{tikzcd}[
        column sep=30pt
      ]
        &
        \Gamma \backsslash F
        \ar[
          d,
          "{
            p
              ^{\scalebox{.6}{$\Gamma \acts F$}}
              _{\mathrm{univ}}
          }"
        ]
        \\
        G \backsslash X
        \ar[
          r,
          "{ \gamma }"
        ]
        \ar[
          ur,
          dashed
        ]
        &
        \mathbf{B}\Gamma
      \end{tikzcd}
     \, \right\}.
    \end{aligned}
  \end{equation}
  Here the first equivalence follows because, by the slicing,  dashed functors out of any ({\v C}ech) resolution actually have to factor through $G \backsslash X$, while the second follows either by direct inspection or abstractly by the universal property of the homotopy pullback in \cref{PullbackOfUniversalStackyFiberBundleAlongCechCocycle}.
\end{example}

\begin{example}[{cf. \cite[Prop. 4.2.77]{SS25-Bun}}]
\label[example]
{EquivariantMapsAsSliceMapsOfHomotopyQuotients}
For a pair of $G$-spaces $G \acts X$ and $G \acts Y$, the slice mapping stack (\cref{SliceMappingStack}) between their homotopy quotients over $\mathbf{B}G$ (via \cref{FibrationOfHomotopyQuotientOverBG}), which a topological space by \cref{SliceManningSpaceIntoQuotientStackOverBG}, is naturally homeomorphic to the equivariant mapping space \cref{EquivariantMappingSpace} between $X$ and $Y$:
\begin{equation}
  \label{GEquivariantMapsAsSliceMapsOverBG}
  {
  \renewcommand{\arraystretch}{1.3}  
  \setlength{\arraycolsep}{-5pt}
  \begin{array}{ccc}
    \mathrm{Map}(X,Y)^G
    &
    \begin{tikzcd}
    \ar[
      rr,
      "{ \sim }"
    ]
    &&
    {}
    \end{tikzcd}
    &
    \;
    \mathrm{Map}\big(
      G \backsslash X
      ,
      G \backsslash Y
    \big)_{\mathbf{B}G}
  \\
  \bigg(
  \adjustbox{raise=-6pt}{
  \begin{tikzcd}
    X 
    \ar[
      in=60,
      out=180-60,
      looseness=4,
      "{ 
        \,\mathclap{G}\, 
      }"{description}
    ]
    \ar[
      r, 
      dashed, 
      "{ f }"
    ]
    &
    Y
    \ar[
      in=60,
      out=180-60,
      looseness=4,
      "{ 
        \,\mathclap{G}\, 
      }"{description}
    ]
  \end{tikzcd}
  }
  \Bigg)
  &
  \longmapsto
  &
  \Bigg(
  \begin{tikzcd}[
    column sep=10pt,
    row sep=0pt
  ]
    G \backsslash X
    \ar[dr]
    \ar[
      rr,
      dashed, 
      "{
        G \backsslash f
      }"
    ]
    &&
    G \backsslash Y
    \ar[dl]
    \\
    & 
    \mathbf{B}G
  \end{tikzcd}
  \Bigg)
  \mathrlap{.}
  \end{array}
  }
\end{equation}
This follows because the slicing over $\mathbf{B}G$ entails that all {\v C}ech morphism in any resolution $\widehat{G \backsslash X}$ of $G \backsslash X$ have to map to identity morphisms in $G \backsslash Y$, whence all Morita maps come from plain functors 
$\begin{tikzcd}[sep=small]G \backsslash X \ar[r] & G \backsslash Y\end{tikzcd}$. 

In particular: 
\begin{enumerate}
\item 
an \emph{equivariant homotopy} \cref{EquivariantHomotopy} is equivalently a homotopy taking values in the slice mapping stack on the right of \cref{GEquivariantMapsAsSliceMapsOverBG}.
\begin{equation}
  \begin{tikzcd}[row sep=small,
    column sep=13pt
  ]
    \{0\}
    \ar[d, hook]
    \ar[
      drr,
      "{ 
        \widetilde{f} 
      }" 
    ]
    \\
    {[0,1]}
    \ar[
      rr,
      dashed,
      "{
        \widetilde{\eta}
      }"{description, pos=.4}
    ]
    &&
    \mathrm{Map}(X,Y)^G
    \\
    \{1\}
    \ar[u]
    \ar[
      urr,
      "{ \widetilde{g} }"{swap} 
    ]
  \end{tikzcd}
  \hspace{.4cm}
  \Leftrightarrow
  \hspace{.6cm}
  \begin{tikzcd}[row sep=small,
    column sep=13pt
  ]
    \{0\}
    \ar[d, hook]
    \ar[
      drr,
      "{ 
        \widetilde{
          G \backsslash f
        } 
      }"{pos=.3, yshift=-2.2} 
    ]
    \\
    {[0,1]}
    \ar[
      rr,
      dashed,
      "{
        \widetilde{\eta}
      }"{description, pos=.4}
    ]
    &&
    \mathrm{Map}\big(
      G \backsslash X
      ,
      G \backsslash Y
    \big)_{\mathbf{B}G}
    \mathrlap{\,.}
    \\
    \{1\}
    \ar[u]
    \ar[
      urr,
      "{ 
        \widetilde{
          G \backsslash g
        } 
      }"{swap, pos=.3, yshift=1.8}
    ]
  \end{tikzcd}
\end{equation}

\item 
for
\begin{equation}
  \mbox{$G \acts X$ a trivial action}
  \;\;\;\;
  \Leftrightarrow
  \;\;\;\;
  G \backsslash X
  =
  X \times \mathbf{B}G
  \mathrlap{\,,}
\end{equation}
so that
$G$-equivariant maps from $X$ to $Y$ are equivalently (cf. \cite[Ex. 1.8]{SS25-Bun}) maps from $X$ into the fixed locus $Y^G \subset Y$ \cref{FixedSubspace}, it follows that
\begin{equation}
 \label{LiftsOfBGThroughQuotientProjectionAreFixedPoints}
  \left\{
  \begin{tikzcd}[row sep=small, column sep=large]
      &
      G \backsslash Y
      \ar[d]
      \\
      X \times \mathbf{B}G
      \ar[
        r,
        "{ \mathrm{pr}_2 }"
      ]
      \ar[
        ur,
        dashed
      ]
      &
      \mathbf{B}G
  \end{tikzcd}
  \right\}
  \simeq
  \big\{
    \begin{tikzcd}
      X 
      \ar[
        r, 
        dashed
      ]
      &
      Y^G
    \end{tikzcd}
  \big\}
  \mathrlap{\,.}
\end{equation}
\end{enumerate}
\end{example}

In mild but crucial generalization of \cref{EquivariantMapsAsSliceMapsOfHomotopyQuotients}, we also have:
\begin{example}
  \label[example]
    {QuotientSliceMapsOverBPhi}
  Given a homomorphism of topological groups $\phi : \begin{tikzcd}[sep=small]G' \ar[r] & G\end{tikzcd}$,
  and actions $G' \acts \, X$ and $G \acts \, Y$, then the slice mapping stack (\cref{SliceMappingStack}) between their homotopy quotients sliced over $\mathbf{B}G$ (as in \cref{FibrationOfHomotopyQuotientOverBG}, for $G' \backsslash X$ via $\mathbf{B}\phi$) is a topological space naturally homeomorphic to the $G'$-equivariant maps from $X$ to $Y$, the latter with its $G'$-action induced via $\phi$ \cref{PullbackAction}:
  \begin{equation}
    \label{FactoringQuotientSliceMapsOverBPhi}
    \left\{
    \begin{tikzcd}[
      column sep=20pt
    ]
      G' \backsslash X
      \ar[d]
      \ar[
        rr,
        dashed
      ]
      &&
      G \backsslash X
      \ar[d]
      \\
      \mathbf{B}G'
      \ar[
        rr,
        "{
          \mathbf{B}\phi
        }"{description}
      ]
      &&
      \mathbf{B}G
    \end{tikzcd}
    \right\}
    \simeq
    \left\{
    \begin{tikzcd}
      G' \backsslash X
      \ar[
        rr,
        dashed
      ]
      \ar[dr]
      &[-40pt]&[-20pt]
      G' \backsslash Y
      \ar[
        dl
      ]
      \ar[r]
      \ar[
        dr,
        phantom,
        "{ 
          \lrcorner 
        }"{pos=.15}
      ]
      &[-10pt]
      G \backsslash X
      \ar[d]
      \\
      &
      \mathbf{B}G'
      \ar[
        rr,
        "{
          \mathbf{B}\phi
        }"{description}
      ]
      &&
      \mathbf{B}G
    \end{tikzcd}
    \right\}.
  \end{equation}
  As indicated on the right, this follows from \cref{EquivariantMapsAsSliceMapsOfHomotopyQuotients} by the universal property \cref{PullbackSquare} of the pullback. 
\end{example}

In view of \cref{NonabelianCohomologyRepresentedByBGAndbfBG}, we thereby obtain a natural re-formulation of \emph{equivariant cohomology} as a form of twisted cohomology of topological stacks (and it is this reformulation which naturally generalizes to a notion of orbifold cohomology, see \cref{PropertiesOfOrbifoldCohomology} below in \cref{TwistedOrbifoldCohomology}):
\begin{proposition}
  \label[proposition]
    {GEquivariantCohomologyAsStackyMaps}
  For a pair of topological $G$-actions $G \acts X$ and
  $G \acts A$, the \emph{$G$-equivariant cohomology} of $X$ with coefficients in $A$ --- in the style as on the left hand side of \cref{NonabelianCohomologyRepresentedByBGAndbfBG} ---, is:
  \begin{equation}
    H_G(X,A)
    :=
    \pi_0
    \left\{
    \begin{tikzcd}[row sep=small]
      & 
      G \backsslash A
      \ar[d]
      \\
      G \backsslash X
      \ar[r]
      \ar[
        ur,
        dashed
      ]
      &
      \mathbf{B}G
    \end{tikzcd}
    \right\}
    \simeq
    \pi_0
    \left\{\!
    \adjustbox{raise=-7pt}{
    \begin{tikzcd}
      X
      \ar[
        in=60,
        out=180-60,
        looseness=4,
        "{
          \,\mathclap{G}\,
        }"{description}
      ]
      \ar[r, dashed]
      &
      A
      \ar[
        in=60,
        out=180-60,
        looseness=4,
        "{
          \,\mathclap{G}\,
        }"{description}
      ]
    \end{tikzcd}
    }
   \! \right\}
    \mathrlap{.}
  \end{equation}
\end{proposition}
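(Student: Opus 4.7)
The proposition is, essentially, the homotopy-classes version of the homeomorphism \cref{GEquivariantMapsAsSliceMapsOverBG} established in \cref{EquivariantMapsAsSliceMapsOfHomotopyQuotients}. The plan is therefore to unwrap the definition of the left-hand side, invoke that homeomorphism, and apply $\pi_0$ to both sides.

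First, the slice mapping stack $\mathrm{Map}(G\backsslash X, G\backsslash A)_{\mathbf{B}G}$ on the left is --- as a topological stack --- just a topological space, by \cref{SliceManningSpaceIntoQuotientStackOverBG}, since the slicing over $\mathbf{B}G$ forces all transformations between slice maps to be component-wise identities. Hence its $\pi_0$ literally means connected components of a topological space, computed e.g.\ by its fundamental groupoid (\cref{FundamentalGroupoid}, \cref{IsoClassesOfFundamentalGroupoid}).

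Next, \cref{EquivariantMapsAsSliceMapsOfHomotopyQuotients} provides a natural homeomorphism \cref{GEquivariantMapsAsSliceMapsOverBG}
\[
  \mathrm{Map}(X,A)^G
  \;\xrightarrow{\sim}\;
  \mathrm{Map}\big(G \backsslash X, G \backsslash A\big)_{\mathbf{B}G}
\]
sending an equivariant map $f$ to the induced functor $G\backsslash f$ on homotopy quotients, sliced over $\mathbf{B}G$ via the canonical fibrations of \cref{HomotopyQuotientFibration}. Since $\pi_0$ is a functor, applying it to both sides yields the claimed bijection
\[
  \pi_0 \mathrm{Map}(X,A)^G
  \;\simeq\;
  \pi_0 \mathrm{Map}\big(G\backsslash X, G\backsslash A\big)_{\mathbf{B}G}
  \;=:\;
  H_G(X,A).
\]

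There is no genuine obstacle --- the proposition is a corollary of \cref{EquivariantMapsAsSliceMapsOfHomotopyQuotients}. The only bookkeeping point worth spelling out is that the ``$\pi_0$'' on the right-hand side of the displayed equation in the proposition statement (i.e.\ connected components of equivariant maps) coincides with the set of \emph{equivariant homotopy classes} \cref{EquivariantHomotopy}, and that this in turn matches $\pi_0$ of the slice mapping stack under the homeomorphism \cref{GEquivariantMapsAsSliceMapsOverBG}. Both facts are already recorded in the first bullet point of \cref{EquivariantMapsAsSliceMapsOfHomotopyQuotients}, so the proof amounts to citing \cref{EquivariantMapsAsSliceMapsOfHomotopyQuotients} and passing to $\pi_0$.
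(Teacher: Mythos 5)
Your proof is correct and takes exactly the same route as the paper, which literally reads ``By \cref{EquivariantMapsAsSliceMapsOfHomotopyQuotients}.'' You have simply unpacked the implied steps (invoke the homeomorphism \cref{GEquivariantMapsAsSliceMapsOverBG}, note \cref{SliceManningSpaceIntoQuotientStackOverBG}, apply $\pi_0$), which is a fair expansion of the paper's terse citation.
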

\begin{proof}
  By \cref{EquivariantMapsAsSliceMapsOfHomotopyQuotients}.
\end{proof}

More generally, in twisted generalization of \cref{NonabelianCohomologyRepresentedByBGAndbfBG} in \cref{OrdinaryNonabelianCechCohomology}, we have:
\begin{definition}
  \label[definition]
    {BGTwistedCohomologyOfTopGrpd}
  In the situation of \cref{SliceMappingStack}, we say that the connected components of the space which is, by \cref{SliceManningSpaceIntoQuotientStackOverBG}, the slice mapping stack of \cref{SliceMappingStack}, is the \emph{$\tau$-twisted cohomology} of $\mathcal{X}$ with coefficients in $Y$:
  \begin{equation}
    \label{bfBGammaTwistedCohomology}
    H^\tau\big(
      \mathcal{X}
      ;\,
      Y
    \big)
    :=
    \Big[
      \shape
      \mathrm{Map}\big(
        \mathcal{X},
        \Gamma \backsslash Y
      \big)_{\mathbf{B}\Gamma}
    \Big]_0
    \underset{
      \scalebox{.7}{\cref{IsoClassesOfFundamentalGroupoid}}
    }{=}
    \pi_0
  \left\{
  \begin{tikzcd}[
    row sep=13pt, 
    column sep=27pt
  ]
    &
    \Gamma \backsslash Y
    \ar[d, "{\,p}"]
    \\
    \mathcal{X}
    \ar[
      ur,
      dashed,
    ]
    \ar[
      r,
      "{ \tau }"
    ]
    &
    \mathbf{B}\Gamma
  \end{tikzcd}
 \right\}
  \mathrlap{.}
  \end{equation}
\end{definition}
\begin{remark}
  When the twist $\tau$ in \eqref{bfBGammaTwistedCohomology} is trivial, in that it factors through the point, then this twisted cohomology \cref{bfBGammaTwistedCohomology} reduces, via \cref{FactoringQuotientSliceMapsOverBPhi,HomotopyFiberOfHoQuotientProjection}, to the ordinary nonabelian cohomology as on the left of \cref{NonabelianCohomologyRepresentedByBGAndbfBG}:
  \begin{equation}
    \left\{
    \begin{tikzcd}[
    row sep=13pt, 
    column sep=27pt
  ]
     &[-10pt]&[-10pt]
     \Gamma \backsslash Y
     \ar[
       d,
       "{\, p }"
     ]
      \\
      \mathcal{X}
      \ar[
        urr,
        dashed
      ]
      \ar[r]
      \ar[
        rr,
        downhorup,
        "{ \tau }"{description}
      ]
      &
      \ast
      \ar[r]
      &
      \mathbf{B}\Gamma
    \end{tikzcd}
    \right\}
    \simeq
    \left\{
    \begin{tikzcd}[
    row sep=13pt, 
    column sep=27pt
  ]
     &[-5pt]
     Y
     \ar[r]
     \ar[d]
     \ar[
       dr,
       phantom,
       "{ 
         \lrcorner 
       }"{pos=.2}
     ]
     &[-10pt]
     \Gamma \backsslash Y
     \ar[
       d,
       "{ \, p }"
     ]
      \\
      \mathcal{X}
      \ar[
        ur,
        dashed
      ]
      \ar[r]
      \ar[
        rr,
        downhorup,
        "{ \tau }"{description}
      ]
      &
      \ast
      \ar[r]
      &
      \mathbf{B}\Gamma
    \end{tikzcd}
    \right\}
    \simeq
    \big\{
    \begin{tikzcd}[sep=small]
      \mathcal{X}
      \ar[r, dashed]
      &
      Y
    \end{tikzcd}
    \big\}
    \mathrlap{\,.}
  \end{equation}
  In contrast, to obtain the twisted generalization of the right hand side of \cref{NonabelianCohomologyRepresentedByBGAndbfBG} requires considerably more structure, namely passage, via a \emph{twisted Elmendorf theorem} (\cite[Thm. 6.2.3]{SS25-Bun}), to a perspective (cf. \cite[\S 6.2.1]{SS25-Bun}) where topological groupoids are probed not just by Cartesian plots (as indicated in \cref{ProbingTopologicalGroupoidByRns}) but also by deloopings $\mathbf{B}K$ of all finite groups $K$. Here we shall not further dwell on this phenomenon, but see \parencites{SS25-Bun}{SS26-Orb}.
\end{remark}

\begin{lemma}
  \label[lemma]
    {InvarianceOfTwistedGroupoidCohomology}
  A (Morita) equivalence \cref{AMoritaEquivalence} of topological groupoids, induces 
  \begin{enumerate}
  \item
  a bijection between their equivalence classes of twists \textup{(\cref{SliceMappingStack})},
  \item
  an isomorphism in the correspondingly  twisted twisted cohomology \textup{(\cref{BGTwistedCohomologyOfTopGrpd})}:
  \begin{equation}
    \begin{tikzcd}[
      column sep=13pt,
      row sep=2pt
    ] 
      \mathcal{X}' 
       \ar[
         dr,
         "{
           \tau'
         }"{swap}
       ]
       \ar[
         rr, 
         "{ F }",
         "{ \sim }"{swap}
       ] 
       &&
       \mathcal{X}
       \ar[
         dl,
         "{ \tau }"
       ]
       \\
       & 
       \mathbf{B}\Gamma
    \end{tikzcd}
    \;\;\;\;\;
    \Rightarrow
    \;\;\;\;\;
    \begin{tikzcd}[sep=small] 
      H^{\tau'}\big(
        \mathcal{X}'
        ;\,
        Y
      \big)
       \ar[r, "{\sim}"]
       & 
      H^{\tau}\big(
        \mathcal{X}
        ;\,
        Y
      \big)
      \mathrlap{\,.}
    \end{tikzcd}
  \end{equation}
  \end{enumerate}
\end{lemma}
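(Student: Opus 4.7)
The plan is to reduce both claims to the already-established Morita invariance of the underlying mapping stack, namely Lem.~\ref{MoritaEquivalenceOfMappingStack}, together with the basic fact that homotopy pullbacks (Def.~\ref{HomotopyFiberProductOfTopGroupoids}) are invariant under Morita equivalences of the input diagrams.

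\textbf{First I would} handle statement (i) about twists. A Morita equivalence $F : \mathcal{X}' \xrightarrow{\sim} \mathcal{X}$ induces, by Lem.~\ref{MoritaEquivalenceOfMappingStack} applied to the coefficient groupoid $\mathbf{B}\Gamma$, an equivalence of mapping stacks
\[
  F^\ast : \mathrm{Map}(\mathcal{X}, \mathbf{B}\Gamma)
  \xrightarrow{\sim}
  \mathrm{Map}(\mathcal{X}', \mathbf{B}\Gamma).
\]
Passing to isomorphism classes via \cref{EquivalentGroupoidsHaveHomeomorphismIsomorphismClasses} gives the required bijection between equivalence classes of twists, where $\tau' \equiv F^\ast \tau$.

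\textbf{Next}, for statement (ii), I would unfold the definition \eqref{SliceMappingStackAsFiberProduct} of the slice mapping stack as the homotopy pullback
\[
  \mathrm{Map}(\mathcal{X}, \Gamma\backsslash Y)_{\mathbf{B}\Gamma}
  \;\simeq\;
  \ast \underset{\mathrm{Func}(\widehat{\mathcal{X}}, \mathbf{B}\Gamma)}{\times}
  \mathrm{Func}(\widehat{\mathcal{X}}, \Gamma \backsslash Y),
\]
with attaching maps $\widetilde{\tau}$ on the left and the global fibration $p_\ast$ on the right (the latter being a fibration because $\Gamma\backsslash Y \to \mathbf{B}\Gamma$ is, by \cref{HomotopyQuotientFibration}, and $\mathrm{Func}(\widehat{\mathcal{X}},-)$ preserves global fibrations out of Dugger-cofibrant $\widehat{\mathcal{X}}$). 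Applying Lem.~\ref{MoritaEquivalenceOfMappingStack} twice — to coefficients $\mathbf{B}\Gamma$ and $\Gamma\backsslash Y$ — yields a map of homotopy pullback diagrams from that for $(\mathcal{X},\tau)$ to that for $(\mathcal{X}', \tau')$ in which each non-trivial vertex map is an equivalence, and the attaching map $\widetilde{\tau'}$ equals $F^\ast \widetilde{\tau}$ by construction. Invariance of homotopy pullbacks under such level-wise equivalence then gives an equivalence
\[
  \mathrm{Map}(\mathcal{X}, \Gamma\backsslash Y)_{\mathbf{B}\Gamma}
  \xrightarrow{\sim}
  \mathrm{Map}(\mathcal{X}', \Gamma\backsslash Y)_{\mathbf{B}\Gamma}.
\]
Passing to $\pi_0$ of the shape (equivalently, to isomorphism classes, since these slice mapping stacks are spaces by \cref{SliceManningSpaceIntoQuotientStackOverBG}) yields the claimed isomorphism $H^{\tau'}(\mathcal{X}'; Y) \xrightarrow{\sim} H^\tau(\mathcal{X}; Y)$.

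\textbf{The main obstacle} is bookkeeping at the level of cofibrant resolutions: the definition of the mapping stack involves an auxiliary choice $\widehat{\mathcal{X}}, \widehat{\mathcal{X}}' \to \mathcal{X}, \mathcal{X}'$, and one must verify that a Morita equivalence $F$ can be lifted (up to transformation) to a map of cofibrant resolutions, so that the two pullback diagrams really are connected by level-wise equivalences rather than by a zig-zag. This is a standard resolution-lifting argument --- it is most cleanly phrased in the language of the projective model structure on simplicial presheaves indicated in \cref{SmoothInfinityGroupoids}, where cofibrant replacement is functorial up to homotopy and any weak equivalence between fibrant targets such as $\mathbf{B}\Gamma$ and $\Gamma \backsslash Y$ pulls back to a weak equivalence on derived mapping spaces. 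Once that is in place, the two claims follow mechanically from the universal property of homotopy pullbacks.
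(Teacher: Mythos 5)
Your argument is correct and follows the same essential route as the paper's proof: both reduce to Morita invariance of the unsliced mapping stack (Lem.~\ref{MoritaEquivalenceOfMappingStack}, applied with codomain both $\mathbf{B}\Gamma$ and $\Gamma\backsslash Y$) and then use homotopy-pullback reasoning to transfer this to the slice. The paper packages the second step as a pasting-law argument (pasting the defining pullback square with the square formed by the two $F^\ast$ equivalences, then concluding the outer rectangle is a pullback over the primed diagram), while you invoke invariance of homotopy pullbacks under level-wise equivalence of the input diagrams — these are logically interchangeable, and your remark about the resolution-lifting bookkeeping is exactly the point the paper waves at by deferring to \cref{SmoothInfinityGroupoids}.
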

\begin{proof}
The first statement follows by \cref{MoritaEquivalenceOfMappingStack}.
With this, the second follows on general abstract grounds discussed in \cref{SmoothInfinityGroupoids}, along the following lines: In this commuting diagram,
\begin{equation}
  \begin{tikzcd}
    \mathrm{Map}\big(
      \mathcal{X},
      \Gamma \backsslash Y
    \big)_{\!\mathbf{B}\Gamma}
    \ar[
      r
    ]
    \ar[
      d
    ]
    \ar[
      dr,
      phantom,
      "{ 
        \lrcorner 
      }"{pos=.15}
    ]
    &
    \mathrm{Map}\big(
      \mathcal{X}
      ,
      \Gamma \backsslash Y
    \big)
    \ar[
      d,
      "{ p_\ast }"
    ]
    \ar[
      r, 
      "{ \sim }",
      "{ F^\ast }"{swap}
    ]
    &
    \mathrm{Map}\big(
      \mathcal{X}'
      ,
      \Gamma \backsslash Y
    \big)
    \ar[
      d,
      "{ p_\ast }"
    ]
    \\
    \ast
    \ar[
      r, 
      "{ 
        \widetilde{\tau} 
      }"{description}
    ]
    \ar[
      rr,
      downhorup,
      "{ 
        \widetilde{\tau'} 
      }"{description}
    ]
    &
    \mathrm{Map}\big(
      \mathcal{X}, 
      \mathbf{B}\Gamma
    \big)
    \ar[
      r,
      "{ \sim }",
      "{ F^\ast }"{swap}
    ]
    &
    \mathrm{Map}\big(
      \mathcal{X}', 
      \mathbf{B}\Gamma
    \big)
    \mathrlap{\,,}
  \end{tikzcd}
\end{equation}
the left square is a homotopy pullback by definition \cref{SliceMappingStackAsFiberProduct}, and the right square is because both its horizontal maps are equivalences, by \cref{MoritaEquivalenceOfMappingStack}. Therefore the total rectangle is a homotopy pullback by the homotopy pasting law. This implies an equivalence between 
$  \mathrm{Map}\big(
      \mathcal{X},
      \Gamma \backsslash Y
    \big)_{\!\mathbf{B}\Gamma}
$ 
and 
$  \mathrm{Map}\big(
      \mathcal{X}',
      \Gamma \backsslash Y
    \big)_{\!\mathbf{B}\Gamma},
$
and that implies the claim by \cref{EquivalentGroupoidsHaveHomeomorphismIsomorphismClasses}.
\end{proof}

\subsection{Orbifold Cohomology}
\label{OrbifoldCohomology}

We have seen in \cref{SlicingAndTwisting} that --- when seen from the perspective of the geometric homotopy theory of topological groupoids --- $G$-\emph{equivariant cohomology} (\cref{GEquivariantCohomologyAsStackyMaps}) is a form of \emph{twisted cohomology} (\cref{BGTwistedCohomologyOfTopGrpd}), namely with coefficient fibration a homotopy quotient projection $\begin{tikzcd}[sep=small]G \backsslash Y \ar[r] & \mathbf{B}G
\end{tikzcd}$ and twist \emph{also} such a projection, $\begin{tikzcd}[sep=small]G \backsslash X \ar[r] & \mathbf{B}G
\end{tikzcd}$. 
If we here just allow the twist to be a more general map of topological groupoids 
$\begin{tikzcd}[sep=small]\widehat{\mathcal{X}} \ar[r, "{ \tau }"] & \mathbf{B}G\end{tikzcd}$, then we immediately have a notion of twisted cohomology on topological groupoids, which restricts (by \cref{QuotientSliceMapsOverBPhi}) to $G'$-equivariant cohomology on all subgroupoids of $\mathcal{X}$ that look 
like global homotopy quotients!

In particular, this immediately gives a good notion of twisted cohomology of \emph{orbifolds} (see \cref{Orbifolds}), namely of topological groupoids that admit an open cover by Cartesian homotopy quotients $G' \sslash \mathbb{R}^n$, for varying finite groups $G'$.

\subsubsection{Orbifolds as Groupoids}
\label{OrbifoldsAsGroupoids}

With the language of topological and Lie groupoids (\cref{TopologicalGroupoid,LieGroupoidsAndSmoothGroupoids}) in hand (\cref{TopologicalGroupoidsAndStacks}), there is a beautifully transparent and powerful definition of \emph{orbifolds}: These are the \emph{proper {\'e}tale} groupoids (cf. \cref{Orbifolds}), meaning essentially that orbifolds are those groupoids that \emph{locally} look like homotopy quotients $G \backsslash \mathbb{D}^n$ (\cref{ActionGroupoid}) of finite groups $G$ acting continuously/smoothly on open disks $\mathbb{D}^n$, in generalization of how a manifold locally looks like a plain open disk $\mathbb{D}^n$.
This observation is due to \parencites[pp. 15]{MoerdijkPronk1997}[\S 4]{MoerdijkPronk1999}{Moerdijk2002} with further details spelled out in \parencites{Lerman2010}{Amenta2012}{Coufal2015}.

\begin{figure}[htb]
\caption{
\label{VisualizationOfOrbifoldCharts}
Like manifolds are spaces locally modeled on open balls, so \emph{orbifolds} (\cref{Orbifolds}) are ``higher spaces'' (groupoids) locally modeled on (homotopy) quotients of open balls by finite group actions. In the simple case of rigid rotation actions, these local (homotopy) quotients are \emph{cones} with a single orbi-singular point at their tip.
}
  
\centering

\adjustbox{
  rndfbox=5pt,
  scale=.9
}{
\begin{tikzpicture}

\node at (0,1.5) {\phantom{x}};

\begin{scope}[
  shift={(-4.2,0)}
]
\draw[
  line width=1.6pt,
  draw=black,
  fill=blue!20
]
  (0,0) circle
  (1.5);

\node at (0,-2.2) {
  \adjustbox{}{
    \def\arraystretch{.9}
    \begin{tabular}{c}
      local chart
      \\
      of a manifold
    \end{tabular}
  }
};

\end{scope}

\begin{scope}[
  shift={(-0,0)}
]
\draw[
  line width=1.6pt,
  draw=black,
  fill=blue!20
]
  (0,0) circle
  (1.5);

\draw[
  line width=.9,
  fill=red
] (0,0) circle (.14);

\draw[
  |-Latex,
  dashed
] 
  (30:1.2) arc (30:30+180:1.2);

\draw[
  |-Latex,
  dashed
] 
  (90:.7) arc 
  (90:90+180:.7);

\draw[
  |-Latex,
  dashed
]
  (+.18, -.1) .. controls
  (+.7, -.5) and
  (+.7, +.5) ..
  (+.18, +.1);

\node at (0,-2.2) {
  \adjustbox{}{
    \def\arraystretch{.9}
    \begin{tabular}{c}
      group action
      \\
      on the chart
    \end{tabular}
  }
};

\end{scope}

\begin{scope}[
  shift={(+4.2,0)}
]
\node at (0,0) {
  \includegraphics[width=4cm]{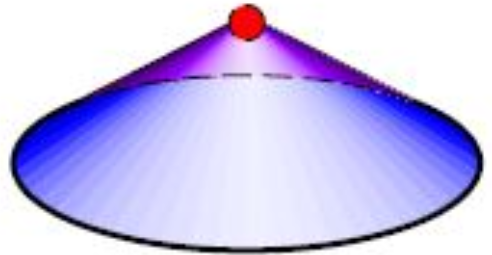}
};

\draw[
  |-Latex,
  dashed
] 
  (+.15, +1) .. controls
  (+.5,+1.5) and 
  (-.5,+1.5) ..
  (-.15, +1);

\node at (0,-2.2) {
  \adjustbox{}{
    \def\arraystretch{.9}
    \begin{tabular}{c}
      quotient chart
      \\
      of an orbifold
    \end{tabular}
  }
};
\end{scope} 
\end{tikzpicture}
}

\end{figure}

\newpage 
\begin{definition}
\label[definition]{Orbifolds}
A (topological/smooth) \emph{orbifold} is a topological/Lie groupoid $\mathcal{X}$ (\cref{TopologicalGroupoid,LieGroupoidsAndSmoothGroupoids}) such that:
\begin{enumerate}
\item
\emph{{\'E}tale}:
\begin{enumerate}
  \item
  the spaces of Objects and morphisms are equipped with the structure of topological/smooth manifolds,
  \item 
  the source or target map, and with it necessarily all the structure maps \cref{StructureOfATopologicalGroupoid}, are local homeomorphisms/diffeomorphisms.
\end{enumerate}
\item
\emph{Proper}:
the combined source/target map \cref{CombinedSourceAndTargetMap}
is \emph{proper}, which with the previous item means that this map preserves closed subsets and has compact fibers.
\end{enumerate}
The space $[\mathcal{X}]_0$ of isomorphism classes \cref{IsomorphismClassesOfTopGroupoid} of the groupoid $\mathcal{X}$ is called the underlying \emph{coarse space} of the orbifold. Conversely, given a (paracompact Hausdorff) topological space $X$ then an \emph{orbifold structure} on $X$ is a choice of proper {\'e}tale groupoid $\mathcal{X}$ and of a homeomorphism $\begin{tikzcd}[sep=small][\mathcal{X}]_0 \ar[r, "{\sim}"] & X \end{tikzcd}$. Proper {\'e}tale groupoids which are equivalent (\cref{EquivalenceOfTopGroupoids}) have homeomorphic coarse spaces \cref{EquivalentGroupoidsHaveHomeomorphismIsomorphismClasses} and are regarded as presenting the same orbifold structure.
\end{definition}

\begin{remark}
  Given an orbifold $\mathcal{X}$ (\cref{Orbifolds}) then for all $x \in \mathrm{Obj}(\mathcal{X})$ the isotropy groups $\mathcal{X}_x$\cref{IsotropyGroupOfTopGrpd} are discrete by {\'e}taleness and compact by properness of $(s,t)$, hence are \emph{finite groups}.
  Furthermore, properness of $(s,t)$ together with the manifoldness of $\mathrm{Obj}(\mathcal{X})$ implies that the space of isomorphism classes $[\mathcal{X}]_0$ \cref{IsomorphismClassesOfTopGroupoid} is paracompact and Hausdorff.
\end{remark}

\begin{example}[Global homotopy quotient orbifolds]
 \label[example]{GlobalQuotientOrbifolds}
 Consider $X$ a topological/smooth manifold, $G$ a Lie group and $G \acts X$ a continuous/smooth action (\cref{TopologicalGroupAction}). Then the homotopy quotient $G \backsslash X$ \cref{TheActionGroupoid} is an orbifold (\cref{Orbifolds}) under the following conditions of increasing generality:
 \begin{enumerate}
   \item 
     $G$ is finite:

     in this case $G \backsslash X$ is called a \emph{very good orbifold},
   \item
     $G$ is discrete acting properly discontinuously with finite stabilizers:

     in this case $G \backsslash X$ is called a \emph{good orbifold},
   \item
     $G$ is compact Lie, acting properly with finite stabilizers:

     $G \backsslash X$ of this form but not of the previous forms are called \emph{bad orbifolds}.
 \end{enumerate}
\end{example}

\begin{example}[Spindle orbifold]
  \label[example]
    {SpindleOrbifold}
  For $n_+, n_1 \in \mathbb{N}_{\geq 1}$ a pair of positive integers which are coprime, $\mathrm{gcd}(n_+, n_-) = 1$, then the \emph{$(n_+, n_-)$-spindle orbifold} (cf. \cref{SpindleOrbifold}) is the result of gluing the conical global quotients $\mathbb{Z}_{n_{\pm}} \backsslash \mathbb{D}^2_{1}$ (\cref{GlobalQuotientOrbifolds}), with respect to  $n_\pm$-fold rotation action on the disc, along a joint open annulus at their smooth ends. When $n_{\mp} = 1$ but $n_{\pm} > 1$ then the spindle reduces to what is called the \emph{teardrop orbifold}. When $n_{+} = 1 = n_i$ then the spindle orbifold reduces to the 2-sphere manifold.

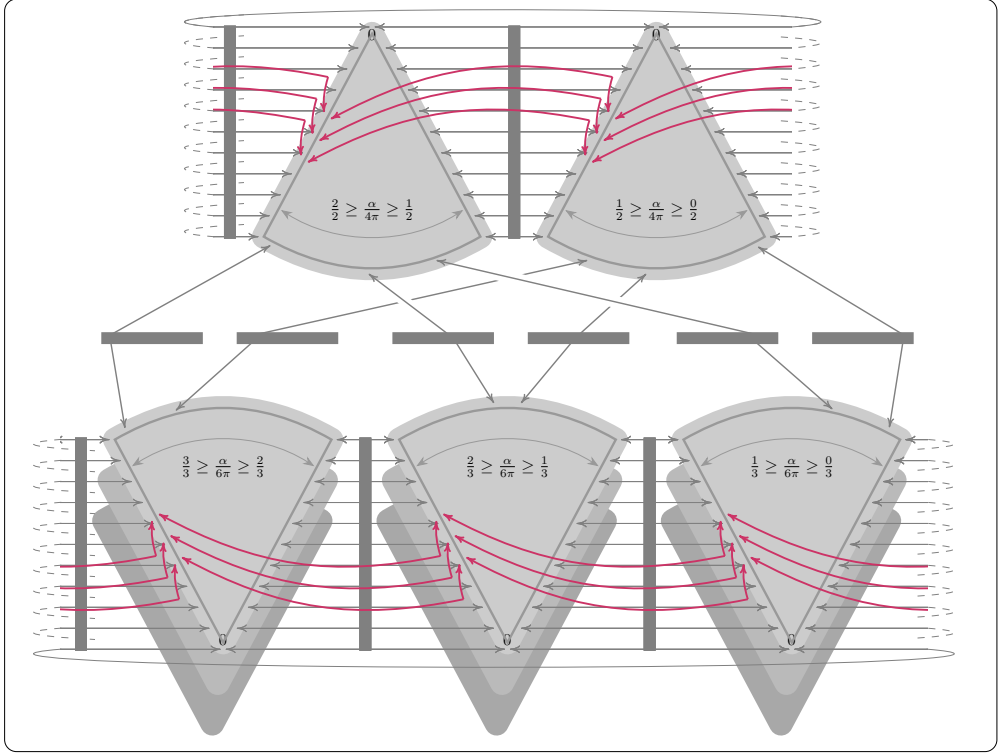
\begin{figure}[htb]
\caption{
  \label{GoodGroupoidFor23Spindle}
  Indicated is a proper {\'e}tale Lie groupoid which represents the $(2,3)$-spindle orbifold (\cref{SpindleOrbifold}) and is Dugger-cofibrant (\cref{CofibResolutionOfTopologicalGroupoid}).
  Shown in light gray is the space of objects, while the darker shades of gray are summands of the space of non-invertible morphisms.
  \\
  On the bottom we have the equivariant {\v C}ech groupoid resolution (\cref{EquivariantCechGroupoids}) of the action groupoid $\mathbb{Z}_3 \backsslash \mathbb{D}^2_{1+\epsilon}$ (from \cref{ASimpleEquivariantCechGroupoid}), and on the top the analogous equivariant {\v C}ech groupoid for $\mathbb{Z}_2 \backsslash \mathbb{D}^2_{1 + \epsilon}$. Indicated in the middle are (the component spaces of) further gluing morphisms which identify all the outer strips with each other and thereby glue the bottom conical orbifold of order 3 to the top conical orbifold of order 2, to form the $(2,3)$-spindle. To note that the composition of any two composable such gluing morphisms gives a translation morphism by either group action.
}

\centering

\adjustbox{
  rndfbox=5pt,
  scale=.9
}{
\hspace{-.8cm}
\begin{tikzpicture}

\begin{scope}[
  shift={(-2,+1.5)},
  yscale=-1
]

\begin{scope}
\clip
 (3.45,.8) rectangle 
 (12,-3.3);

\foreach \n [count=\np] in {0,1,2} {

\begin{scope}[
 shift={(10-\n*4.2,0)}
]
\filldraw[
    gray!40,
    line width=10pt,
    line join=round
  ]
    (-1.6,0) to[bend left=30]
    (+1.6,0) --
    (0,-3) -- cycle;
  \node[
    scale=.7
  ] at (0,-2.98) {$0$};
  \draw[
    gray!80,
    line width=1pt
  ]
    (-1.6,0) to[bend left=30]
    (+1.6,0) --
    (0,-3) -- cycle;
  \draw[
    shift={(0,-2.5)},
    gray!80,
    line width=.4,
    Stealth-Stealth
  ]
    (90-32.6:2.5) arc 
    (90-32.6:90+32.6:2.5);
  \node[scale=.7]
    at (0,-.4) {
      $
        \tfrac{\np}{2}
        \!\geq\!
        \frac{
          \alpha
        }{4\pi}
        \!\geq\!
        \tfrac{\n}{2}
      $
    };

\end{scope}

\foreach \n [count=\np] in {-1,0,1,2,3} {
\begin{scope}[
 shift={(10-\n*4.2,0)}
]
\begin{scope}[shift={(-2.1,0)}]
 \filldraw[
     black!50,
     line width=5pt,
     line join=round
   ]
     (0,.03) --
     (0,-3.12)
      -- cycle;
 
 \foreach \k in {0,...,10} {
   \draw[
     {<[length=4pt,width=3pt]}-{>[length=4pt,width=3pt]},
     black!50,
     line width=.2
   ]
     ({-\k*.164-.42},{-\k*.31}) --
     ({+\k*.164+.42},{-\k*.31});
 }
\end{scope}
\end{scope}

}}

\foreach \n in {-1,0,1,2} {
\begin{scope}[
 shift={(10-\n*4.2,0)}
]
\foreach \k in {3,...,5} {
   \draw[
     shift={(-\k*.17,\k*.32)},
     -{>[length=4pt,width=3pt]},
     purple!80,
     line width=.8
   ]
  (-.14,-3.32) to[bend left=20] 
  (-4.3,-2.7);
   \draw[
     shift={(-\k*.17,\k*.32)},
     -{>[length=4pt,width=3pt]},
     purple!80,
     line width=.8
   ]
  (-.14,-3.32) to[bend left=10] 
  (-.2,-2.8);
}
\end{scope}
}

\end{scope}

\draw[
   shift={(7.73,-3.175)},
   black!50,
   line width=.4,
]
  (90+65:4.7 and .18) arc
  (90+65:360+90-65:4.7 and .18);

\begin{scope}
\clip
  (2.9,.2)  rectangle
  (3.9,-3.4);

\foreach \n in {0,...,10} {
\draw[
   shift={(7.73,-3.175+\n*.31)},
   black!50,
   line width=.4,
   dashed
]
  (90+65:4.7 and .18) arc
  (90+65:360+90-65:4.7 and .18);
}
\end{scope}

\begin{scope}
\clip
  (11.8,.2)  rectangle
  (12.5,-3.4);

\foreach \n in {0,...,10} {
\draw[
   shift={(7.73,-3.175+\n*.31)},
   black!50,
   line width=.4,
   dashed
]
  (90+65:4.7 and .18) arc
  (90+65:360+90-65:4.7 and .18);
}
\end{scope}

\end{scope}


\begin{scope}[
  shift={(0,-1.5)}
]

\begin{scope}
\clip
 (-.8,.8) rectangle 
 (12,-4.5);

\foreach \n [count=\np] in {0,1,2,3} {

\begin{scope}[
 shift={(10-\n*4.2,0)}
]

\filldraw[
    gray!68,
    line width=10pt,
    line join=round,
    shift={(-.16,-1.2)}
  ]
    (-1.6,0) to[bend left=30]
    (+1.6,0) --
    (0,-3) -- cycle;
\filldraw[
    gray!54,
    line width=10pt,
    line join=round,
    shift={(-.08,-.6)}
  ]
    (-1.6,0) to[bend left=30]
    (+1.6,0) --
    (0,-3) -- cycle;

\filldraw[
    gray!40,
    line width=10pt,
    line join=round
  ]
    (-1.6,0) to[bend left=30]
    (+1.6,0) --
    (0,-3) -- cycle;
  \node[
    scale=.7
  ] at (0,-2.95) {$0$};
  \draw[
    gray!80,
    line width=1pt
  ]
    (-1.6,0) to[bend left=30]
    (+1.6,0) --
    (0,-3) -- cycle;
  \draw[
    shift={(0,-2.5)},
    gray!80,
    line width=.4,
    Stealth-Stealth
  ]
    (90-32.6:2.5) arc 
    (90-32.6:90+32.6:2.5);
  \node[scale=.7]
    at (0,-.4) {
      $
        \tfrac{\np}{3}
        \!\geq\!
        \frac{
          \alpha
        }{6\pi}
        \!\geq\!
        \tfrac{\n}{3}
      $
    };

\end{scope}

\foreach \n [count=\np] in {-1,0,1,2,3} {
\begin{scope}[
 shift={(10-\n*4.2,0)}
]
\begin{scope}[shift={(-2.1,0)}]
 \filldraw[
     black!50,
     line width=5pt,
     line join=round
   ]
     (0,.03) --
     (0,-3.12)
      -- cycle;
 
 \foreach \k in {0,...,10} {
   \draw[
     {<[length=4pt,width=3pt]}-{>[length=4pt,width=3pt]},
     black!50,
     line width=.2
   ]
     ({-\k*.164-.42},{-\k*.31}) --
     ({+\k*.164+.42},{-\k*.31});
 }
\end{scope}
\end{scope}

}}

\foreach \n [count=\np] in {-1,0,1,2} {
\begin{scope}[
 shift={(10-\n*4.2,0)}
]
\foreach \k in {3,...,5} {
   \draw[
     shift={(-\k*.17,\k*.32)},
     -{>[length=4pt,width=3pt]},
     purple!80,
     line width=.8
   ]
  (-.14,-3.32) to[bend left=20] 
  (-4.3,-2.7);
   \draw[
     shift={(-\k*.17,\k*.32)},
     -{>[length=4pt,width=3pt]},
     purple!80,
     line width=.8
   ]
  (-.14,-3.32) to[bend left=10] 
  (-.2,-2.8);
}
\end{scope}
}

\end{scope}

\draw[
   shift={(5.6,-3.166)},
   black!50,
   line width=.4
]
  (90+70:6.8 and .2) arc
  (90+70:360+90-70:6.8 and .2);

\begin{scope}
\clip
  (-1.3,.2)  rectangle
  (-.3,-3.4);

\foreach \n in {0,...,9} {
\draw[
   shift={(5.6,-2.855+\n*.31)},
   black!50,
   line width=.2,
   dashed
]
  (90+70:6.8 and .2) arc
  (90+70:360+90-70:6.8 and .2);
}
\end{scope}

\begin{scope}
\clip[
  shift={(13.1,0)}
]
  (-1.3,.2)  rectangle
  (-.3,-3.4);

\foreach \n in {0,...,9} {
\draw[
   shift={(5.6,-2.855+\n*.31)},
   black!50,
   line width=.2,
   dashed
]
  (90+70:6.8 and .2) arc
  (90+70:360+90-70:6.8 and .2);
}
\end{scope}

\end{scope}

\begin{scope}[
  shift={(-4.2,0)}
]
 \filldraw[
     black!50,
     line width=5pt,
     line join=round
   ]
     (4,0) --
     (5.5,0)
      -- cycle;
 \filldraw[
     black!50,
     line width=5pt,
     line join=round
   ]
     (6,0) --
     (7.5,0)
      -- cycle;
\end{scope}

\begin{scope}[
  shift={(.1,0)}
]
 \filldraw[
     black!50,
     line width=5pt,
     line join=round
   ]
     (4,0) --
     (5.5,0)
      -- cycle;
 \filldraw[
     black!50,
     line width=5pt,
     line join=round
   ]
     (6,0) --
     (7.5,0)
      -- cycle;
\end{scope}

\begin{scope}[
  shift={(+4.3,0)}
]
 \filldraw[
     black!50,
     line width=5pt,
     line join=round
   ]
     (4,0) --
     (5.5,0)
      -- cycle;
 \filldraw[
     black!50,
     line width=5pt,
     line join=round
   ]
     (6,0) --
     (7.5,0)
      -- cycle;
\end{scope}

  \draw[
    -{>[length=4pt,width=3pt]},
     black!50,
     line width=.6
   ]
   (-.15,0) --
   (2.3,1.37);
  \draw[
    -{>[length=4pt,width=3pt]},
     black!50,
     line width=.6
   ]
   (-.07,0) --
   (.15,-1.33);

  \draw[
    -{>[length=4pt,width=3pt]},
     black!50,
     line width=.6
   ]
   (1.85,0) --
   (6.98,1.15);
  \draw[
    -{>[length=4pt,width=3pt]},
     black!50,
     line width=.6
   ]
   (2.1,0) --
   (.9,-1.07);

  \begin{scope}
  \clip
    (4,.4) rectangle 
    (4.5,.7);
  \draw[
     line width=4.5,
     white
   ]
   (5,0) --
   (3.75,.95);
  \end{scope}
  \draw[
    -{>[length=4pt,width=3pt]},
     black!50,
     line width=.6
   ]
   (5,0) --
   (3.75,.95);  
  \draw[
    -{>[length=4pt,width=3pt]},
     black!50,
     line width=.6
   ]
   (5,0) --
   (5.6,-.96);

  \draw[
    -{>[length=4pt,width=3pt]},
     black!50,
     line width=.6
   ]
   (6.8,0) --
   (7.85,.95);
  \draw[
    -{>[length=4pt,width=3pt]},
     black!50,
     line width=.6
   ]
   (6.8,0) --
   (6,-.96);

  \begin{scope}
  \clip
    (5.6,.3) rectangle
    (8,1.4);
  \draw[
     white,
     line width=5
   ]
   (9.7,0) --
   (4.75,1.15);
  \end{scope}
  \draw[
    -{>[length=4pt,width=3pt]},
     black!50,
     line width=.6
   ]
   (9.7,0) --
   (4.75,1.15);
  \draw[
    -{>[length=4pt,width=3pt]},
     black!50,
     line width=.6
   ]
   (9.4,0) --
   (10.6,-1.05);

  \draw[
    -{>[length=4pt,width=3pt]},
     black!50,
     line width=.6
   ]
   (11.73,0) --
   (9.5,1.35);
  \draw[
    -{>[length=4pt,width=3pt]},
     black!50,
     line width=.6
   ]
   (11.65,0) --
   (11.44,-1.32);

\end{tikzpicture}
}

\end{figure}

  We indicate two explicit proper {\'e}tale groupoids representing the spindle orbifold:
  \begin{enumerate}
  \item
  A minimal model for the $(n_+, n_-)$-spindle has 
  \begin{equation}
    \begin{aligned}
      \mathrm{Obj}
       := &
      \mathbb{C}_+ 
      \sqcup 
      \mathbb{C}_-
      \\
      \mathrm{Mor}
      := &
      \big(
        \mathbb{C}_+ \times \mathbb{Z}_{n_+}
        \;\sqcup\;
        \mathbb{C}_+ \times \mathbb{Z}_{n_+}
      \big)
      \\
      &
       \sqcup\;
      \Big(
        \big(
          \mathbb{C}^\times_{+-}
          \times \mathbb{Z}_{n_+}
          \times \mathbb{Z}_{n_-}
        \big)
        \sqcup
        \big(
          \mathbb{C}^\times_{-+}
          \times \mathbb{Z}_{n_-}
          \times \mathbb{Z}_{n_+}
        \big)
      \big)
    \end{aligned}
  \end{equation}
  (where $\mathbb{C}^\times := \mathbb{C} \setminus \{0\}$, and the subscripts on $\mathbb{C}$ are just to index the disjoint summands) and, with the abbreviation
  \begin{equation}
    q_{\pm}
    :=
    e^{2 \pi \mathrm{i}/ n_{\pm}}
  \end{equation}
  the source/target maps for the internal and for the weightless gluing morphisms are
  \begin{equation}
    \begin{tikzcd}[row sep=-3pt, column sep=0pt]
      \mathrm{Mor}
      \ar[
        rr,
        "{ (s,t) }"
      ]
      &&
      \mathrm{Obj}^2
      \\
      \big(
        (z,\pm), [k_{\pm}]
      \big)
       &\longmapsto&
       \big(
         (z, \pm),
         (z q_{\pm}^{k_{\pm}}, \pm)
       \big)
       \\
       \big(
         (v, \pm \mp),
         [0], [0]
       \big)
       &\longmapsto&
       \big(
         (v^{n_{\pm}}, \pm),
         (v^{-n_{\mp}}, \mp)
       \big)\,.
    \end{tikzcd}
  \end{equation}
  Finally, the general gluing morphism is the unique composite
  \begin{equation}
    \begin{aligned}
      &
      \big(
        (v,\pm\mp), [k_\pm], [k_\mp]
      \big)
      \\
      &
      \;\equiv\;
      \big(
        (v^{-n_\mp},\mp), [k_\mp])
      \big)
      \circ
      \big(
        (v,\pm,\mp), [0], [0]
      \big)
      \circ 
      \big(
        (v^{n_\pm} \cdot q^{-k_{\pm}}, \pm), [k_\pm]
      \big)
      \,,
    \end{aligned}
  \end{equation}    
    which defines either side by the other (meaning that the gluing morphisms are a groupoid torsor from either side over these action groupoids) and its source/target map is:
    \begin{equation}
      (s,t)
      \;\colon\;
      \big(
        (v,\pm\mp), [k_\pm], [k_\mp]
      \big)
      \mapsto
      \big(
        (v^{-n_{\mp}} \cdot q^{+ k_{\mp}}, \pm)
        ,
        (v^{n_{\pm}} \cdot q^{- k_{\pm}}, \mp)
      \big)
      \mathrlap{\,.}
    \end{equation}

  In the special case of $n_\pm = 1$ this definition reduces to the {\v C}ech groupoid (\cref{CechGroupoids}) corresponding to the standard open cover of the Riemann sphere by two copies $\mathbb{C}_{\pm}$ with double overlap $\mathbb{C}^\times \subset \mathbb{C}_+$ identified inside the second copy via $z \mapsto z^{-1}$.

  \item
  For the purpose of computing mapping stacks (\cref{MappingStacksAndNonabelianCohomology}) out of the spindle orbifold, we want a  presentation by a Dugger-cofibrant groupoid (\cref{CofibResolutionOfTopologicalGroupoid}). But the above minimal model fails to be Dugger-cofibrant since the factor $\mathbb{C}^\times$ of the space of gluing morphisms is (not contractible and hence) not diffeomorphic to $\mathbb{R}^2$.

  One obtains a Dugger-cofibrant spindle by considering the disjoint union of the Dugger-cofibrant {\v C}ech groupoids for the two cones $\mathbb{Z}_{n_{\pm}} \backsslash \mathbb{D}^2_{1+\epsilon}$ from \cref{ASimpleEquivariantCechGroupoid} and glueing the collars of all their segments to each other by a summand-wise contractible space of gluing morphisms. 
  
  This construction is indicated in \cref{GoodGroupoidFor23Spindle} for the case $(n_+, n_-) = (3,2)$.
  \end{enumerate}
\end{example}

\subsubsection{Twisted Orbifold Cohomology}
\label{OnTwistedOrbifoldCohomology}

We saw in \cref{SlicingAndTwisting} a fairly general definition of (nonabelian) twisted cohomology of topological groupoids, and in \cref{OrbifoldsAsGroupoids} that orbifolds are a special case of topological groupoids. Therefore we immediately obtain a notion of \emph{twisted orbifold cohomology}, inheriting all the good properties that we saw hold in general for twisted nonabelian cohomolgy formulated via mapping stacks.

Therefore, in discussing twisted orbifold cohomology here we are essentially reduced to summarizing previous statements on the twisted nonabelian cohomolgy of general topological groupoids --- which may serve as a concise review of the above discussion and to highlight how phenomena discussed in the literature by other means are elegantly reproduced by our general approach to cohomology via mapping stacks.

This also means that, at this level of plain (twisted nonabelian) cohomology, the actual \emph{geometry} of orbifolds --- their smooth structure and possibly further geometric structure such as Riemannian, holomorphic, conformal structure, ... --- plays no role. The refinement to \emph{geometric orbifold cohomology} which \emph{is} sensitive to further geometric structure is discussed in \cite{SS26-Orb}.

\medskip

First, in specialization of \cref{BGTwistedCohomologyOfTopGrpd} we have:
\begin{definition}[Twisted orbifold cohomology]
  \label[definition]
    {TwistedOrbifoldCohomology}
  For $\Gamma \acts Y$ a topological group action (\cref{TopologicalGroupAction}) on a topological space $Y$ (to play the role of the \emph{classifying space} for the orbifold cohomology theory), then the \emph{twisted orbifold cohomology} with coeffcients in $Y$, 
  of an orbifold $\mathcal{X}$ equipped with a map (the twist) $\begin{tikzcd}[sep=small] \mathcal{X} \ar[r, "{\tau}"] & \mathbf{B}\Gamma\end{tikzcd}$, is the $\tau$-twisted cohomology of the orbifold groupoid $\mathcal{X}$ according to \cref{BGTwistedCohomologyOfTopGrpd}, hence is
  the connected components of the slice mapping stack (\cref{SliceMappingStack}, which is just a space, by \cref{SliceManningSpaceIntoQuotientStackOverBG}):
  \begin{equation}
    \label{TwistedOrbifoldCohomologyViaMaps}
    H^\tau(\mathcal{X};Y)
    :=
    \Big[
      \shape\,
      \mathrm{Map}\big(
        \widehat{X}
        ,\,
        \Gamma \backsslash Y
      \big)_{\!\mathbf{B}\Gamma}
    \Big]_0
    =
    \pi_0
    \left\{
    \begin{tikzcd}[row sep=small]
      & 
      \Gamma \backsslash Y
      \ar[
        d,
        "{\, p }"
      ]
      \\
      \mathcal{X}
      \ar[
        ur,
        dashed
      ]
      \ar[
        r,
        "{ \tau }"
      ]
      &
      \mathbf{B}\Gamma
    \end{tikzcd}
    \right\}
    \mathrlap{.}
  \end{equation}
\end{definition}

\begin{theorem}[Properties]
  \label[theorem]
  {PropertiesOfOrbifoldCohomology}
Twisted orbifold cohomology \textup{(\cref{TwistedOrbifoldCohomology})} has the following properties:

\begin{enumerate}
  \item
  It is an invariant of the (Morita) equivalence class of the orbifold.

  In particular, if $G \backsslash X$ and $G' \backsslash X'$ are two global quotient presentations \textup{(\cref{GlobalQuotientOrbifolds})} of the same orbifold class, then their orbifold cohomology agrees, and their twisted cohomology agrees for corresponding twists.

  \item If on a global quotient presentation \textup{(\cref{GlobalQuotientOrbifolds})} the twist is just the canonical projection $\begin{tikzcd}[sep=small] G \backsslash X \ar[r, "{ p }"] & \mathbf{B}\Gamma\end{tikzcd}$, then twisted orbifold cohomology reduces to the $G$-equivariant cohomology of $X$. 
\end{enumerate}
\end{theorem}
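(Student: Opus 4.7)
The strategy is to reduce both statements to already-established results (Lemma \ref{InvarianceOfTwistedGroupoidCohomology} and Proposition \ref{GEquivariantCohomologyAsStackyMaps}), since the definition of twisted orbifold cohomology (\cref{TwistedOrbifoldCohomology}) is literally the specialization of the general definition of twisted groupoid cohomology (\cref{BGTwistedCohomologyOfTopGrpd}) to the case where the topological groupoid is proper étale. No new homotopy-theoretic work should be needed; the main task is to verify that invariance and the equivariant-reduction identification transport cleanly through the orbifold notation.

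For part (i), the plan is to observe that by \cref{Orbifolds}, two proper étale groupoids presenting the same orbifold class are by definition related by a (Morita) equivalence \cref{AMoritaEquivalence}. Then I would directly invoke \cref{InvarianceOfTwistedGroupoidCohomology}: such an equivalence induces both a bijection on equivalence classes of twists $\tau : \mathcal{X} \to \mathbf{B}\Gamma$ and, for corresponding twists, an isomorphism on the twisted cohomology sets $H^\tau(-;Y)$. For the ``in particular'' clause, I would note that a global quotient orbifold $G \backsslash X$ is always presented by its action groupoid \cref{TheActionGroupoid}, and if $G \backsslash X$ and $G' \backsslash X'$ represent the same orbifold class, there exists (possibly after choosing Dugger-cofibrant resolutions, \cref{CofibResolutionOfTopologicalGroupoid}) a zig-zag of equivalences of topological groupoids connecting them, whence the invariance applies term-by-term.

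For part (ii), the plan is to unfold \cref{TwistedOrbifoldCohomologyViaMaps} in the specific case $\mathcal{X} = G \backsslash X$, $\Gamma = G$, and twist $\tau = p^{G \acts X}_{\mathrm{univ}}$ \cref{FibrationFromActionGroupoidToBG}, so that the defining slice mapping space becomes
\begin{equation}
H^\tau(G \backsslash X; Y)
\,=\,
\pi_0
\left\{
\begin{tikzcd}[row sep=small]
 & G \backsslash Y \ar[d, "{p}"]
 \\
 G \backsslash X \ar[ur, dashed] \ar[r, "{p}"] & \mathbf{B}G
\end{tikzcd}
\right\}.
\end{equation}
This is exactly the expression on the right-hand side of \cref{GEquivariantCohomologyAsStackyMaps} in \cref{GEquivariantCohomologyAsStackyMaps}, and the proposition identifies it with $H_G(X, Y) \simeq \pi_0 \mathrm{Map}(X,Y)^G$, which is the $G$-equivariant cohomology of $X$ with $G$-coefficient space $Y$. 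The identification itself was proved in \cref{EquivariantMapsAsSliceMapsOfHomotopyQuotients} via the universal property of the homotopy pullback \cref{SliceMappingStackAsFiberProduct} together with the fact \cref{SliceManningSpaceIntoQuotientStackOverBG} that horizontal composition with $p$ forces the 2-cell data to be identities.

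There is really no hard step here, since all the heavy lifting was done previously; the only subtlety to watch out for is the coherent choice of Dugger-cofibrant resolutions in part (i) when comparing two different global quotient presentations, so that the zig-zag of Morita maps can be read off as actual continuous functors between cofibrant replacements (cf.\ \cref{MappingStackFromDuggerCofibrantToBGamma}, \cref{MoritaEquivalenceOfMappingStack}). By \cref{ExistenceOfEquivariantGoodOpenCovers}, for finite $G$ acting smoothly, the equivariant {\v C}ech groupoid of a good equivariant open cover provides such a resolution, so this technicality is always available in the cases of interest.
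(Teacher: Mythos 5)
Your proposal is correct and follows exactly the same route as the paper's own (one-line) proof: part (i) is \cref{InvarianceOfTwistedGroupoidCohomology} and part (ii) is \cref{GEquivariantCohomologyAsStackyMaps}. Your extra unwinding of the slice-mapping diagram and the note on Dugger-cofibrant resolutions is just filling in the citations' internals, not a different argument.
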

\begin{proof}
  The first statement is \cref{InvarianceOfTwistedGroupoidCohomology}, the second is \cref{GEquivariantCohomologyAsStackyMaps}.
\end{proof}

\begin{remark}
  The basic properties in \cref{PropertiesOfOrbifoldCohomology} have a somewhat more convoluted history in the literature on orbifold cohomology by other means (reviewed in \cite[\S 1.1]{SS26-Orb}) than the mapping stack approach used here, for more discussion of of which see also \parencites[Rem. 6.2.4]{SS25-Bun}{Sc25-ItaCa}.
\end{remark}

\subsection{%
  \texorpdfstring%
  {Embedding into Smooth $\infty$-Groupoids}
  {Embedding into Smooth infinity-Groupoids}
}
\label{SmoothInfinityGroupoids}

We indicate how the above homotopy theory of topological groupoids is a fragment of  the cohesive homotopy theory of \emph{smooth $\infty$-groupoids} (``$\infty$-stacks over smooth manifolds''), which also serves to neatly justify/prove key properties of topological groupoids/stacks that we invoked above.

We will be very brief here, providing a digest of what is spelled out in detail in \parencites[\S 1]{FSS23-Char}[\S 4]{SS25-Bun}{SS26-Orb}, going back to \parencites{SSS12}{Sc13-dcct}. For this purpose we now switch to  category-theoretic language (for introduction in our context see \cite{Sc18-ToposLectures} and for exposition see \cite{Schreiber2025}).

We write
\begin{equation}
  \label{CategoryOfCartesianSpaces}
  \mathrm{CartSp}
  =
  \Bigg\{\,
  \begin{tikzcd}[row sep=5pt,
    column sep=15pt
  ]
    & 
    \mathbb{R}^{n_1}
    \ar[
      dr,
      "{ g }"
    ]
    \\
    \mathbb{R}^{n_0}
    \ar[
      ur,
      "{ f }"
    ]
    \ar[
      rr,
      "{ g \circ f }"
    ]
    &&
    \mathbb{R}^{n_2}
  \end{tikzcd}
  \Bigg\}
\end{equation}
for the category whose objects are the Cartesian spaces $\mathbb{R}^n$, $n \in \mathbb{N}$ \cref{CartesianSpaceHomeomorphicToOpenBall}, and whose morphisms are the \emph{smooth} functions between these. We regard this as a site with respect to the coverage (Grothendieck pre-topology) of differentially good open covers (\cref{GoodOpenCover}). The sheaves on this site we call \emph{smooth sets} (\parencites[\S 1.2.2]{Sc13-dcct}[Def. 2.1]{KS17-PDEs}[Ntn. 4.3.15]{SS25-Bun}); these faithfully subsume (cf. \parencites[Prop. 4.3.19]{SS25-Bun}) smooth manifolds, D-topological spaces (\cref{DTopologicalSpace}), and diffeological spaces:
\begin{equation}
  \begin{tikzcd}[
    row sep=-6pt
  ]
    \mathrm{DTopSp}
    \ar[dr, hook]
    \\
    &
    \mathrm{DiflSp}
    \ar[r, hook]
    &
    \mathrm{SmthSet}
    :=
    \mathrm{Sh}(\mathrm{CartSp})
    \mathrlap{\,.}
    \\
    \mathrm{SmthMfd}
    \ar[ur, hook']
  \end{tikzcd}
\end{equation}

The higher groupoidal/homotopy theoretic version of this sheaf topos is the (hypercomplete) $\infty$-sheaf topos over $\mathrm{CartSp}$, whose objects we call \emph{smooth $\infty$-groupoids} (\parencites[\S 4.4]{Sc13-dcct}[Ntn. 4.3.27]{SS25-Bun}[Ex. 4.1.9]{SS26-Orb}, or \emph{smooth $\infty$-stacks}) and which faithfully subsume the Morita (stack) theory of D-topological groupoids (\cref{TopologicalGroupAction}), Lie groupoids and diffeological groupoids (\cref{LieGroupoidsAndSmoothGroupoids}):
\begin{equation}
  \begin{tikzcd}[
    row sep=-6pt
  ]
    \mathrm{DTopGrpd}
    \ar[dr, hook]
    \\
    &
    \mathrm{DiflGrpd}
    \ar[r, hook]
    &
    \mathrm{SmthGrpd}_\infty
    :=
    \mathrm{Sh}_\infty(\mathrm{CartSp})
    \mathrlap{\,.}
    \\
    \mathrm{LieGrpd}
    \ar[ur, hook']
  \end{tikzcd}
\end{equation}

More concretely, $\mathrm{SmthGrpd}_\infty$ is equivalently the $\infty$-category presented by the projective model structure on simplicial presheaves \cite{nLab:ModelStrucOnSimpPresheaves} over $\mathrm{CartSp}$ \cref{CategoryOfCartesianSpaces}, left Bousfield-localized at the class $W$ of stalkwise simplicial weak equivalences (cf. \parencites[Ex. 1.20]{FSS23-Char}):
\begin{equation}
  \label{SmoothGrpdInftyAsSimpLocalization}
  \mathrm{SmthGrpd}_\infty
  \simeq
  L^W
  \mathrm{PSh}(
    \mathrm{CartSp}
    \times 
    \Delta
  )
  \mathrlap{\,.}
\end{equation}

A D-topological groupoid  is represented here by (cf. \parencites[Ntn. 2.2.24]{SS25-Bun}) the simplicial presheaf which to $\mathbb{R}^n$ assigns the \emph{simplicial nerve} of its underlying discrete groupoid $\flat \mathrm{Map}(\mathbb{R}^n, \mathcal{X})$ of $\mathbb{R}^n$-plots (\cref{ProbingTopologicalGroupoidByRns,GeometricallyDiscreteUnderlyingGroupoid}), hence to $\mathbb{R}^n \times \Delta^k$ the set of $\mathbb{R}^n$-plots of its space of $k$-tuples of sequentially composable morphisms:
\begin{equation}
  \label{TopGrpdAsSimpPresheaf}
  \begin{tikzcd}[row sep=-3pt, 
    column sep=0pt
  ]
    \mathrm{DTopGrpd}
    \ar[
      rr,
      hook,
      "{ N }"
    ]
    &&
    \mathrm{PSh}(
      \mathrm{CartSp}\times\Delta
    )
    \\
    \mathcal{X}
    &\longmapsto&
    \Big(
      \mathbb{R}^n
        \times
      \Delta^k
      \,\mapsto\,
      \flat \mathrm{Map}\big(
        \mathbb{R}^n
        ,
        \mathrm{Mor}(\mathcal{X})^{
          \tensor[_s]{{\times^{\mathrlap{k}}}}{_t}
        }
      \big)
    \Big),
  \end{tikzcd}
\end{equation}
and this constitutes a fully faithful embedding of 1-categories with naive (not Morita) morphisms on both sides.
Under this embedding \cref{TopGrpdAsSimpPresheaf}:
\begin{itemize}
\item The internal hom is given by the topological functor groupoid \cref{FunctorGroupoid}.
\item The weak equivalences in $W$ \cref{SmoothGrpdInftyAsSimpLocalization} are just the equivalences of topological groupoids according to \cref{EquivalenceOfTopGroupoids}.
\item The global fibrations are just the global fibrations of topological groupoids according to \cref{GlobalFibrations}.
\end{itemize}

Moreover, the key point for our purpose then is that:

\newpage 
\begin{fact}
\label[fact]
 {ComputingMappingStacksInSmplPSh}
With topological groupoids regarded among simplicial presheaves via \cref{TopGrpdAsSimpPresheaf}:
\begin{enumerate}
\item Their (correctly ``derived'') mapping stack is computed (by general model category theory, cf. \parencites[Ex. 1.10]{FSS23-Char}) as the functor groupoid
\begin{enumerate}
\item
out of a global projective cofibrant resolution,
\item
into a local projective fibrant resolution.
\end{enumerate}
\item 
The nerves of 
\begin{enumerate}
  \item
  Dugger-cofibrant topological groupoids according to \cref{CofibResolutionOfTopologicalGroupoid} are global projective cofibrant (cf. \parencites[Cor. 9.4]{Dugger2001}[Prop. 1.23]{FSS23-Char}),
  \item
  delooping groupoids $\mathbf{B}\Gamma$ and with them the action groupoids $G \backsslash \Gamma$ (\cref{ActionGroupoid}) are local projective fibrant (by \parencites[Prop. 4.13]{Pavlov2022}[Lem. 4.3.30]{SS25-Bun})
\end{enumerate}
over $\mathrm{CartSp}$.
\end{enumerate}
\end{fact}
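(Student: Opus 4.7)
The strategy is to assemble three pieces: general left-Bousfield-localization machinery for part (1), Dugger's generation theorem for projective cofibrations for (2a), and the descent theorem for delooping stacks of topological groups for (2b).

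For (1), I would invoke that $\mathrm{SmthGrpd}_\infty \simeq L^W \mathrm{PSh}(\mathrm{CartSp}\times\Delta)$ is, by construction, a left Bousfield localization of the global projective model structure at the stalkwise simplicial weak equivalences. A left Bousfield localization preserves cofibrations (so the cofibrant objects in the localized structure are the globally cofibrant ones), while the class of fibrant objects shrinks to the globally fibrant objects satisfying homotopy descent along good covers. Standard simplicial model category theory then gives that the derived internal hom in the localized model structure is computed as $\underline{\mathrm{Map}}(QX, RY)$, with $Q$ any cofibrant replacement in the global projective structure and $R$ any fibrant replacement in the local projective structure. Under the embedding \cref{TopGrpdAsSimpPresheaf}, the simplicial $\underline{\mathrm{Map}}$ restricts levelwise to the nerve of the topological functor groupoid of \cref{FunctorGroupoid}, yielding the claimed formula.

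For (2a), I would apply Dugger's theorem that the global projective cofibrations on $\mathrm{PSh}(\mathrm{CartSp}\times\Delta)$ are generated by morphisms of the form $\mathbb{R}^n \otimes (\partial\Delta^k \hookrightarrow \Delta^k)$ for $n,k \in \mathbb{N}$. The Dugger-cofibrancy hypothesis of \cref{CofibResolutionOfTopologicalGroupoid} states that the spaces of $k$-tuples of composable morphisms of $\widehat{\mathcal{X}}$ are disjoint unions of Cartesian spaces and that all degeneracy-type identity-insertion maps exhibit these as summand inclusions. A latching-object analysis then shows that the latching map of $N(\widehat{\mathcal{X}})$ in each simplicial degree is a summand inclusion of a coproduct of representables, which is precisely a transfinite composition of pushouts of generating cofibrations; thus $N(\widehat{\mathcal{X}})$ is a cell complex in the global projective model structure, hence projective cofibrant.

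For (2b), I would separate two requirements: global projective fibrancy (levelwise Kan) is immediate because the nerve of any groupoid is a Kan complex, stable under evaluation at each $\mathbb{R}^n$. The substantive content is homotopy descent along differentiably good open covers $\mathcal{U}$ of $\mathbb{R}^n$; here I would invoke Pavlov's descent theorem for delooping stacks of topological groups, which establishes that the canonical map $\mathbf{B}\Gamma(\mathbb{R}^n) \to \mathrm{holim}_{\Delta^{\mathrm{op}}} \mathbf{B}\Gamma(\check{C}(\mathcal{U}))$ is a weak equivalence --- morally, the statement that $\Gamma$-principal bundles on contractible spaces are trivial and recombine along good-cover transition data into bundles on any glued space. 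For the action groupoid $G\backsslash\Gamma$, I would use that the projection of \cref{FibrationFromActionGroupoidToBG} is a global projective fibration with fiber $\Gamma$ regarded as a locally fibrant representable smooth set; since base and fiber are locally fibrant and the map is a global fibration, the standard stability result implies that the total object $G\backsslash\Gamma$ is locally projective fibrant.

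The main obstacle will be the descent clause in (2b): it is not a formal consequence of the definitions but requires genuine homotopical input, namely that continuous {\v C}ech cocycle data with respect to a good cover faithfully captures the homotopy type of the stack of principal bundles. This is exactly where the restriction to \emph{differentiably good} covers (contractible intersections, \cref{GoodOpenCover}) is essential, because it guarantees that every $\Gamma$-bundle on a finite intersection is canonically trivializable; part (2a), while conceptually routine, also relies subtly on the summand-inclusion clause of \cref{CofibResolutionOfTopologicalGroupoid}, which is tailored precisely so that the latching maps become projective cofibrations without further resolution.
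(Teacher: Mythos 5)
Your plan tracks the paper's own decomposition closely; since the statement is recorded as a \emph{Fact} with citations rather than an in-text proof, the relevant comparison is whether your reconstruction of the cited arguments is sound, and for parts (1) and (2a) it is. The localization argument (cofibrations unchanged, fibrancy cut down to globally fibrant objects satisfying descent, hence derived hom computed via global cofibrant replacement of source and local fibrant replacement of target) is exactly the standard mechanism. The latching-object reading of \cref{CofibResolutionOfTopologicalGroupoid} is also the correct way to feed the nerve into Dugger's characterization: the summand-inclusion clause on degeneracies is precisely what makes the latching maps into summand inclusions of coproducts of representables, hence generating cofibrations.

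The one place your sketch is genuinely loose is the last clause of (2b), the extension from $\mathbf{B}\Gamma$ to the action groupoid $G \backsslash X$. You appeal to ``the standard stability result'' that a globally projective fibration with locally fibrant base and locally fibrant fiber has locally fibrant total object. That is not a formal black box in a left Bousfield localization: fibrant objects in the local structure are closed under homotopy \emph{limits}, but a fibration over a fibrant base realizes the total space as a homotopy colimit, and the implication you want requires an argument. Concretely, what one does is compare the {\v C}ech descent square for $G \backsslash X$ against that for $\mathbf{B}G$ via the global fibration $p$: since homotopy limits of levelwise Kan fibrations are Kan fibrations, one gets a morphism of Kan fibrations whose base map is an equivalence (descent for $\mathbf{B}G$), and then descent for $G \backsslash X$ reduces, by the long-exact-sequence comparison, to checking that the induced map on fibers over the unique vertex of $\mathbf{B}G(\mathbb{R}^n)$ is an equivalence, which is descent for $X$. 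This five-lemma step over the cosimplicial {\v C}ech diagram is the non-formal content that Pavlov's Prop.~4.13 and Lem.~4.3.30 of \cite{SS25-Bun} are actually supplying; it should be named as such rather than subsumed under a stability slogan. With that replacement, your reconstruction is complete.
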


Together, these imply that the (sliced) mapping stacks according to 
\cref{MappingStackFromDuggerCofibrantToBGamma,SliceMappingStack} represent, under \cref{TopGrpdAsSimpPresheaf}, the derived internal hom --- which proves \cref{MoritaEquivalenceOfMappingStack,InvarianceOfTwistedGroupoidCohomology} and thereby ultimately the final \cref{PropertiesOfOrbifoldCohomology} about Morita invariance of twisted orbifold cohomology.



\section{Twisted Orbi K-Theory and Unstable Orientation}
\label{OrientationsInOrbiKTheory}

We describe in \cref{OrbifoldKTheory} an elegant model of \emph{twisted orbifold K-theory} based on the general geometric homotopy formulation of twisted orbifold cohomology (from \cref{TwistedOrbifoldCohomology}) via mapping stacks between topological groupoids (\cref{SomeCohesiveHomotopyTheory}).

Then we explicitly construct (\cref{TheEquivariantOrientation}) in this model the low-dimensional equivariant orientation character map, in specialization of the general discussion in \cref{OrientationsMeasuringRelativeCharges}. Here for orbifold K-theory this turns out to be   orified incarnation of the equivariant trivialization of the pullback of the tautological line bundle along the Hopf fibration, which we review in \cref{OnTwistedOrbifoldKTheory}.

\subsection{Twisted Orbifold K-theory via Geometric Homotopy}
\label{OrbifoldKTheory}

We recall and expand on the elegant model of \emph{twisted orbifold K-theory} from \parencites[\S 2.2]{SS22-Ord}[Ex. 6.2.5]{SS25-Bun}, complementing traditionally more component-based definitions in the literature (cf. \parencites{AdemRuan2003}{LupercioUribe2004}{TuXuLG2004}{Jarvis2006}{AdemLeidaRuan2007}{FreedHopkinsTeleman2011}{Gomi2017}). Our model is based on the picture (surveyed in \cref{OverviewChargesInCohomology,TableOfNotionsOfCohomology}, reviewed in \cref{MappingStacksAndNonabelianCohomology,SlicingAndTwisting}) of generalized cohomology as being about \emph{homotopy classes of classifying maps}: in the present case between topological groupoids/stacks (reviewed in \cref{TopologicalGroupoidsAndStacks}) from the domain orbifold $\mathcal{X}$ to the universal $\PUH$-structured Fredholm space bundle that classifies twisted equivariant K-theory according to \cite{AtiyahSegal2004}, but: 
\begin{enumerate}
\item
with the domain $\mathcal{X}$ regarded as the {\'e}tale groupoid incarnation of orbifolds, cf. \cref{Orbifolds}, 

\item
with the classifying fibration promoted \cref{TheUniversalQSAssociatedFredBundle}
from a topological bundle over the ordinary \emph{classifying space} $B \PUH$ to a stacky universal fiber bundle (\cref{PrincipalBundleAsPullbackAlongCechCocycle}) over the topological \emph{moduli stack} $\mathbf{B} \PUH \simeq \ast \sslash \PUH$ (cf. \cref{DeloopingGroupoid,EquivariantCechGroupoids}), in fact over its further extension by PCT symmetries (\cref{TheQSGroup}) to the moduli stack $\mathbf{B}\QuantumSymmetries$ of \emph{quantum symmetries} (\cref{QuantumSymmetries}).
\end{enumerate}
This construction utilizes that, as homotopy types, the space of Fredholm operators is a classifying space of complex K-theory (\cref{AtiyahJaenichTheorem})
but, as an actual topological space, it moreover carries a strict topological group action of quantum symmetries $\begin{tikzcd}[sep=small]\QuantumSymmetries \ar[r, ->>] & \PUH\end{tikzcd}$,  
$\QuantumSymmetries \, \acts \, \GradedFredholmOperators$,
which implements (\cref{The10PCTFixedLoci}, following \cite[\S 2.2]{SS22-Ord}  inspired by \cite{FreedMoore2013}) the grading and the flavors of topological K-theory ($\mathrm{KU}$, $\mathrm{KO}$, $\mathrm{KR}$).  

We see in \cref{TheEquivariantOrientation} how the model lends itself neatly to the construction of the four/ten-dimensional equivariant $\mathbb{C}/\mathbb{H}$-orientations of $\mathrm{KU}$ (\cref{MeasuringRelativeChargesInQCohom}), where we obtain the relevant Fredholm operators and unitary operators essentially ``tautologically'' from the $\mathbb{C}/\mathbb{H}$-operator algebraic construction of the $\mathbb{C}/\mathbb{H}$-Hopf fibration (discussed in \cref{TheEquivariantLineBundle}).

\subsubsection{Hilbert Space and PCT}
\label{TheHilbertSpaceAndPCT}

\begin{notation}[The Hilbert space]
In all of the following, we consider:
\begin{enumerate}
\item 
A countably infinite-dimensional complex Hilbert space
\begin{equation}
  \label{TheHilbertSpace}
  \HilbertSpace
  \in 
  \HilbertSpaces_{\mathbb{C}}
  \mathrlap{\,,}
\end{equation}
as such unique up to isomorphism; in particular isomorphic to the direct sum, as Hilbert spaces, of countably many copies of the standard $k$-dimensional complex vector space:
\begin{equation}
  \label{HilbertSpaceAsDirectSumOfComplexVectorSpaces}
  \forall{k \in \mathbb{N}_{\geq 0}}
  \;:
  \;\;
  \HilbertSpace
  \simeq
  \bigoplus_{n \in \mathbb{N}}
  \mathbb{C}^k
  \,.
\end{equation}

\item
A \emph{real structure} on $\HilbertSpace$ \cref{TheHilbertSpace}, hence a \emph{complex anti-linear} involution and as such self-adjoint (cf. \cite[Def. 3.1]{Uhlmann2016})
\begin{equation}
  \label{TheRealStructure}
  \begin{tikzcd}[row sep=-3pt, column sep=0pt]
    \HilbertSpace
    \ar[
      rr,
      "{
        \tOperator
      }"
    ]
    &&
    \HilbertSpace
    \mathrlap{\,,}
    \\
    \psi 
      &\mapsto&
    \tOperator\psi
    \\
    \mathrm{i}\psi
      &\mapsto&
    - \mathrm{i}\tOperator{\psi}
  \end{tikzcd}
  \hspace{1cm}
  \begin{aligned}
    &
    \tOperator^2
     = 
    \mathrm{id}
    \\
    &
    \langle 
      -
      ,\, 
      \tOperator 
      -
    \rangle
    =
    \overline{
    \langle 
      \tOperator 
      -
      ,\,
      -
    \rangle
    }
    \mathrlap{\,,}
  \end{aligned}
\end{equation}
which under any of the above identifications \eqref{HilbertSpaceAsDirectSumOfComplexVectorSpaces} we may take to be given by ordinary complex conjugation on the $\mathbb{C}$ summands.

\item
A \emph{quaternionic structure} on $\HilbertSpace$ \cref{TheHilbertSpace}, hence a \emph{complex anti-linear} endomorphism, and as such self-adjoint (cf. \cite[Def. 3.1]{Uhlmann2016}) and squaring to minus the identity:
\begin{equation}
  \label{TheQuaternionicStructure}
  \begin{tikzcd}[row sep=-3pt, column sep=0pt]
    \HilbertSpace
    \ar[
      rr,
      "{
        J
      }"
    ]
    &&
    \HilbertSpace
    \mathrlap{\,,}
    \\
    \psi 
      &\mapsto&
    J\psi
    \\
    \mathrm{i}\psi
      &\mapsto&
    - \mathrm{i}J{\psi}
  \end{tikzcd}
  \hspace{1cm}
  \begin{aligned}
    &
    J^2
     = 
    -\mathrm{id}
    \\
    &
    \langle 
      -
      ,\, 
      J
      -
    \rangle
    =
    -
    \overline{
    \langle 
      J
      -
      ,\,
      -
    \rangle
    }
    \mathrlap{\,,}
  \end{aligned}
\end{equation}
which under any of the above identifications \eqref{HilbertSpaceAsDirectSumOfComplexVectorSpaces} for even $k$ we may take to be given on the $\mathbb{C}^2$ summands by
\begin{equation}
  \begin{tikzcd}[row sep=-3pt, column sep=0pt]
    \mathbb{C}^2
    \ar[rr]
    &&
    \mathbb{C}^2
    \\
    \left(
    \begin{matrix}
      z_1
      \\
      z_2
    \end{matrix}
    \right)
    &\mapsto&
    \left(
    \begin{matrix}
      - \ComplexConjugation{z_2}
      \\
      \phantom{-} \ComplexConjugation{z_1}
    \end{matrix}
    \right)
    \mathrlap{.}
  \end{tikzcd}
\end{equation}

\item 
  The group generated by $\tOperator$ \cref{TheRealStructure}, to be denoted
  \begin{equation}
    \label{tSymmetryGroup}
    \mathbb{Z}_2^{\tSymmetry}
    :=
    \big\{
      \mathrm{id},
      \tOperator
    \big\}
    \mathrlap{.}
  \end{equation}

\item
  The fixed locus of this involution  \cref{TheRealStructure,tSymmetryGroup}, which is the countably infinite-dimensional \emph{real} Hilbert space, to be denoted
  \begin{equation}
    \label{TheRealHilbertSpace}
    \HilbertSpace
      ^{ \mathbb{Z}_2^{\tSymmetry} }
    \in
    \mathrm{Hilb}_{\mathbb{R}}
    \mathrlap{\,.}
  \end{equation}

\item
The \emph{$\mathbb{Z}_2$-graded Hilbert space} whose homogeneous summands are both \cref{TheHilbertSpace}, to be denoted
\begin{equation}
  \label{GradedHilbertSpace}
  \GradedHilbertSpace
  :=
  \HilbertSpace \ominus \HilbertSpace
  \mathrlap{\,,}
\end{equation}
which inherits a real structure via \cref{TheRealStructure}.

\item 
The \emph{grading involution}
on $\GradedHilbertSpace$ \cref{GradedHilbertSpace}, to be denoted:
\begin{equation}
  \label{GradingInvolutionOnHgr}
  \begin{tikzcd}[
ampersand replacement=\&, 
    row sep=-3pt,
   column sep=0pt
  ]
    \GradedHilbertSpace
    \ar[
      rr,
      "{
        \pOperator
      }"
    ]
    \&\&
    \GradedHilbertSpace
    \\
    \left(
    \begin{matrix}
      \psi_+
      \\
      \psi_-
    \end{matrix}
    \right)
      \&\mapsto\&
    \left(
    \begin{matrix}
      \psi_-
      \\
      \psi_+
    \end{matrix}
    \right)
    ,
  \end{tikzcd}
\end{equation}
which satisfies
\begin{equation}
  \begin{aligned}
    \pOperator^2 = \mathrm{id}, \quad 
    \pOperator^\dagger = \pOperator, 
    \quad
    \pOperator \circ \tOperator
    =
    \tOperator \circ \pOperator
     \mathrlap{\,.}
  \end{aligned}
\end{equation}
\item
The group generated by 
\begin{equation}
  \cOperator
  :=
  \tOperator\pOperator
  \mathrlap{\,,}
\end{equation}
to be denoted
\begin{equation}
  \mathbb{Z}_2^{\cSymmetry}
  :=
  \big\{
    \mathrm{id}
    ,\,
    \cOperator
  \big\}
  \mathrlap{\,.}
\end{equation}
\item
The diagonal subgroup
\begin{equation}
  \begin{tikzcd}[row sep=-3pt, column sep=0pt]
    \mathbb{Z}_2^{\pSymmetry}
    &:=&
    \big\{
      \mathrm{id},
      \pOperator
    \big\}
    \ar[
      rr,
      hook
    ]
    &&
    \mathbb{Z}_2^{\tOperator}
    \times
    \mathbb{Z}_2^{\cOperator}
    \\
    &&
    \pOperator
    &\mapsto&
    \tOperator\cOperator
    \mathrlap{\,,}
  \end{tikzcd}
\end{equation}
\item
so that we have these two isomorphic incarnations of the \emph{PCT-group}:
\begin{equation}
  \label{IncarnationsOfPCTGroup}
  \begin{tikzcd}[row sep=-3pt, column sep=0pt]
    \mathbb{Z}_2^{\tSymmetry}
    \times
    \mathbb{Z}_2^{\pSymmetry}
    \ar[
      rr,
      "{ \sim }"
    ]
    &&
    \mathbb{Z}_2^{\tSymmetry}
    \times
    \mathbb{Z}_2^{\pSymmetry}    
    \\
    \tOperator &\mapsto& \tOperator
    \\
    \pOperator 
      &\mapsto&
    \cOperator\tOperator
    \mathrlap{\,.}
  \end{tikzcd}
\end{equation}
\end{enumerate}

\end{notation}

\begin{definition}[Projective unitary groups]  \label[definition]{ProjectiveUnitarGroupOnAHilbertSpace} $\,$
\begin{enumerate}
  \item
  We write
  \begin{equation}
    \label{TheGroupUH}
    \UH
    \,\in\, 
    \mathrm{Grp}(\mathrm{Top})
  \end{equation}
  for the topological group of unitary linear operators on $\HilbertSpace$ \cref{TheHilbertSpace},
  equipped with the norm topology.\footnote{
    \label{StrongTopologyOnUH}
    More ambitiously (cf. \cite{AtiyahSegal2004}), one equips $\UH$ 
    \cref{TheGroupUH} with the strong/weak operator topology or compact-open topology, which are all equal here but strictly coarser than the norm topology, see \cite[Ex. 2.3.19]{SS25-Bun} for further pointers.
  }
  This is famously contractible, by Kuiper's theorem (cf. \cite{nLab:KuiperTheorem}):
  \begin{equation}
    \label{KuiperTheorem}
    \UH \underset{\mathrm{hmtpy}}{\simeq} \ast
    \,.
  \end{equation}

  \item
  We write
  \begin{equation}
    \label{TheGroupPUH}
    \PUH
    \coloneqq
    \UH/\mathrm{U}(1)
    \;\;\;
    \in
    \;
    \mathrm{Grp}(\mathrm{Top})
  \end{equation}
  for the topological quotient group of \cref{TheGroupUH}
  by its subgroup of operators acting by multiplication with a complex number of unit norm.
  The quotient coprojection is a locally trivial $\mathrm{U}(1)$-principal bundle
  \cite[Thm. 1]{Simms1970}:
  \begin{equation}
    \label{PUHFiberSequence}
    \begin{tikzcd}[row sep=12pt]
      \mathrm{U}(1)
      \ar[r, hook]
      &
      \UH
      \ar[
        d,
        ->>
      ]
      \\
      &
      \PUH
      \mathrlap{\,.}
    \end{tikzcd}
  \end{equation}

\item 
The group $\mathbb{Z}^{\tSymmetry}_2$ \cref{tSymmetryGroup} acts compatibly on these groups:
  \begin{equation}
  \label{ComplexConjugationActionOnProjectiveUnitaryGroup}
  \begin{tikzcd}
    \mathrm{U}(1)
    \ar[r, hook]
    \ar[
      out=180-58, 
      in=59, 
      looseness=4, 
      "\scalebox{.9}{$\;\mathclap{
      \mathbb{Z}
        ^{\mathrlap{\tSymmetry}}
        _2
      }\;$}"{description},shift right=1
    ]
    &
    \UH
    \ar[r, ->>]
    \ar[
      out=180-58, 
      in=59, 
      looseness=4.3, 
      "\scalebox{.9}{$\;\mathclap{
      \mathbb{Z}
        ^{\mathrlap{\tSymmetry}}
        _2
      }\;$}"{description},shift right=1
    ]
    &
    \PUH
    \mathrlap{\,.}
    \ar[
      out=180-58, 
      in=59, 
      looseness=4.3, 
      "\scalebox{.9}{$\;\mathclap{
      \mathbb{Z}
        ^{\mathrlap{\tSymmetry}}
        _2
      }\;$}"{description},shift right=1
    ]
    \end{tikzcd}    
\end{equation}

\item
  The fixed locus of this action is the group of orthogonal operators, equipped with its operator topology, on the
  {\it real} Hilbert space $\HilbertSpace^{\mathbb{Z}^{\tSymmetry}_2}$ \cref{TheRealHilbertSpace}:
  \begin{equation}
    \label{TheGroupOH}
    \OH
    \simeq
    \big(%
      \UH%
    \big)^{\mathbb{Z}^{\tSymmetry}_2}.
  \end{equation}

\item
  In turn, the quotient of the latter
  by the subgroup
  of operators acting by multiplication with real units
  is the infinite
  {\it projective orthogonal group}
  \parencites[\S 3]{Rosenberg1989, MathaiMurrayStevenson2003}:
  \begin{equation}
    \label{TheGroupPOH}
    \POH
    =
    \big(%
      \PUH%
    \big)^{\mathbb{Z}^{\tSymmetry}_2}
    =
    \OH/\{\pm 1\}
    \;\;\;
    \in
    \;
    \mathrm{Grp}(\mathrm{Top})
    \,.
  \end{equation}
  \end{enumerate}
\end{definition}

\subsubsection{Quantum Symmetries}
\label{QuantumSymmetries}

\begin{remark}
  We have an isomorphism
  \begin{equation}
    \label{TheUnitaryAntiunitaryGroup}
    \begin{tikzcd}[row sep=-3pt, column sep=0pt]
      \UH
      \rtimes
      \mathbb{Z}_2^{\tSymmetry}
      \ar[
        rr,
        "{ \sim }"
      ]
      &&
      \UH
      \sqcup 
      \antiUH
      \\
      \big(
        U, \mathrm{id}
      \big)
      &\mapsto&
      U
      \\
      \big(
        U, \tOperator
      \big)
      &\mapsto&
      U \circ \tOperator
    \end{tikzcd}
  \end{equation}
  between the semidirect product of the unitary group \cref{TheGroupUH} with $\mathbb{Z}_2^{\tSymmetry}$ \cref{tSymmetryGroup} and the group of \emph{unitary or anti-unitary} maps (cf. \cite[p. 29-30]{Uhlmann2016}); and analogously for the projective group \cref{TheGroupPUH}:
  \begin{equation}
    \begin{tikzcd}[row sep=-3pt, column sep=0pt]
      \PUH
      \rtimes
      \mathbb{Z}_2^{\tSymmetry}
      \ar[
        rr,
        "{ \sim }"
      ]
      &&
      \PUH
      \sqcup 
      \antiPUH
      \\
      \big(
        [U], \mathrm{id}
      \big)
      &\mapsto&
      {[U]}
      \\
      \big(
        [U], \tOperator
      \big)
      &\mapsto&
      {[U \circ \tOperator]}
      \mathrlap{\,.}
    \end{tikzcd}
  \end{equation}  
  This is the group of \emph{quantum symmetries} according to Wigner's theorem (cf. \cite{nLab:WignerTheorem}), in which context the operator $\tOperator$ \cref{TheRealStructure} is interpreted as \emph{time-reversal symmetry}.
\end{remark}
We next enlarge this group by what one may think of as \emph{particle/anti-particle} symmetry (cf. \cite[Fact 2.3]{SS22-Ord}) or ``charge conjugation''.

\begin{definition}[Graded projective unitary group] $\,$
\begin{enumerate}
\item
We write
\begin{equation}
  \label{GradedUnitaryGroup}
  \begin{aligned}
  &\GradedUH
  :=
  \big(
    \UH \times \UH
  \big) 
    \rtimes
  \mathbb{Z}_2^{\pSymmetry}
  \\
  &
  =
  \Bigg\{
    \left(
    \begin{matrix}
      \UnitaryOperator_{{}_{++}}
      &
      0
      \\
      0
      &
      \UnitaryOperator_{{}_{--}}
    \end{matrix}
    \right)
    ,\,
    \left(
    \begin{matrix}
      0
      &
      \UnitaryOperator_{{}_{+-}}
      \\
      \UnitaryOperator_{{}_{-+}}
      &
      0
    \end{matrix}
    \right)
    \Bigg\vert\,
    U_{{}_{\bullet, \bullet}}
    \in 
    \UnitaryGroup(\HilbertSpace)
  \Bigg\}
  \subset
  \UnitaryGroup
  \big(
    \GradedHilbertSpace
  \big)
  \end{aligned}
\end{equation}
for the \emph{graded unitary group} acting in homogeneous degrees on the graded Hilbert space $\GradedHilbertSpace$ \cref{GradedHilbertSpace}.

\item
The \emph{graded projective unitary group} 
$\GradedPUH$ (cf. \parencites[Prop. 2.2]{Parker1988}[p. 5]{CareyWang2007})
is the quotient of this graded unitary group \cref{GradedUnitaryGroup} by the
diagonal subgroup $\mathrm{U}(1) \xhookrightarrow{\;} \UH \xhookrightarrow{\;} \GradedUH$, and the $\mathbb{Z}^{\tSymmetry}_2$-action \cref{ComplexConjugationActionOnProjectiveUnitaryGroup}
evidently extends to an automorphism action on all these graded groups:
\begin{equation}
  \label{TheGroupGradedPUH}
  \begin{tikzcd}
    \mathrm{U}(1)
    \ar[
      out=180-58, 
      in=59, 
      looseness=4, 
      "\scalebox{.9}{$\;\mathclap{
      \mathbb{Z}
        ^{\mathrlap{\tSymmetry}}
        _2
      }\;$}"{description},shift right=1
    ]
    \ar[r, hook]
      &
    \GradedUH
    \ar[
      out=180-58, 
      in=59, 
      looseness=4, 
      "\scalebox{.9}{$\;\mathclap{
      \mathbb{Z}
        ^{\mathrlap{\tSymmetry}}
        _2
      }\;$}"{description},shift right=1
    ]
    \ar[r, ->>]
      &
    \GradedPUH
    \ar[
      out=180-58, 
      in=59, 
      looseness=4, 
      "\scalebox{.9}{$\;\mathclap{
      \mathbb{Z}
        ^{\mathrlap{\tSymmetry}}
        _2
      }\;$}"{description},shift right=1
    ]
    \mathrlap{\,.}
  \end{tikzcd}
\end{equation}
\end{enumerate}
\end{definition}

\begin{notation}[Graded quantum symmetries]
\label[notation]{TheQSGroup}
The further semidirect product 
$
  \GradedPUH
  \rtimes
  \mathbb{Z}_2^{\tSymmetry}
$
of the graded projective unitary group \cref{TheGroupGradedPUH} 
with $\mathbb{Z}^{\tSymmetry}_2$ \cref{tSymmetryGroup}
plays a key role in the following, to be called the \emph{group of quantum symmetries} (cf. \parencites{nLab:WignerTheorem}[(15)]{SS22-Ord}):
\begin{equation}
  \label{GroupOfQuantumSymmetries}
  \QuantumSymmetries
  :=
  \frac{
    \UH^2
  }{
    \mathrm{U}(1)
  }
  \rtimes
  \Big(
    \mathbb{Z}_2^{\tSymmetry}
    \times
    \mathbb{Z}_2^{\cSymmetry}
  \Big)
  \mathrlap{\,,}
\end{equation}
where on the right we used \cref{IncarnationsOfPCTGroup}.
\end{notation}
\begin{remark}
The group $\QuantumSymmetries$ \cref{GroupOfQuantumSymmetries} sits in this system of (split) short exact sequences of topological groups:
\begin{equation}
  \label{ProjectiveGradedExtensionOfZTwoTimesZTwo}
  \begin{tikzcd}[row sep=13pt, 
    column  sep=10pt
  ]
    \mathrm{U}(1)
    \ar[rr,-,shift left=1pt]
    \ar[rr,-,shift right=1pt]
    \ar[d, hook]
    &&
    \mathrm{U}(1)
    \ar[rr, ->>]
    \ar[d, hook]
    &&
    1
    \ar[d, hook]
    \\
    \UH^2
    \ar[rr, hook]
    \ar[d, ->>]
    &&
    \GradedUH 
      \rtimes 
    \mathbb{Z}^{\tSymmetry}_2
    \ar[rr, ->>]
    \ar[d, ->>]
    &&
    \mathbb{Z}^{\cSymmetry}_2 
      \times 
    \mathbb{Z}^{\tSymmetry}_2
    \ar[d,-,shift left=1pt]
    \ar[d,-,shift right=1pt]
    \\
    \UH^2\!/\mathrm{U}(1)
    \ar[rr, hook]
    &&
    \QuantumSymmetries
    \ar[
      rr, 
      ->>,
      "{
        p_{\mathrm{ct}}
      }"
    ]
    &&
    \mathbb{Z}^{\cSymmetry}_2 
      \times 
    \mathbb{Z}^{\tSymmetry}_2
    \ar[
      ll, 
      hook',
      bend left=15,
      "{
        \iota_{\mathrm{ct}}
      }"
    ]
    \,.
  \end{tikzcd}
\end{equation}
With the bottom row we also have the diagonal inclusion of the ordinary projective group: 
\begin{equation}
  \label{IncludionOfPUintoQuantumSymmetries}
  \begin{tikzcd}[sep=-2pt]
    \PUH
    \ar[
      rrrr,
      uphordown,
      "{
        i_{\mathrm{pu}}
      }"{description}
    ]
    &
    \defneq
    &
    \tfrac{
      \UH
    }{
      \mathrm{U}(1)
    }
    \ar[
      r,
      hook,
      "{
        \tfrac
          {\mathrm{diag}}
          {\mathrm{U}(1)}
      }"{description}
    ]
    &[60pt]
    \tfrac{
      \UH^2
    }{
      \mathrm{U}(1)
    }
    \ar[r, hook]
    &[20pt]
    \QuantumSymmetries
  \end{tikzcd}
\end{equation}
\end{remark}

\begin{definition}
  \label[definition]{PCTQuantumSymmetry}
  A \emph{PCT quantum symmetry} is
  \begin{enumerate}
    \item
      a subgroup
      $$
        G \subset 
        \mathbb{Z}_2^{\tSymmetry}
        \times
        \mathbb{Z}_2^{\cSymmetry}
      $$
      of the PCT group \cref{IncarnationsOfPCTGroup},
    \item
      equipped with a lift $\big[\widehat{(-)}\big]$ to the group $\QuantumSymmetries$ of graded quantum symmetries (\cref{TheQSGroup}), hence with a dashed homomorphism making this diagram commute:
      \begin{equation}
        \label{SomePCTQuantumSymmetry}
        \begin{tikzcd}[
          column sep=25pt,
          row sep=-1pt
        ]
          G
          \ar[
            dr,
            hook'
          ]
          \ar[
            rr,
            dashed,
            "{
              \big[\widehat{(-)}\big]
            }"
          ]
          &&
          \QuantumSymmetries
          \mathrlap{\,.}
          \ar[
            dl,
            ->>
          ]
          \\
          &
          \mathbb{Z}_2^{\tSymmetry}
          \times
          \mathbb{Z}_2^{\cSymmetry}
        \end{tikzcd}
      \end{equation}
  \end{enumerate}
  In other words, this is a graded or ungraded and unitary or anti-unitary projective representation of $G$ where the grading and anti-unitarity is fixed by the embedding of $G$ into the PCT group. 

Here the notation for the homomorphism \cref{SomePCTQuantumSymmetry} means that $[-]$ is the $\mathrm{U}(1)$-equivalence class of a representative $\widehat{(-)}$:
\begin{equation}
  \label{ComponentsOfPCTQuantumSymmetry}
  \begin{tikzcd}[row sep=-3pt, column sep=5pt]
    G
    \ar[
      rr,
      "{
        \big[
        \widehat{
          (-)
        }
        \big]
      }"
    ]
    &&
    \QuantumSymmetries
    \simeq
    \Big(
    \UH^2
    \rtimes
    \big(
      \mathbb{Z}_2^{\tSymmetry}
      \times
      \mathbb{Z}_2^{\cSymmetry}
    \big)
    \Big)/\mathrm{U}(1)
    \\
    g 
      &\longmapsto&
    {[\,\widehat{g}\,]}
    \;\;
      \mbox{for 
        $\widehat{g} \in \UH^2 \rtimes
        \big(
          \mathbb{Z}_2^{\tSymmetry}
          \times
          \mathbb{Z}_2^{\cSymmetry}
        \big)$.
      }    
  \end{tikzcd}
\end{equation}
\end{definition}

\begin{table}[htb]
\caption{
  \label{TableOfPCTQuantumSymmetries}
  \textbf{Rows 1-3}: The possible cases of PCT quantum symmetries (\cref{PCTQuantumSymmetry}) according to \cref{ClassifyingPCTQuantumSymmetries}.
  \\
  \textbf{Row 4}: The spaces of Fredholm operators commuting with the given PCT quantum symmetries are classifying spaces for the 10 flavors of topological K-theory (cf. \cref{FredholmOperatorsAndKTheory}).
  \\
  \textbf{Row 5}: Traditional labels of these 10 cases in the context of the \emph{10-fold way} classification of topological phases of matter (going back to \parencites[Tbl. 1]{SchnyderRyuFurusakiLudwig2008}, review in \parencites[Tbl. 1]{ChiuTeoSchnyderRyu2016}).
}

\vspace{-3mm}
\hspace{-2.5mm}
\adjustbox{scale=.88}{
$
{
\setlength\arraycolsep{3pt}
\def\arraystretch{2.1}
\begin{array}{|cr||c|c|c|c|c|c|c|c|c|c|}
  \hline
  \scalebox{.85}{\bf PCT subgroup}
  &
  G = 
  &
  \{\mathrm{e}\}
  &
  \{\mathrm{e}, \pOperator\}
  &
  \multicolumn{2}{c|}{
    \mathbb{Z}_2^{\tSymmetry}
    \!\!\!:=\!\!
    \{\mathrm{e}, \tOperator\}
  }
  &
  \multicolumn{2}{c|}{
    \mathbb{Z}_2^{\cSymmetry}
    \!\!\!:=\!\!
    \{\mathrm{e}, \cOperator\}
  }
  &
  \multicolumn{4}{c|}{
  \big\{
    \mathrm{e}, 
    \tOperator, 
    \cOperator,
    \pOperator 
      \!=\! 
    \cOperator\tOperator
  \big\}
  }
  \\  
  \hline
  \hline
  \multirow{2}{*}{$
  \substack{
    \scalebox{.8}{\bf Quantum symmetry}
    \\
    \adjustbox{scale=.85}{
      \begin{tikzcd}[
        ampersand replacement=\&,
        column sep=-3pt,
        row sep=4pt
      ]
        G
        \ar[
          dr, 
          hook',
          shorten=-2pt
        ]
        \ar[
          rr,
          dashed,
          "{ \scalebox{0.7}{$
            \big[\widehat{(-)}\big]
            $}
          }"{description, pos=.45}
        ]
        \&\&
        \QuantumSymmetries
        \ar[
          dl,
          ->>,
          shorten=-2pt
        ]
        \\
        \&
        \mathbb{Z}_2^{\tSymmetry}
        \times
        \mathbb{Z}_2^{\cSymmetry}
      \end{tikzcd}
    }
  }
  $}
  &
  \widehat{\tOperator}^2 =
  &
  &
  &
  +1
  &
  -1
  &
  & 
  & 
  +1
  &
  +1
  & 
  -1
  &
  -1
  \\
  \hhline{~-----------}
  & 
  \widehat{\cOperator}^2 =
  &&&&&
  +1 
  &
  -1
  &
  +1 
  &
  -1
  &
  +1 
  &
  -1
  \\
  \hline
  \hspace{-15pt}
  \substack{
    \scalebox{.7}{\bf Fred. operators}
    \\
    \scalebox{.7}{\bf 
    commuting with $\widehat{G}$}
  }
  &
  \mathllap{
  \GradedFredholmOperators
    ^{\widehat{G}}
  }
  \simeq
  &
  K \mathrm{U}^0
  &
  K \mathrm{U}^1
  &
  K \mathrm{O}^0
  &
  K \mathrm{O}^4
  &
  K \mathrm{O}^2
  &
  K \mathrm{O}^6
  &
  K \mathrm{O}^1
  &
  K \mathrm{O}^7
  &
  K \mathrm{O}^5
  &
  K \mathrm{O}^3
  \\
  \hline
  &&&&&&&&&&&
  \\[-26pt]
  \substack{
    \scalebox{.8}{\bf 10-fold way}
    \\
    \scalebox{.8}{\bf class label} 
  }
  &
  &
  \mbox{A}
  &
  \mbox{AIII}
  &
  \mbox{AI}
  &
  \mbox{AII}
  &
  \mbox{D}
  &
  \mbox{C}
  &
  \mbox{BDI}
  &
  \mbox{CI}
  &
  \mbox{CII}
  &
  \mbox{DIII}
  \\
  \hline
\end{array}
}
$
}
\end{table}

\begin{proposition}
  \label[proposition]
   {ClassifyingPCTQuantumSymmetries}
  The PCT quantum symmetries \textup{(\cref{PCTQuantumSymmetry})} satisfy:
\begin{enumerate}
\item
If $G = \{\mathrm{id}, \pOperator \!=\! \tOperator\cOperator\}$, then $[\widehat{\pOperator}]$ has a  representative $\widehat{\pOperator}$ \cref{ComponentsOfPCTQuantumSymmetry} satisfying
\begin{equation}
  \label{HatPOperatorSquare}
  \widehat{\pOperator}^2 = \mathrm{id}
  \mathrlap{\,.}
\end{equation}

\item 
If $\tOperator \in G$, then $[\widehat{\tOperator}]$ has a representative $\widehat{\tOperator}$ \cref{ComponentsOfPCTQuantumSymmetry} satisfying
\begin{equation}
  \label{HatTOperatorSquare}
  \widehat{\tOperator}^2
  \in
  \{
    \pm 
    \mathrm{id}
  \}
  \mathrlap{\,.}
\end{equation}

\item 
If $\cOperator \in G$, then $[\widehat{\cOperator}]$ has a representative $\widehat{\cOperator}$ \cref{ComponentsOfPCTQuantumSymmetry} satisfying
\begin{equation}
  \label{HatCOperatorSquare}
  \widehat{\cOperator}^2
  \in
  \{
    \pm 
    \mathrm{id}
  \}
  \mathrlap{\,,}
\end{equation}
\end{enumerate}
and all these cases occur. In consequence, PCT quantum symmetries fall into 10 classes as shown in \cref{TableOfPCTQuantumSymmetries}.
\end{proposition}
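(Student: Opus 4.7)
The plan is to reduce the classification to an analysis of how the squares of lifts of order-$2$ elements of the PCT group behave under the residual $\mathrm{U}(1)$-gauge freedom inherent in any lift $[\widehat{g}] \in \QuantumSymmetries$ to an actual operator $\widehat{g} \in \GradedUH \rtimes \mathbb{Z}_2^{\tSymmetry}$. Since every nontrivial element of $\mathbb{Z}_2^{\tSymmetry} \times \mathbb{Z}_2^{\cSymmetry}$ has order $2$, any such lift satisfies $[\widehat{g}]^2 = [\mathrm{id}]$, hence $\widehat{g}^2 = \lambda_g \cdot \mathrm{id}$ for a unique $\lambda_g \in \mathrm{U}(1)$. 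A rescaling $\widehat{g} \mapsto \mu \widehat{g}$ with $\mu \in \mathrm{U}(1)$ gives the same class $[\widehat{g}]$ and transforms the square as $(\mu \widehat{g})^2 = \mu^2\,\widehat{g}^2$ when $\widehat{g}$ is complex-linear, and as $(\mu \widehat{g})^2 = \mu\,\ComplexConjugation{\mu}\,\widehat{g}^2 = \widehat{g}^2$ when $\widehat{g}$ is complex-antilinear.

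The element $\pOperator = \tOperator\cOperator$, being the composite of two antilinear generators, lifts to a complex-linear operator; hence in the setting of (i) the scalar $\lambda_\pOperator$ rescales by $\mu^2$, and choosing $\mu$ to be a square root of $\lambda_\pOperator^{-1}$ (always available in the divisible group $\mathrm{U}(1)$) normalizes $\widehat{\pOperator}^2 = \mathrm{id}$. For (ii) and (iii) the lifts $\widehat{\tOperator}$ and $\widehat{\cOperator}$ are antilinear, so $\lambda_g$ is gauge-invariant; but associativity forces $\widehat{g}$ to commute with $\widehat{g}^2 = \lambda_g \cdot \mathrm{id}$, and antilinearity then yields $\widehat{g}\,\lambda_g = \ComplexConjugation{\lambda_g}\,\widehat{g}$, forcing $\lambda_g = \ComplexConjugation{\lambda_g} \in \{\pm 1\}$. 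In the combined case $G = \mathbb{Z}_2^{\tSymmetry} \times \mathbb{Z}_2^{\cSymmetry}$, where a group homomorphism $[\widehat{(-)}]$ out of an abelian group must yield commuting lifts $\widehat{\tOperator}\widehat{\cOperator} = \widehat{\cOperator}\widehat{\tOperator}$, the induced lift of $\pOperator$ satisfies $\widehat{\pOperator}^2 = \widehat{\tOperator}^2\,\widehat{\cOperator}^2 \in \{\pm 1\}\cdot \mathrm{id}$, and an additional linear rescaling normalizes this to $+\mathrm{id}$ without affecting the antilinear squares.

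To realize all four sign combinations $(\widehat{\tOperator}^2, \widehat{\cOperator}^2) \in \{\pm 1\}^2$, I would give explicit models: the choice $\widehat{\tOperator}^2 = +\mathrm{id}$ is realized by the real structure of \cref{TheRealStructure}, while $\widehat{\tOperator}^2 = -\mathrm{id}$ is realized by the quaternionic structure $J$ of \cref{TheQuaternionicStructure}; analogous choices for $\widehat{\cOperator}$ acting on the graded Hilbert space $\GradedHilbertSpace$ of \cref{GradedHilbertSpace}, combined with the grading involution of \cref{GradingInvolutionOnHgr}, provide independently tunable signs in the combined case. Adding the trivial lifts when $\tOperator$ or $\cOperator$ are absent, the resulting enumeration $1 + 1 + 2 + 2 + 4 = 10$ matches the ten columns of \cref{TableOfPCTQuantumSymmetries}.

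The main obstacle I anticipate is the combined case $G = \mathbb{Z}_2^{\tSymmetry} \times \mathbb{Z}_2^{\cSymmetry}$: one must verify that the $\mathrm{U}(1)$-rescalings preserving the antilinear squares $\widehat{\tOperator}^2, \widehat{\cOperator}^2$ leave enough freedom to bring $\widehat{\pOperator}^2$ to $+\mathrm{id}$, and that the four sign combinations correspond to genuinely inequivalent group homomorphisms $G \to \QuantumSymmetries$ rather than merely different rescalings of the same one—a check best made directly against the extension class of $\QuantumSymmetries$ over the PCT group encoded in \cref{ProjectiveGradedExtensionOfZTwoTimesZTwo}.
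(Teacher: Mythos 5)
Your proof is correct and takes essentially the same route as the paper's: the core point in both is that the $\mathrm{U}(1)$-gauge rescaling $\widehat{g}\mapsto\mu\widehat{g}$ changes $\widehat{g}^2$ by $\mu^2$ when $\widehat{g}$ is complex-linear (so one normalizes $\widehat{\pOperator}^2$ to $+\mathrm{id}$ via a square root) but is ineffective when $\widehat{g}$ is antilinear (since $\mu\ComplexConjugation{\mu}=1$), and associativity of $\widehat{g}$ with $\widehat{g}^2 = \lambda_g\mathrm{id}$ combined with antilinearity forces $\lambda_g = \ComplexConjugation{\lambda_g}\in\{\pm 1\}$. One small caveat on your middle paragraph: a group homomorphism $[\widehat{(-)}]\colon G\to\QuantumSymmetries$ from an abelian $G$ only forces the \emph{projective} classes $[\widehat{\tOperator}],[\widehat{\cOperator}]$ to commute, so the representatives $\widehat{\tOperator},\widehat{\cOperator}$ commute only up to a $\mathrm{U}(1)$-phase rather than on the nose, and the identity $\widehat{\pOperator}^2 = \widehat{\tOperator}^2\widehat{\cOperator}^2$ need not hold strictly --- but this lies outside the scope of the proposition, which only asserts a normalization of each individual generator's square (the paper's proof accordingly treats the three cases separately and does not discuss the combined $G$ or simultaneous normalization at all).
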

\begin{proof} 
\begin{enumerate}
\item
The condition that $\big[\widehat{(-)}\big]$ be a group homorphism is equivalently that
\begin{equation}
  \begin{aligned}
    \big[\widehat{P}\,\big]^2 
    = [\mathrm{id}]
    \;\;
    \Leftrightarrow
    \;\;
    \widehat{P}^2 
    = \omega \, \mathrm{id}
    \,,
    \;
    \mbox{for some $\omega \in \mathrm{U}(1)$.}
  \end{aligned}
\end{equation}
But then for any choice of square root $\sqrt{\omega} \in \mathrm{U}(1)$ we have
\begin{equation}
  \label{RescalingThePOperator}
  \big[\widehat{\pOperator}\,\big] 
    = 
  \Big[
    \tfrac{1}{\sqrt{\omega}}
    \widehat{\pOperator}
  \,\Big]
  \mathrlap{\,,}
\end{equation}
and since $\widehat{P}$ is unitary \cref{TheUnitaryAntiunitaryGroup} and hence in particular complex-linear, this rescaled operator has the claimed property:
\begin{equation}
  \label{HowTheRescaledPOperatorSquaresToUnity}
  \Big(
    \tfrac{1}{\sqrt{\omega}}
    \widehat{\pOperator}
  \,\Big)^2
  =
  \tfrac{1}{\sqrt{\omega}}
  \widehat{\pOperator}
  \tfrac{1}{\sqrt{\omega}}
  \widehat{\pOperator}
  =
  \tfrac{1}{\sqrt{\omega}}
  \tfrac{1}{\sqrt{\omega}}
  \widehat{\pOperator}
  \widehat{\pOperator}
  = 
  1
  \mathrlap{\,.}
\end{equation}

\item
As in the previous case, homomorphy requires that
$$
  \begin{aligned}
    \big[\widehat{T}\,\big]^2 
    = [\mathrm{id}]
    \;\; \Leftrightarrow
    \;\;
    \widehat{T}^2 
    = \omega \, \mathrm{id}
    \,,
    \;
    \mbox{for some $\omega \in \mathrm{U}(1)$},
  \end{aligned}
$$
but since now $\widehat{\tOperator}$ is anti-unitary \cref{TheUnitaryAntiunitaryGroup} and hence complex anti-linear, there is a constraint forcing $\omega$ to be real:
$$
  \begin{tikzcd}[row sep=-5pt, column sep=0pt]
    \widehat{\tOperator}
    \widehat{\tOperator}^2
    &=&
    \widehat{\tOperator}^2
    \widehat{\tOperator}
    \\
    \rotatebox[origin=c]{90}{$=$}
    &&
    \rotatebox[origin=c]{90}{$=$}
    \\
    \ComplexConjugation{\omega}
    \widehat{\tOperator}
    &=&
    \omega
    \,
    \widehat{\tOperator}
  \end{tikzcd}
  \hspace{.6cm}
  \Leftrightarrow
  \hspace{.6cm}
  \omega 
  \in 
    \mathrm{U}(1) \cap \mathbb{R}
  =
  \{ \pm 1 \}
  \mathrlap{\,.}
$$
Moreover, the one remaining non-trivial value, $\omega = -1$, can \emph{not} be scaled away as in \cref{RescalingThePOperator,HowTheRescaledPOperatorSquaresToUnity}, since anti-linearity of $\widehat{\tOperator}$ implies that
$$
  \big(\pm\mathrm{i}\widehat{\tOperator}\,\big)^2
  =
  (\pm\mathrm{i})\widehat{\tOperator}
  (\pm\mathrm{i})\widehat{\tOperator}
  =
  (\pm\mathrm{i})(\mp\mathrm{i})
  \widehat{\tOperator}
  \widehat{\tOperator}
  =
  \widehat{\tOperator}^2
.
$$
\item
This case works verbatim like the  previous case. \qedhere
\end{enumerate}
\end{proof}

\subsubsection{Fredholm Operators}

Our construction of twisted orbifold K-theory (in \cref{OnTwistedOrbifoldKTheory}) is based on the fact that \emph{classifying spaces} for topological K-theory are given by space of \emph{Fredholm operators} (the \emph{Atiyah-J{\"a}nich theorem}, \cref{AtiyahJaenichTheorem} below). Here we compile some basics on Fredholm operators that we will refer to for this discussion.

For standard review of Fredholm operators see \parencites[\S 1.4]{Murphy1990}[\S 3.3]{Arveson2002}[\S 3]{DSBW2023}. As has become common, we will consider the space of \emph{graded self-adjoint} Fredholm operators in the following \cref{SpaceOfFredholmOperators}, since this is what lends itself naturally to the classification of the different ``quantum symmetry protected'' flavors of K-theory (in \cref{FredholmOperatorsAndKTheory}) and to the
construction of twisted equivariant K-theory (as observed by \cite{AtiyahSegal2004}), and therefore eventually to the construction of twisted orbifold K-theory (in \cref{OnTwistedOrbifoldKTheory}).

\begin{definition}
  \label[definition]
  {SpaceOfFredholmOperators}
  \begin{enumerate}

  \item
  A \emph{Fredholm operator} on $\HilbertSpace$ \cref{TheHilbertSpace} is a bounded operator whose kernel and cokernel are of finite dimension: 
  \begin{equation}
   \label{TheSpaceOfFredholmOperators}
   \FredholmOperators(\HilbertSpace)
   :=
   \left\{
   f \in \BoundedOperators(\HilbertSpace) 
   \;
   \middle\vert
   \;
   \begin{aligned}
      \mathrm{dim}\big(\mathrm{ker}(f)\big)
      & < \infty
      \\[-2pt]
      \mathrm{dim}\big(\mathrm{coker}(f)\big)
      & < \infty
    \end{aligned}
    \right\}.
  \end{equation}
  We consider this space of Fredholm operators as equipped with the operator norm topology. 
  \footnote{
     More ambitiously, one modifies the space of graded Fredholm operators in \cref{TheSpaceOfFredholmOperators} up to homotopy equivalence (\cite[\S 4]{AtiyahSegal2004}) such that the $\PUH$-action \cref{ConjugationActionOfProjectiveGradedUnitaryGroup} on it becomes continuous also in the latter's strong/weak operator topology. Here we disregard this, for simplicity. 
  }

\item 
Homeomorphically, we regard this space as that of \emph{odd graded} but self-adjoint Fredholm operators on $\GradedHilbertSpace$ \cref{GradedHilbertSpace}:
\begin{equation}
  \label{TheSpaceOfGradedFredholmOperators}
  \begin{tikzcd}[
    column sep=62pt
  ]
  \GradedFredholmOperators(\GradedHilbertSpace)
  \coloneqq
  \left\{
    \FredholmOperator
    =
    \left(
    \begin{matrix}
      0 & f
      \\
      f^\dagger & 0 
    \end{matrix}
    \right)
    \in
    \BoundedOperators\big(
      \GradedHilbertSpace
    \big)
    \,\middle\vert\,
    f \in \FredholmOperators(\HilbertSpace)
  \right\}.
  \end{tikzcd}
\end{equation}

\item 
We take this space to be pointed by:
\begin{equation}
  \label{ZeroFredholmOperator}
  F_0
  :=
  \left(
  \begin{matrix}
    0 & \mathrm{id}
    \\
    \mathrm{id} & 0
  \end{matrix}
  \right).
\end{equation}

\item
The kernel of a graded Fredholm operatorator $F$ \cref{TheSpaceOfGradedFredholmOperators} is hence a finite-dimensional, 
virtual $\mathbb{C}$-vector space:
\begin{equation}
  \label{VirtualKernel}
    \mathrm{ker}(F)
    =
    \begin{tikzcd}[sep=-3pt]
      \mathrm{ker}(f)
      \\
      \ominus
      \\
      \mathrm{ker}(f^\dagger)
    \end{tikzcd}
    =
    \begin{tikzcd}[
      sep=-3pt
    ]
      \mathrm{ker}(f)
      \\
      \ominus
      \\
      \mathrm{coker}(f)
      \mathrlap{\,,}
    \end{tikzcd}    
\end{equation}
whose virtual dimension is the \emph{Fredholm index} of $f$:
\begin{equation}
  \label{FredholmIndex}
  \mathrm{dim}\big(
    \mathrm{ker}(F)
  \big)
  =
  \mathrm{ind}(f)
  :=
    \mathrm{dim}_{\mathbb{C}}\big(
      \mathrm{ker}(f)
    \big)
    -
    \mathrm{dim}_{\mathbb{C}}\big(
      \mathrm{coker}(f)
    \big)
    \mathrlap{\,.}
\end{equation}

\item 
In this graded form \cref{TheSpaceOfGradedFredholmOperators}, 
the graded quantum symmetries 
(\cref{TheQSGroup}) 
\begin{equation}
  \left(
    \left[
      \begin{matrix}
      U_{{}_{++}}
      &
      U_{{}_{+-}}
      \\
      U_{{}_{-+}}
      &
      U_{{}_{--}}
      \end{matrix}
    \right]
    ,
    \sigma
  \right)
  \in
  \GradedPUH
  \,\rtimes\,
  \mathbb{Z}^{\tSymmetry}_2
  \defneq
  \QuantumSymmetries(\HilbertSpace)
\end{equation}
have a canonical topological group action \cref{TopologicalGroupAction} by conjugation:
\begin{equation}
\label{ConjugationActionOfProjectiveGradedUnitaryGroup}
  \begin{tikzcd}[
    row sep=-2pt, 
    column sep=5pt,
    ampersand replacement=\&
  ]
    \GradedFredholmOperators
    \ar[rr]
    \&\&
    \GradedFredholmOperators
    \\
    \left(
      \begin{matrix}
      0
      &
      f_{{}_{+-}}
       \\
      f^\dagger_{{}_{+-}}
      &
      0
      \end{matrix}
    \right)
    \&\longmapsto\&
    \left(
      \begin{matrix}
        U_{{}_{++}}
        &
        U_{{}_{+-}}
        \\
        U_{{}_{-+}}
        &
        U_{{}_{--}}
    \end{matrix}
    \right)
    \circ
    \left(
    \begin{matrix}
      0
      &
      f^\sigma_{{}_{+-}}
      \\
      f^{\dagger \sigma}_{{}_{+-}}
      &
      0
    \end{matrix}
    \right)
    \circ
    \left(
      \begin{matrix}
      U_{{}_{++}}
      &
      U_{{}_{+-}}
      \\
      U_{{}_{-+}}
      &
      U_{{}_{--}}
      \end{matrix}
    \right)^{\mathrlap{-1}}
    \,.
  \end{tikzcd}
\end{equation}
(Here either
$\UnitaryOperator_{{}_{+-}}, \UnitaryOperator_{{}_{-+}} = 0$
or
$\UnitaryOperator_{{}_{++}}, \UnitaryOperator_{{}_{--}} = 0$;
the square bracket denotes the $\mathrm{diag}(\mathrm{U}(1))$-equivalence class
of the matrix;
and $\FredholmOperator^{\sigma}$ equals $\FredholmOperator$ when $\sigma = \mathrm{e}$
and equals its complex conjugate \cref{ComplexConjugationActionOnProjectiveUnitaryGroup} otherwise, cf. \cite[\S 5.B]{Matumoto71}.)

\end{enumerate}

\end{definition}

\begin{proposition}[Atkinson's theorem {(cf. \cite[Thm. 1.4.16]{Murphy1990}})]
  \label[proposition]
   {AtkinsonTheorem} 
  A bounded operator is Fredholm \cref{TheSpaceOfFredholmOperators} iff there exists another bounded operator (a \emph{parametrix}) such that their composites differ from the identity by a compact operator:
  \begin{equation}
    \label{FredholmViaParametrix}
    \mbox{\textup{$f \in \BoundedOperators(\HilbertSpace)$
    is Fredholm}}
    \;\;\;\;
    \Leftrightarrow
    \;\;\;\;
    \exists
    \tilde f \in \BoundedOperators(\HilbertSpace)
    :
    \left\{
    \begin{aligned}
      f \circ \tilde f - \mathrm{id}
      \in 
      \CompactOperators(\HilbertSpace)
      \mathrlap{\,,}
      \\[-2pt]
      \tilde f \circ f - \mathrm{id}
      \in 
      \CompactOperators(\HilbertSpace)
      \mathrlap{\,.}
    \end{aligned}
    \right.
  \end{equation}
\end{proposition}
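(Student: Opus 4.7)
The plan is to prove the two implications separately, constructing the parametrix explicitly in one direction and invoking Riesz--Schauder theory for compact operators in the other.

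For the forward implication ($\Rightarrow$), assume $f \in \BoundedOperators(\HilbertSpace)$ is Fredholm. I would first observe that $\mathrm{ker}(f)$ is finite-dimensional and hence closed, so it admits an orthogonal complement, and that $\mathrm{Im}(f)$ is closed: this uses the standard fact that an operator whose cokernel is finite-dimensional has closed range (one writes $\HilbertSpace = \mathrm{Im}(f) + V$ with $\dim V < \infty$, passes to the induced bounded bijection $\HilbertSpace/\mathrm{ker}(f) \oplus V \to \HilbertSpace$, and applies the open mapping theorem). The restriction $f \big|_{\mathrm{ker}(f)^\perp} : \mathrm{ker}(f)^\perp \to \mathrm{Im}(f)$ is then a continuous bijection of Hilbert spaces, so it has bounded inverse, again by the open mapping theorem. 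Define $\tilde f$ to be this inverse on $\mathrm{Im}(f)$ and zero on $\mathrm{Im}(f)^\perp \cong \mathrm{coker}(f)$. A direct computation yields
\begin{equation*}
\tilde f \circ f - \mathrm{id} = -P_{\mathrm{ker}(f)},
\qquad
f \circ \tilde f - \mathrm{id} = -P_{\mathrm{Im}(f)^\perp},
\end{equation*}
both of which are orthogonal projections onto finite-dimensional subspaces, hence finite-rank and a fortiori compact.

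For the converse ($\Leftarrow$), assume a parametrix $\tilde f$ exists with $\tilde f f = \mathrm{id} + K_1$ and $f \tilde f = \mathrm{id} + K_2$ for compact operators $K_1, K_2 \in \CompactOperators(\HilbertSpace)$. The inclusion $\mathrm{ker}(f) \subset \mathrm{ker}(\tilde f f) = \mathrm{ker}(\mathrm{id} + K_1)$ combined with the Riesz--Schauder theorem (which asserts that $\mathrm{id}$ plus a compact operator has finite-dimensional kernel and closed range of finite codimension) immediately gives $\dim\mathrm{ker}(f) < \infty$. Dually, $\mathrm{Im}(f) \supset \mathrm{Im}(f \tilde f) = \mathrm{Im}(\mathrm{id} + K_2)$, and the latter has finite codimension in $\HilbertSpace$, so $\dim\mathrm{coker}(f) < \infty$ as well. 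Hence $f$ is Fredholm.

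The main obstacle I anticipate is essentially bookkeeping rather than conceptual: verifying closedness of $\mathrm{Im}(f)$ in the forward direction (to make the restricted bijection a Banach isomorphism) and cleanly invoking Riesz--Schauder in the backward direction. Both are classical and are covered in the references already cited in the statement (\cite{Murphy1990, Arveson2002, DSBW2023}), so the proof should reduce to a few lines once these tools are named.
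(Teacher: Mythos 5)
The paper states Atkinson's theorem as a classical fact with only a citation to Murphy (Thm.\ 1.4.16) and gives no proof of its own, so there is no in-paper argument to compare against. Your proof is correct and is the standard textbook argument underlying that reference: the forward direction produces the parametrix as a pseudoinverse (supported on $\ker(f)^\perp$ and $\mathrm{Im}(f)$) after establishing closedness of the range via the open mapping theorem, and the converse reads off finite-dimensionality of kernel and cokernel from the Fredholm alternative for $\mathrm{id} + \CompactOperators(\HilbertSpace)$.
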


\begin{remark}[Arithmetic on Fredholm operators]
  Given a graded Fredholm operator $F$ as in \cref{TheSpaceOfGradedFredholmOperators}, its \emph{charge reversal} 
  \begin{equation}
    \label{ChargeReversalOfGradedFredholm}
    \ominus F
    :=
    \left(
    \begin{matrix}
      0 & f^\dagger
      \\
      f & 0
    \end{matrix}
    \right)
    \in
    \GradedFredholmOperators
  \end{equation}
  is also graded Fredholm, with negative virtual kernel:
  \begin{equation}
    \mathrm{ker}\big(\ominus\!F_P\big)
    =
    \ominus \, \mathrm{ker}\big(F_P\big)
    \mathrlap{\,.}
  \end{equation}
  Given a pair $F_1, F_2 \in \GradedFredholmOperators$, their \emph{direct sum} is
  \begin{equation}
    F_1 \oplus F_2
    =
    \left(
    \begin{matrix}
      0 & f_1 \oplus f_2
      \\
      f_1^\dagger \oplus f_2^\dagger & 0 
    \end{matrix}
    \right)
  \end{equation}
  acting on $\HilbertSpace \oplus \HilbertSpace \simeq \HilbertSpace$, with virtual kernel the direct sum of those of the summands:
  \begin{equation}
    \mathrm{ker}\big(
      F_1 \oplus F_2
    \big)
    =
    \mathrm{ker}(F_1) 
    \oplus
    \mathrm{ker}(F_2)
    \mathrlap{\,.}
  \end{equation}
  In particular, the \emph{virtual difference} between a pair of Fredholm operators is the direct sum of the first with the charge reversal of the second:
  \begin{equation}
    \label{VirtualDifferenceOfGradedFredholmOps}
    F_1 \ominus F_2
    :=
    F_1 \oplus \big( \ominus F_2\big)
    \mathrlap{\,.}
  \end{equation}
\end{remark}

\begin{proposition}[{\cite[Lem 4]{Janich1965}}]
\label[proposition]{VirtualKernelBundles}
For $X$ a topological space and $F_{(-)}:X \to \GradedFredholmOperators$ a continuous map, this Fredholm index \cref{FredholmIndex} is continuous and hence a locally constant function $\mathrm{ind}\big(F_{(-)}\big) :  X \to \mathbb{Z}$. The graded contributions $\mathrm{dim}_{\mathbb{C}}\big(\mathrm{ker}(f_X)\big)$ and $\mathrm{dim}_{\mathbb{C}}\big(\mathrm{coker}(f_X)\big)$ need not be locally constant, but if they are then the \emph{virtual kernel bundle} is a virtual topological vector bundle over $X$: 
\begin{equation}
  \label{VirtualVectorBundleFromFredholmMap}
  \left.
  \begin{aligned}
  F_{X}
  & :
  \begin{tikzcd}
    X \ar[r, "{ \mathrm{cntns} }"]
    &
    \GradedFredholmOperators
  \end{tikzcd}
  \\[-2pt]
  \mathrm{dim}_{\mathbb{C}}\big(
    \mathrm{ker}(f_X)
  \big)
  & :
  \begin{tikzcd}[]
    X 
    \ar[r, "\mathrm{cntns}" ]
    &
    \mathbb{N}
  \end{tikzcd}
  \end{aligned}
  \right\}
  \;\;
  \Rightarrow
  \;\;
  \mathrm{ker}\big(
    F_{X}
  \big)
  \in
  \GradedVectorSpaces_{X}
  \mathrlap{\,.}
\end{equation}
\end{proposition}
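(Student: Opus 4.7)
This is the classical Jänich stabilization argument, which I sketch in the graded setup. I work throughout with the off-diagonal datum $f_x \in \FredholmOperators(\HilbertSpace)$, since by \cref{VirtualKernel} the graded kernel decomposes as $\mathrm{ker}(F_x) = \mathrm{ker}(f_x) \ominus \mathrm{coker}(f_x)$, and the hypothesis on locally-constant dimensions refers precisely to these two summands.

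First I would fix $x_0 \in X$ and pick a finite-dimensional subspace $W \subset \HilbertSpace$ complementary to $\mathrm{im}(f_{x_0})$, so that $W \simeq \mathrm{coker}(f_{x_0})$. Consider the continuous family of bounded operators
$$
  \phi_x \,\colon\, \HilbertSpace \oplus W \,\longrightarrow\, \HilbertSpace
  \mathrlap{\,,}
  \qquad
  (\psi, w) \,\longmapsto\, f_x\psi + w
  \mathrlap{\,.}
$$
By construction $\phi_{x_0}$ is surjective, and surjectivity of bounded operators between Hilbert spaces is an open condition (the Moore--Penrose pseudoinverse depends continuously on the operator on the open stratum of constant closed range), so there is a neighborhood $U \ni x_0$ on which $\phi_x$ stays surjective, i.e.\ $W + \mathrm{im}(f_x) = \HilbertSpace$ throughout $U$. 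Picking a norm-continuous family of right inverses $\sigma_x$ of $\phi_x$ on $U$, the operators $\mathrm{id} - \sigma_x \phi_x$ form a continuous family of projectors onto $\mathrm{ker}(\phi_x)$, which is naturally identified with $f_x^{-1}(W)$ via $(\psi,-f_x\psi) \leftrightarrow \psi$. This continuous family of projectors has image of locally constant finite rank, and hence assembles $f_x^{-1}(W)$ into a topological vector bundle $E^-_U \to U$.

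Next I would inspect the continuous bundle morphism $f_x|_{f_x^{-1}(W)} \colon E^-_U \to E^+_U := U \times W$, whose pointwise kernel is $\mathrm{ker}(f_x)$ and pointwise cokernel is $W/(W \cap \mathrm{im}(f_x)) \simeq \mathrm{coker}(f_x)$. Taking fiberwise dimensions yields
$$
  \mathrm{rk}(E^-_U) - \dim W
  \,=\,
  \dim\,\mathrm{ker}(f_x) - \dim\,\mathrm{coker}(f_x)
  \,=\,
  \mathrm{ind}(f_x)
  \mathrlap{\,,}
$$
so the index is locally constant on $U$, hence on $X$, which settles the first claim. Under the further hypothesis that $\dim\,\mathrm{ker}(f_x)$ (equivalently, $\dim\,\mathrm{coker}(f_x)$) is locally constant, the bundle map above has locally constant rank, so by the standard fact that constant-rank morphisms between finite-rank vector bundles have kernel and cokernel sub/quotient bundles, $\mathrm{ker}(f_{(-)})$ and $\mathrm{coker}(f_{(-)})$ become honest topological vector bundles on $U$. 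Their formal difference represents $\mathrm{ker}(F_{(-)})|_U$; the construction is natural in $W$ up to stable isomorphism (enlarging $W$ replaces the two bundles by a common trivial summand), so these local virtual bundles glue to a globally defined virtual vector bundle $\mathrm{ker}(F_X) \in \GradedVectorSpaces_X$.

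The main technical point I expect to justify carefully is the continuity of $x \mapsto f_x^{-1}(W)$ as a subspace of $\HilbertSpace$, i.e.\ the existence of a continuous family of right inverses $\sigma_x$ of the surjective family $\phi_x$. For norm-continuous Fredholm families this follows from the classical continuity properties of the Moore--Penrose pseudoinverse on the open stratum of surjective operators; in the stronger operator topologies mentioned in footnote~\ref{StrongTopologyOnUH} the same statement would require a finer analysis, but this is not needed for the present statement.
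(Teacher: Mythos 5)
The paper itself gives no proof of this proposition — it is stated with a citation to Jänich's Lemma 4 and left to the literature. Your reconstruction of the classical Jänich/Atiyah stabilization argument is correct in substance: the auxiliary finite-dimensional space $W$, the openness of the surjectivity of the perturbed family $\phi_x = f_x + \iota_W$, the continuous family of right inverses via the Moore--Penrose formula $\sigma_x = \phi_x^\dagger(\phi_x\phi_x^\dagger)^{-1}$, and the resulting constant-rank bundle map $E^-_U \to U \times W$ whose kernel and cokernel are pointwise $\ker(f_x)$ and $\mathrm{coker}(f_x)$ are all exactly right, and the local constancy of the index drops out of the rank count.

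One small point where your phrasing is misleading rather than wrong: in the final gluing step you invoke naturality "up to stable isomorphism" as if different choices of $W$ produced different kernel and cokernel bundles. They do not. The object $\ker(f_x) \subset \HilbertSpace$ and the quotient $\mathrm{coker}(f_x) = \HilbertSpace/\mathrm{im}(f_x)$ are intrinsic to $f_x$; the auxiliary subspace $W$ serves only to exhibit a local trivialization of the bundle $\bigsqcup_x f_x^{-1}(W) \to U$, and the canonical isomorphism $W/(W\cap\mathrm{im}(f_x)) \xrightarrow{\sim} \mathrm{coker}(f_x)$ makes the induced kernel and cokernel bundles agree on the nose on overlaps of charts. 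So under the proposition's hypothesis of locally constant kernel/cokernel dimension, no passage to stable equivalence classes is needed: $\ker(f_{(-)})$ and $\mathrm{coker}(f_{(-)})$ are honest topological vector bundles on all of $X$ and $\ker(F_X) := \ker(f_{(-)}) \ominus \mathrm{coker}(f_{(-)})$ is a well-defined element of $\GradedVectorSpaces_X$. The "glue only up to stable isomorphism" mechanism you have in mind is genuinely what one needs for the full Atiyah--Jänich index theorem where no such hypothesis is imposed — and it is precisely why, in that more general setting, the output is a K-theory class rather than a virtual bundle. Keeping these two regimes distinct sharpens the statement you are proving.
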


\begin{example}
  \label[example]{TheLeftShiftOperator}
  For $k \in \mathbb{N}$ and identifying $\HilbertSpace \simeq \bigoplus_{\mathbb{N}} \mathbb{C}^k$, the following graded Fredholm operator (a ``left shift operator'')
  \begin{equation}
    \label{ShiftOperator}
    F^{\mathbb{C}^k} \! 
    :=
    \left(
    \begin{matrix}
      0 & f^{\mathbb{C}^k}
      \\
      (f^{\mathbb{C}^k})^\dagger & 0
    \end{matrix}
    \right)
    , \;
    f^{\mathbb{C}^k}\!:=
    \left(
    \begin{matrix}
      0 & \mathrm{id} & 0 & 0 & \cdots
      \\
      0 & 0 & \mathrm{id} & 0 & \cdots
      \\
      0 & 0 & 0 & \mathrm{id} & \cdots 
      \\[-4pt]
      \vdots & \vdots & \vdots & \vdots
      & \ddots
    \end{matrix}
    \right)
    ,
    \;\;
    \begin{tikzcd}[
      row sep=-4pt,
      column sep=20pt
    ]
      \HilbertSpace
      \ar[
        rr,
        "{ 
          f^{\mathbb{C}^k} 
        }"
      ]
      &&
      \HilbertSpace
      \\
      \rotatebox[origin=c]{-90}{$\simeq$}
      &&
      \rotatebox[origin=c]{-90}{$\simeq$}
      \\
      \mathbb{C}^k
      &&
      \mathbb{C}^k
      \\
      \oplus && \oplus
      \\
      \mathbb{C}^k 
      \ar[
        uurr,
        "{ 
          \mathrm{id} 
        }"{sloped, description}
      ]
      &&
      \mathbb{C}^k
      \\
      \oplus && \oplus
      \\
      \mathbb{C}^k 
      \ar[
        uurr,
        "{ 
          \mathrm{id} 
        }"{sloped, description}
      ]
      &&
      \mathbb{C}^k
      \\
      \oplus && \oplus
      \\[-4pt]
      \vdots
      \ar[
        uurr,
        dotted
      ]
      && 
      \vdots
     \mathrlap{\;,}
    \end{tikzcd}
  \end{equation}
  has virtual kernel 
  \cref{VirtualKernel} isomorphic to $\mathbb{C}^k$:
  \begin{equation}
    \mathrm{ker}\big(
      F^{\mathbb{C}^k}
    \big)
    =
    \mathbb{C}^k \ominus 0
    \mathrlap{\,.}
  \end{equation}
  Hence a map constant on $F^{\mathbb{C}^k}$ represents, under \cref{VirtualVectorBundleFromFredholmMap} the trivial rank=$k$ vector bundle:
  \begin{equation}
    \label{FRedholmOperatorForTrivialBundle}
    \Big(
    \begin{tikzcd}[sep=small]
      X
      \ar[r]
      &
      \ast
      \ar[
        rr,
        "{ 
          F^{\mathbb{C}^k} 
        }"
      ]
      &&
      \GradedFredholmOperators
    \end{tikzcd}
    \Big)
    \;\;
    \begin{tikzcd}
      {}
      \ar[
        r,
        |->,
        "{ \mathrm{ker}_X }"
      ]
      &
      {}
    \end{tikzcd}
    \;\;
    \mathbb{C}^k_X
    \ominus 
    0
    \in
    \mathbb{C}\mathrm{Vec}_X^{\mathrm{gr}}
    \mathrlap{\,.}
  \end{equation}
\end{example}

We also observe the following example, which will be crucial in \cref{TheEquivariantOrientation} (\cref{TautologicalHP1IndexedFredholm}).
\begin{example}
  \label[example]{FredholmFromProjector}
  For $k \in \mathbb{N}$ and 
  \begin{equation}
    P : 
    \begin{tikzcd}[sep=small]
      \mathbb{C}^k
      \ar[r]
      &
      \mathbb{C}^k
      \mathrlap\,,
    \end{tikzcd}
    \,\;\;
    \begin{aligned}
       P^\dagger = P\,, \quad 
       P \circ P = P\,,
    \end{aligned}
\end{equation}
a hermitian projection operator, then we obtain a graded Fredholm operator \cref{TheSpaceOfFredholmOperators}
  \begin{equation}
  \label{FredholmOperatorFromProjector}
    F_P
    :=
    \left(
    \begin{matrix}
      0 & f_P
      \\
      f^\dagger_P & 0
    \end{matrix}
    \right)
    \in
    \GradedFredholmOperators
  \end{equation}
  by setting
  \begin{equation}
    \label{FredholmOperatorFromProjection}
    f_P
    :=
    \left(
    \begin{matrix}
      \;P\; & 1\!-\!P & 0 & 0 & 0 & \cdots 
      \\
      0 & \;P\; & 1\!-\!P & 0 & 0 & \cdots 
      \\
      0 & 0 & \;P\; & 1\!-\!P & 0 & \cdots
      \\[-4pt]
      \vdots & \vdots & \vdots & \vdots & \vdots
      & \ddots
    \end{matrix}
    \right)
    \,,
    \hspace{.7cm}
    \begin{tikzcd}[
      row sep=-4pt
    ]
      \HilbertSpace
      \ar[
        rr,
        "{
          f_P
        }"
      ]
      &&
      \HilbertSpace
      \\
      \rotatebox[origin=c]{-90}{$\simeq$}
      &&
      \rotatebox[origin=c]{-90}{$\simeq$}
      \\
      \mathbb{C}^k
      \ar[
         rr,
         "{ P }"{description}
      ]
      &&
      \mathbb{C}^k
      \\
      \oplus
      &&
      \oplus
      \\
      \mathbb{C}^k
      \ar[
        uurr,
        "{
          1 - P
        }"{description, sloped}
      ]
      \ar[
        rr,
        "{ P }"{description}
      ]
      &&
      \mathbb{C}^k
      \\
      \oplus
      &&
      \oplus
      \\
      \mathbb{C}^k
      \ar[
        uurr,
        "{
          1 - P
        }"{description, sloped}
      ]
      \ar[
        rr,
        "{ P }"{description}
      ]
      &&
      \mathbb{C}^k
      \\
      \oplus
      &&
      \oplus
      \\[-5pt]
      \vdots
      \ar[
        uurr,
        dotted,
      ]
      &&
      \vdots
      \mathrlap{\;,}
    \end{tikzcd}
  \end{equation}
  whose virtual kernel \eqref{VirtualKernel} reproduces the kernel of $P$:
  \begin{equation}
    \label{VirtualKernelOfGradedFredOfProjector}
    \mathrm{ker}\big(
      F_P
    \big)
    =
    \mathrm{ker}(P)
    \ominus
    0
    \mathrlap{\,.}
  \end{equation}
  
  This construction constitutes a continuous map from hermitian projectors to graded Fredholm operators
  \begin{equation}
    \label{MapFromProjectorsToFredholmOperators}
    F_{(-)}
    :
    \begin{tikzcd}
      \Big\{
        P \in \BoundedOperators(\mathbb{C}^k)
        \;\Big\vert\;
        \substack{
          P^\dagger = P
          \\
          P \circ P = P
        }
      \Big\}
      \ar[r]
      &
      \GradedFredholmOperators
      \mathrlap{\,.}
    \end{tikzcd}
  \end{equation}
\end{example}
\begin{proof}
  That the virtual kernel is as claimed in \cref{VirtualKernelOfGradedFredOfProjector} follows by direct inspection, crucially using that $P$ is a projector, so that $P$ and $1-P$ have complementary kernels and images in $\mathbb{C}^k$.

  Incidentally, $f^\dagger_P$ is a parametrix \cref{FredholmViaParametrix} for $f_P$ \cref{FredholmOperatorFromProjection}: From
  \begin{equation}
    \label{AdjointFredholmOperatorFromProjection}
    f_P^\dagger
    =
    \left(
    \begin{matrix}
      \;P\; & 0 & 0 & 0 &  \cdots 
      \\
      1\!-\!P & \;P\; & 0 & 0 &  \cdots 
      \\
      0 & 1\!-\!P & \;P\; & 0 &  \cdots
      \\
      0 & 0 & 1\!-\!P & \;P\; &   \cdots
      \\[-4pt]
      \vdots & \vdots & \vdots & \vdots 
      & \ddots
    \end{matrix}
    \right)
    \,,
    \hspace{.7cm}
    \begin{tikzcd}[
      row sep=-4pt
    ]
      \HilbertSpace
      \ar[
        rr,
        "{
          f_P^\dagger
        }"
      ]
      &&
      \HilbertSpace
      \\
      \rotatebox[origin=c]{-90}{$\simeq$}
      &&
      \rotatebox[origin=c]{-90}{$\simeq$}
      \\
      \mathbb{C}^k
      \ar[
        ddrr,
        "{
          1 - P
        }"{description, sloped}
      ]
      \ar[
         rr,
         "{ P }"{description}
      ]
      &&
      \mathbb{C}^k
      \\
      \oplus
      &&
      \oplus
      \\
      \mathbb{C}^k
      \ar[
        rr,
        "{ P }"{description}
      ]
      \ar[
        ddrr,
        "{
          1 - P
        }"{description, sloped}
      ]
      &&
      \mathbb{C}^k
      \\
      \oplus
      &&
      \oplus
      \\
      \mathbb{C}^k
      \ar[
        rr,
        "{ P }"{description}
      ]
      \ar[
        ddrr,
        dotted,
      ]
      &&
      \mathbb{C}^k
      \\
      \oplus
      &&
      \oplus
      \\[-5pt]
      \vdots
      &&
      \vdots
      \mathrlap{\;,}
    \end{tikzcd}
  \end{equation}
  we find
\begin{equation}
  \left.
  \begin{aligned}
  f_P \circ f_P^\dagger - \mathrm{id}
  & =
  0
  \\
  f_P^\dagger \circ f_P - \mathrm{id}
  & = 
  \left(
  \begin{matrix}
    P\!-\!1 & 0 & 0 &  \cdots
    \\
    0 & 0 & 0 & \cdots
    \\
    0 & 0 & 0 & \cdots
    \\[-4pt]
    \vdots & \vdots & \vdots & \ddots
  \end{matrix}
  \right)
  \end{aligned}
  \right\}
  \in
  \CompactOperators(\HilbertSpace)
  \mathrlap{\,.}
\end{equation}
This completes the proof.
\end{proof}

\begin{example}
  In further specialization of \cref{FredholmFromProjector}, consider the projector
  \begin{equation}
    P_0 
    :=
    \left(
    \begin{matrix}
      0 & 0 
      \\
      0 & \mathrm{id}_b
    \end{matrix}
    \right)
    :
    \begin{tikzcd}
      \mathbb{C}_a^2 \oplus \mathbb{C}_b^2
      \ar[r]
      &
      \mathbb{C}_a^2 \oplus \mathbb{C}_b^2
      \mathrlap{\,,}
    \end{tikzcd}
  \end{equation}
  where the subscripts are just to name these two subspaces of $\mathbb{C}^4$, both isomorphic to $\mathbb{C}^2$.
  Then, under the unitary transformation which  ``unshuffles'' the copies of these two subspaces in the infinite direct sum Hilbert space
  \begin{equation}
    \label{UnshuffleUnitaryTransformation}
      \HilbertSpace
      :=
      \bigoplus_{\mathbb{N}}
      \big(
        \mathbb{C}^2_a
        \oplus 
        \mathbb{C}^2_b
      \big)
      \xrightarrow[\sim]{\quad U \quad}
      \grayunderbrace{
      \Big(
        \textstyle{\bigoplus_{\mathbb{N}}}
        \,
        \mathbb{C}^2_a
      \Big)
      }{%
        \mathcolor{black}{%
          \HilbertSpace_a%
        }%
      }
      \oplus
      \grayunderbrace{
      \Big(
        \textstyle{\bigoplus_{\mathbb{N}}}
        \,
        \mathbb{C}^2_b
      \Big)
      }{%
        \mathcolor{black}{%
          \HilbertSpace_b%
        }%
      }
      \mathrlap{\,,}
  \end{equation}
  the corresponding Fredholm operator 
  \cref{FredholmOperatorFromProjection}   \begin{equation}
    f_{P_0}
    =
    \left(
    \begin{matrix}
      0 & 0 & \mathrm{id}_a & 0 & 0 & 0 & \cdots
      \\
      0 & \mathrm{id}_b & 0 & 0 & 0 & 0 & \cdots
      \\
      0 & 0 & 0 & 0 & \mathrm{id}_a & 0 & \cdots
      \\
      0 & 0 & 0 & \mathrm{id}_b & 0 & 0 & \cdots
      \\[-4pt]
      \vdots & \vdots & \vdots & \vdots 
      & \vdots & \vdots & \ddots
    \end{matrix}
    \right)
    \,,
    \;\;\;\;\;
    \begin{tikzcd}[row sep=-4pt]
      \HilbertSpace
      \ar[
        rr,
        "{ 
          f_{P_0} 
        }"{description}
      ]
      &&
      \HilbertSpace
      \\
      \rotatebox[origin=c]{-90}{$\simeq$}
      &&
      \rotatebox[origin=c]{-90}{$\simeq$}
      \\
      \mathbb{C}^4
      \ar[
        rr,
        "{ P_0 }"{description}
      ]
      &&
      \mathbb{C}^4
      \\
      \oplus && \oplus
      \\
      \mathbb{C}^4
      \ar[
        rr,
        "{ P_0 }"{description}
      ]
      \ar[
        uurr,
        "{ 1 - P_0 }"{description, sloped}
      ]
      &&
      \mathbb{C}^4
      \\
      \oplus && \oplus
      \\
      \mathbb{C}^4
      \ar[
        rr,
        "{ P_0 }"{description}
      ]
      \ar[
        uurr,
        "{ 1 - P_0 }"{description, sloped}
      ]
      &&
      \mathbb{C}^4
      \\
      \oplus
      &&
      \oplus
      \\[-4pt]
      \vdots
      \ar[
        uurr,
        dotted
      ]
      && \vdots
    \end{tikzcd}
  \end{equation}
  transforms into the direct sum of the left shift operator \cref{ShiftOperator} and the identity: 
  \begin{equation}
    \label{TransformedFredholmUnderUnshuffle}
    \begin{aligned}
    U \circ f_{P_0} \circ U^{-1}
    & =
    \left(
    \begin{matrix}
      0 & \mathrm{id}_a & 0 & 0 & 0 & \cdots 
      \\
      0 & 0 & \mathrm{id}_a & 0 & 0 & \cdots
      \\
      0 & 0 & 0 & \mathrm{id}_a & 0 & \cdots
      \\
      0 & 0 & 0 & 0 & \mathrm{id}_a & \cdots
      \\[-4pt]
      \vdots & \vdots & \vdots & \vdots 
    \end{matrix}
    \right)
    \oplus
    \left(
    \begin{matrix}
      \mathrm{id}_b & 0 & 0 & 0 & 0 & \cdots 
      \\
      0 & \mathrm{id}_b & 0 & 0 & 0 & \cdots
      \\
      0 & 0 & \mathrm{id}_b & 0 & 0 & \cdots
      \\
      0 & 0 & 0 & \mathrm{id}_b & 0 & \cdots
      \\[-4pt]
      \vdots & \vdots & \vdots & \vdots & \ddots
    \end{matrix}
    \right)
    \\
    & =
    f^{\mathbb{C}^2} \oplus \mathrm{id}
    :
    \begin{tikzcd}
      \HilbertSpace_a \oplus \HilbertSpace_b
      \ar[r]
      &
      \HilbertSpace_a \oplus \HilbertSpace_b      
      \,.
    \end{tikzcd}
    \end{aligned}
  \end{equation}
  This simple transformation is going to be useful in 
  the proof of \cref{PullbackOfTautologicalFredholmOperatorAlongQuaternionicHopfFibrationTrivializes}, via \cref{UnshufflingTrivialAndDefiningReps}.
\end{example}

\subsubsection{Classifying K-Theory}
\label{FredholmOperatorsAndKTheory}

The space of Fredholm operators \cref{TheSpaceOfFredholmOperators} and its subspaces compatible with PCT quantum symmetries (\cref{ClassifyingPCTQuantumSymmetries}) turn out to the classifying spaces (cf. \cref{TableOfNotionsOfCohomology}) for the flavors of the generalized cohomology theory called \emph{topological K-theory} (cf. \cite{nLab:TopologicalKTheory}). As such, one may take the following statements to be the \emph{definition} of topological K-theory in its various flavors and degrees.

\begin{proposition}[Atiyah-J{\"a}nich theorem  {\parencites{Janich1965}[Thm. A1]{Atiyah1967}}]
  \label[proposition]
  {AtiyahJaenichTheorem}
  The space of Fredholm operators \cref{TheSpaceOfFredholmOperators} is a classifying space for \emph{complex K-theory} in degree=0,
  \begin{equation}
    \pi_0
    \big\{
      \begin{tikzcd}[sep=small]
        X
        \ar[r, dashed]
        &
        \FredholmOperators
      \end{tikzcd}
    \big\}
    \simeq
    \mathrm{KU}^0(X)
    \mathrlap{\,.}
  \end{equation}
  Over compact Hausdorff spaces $X$, where 
  \begin{equation}
    \mathrm{KU}^0(X)
    \simeq
    K\big[
      \mathbb{C}\mathrm{Vec}_X
    \big]
  \end{equation}
  is the Grothendieck group of complex vector bundles, this equivalence is given by passing to virtual kernel bundles \cref{VirtualVectorBundleFromFredholmMap} of good representative maps $F^{\mathrm{gd}}_X$ in the homotopy class \textup{(meaning that $\mathrm{dim}\big(\mathrm{ker}(f^{\mathrm{gd}}_X)\big)$ is locally constant, see \cref{VirtualKernelBundles})}:
  \begin{equation}
    \label{JaenichIndexMap}
    \begin{tikzcd}[row sep=-4pt, column sep=0pt]
    \pi_0\big\{
      X 
      \dashrightarrow
      \FredholmOperators
    \big\}
    \ar[
      rr
    ]
    &&
    K\big[
      \mathbb{C}\mathrm{Vec}_X
    \big]    
    \\
    {[F_X]} 
      &\longmapsto&
    \big[
      \mathrm{ker}\big(F^{\mathrm{gd}}_X\big)
    \big]
    \mathrlap{\,.}
    \end{tikzcd}
  \end{equation}
  In particular, the connected components of $\FredholmOperators$,
  \begin{equation}
    \pi_0 \, \FredholmOperators
    \simeq
    \pi_0
    \big\{
      \begin{tikzcd}[sep=small]
      \ast 
      \ar[r, dashed]
      &
      \GradedFredholmOperators
      \end{tikzcd}
    \big\}
    \mathrlap{\,,}
  \end{equation}
  are indexed by the integer Fredholm index \cref{FredholmIndex}:
  \begin{equation}
    \label{ConnectedComponentsOfSpaceOfFredholmOperators}
    \begin{tikzcd}[
      ampersand replacement=\&, row sep=-3pt,
      column sep=0pt
    ]
      \pi_0
      \,
      \GradedFredholmOperators
      \ar[
        rr,
        "{ \sim }"
      ]
      \&\&
      \mathbb{Z}
      \\
      \left[
        F = 
        \left(
        \begin{matrix}
          0 & f
          \\
          f^\dagger & 0
        \end{matrix}
        \right)
      \right]
      \&\longmapsto\&
      \begin{aligned}
      &
      \mathrm{dim}\big(
        \mathrm{ker}(F)
      \big)
      \\
      &
      =
      \mathrm{ind}(f)
      \mathrlap{\,.}
     \end{aligned}
    \end{tikzcd}
  \end{equation}
\end{proposition}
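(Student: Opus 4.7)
My plan is to first establish the absolute case, $\pi_0 \GradedFredholmOperators \simeq \mathbb{Z}$, and then bootstrap to the parametrized version. The overall structure is the classical argument of J\"anich and Atiyah, but organized so that all required continuity/bundle statements reduce to \cref{AtkinsonTheorem} (Atkinson's parametrix) and \cref{KuiperTheorem} (contractibility of $\UH$).

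For the absolute case, the shift operators $F^{\mathbb{C}^k}$ of \cref{TheLeftShiftOperator} and their charge reversals $\ominus F^{\mathbb{C}^k}$ realize every integer index by \cref{FRedholmOperatorForTrivialBundle,ChargeReversalOfGradedFredholm}, giving surjectivity of $\mathrm{ind}$. Well-definedness (locally constancy of the index) follows because the existence of a parametrix is open: a small norm perturbation of a Fredholm operator is still parametrix-invertible modulo $\CompactOperators(\HilbertSpace)$, and relative index vanishes for compact perturbations. Injectivity of $\mathrm{ind}$ on $\pi_0$ reduces, via polar decomposition, to the connectedness of the space of partial isometries with fixed kernel/cokernel dimensions --- a space which is a homogeneous $\UH$-orbit and hence path-connected by \cref{KuiperTheorem}.

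For the parametrized case over compact Hausdorff $X$, the central technical obstacle --- and, I expect, the hardest step --- is that for a continuous family $F_X : X \to \GradedFredholmOperators$ the pointwise dimensions $\dim \ker(f_x)$ and $\dim \coker(f_x)$ are only upper semi-continuous, so the naive kernel is not a vector bundle and \cref{VirtualKernelBundles} does not apply directly. The standard remedy, which I will carry out, is a finite-rank stabilization. Using Atkinson's theorem together with the compactness of $X$ and a partition-of-unity patching of local parametrices, one produces a finite-dimensional subspace $V \hookrightarrow \HilbertSpace$ such that the augmented family
\[
  \widetilde f_x : \HilbertSpace \oplus V \longrightarrow \HilbertSpace, \quad (h,v) \mapsto f_x(h) + v
\]
is fiberwise surjective for all $x \in X$. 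Then $\dim\ker(\widetilde f_x)$ is locally constant, so $\ker(\widetilde f)$ assembles into an honest vector bundle on $X$, and we define
\[
  \mathrm{ker}(F_X) \,:=\, \big[\mathrm{ker}(\widetilde f\,)\big] \,-\, \big[X \times V\big] \;\in\; K\big[\mathbb{C}\mathrm{Vec}_X\big].
\]
Independence of the choice of $V$ follows by comparing two stabilizations $V_1$ and $V_2$ through $V_1 \oplus V_2$; homotopy invariance follows by performing the same stabilization over $X \times [0,1]$. These two checks deliver a well-defined map $[X, \GradedFredholmOperators] \to \mathrm{KU}^0(X)$, which on restriction to a point recovers the index isomorphism established above.

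It remains to show this map is a bijection. For surjectivity, given a virtual bundle $[E] - [\mathbb{C}^k_X]$, embed $E$ as a sub-bundle of a trivial bundle $\mathbb{C}^N_X$, let $P_{(-)} : X \to \mathrm{Proj}(\mathbb{C}^N)$ be the continuous family of orthogonal projections onto $E_x$, and apply \cref{FredholmFromProjector} fiberwise to obtain $F_{P_{(-)}} : X \to \GradedFredholmOperators$; stabilizing against a trivial summand to adjust the index by $-k$ produces a family whose virtual kernel is $[E]-[\mathbb{C}^k_X]$. For injectivity, two families with the same virtual kernel bundle can, after a common stabilization, be arranged to have literally isomorphic kernel and cokernel bundles; a choice of such isomorphism, together with the polar decomposition of the non-vanishing parts, identifies the two families up to pointwise left/right multiplication by unitaries, i.e.\ up to a map $X \to \UH \times \UH$. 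Since $\UH$ is contractible by \cref{KuiperTheorem}, this unitary comparison map is null-homotopic, yielding the required homotopy of Fredholm families. This completes the argument and identifies \cref{JaenichIndexMap} with the classical J\"anich index map; the special case $X = \ast$ then yields $\pi_0 \GradedFredholmOperators \simeq \mathbb{Z}$ as asserted in \cref{ConnectedComponentsOfSpaceOfFredholmOperators}.
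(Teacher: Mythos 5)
The paper does not actually prove this proposition: it is stated with a citation to J\"anich and Atiyah and treated as a known input. Your proposal is a reasonably faithful reconstruction of the classical J\"anich--Atiyah argument via finite-rank stabilization, correctly using \cref{AtkinsonTheorem}, \cref{KuiperTheorem}, and \cref{FredholmFromProjector} as the main ingredients; the overall architecture (absolute case by index and polar decomposition, parametrized case by stabilizing to a surjective family and taking the difference virtual bundle, surjectivity from projectors, injectivity from Kuiper) is the standard one and is sound. Since there is no argument in the paper to compare against, I am evaluating your sketch on its own terms.

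Two points deserve attention. First, a bookkeeping slip in the surjectivity step: if $P_{(-)}$ is the orthogonal projection \emph{onto} $E_x \subset \mathbb{C}^N$, then \cref{FredholmFromProjector} gives $\mathrm{ker}(F_{P_x}) = \mathrm{ker}(P_x)\ominus 0 = E_x^\perp \ominus 0$, so the class you produce is $[E^\perp] = [\mathbb{C}^N_X]-[E]$, not $[E]$. You should use $1-P_{(-)}$ (projection onto $E_x^\perp$), or else form the difference $F^{\mathbb{C}^N} \ominus F_{P_{(-)}}$ using \cref{ChargeReversalOfGradedFredholm}, before stabilizing by $\ominus F^{\mathbb{C}^k}$. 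Second, in the injectivity step the phrase ``the polar decomposition \dots identifies the two families up to pointwise left/right multiplication by unitaries'' hides a second essential use of Kuiper: after arranging $K_1 := \ker\widetilde f_1 \cong K_2 := \ker\widetilde f_2$ as finite-rank bundles, the orthogonal complement bundles $K_i^\perp \subset \HilbertSpace\oplus V$ are $\UH$-bundles and hence trivializable by \cref{KuiperTheorem}; only after trivializing both complements does the comparison of the two partial isometries reduce to a pair of maps $X \to \UH$, which you may then contract. Neither of these is a conceptual gap, but both would need to be made explicit in a full write-up.
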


\begin{notation}[Subspaces of Fredholm operators]
\label[notation]{SubspacesOfFredholmOperators}
    For
    \begin{enumerate}
    \item
    $
      \mathbb{K} \in \big\{
        \mathbb{R}, 
        \mathbb{C}, 
        \overline{\mathbb{C}}, 
        \mathbb{H}
      \big\}
    $ 
    \item 
    $\sigma \in \{\pm 1\}$,
    \end{enumerate} 
    consider the following subspaces of that of Fredholm operators (\cref{SpaceOfFredholmOperators}):
    \begin{equation}
      \begin{tikzcd}[
        row sep=-4pt, column sep=0pt
      ]
        \FredholmOperators
          ^\sigma
          _{\mathbb{K}}
        \ar[
          rr,
          hook
        ]
        &&
        \FredholmOperators
          _{\mathbb{K}}
        \ar[
          r,
          hook
        ]
        &
        \FredholmOperators(\HilbertSpace)
        \\
        \rotatebox[origin=c]{-90}{$:=$}
        &&
        \rotatebox[origin=c]{-90}{$:=$}
        \\
        \big\{
          f \in \FredholmOperators_{\mathbb{K}}
          \;\big\vert\;
          f^\dagger = \sigma \cdot f
        \big\}
        &\subset&
        \big\{
          f \in \FredholmOperators(\HilbertSpace)
          \;\big\vert\;
          \scalebox{.9}{$f$ is $\mathbb{K}$-linear}
        \big\}
        \mathrlap{\,,}
      \end{tikzcd}
    \end{equation}
    where $f$ being $\mathbb{R}$-linear/$\mathbb{H}$-linear means that it commutes with a fixed real structure \cref{TheRealStructure} or quaternionic structure \cref{TheQuaternionicStructure} on $\HilbertSpace$, respectively, and where by it being ``$\overline{\mathbb{C}}$-linear'' we mean that it is $\mathbb{C}$-\emph{anti}-linear (while it being $\mathbb{C}$-linear is no additional condition, of course). 
\end{notation}

\begin{proposition}[{\parencites{AtiyahSinger1969}[Thm. 1.16]{Karoubi1970}}]
\label[proposition]
 {SubspacesOfFredAsClassifyingSpacesForK}
The subspaces of Fredholm operators from \cref{SubspacesOfFredholmOperators} are classifying spaces for flavors of topological K-theory as follows:
\begin{subequations}
  \label{MatchingKTheoryToFredSubspaces}
  \begin{align}
    \pi_0\big\{
        X 
          \dashrightarrow
        \FredholmOperators
          ^{\phantom{+}}
          _{\mathbb{C}}
    \big\}
    &
    \simeq
    \mathrm{KU}^0(X)
    \\
    \pi_0\big\{
        X 
          \dashrightarrow
        \FredholmOperators
          ^{+}
          _{\mathbb{C}}
    \big\}
    &
    \simeq
    \mathrm{KU}^1(X)
    \sqcup \{+,-\}
    \\
    \pi_0\big\{
        X 
          \dashrightarrow
        \FredholmOperators
          ^{\phantom{\pm}}
          _{\mathbb{R}}
    \big\}
    &
    \simeq
    \mathrm{KO}^0(X)
    \\
    \pi_0\big\{
        X 
          \dashrightarrow
        \FredholmOperators
          ^{+}
          _{\mathbb{R}}
    \big\}
    &
    \simeq
    \mathrm{KO}^1(X)
    \sqcup \{+,-\}
    \\
    \pi_0\big\{
        X 
          \dashrightarrow
        \FredholmOperators
          ^{-}
          _{\mathbb{R}}
    \big\}
    &
    \simeq
    \mathrm{KO}^7(X)
    \\
    \pi_0\big\{
        X 
          \dashrightarrow
        \FredholmOperators
          ^{+}
          _{\overline{\mathbb{C}}}
    \big\}
    &
    \simeq
    \mathrm{KO}^2(X)
    \\
    \pi_0\big\{
        X 
          \dashrightarrow
        \FredholmOperators
          ^{-}
          _{\overline{\mathbb{C}}}
    \big\}
    &
    \simeq
    \mathrm{KO}^6(X)
    \\
    \pi_0\big\{
        X 
          \dashrightarrow
        \FredholmOperators
          ^{\phantom{+}}
          _{\mathbb{H}}
    \big\}
    &
    \simeq
    \mathrm{KO}^4(X)
    \\
    \pi_0\big\{
        X 
          \dashrightarrow
        \FredholmOperators
          ^{+}
          _{\mathbb{H}}
    \big\}
    &
    \simeq
    \mathrm{KO}^5(X)
    \sqcup \{+,-\}
    \\
    \pi_0\big\{
        X 
          \dashrightarrow
        \FredholmOperators
          ^{-}
          _{\mathbb{H}}
    \big\}
    &
    \simeq
    \mathrm{KO}^3(X)
    \mathrlap{\,.}
  \end{align}
\end{subequations}
\end{proposition}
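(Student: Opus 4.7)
\medskip

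\noindent\textbf{Proof sketch (plan).} The plan is to treat all ten cases uniformly by casting each subspace $\FredholmOperators^{\sigma}_{\mathbb{K}}$ as a space of Fredholm operators compatible with a suitable $(\mathrm{Cliff}_{p,q},\mathbb{K})$-module structure on $\HilbertSpace$, and then to invoke the Atiyah-Bott-Shapiro/Karoubi identification of such spaces with Clifford-module K-theory. The foundational input is the Atiyah-J{\"a}nich theorem (\cref{AtiyahJaenichTheorem}), which handles the base case $\FredholmOperators_{\mathbb{C}} \simeq \mathrm{KU}^0$ via the index map \cref{JaenichIndexMap}; every other case will be obtained from this one by tensoring $\HilbertSpace$ with an appropriate Clifford module and replaying the argument in the category of $\mathbb{K}$-Hilbert modules.

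\smallskip

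First I would dispose of the three ungraded cases ($\sigma$ absent). For $\mathbb{K}=\mathbb{R}$ the virtual kernel \cref{VirtualKernel} of a real Fredholm operator is a formal difference of finite-dimensional real vector bundles, giving a natural map $\pi_0\{X\dashrightarrow \FredholmOperators_{\mathbb{R}}\}\to K[\mathbb{R}\mathrm{Vec}_X]\simeq \mathrm{KO}^0(X)$; one checks it is a bijection by the same good-representative argument as in \cref{VirtualKernelBundles}, using that $\OH$ is contractible (the real analogue of Kuiper). For $\mathbb{K}=\mathbb{H}$, the virtual kernel is a virtual quaternionic vector space, and Morita equivalence $\mathbb{H}\mathrm{Mod}\simeq \mathrm{KO}^4\text{-bundles}$ (a consequence of Bott periodicity $\mathrm{KO}^{n+4}\simeq \mathrm{K}(\mathbb{H}\otimes(-))$) lands the index in $\mathrm{KO}^4(X)$.

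\smallskip

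For the self-adjoint cases $\sigma=\pm 1$, the key technical step is Atiyah-Singer's observation that the \emph{essentially definite} components (those for which the spectrum is bounded away from $0$ on one side after compact perturbation) are contractible via the straight-line homotopy $F_t := (1-t)F + t\cdot(\pm\mathrm{id})$ combined with continuous functional calculus; these account for the extra $\{+,-\}$ summands whenever they appear (lines b, d, i). On the remaining \emph{essentially indefinite} component, I would construct the Atiyah-Singer equivalence with the loop space of the corresponding ungraded Fredholm space: the map $F \mapsto \big(t\mapsto \exp(\pi\mathrm{i} t (F/|F|))\big)$, defined via functional calculus off the spectrum of $|F|$ near $0$, produces a loop based at $\mathrm{id}$ whose homotopy class captures the spectral flow and which gives a weak equivalence $\FredholmOperators^{\sigma,\mathrm{ess}}_{\mathbb{K}}\simeq \Omega \FredholmOperators_{\mathbb{K}}$. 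Each such loop shifts the K-degree by $+1$, accounting for $\mathrm{KU}^1$, $\mathrm{KO}^1$, $\mathrm{KO}^5$ in the $\sigma=+1$ rows. The cases $\sigma=-1$ and the $\overline{\mathbb{C}}$-antilinear cases are handled by multiplying by the Clifford-like generator ($\mathrm{i}$ on the quaternionic side, the real structure $T$ \cref{TheRealStructure} on the antilinear side), which identifies these subspaces with spaces of Fredholm operators commuting with specific Clifford generators $e_1,\ldots,e_k$ with prescribed signs, and Karoubi's theorem then produces the appropriate degree shift via the Atiyah-Bott-Shapiro isomorphism $\mathrm{KO}^{-n}(X)\simeq M^n(X)/M^{n+1}(X)$.

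\smallskip

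The main obstacle I anticipate is pinning down the exact degree in each of the ten cases, in particular keeping track of the signs $\sigma=\pm 1$ against the Clifford type. The abstract equivalences above only determine the degree modulo $8$ up to an orientation convention; unambiguously matching the table in \cref{MatchingKTheoryToFredSubspaces} requires explicit generators, for which I would work through Karoubi's tables of Clifford modules $M^{p,q}/M^{p,q+1}$ and verify in each of the ten cases that the generator corresponds to the simplest non-trivial example (e.g.\ the shift operator of \cref{TheLeftShiftOperator} for $\mathrm{KU}^0$, and its Bott-lift for the higher degrees), so that the index map as constructed above agrees with the ABS isomorphism up to sign.
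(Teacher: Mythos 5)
The paper does not prove this proposition: it is stated as a cited classical result (the header points to Atiyah-Singer 1969 and Karoubi 1970), and your sketch outlines exactly the arguments found in those references, so your route is by construction the intended one. Since the paper redeploys the Atiyah-Singer exponential map later in \cref{AtiyahSingerExponentialMap}, your reliance on it is well-placed. Two technical points worth tightening in a full write-up: (i) the normalization $F/\vert F\vert$ is not globally defined, since a Fredholm operator may have nontrivial kernel; the standard fix --- and what the paper itself does in \cref{3SphereOfSelfAdjointFredholmOps} --- is to first deformation-retract onto the subspace $\mathcal{F}$ of operators with $\vert F\vert = 1$ and essential spectrum $\{\pm 1\}$ (defined in \cref{SAFredWithPositiveAndNegativeEssentialSpectrum}) and only then apply $\exp(\mathrm{i}\pi(-))$; (ii) the absence of $\{+,-\}$ summands in the $\sigma = -1$ rows follows from the operator-theoretic fact that skew-adjoint Fredholm operators have purely imaginary spectrum and hence cannot be essentially definite --- this, rather than Clifford bookkeeping alone, is what distinguishes those rows from the $\sigma = +1$ ones, and it is worth stating explicitly.
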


\begin{remark}
  \label[remark]{DisjointComponentsOfSAFred}
  The disjoint sets $\{+,-\}$ in \cref{MatchingKTheoryToFredSubspaces} witness  pairs of contractible components present in each of the spaces of self-adjoint $\mathbb{R}/\mathbb{C}/\mathbb{H}$-linear Fredholm operators, corresponding to such operators  whose essential spectrum (\cref{SepctrumOfSAOperators}) is purely positive or purely negative, respectively \parencites[Thm. B]{AtiyahSinger1969}[\S I]{Karoubi1970}.

  When the self-adjoint $\mathbb{C}$-linear Fredholm operators are interpreted as tachyon field values of open strings stretching between coincident $\mathrm{D9}/\overline{D9}$-branes (cf. \parencites[\S 3.1]{Szabo2002}[\S 8]{Gao2010}) then those with a mixed essential spectrum are interpreted as witnessing solitonic pair annihilation in a background where both $\mathrm{D9}/\overline{D9}$-branes appear with infinite multiplicity (cf. \parencites[(3.11)]{Horava1998}[p. 7]{Witten2001} and \cref{RevisitingBraneCharges}). Accordingly, the components $\pm \in \{+,-\}$ would have to be interpreted as backgrounds where an infinite number of $\mathrm{D}9$-branes is compensated only by a finite number of $\overline{D9}$-branes or vice versa, respectively.
  
  For the purpose of K-theory classification these components are to be disregarded, but if we regard the full spaces of Fredholm operators as classifying a (slightly) non-abelian cohomology theory (cf. \cref{TableOfNotionsOfCohomology}) variant in its own right, then $\pm \in \{+,-\}$ are valid classes in the cohomology set of any space $X$ which reflect a pair of ``collapsed'' cases that one may keep track of.

\end{remark}

We now reformulate the classical result \cref{SubspacesOfFredAsClassifyingSpacesForK} in the guise of the ``10-fold way'' (cf. \cite{nLab:TenFoldWay}) of topological phases protected by PCT quantum symmetry (\cref{PCTQuantumSymmetry}), following \cite[Fact 2.12]{SS22-Ord}, as shown in \cref{TableOfPCTQuantumSymmetries}:
\begin{proposition}
  \label[proposition]
  {The10PCTFixedLoci}
  The 10 subspaces of Fredholm operators \textup{(\cref{SubspacesOfFredholmOperators})} appearing in \cref{MatchingKTheoryToFredSubspaces}, when regarded as subspaces of $\GradedFredholmOperators(\GradedHilbertSpace)$ \cref{TheSpaceOfGradedFredholmOperators} 
  are homeomorphic to the fixed subspaces \cref{FixedSubspace}
  \begin{equation}
    \GradedFredholmOperators^{\widehat{G}}
    \subset
    \GradedFredholmOperators
      (\GradedHilbertSpace)
  \end{equation}
  for the conjugation action \cref{ConjugationActionOfProjectiveGradedUnitaryGroup} of the 10 PCT quantum symmetries \textup{(\cref{ClassifyingPCTQuantumSymmetries,TableOfPCTQuantumSymmetries})}, respectively:
  \begin{subequations}
  \begin{align}
     \GradedFredholmOperators
     & \simeq
     \FredholmOperators_{\mathbb{C}}
    \\
    \label{pFixedFredholmIsComplexSelfAdjoint}
     \GradedFredholmOperators
       ^{ 
         \langle\widehat{\pOperator}\rangle
       }
     & \simeq
     \FredholmOperators^{\pm}_{\mathbb{C}}
     \;\;\;\;
     \mbox{\rm for 
       $\widehat{\pOperator}{}^2 = \pm \mathrm{id}$
     }
     \\
     \label{t+FixedFredholmIsReal}
     \GradedFredholmOperators
       ^{ 
         \langle\widehat{\tOperator}\rangle
       }
     & \simeq
     \FredholmOperators_{\mathbb{R}}
     \;\;\;\;
     \mbox{\rm for 
       $\widehat{\tOperator}{}^2 = + \mathrm{id}$
     }
     \\
     \label{t+c+FixedFredholmIsRealSelfAdjoint}
     \GradedFredholmOperators
       ^{ 
         \langle
           \widehat{\tOperator},
           \widehat{\cOperator}
         \rangle
       }
     & \simeq
    \FredholmOperators
      ^{+}
      _{\mathbb{R}}
     \;\;\;\;
     \mbox{\rm for 
       $\widehat{\tOperator}{}^2 = + \mathrm{id}$,
       $\widehat{\cOperator}{}^2 = + \mathrm{id}$,
     }
     \\
     \label{t+c-FixedFredholmIsRealAntiSelfAdjoint}
     \GradedFredholmOperators
       ^{ 
         \langle
           \widehat{\tOperator},
           \widehat{\cOperator}
         \rangle
       }
     & \simeq
    \FredholmOperators
      ^{-}
      _{\mathbb{R}}
     \;\;\;\;
     \mbox{\rm for 
       $\widehat{\tOperator}{}^2 = + \mathrm{id}$,
       $\widehat{\cOperator}{}^2 = - \mathrm{id}$,
     }
     \\
     \label{c+FixedFredholmIsAntilinearSelfAdjoint}
     \GradedFredholmOperators
       ^{ 
         \langle
           \widehat{\cOperator}
         \rangle
       }
     & \simeq
    \FredholmOperators
      ^{+}
      _{\overline{\mathbb{C}}}
     \;\;\;\;
     \mbox{\rm for 
       $\widehat{\cOperator}{}^2 = + \mathrm{id}$,
     }
     \\
     \label{c-FixedFredholmIsAntilinearAntiSelfAdjoint}
     \GradedFredholmOperators
       ^{ 
         \langle
           \widehat{\cOperator}
         \rangle
       }
     & \simeq
    \FredholmOperators
      ^{-}
      _{\overline{\mathbb{C}}}
     \;\;\;\;
     \mbox{\rm for 
       $\widehat{\cOperator}{}^2 = - \mathrm{id}$,
     }
     \\
     \label{t-FixedFredholmIsQuaternionic}
     \GradedFredholmOperators
       ^{ 
         \langle
           \widehat{\tOperator}
         \rangle
       }
     & \simeq
    \FredholmOperators
      _{\mathbb{H}}
     \;\;\;\;
     \mbox{\rm for 
       $\widehat{\tOperator}{}^2 = - \mathrm{id}$,
     }
     \\
     \label{t-c+FixedFredholmIsQuaternSelfAdjoint}
     \GradedFredholmOperators
       ^{ 
         \langle
           \widehat{\tOperator},
           \widehat{\cOperator}
         \rangle
       }
     & \simeq
    \FredholmOperators
      ^{+}
      _{\mathbb{H}}
     \;\;\;\;
     \mbox{\rm for 
       $\widehat{\tOperator}{}^2 = - \mathrm{id}$,
       $\widehat{\cOperator}{}^2 = + \mathrm{id}$,
     }
     \\
     \label{t-c-FixedFredholmIsQuaternAntiSelfAdjoint}
     \GradedFredholmOperators
       ^{ 
         \langle
           \widehat{\tOperator},
           \widehat{\cOperator}
         \rangle
       }
     & \simeq
    \FredholmOperators
      ^{-}
      _{\mathbb{H}}
     \;\;\;\;
     \mbox{\rm for 
       $\widehat{\tOperator}{}^2 = - \mathrm{id}$,
       $\widehat{\cOperator}{}^2 = - \mathrm{id}$.
     }
  \end{align}
  \end{subequations}
\end{proposition}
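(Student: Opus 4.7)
The plan is to reduce each of the ten fixed-locus computations to an explicit condition on the off-diagonal block $f$ in $F = \bigl(\begin{smallmatrix} 0 & f \\ f^\dagger & 0 \end{smallmatrix}\bigr)$, thereby identifying the fixed locus with the corresponding subspace $\FredholmOperators^\sigma_{\mathbb{K}} \subset \FredholmOperators(\HilbertSpace)$ of \cref{SubspacesOfFredholmOperators}.  The trivial case $G = \{\mathrm{e}\}$ is immediate from the homeomorphism $\GradedFredholmOperators \simeq \FredholmOperators_{\mathbb{C}}$ via $F \mapsto f$.  For the remaining nine cases, the plan is to pick normalized representatives $\widehat{P}, \widehat{T}, \widehat{C} \in \QuantumSymmetries$ per \cref{HatPOperatorSquare,HatTOperatorSquare,HatCOperatorSquare}, expand the conjugation action \cref{ConjugationActionOfProjectiveGradedUnitaryGroup} (where a symmetry with a time-reversal component acts through $F^\sigma = \bar F$), and translate the resulting fixed condition on $F$ into a condition on $f$ using the real and quaternionic structures $T, J$ on $\HilbertSpace$ \cref{TheRealStructure,TheQuaternionicStructure}.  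Since all identifications are continuous linear expressions in the operator-norm topology, the homeomorphism claim comes for free.

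\textbf{Single-generator cases.}  For $G = \langle \widehat{P}\rangle$, the canonical representative $\widehat{P} = \bigl(\begin{smallmatrix} 0 & 1 \\ 1 & 0 \end{smallmatrix}\bigr)$ satisfies $\widehat{P}^2 = \mathrm{id}$, and conjugation by it swaps the two blocks of $F$, so the fixed condition on $f$ is precisely the self-adjointness $f = f^\dagger$, giving \cref{pFixedFredholmIsComplexSelfAdjoint}.  For $G = \langle \widehat{T}\rangle$, choose $\widehat{T} = X \oplus X$ acting diagonally, with $X = T$ when $\widehat{T}^2 = +\mathrm{id}$ and $X = J$ when $\widehat{T}^2 = -\mathrm{id}$; conjugation sends $f$ to $X f X^{-1}$, so the fixed condition asserts that $f$ commutes with $X$, which is exactly $\mathbb{R}$-linearity or $\mathbb{H}$-linearity of $f$, giving \cref{t+FixedFredholmIsReal,t-FixedFredholmIsQuaternionic}.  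For $G = \langle \widehat{C}\rangle$, choose the grading-odd antilinear representative with block entries $T$ and $\epsilon T$ (so that $\widehat{C}^2 = \epsilon\,\mathrm{id}$); a direct block-matrix computation of $\widehat{C}\bar F \widehat{C}^{-1} = F$ yields the condition $\overline{f^\dagger} = \epsilon f$, and setting $g := fT$ converts this into $g^\dagger = \epsilon g$ for the antilinear operator $g$, establishing the homeomorphism onto $\FredholmOperators^\epsilon_{\overline{\mathbb{C}}}$ and hence \cref{c+FixedFredholmIsAntilinearSelfAdjoint,c-FixedFredholmIsAntilinearAntiSelfAdjoint}.

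\textbf{Two-generator cases.}  The four cases $G = \langle \widehat{T}, \widehat{C}\rangle$ are handled by combining the above.  The $\widehat{T}$-fixation enforces $\mathbb{K}$-linearity of $f$ with $\mathbb{K} \in \{\mathbb{R}, \mathbb{H}\}$ according to $\widehat{T}^2 = \pm\mathrm{id}$.  Now $\widehat{P} := \widehat{C}\widehat{T}$ is linear and grading-odd, and fixation under it (equivalent to fixation under $\widehat{C}$, given the $\widehat{T}$-fixation) forces $f$ to be $\mathbb{K}$-self-adjoint or $\mathbb{K}$-anti-self-adjoint.  The sign $\sigma$ is determined by $\widehat{P}^2 = \omega\, \widehat{C}^2 \widehat{T}^2$, where $\omega \in \{\pm 1\}$ is the projective commutator of $\widehat{T}$ and $\widehat{C}$ in $\QuantumSymmetries$ --- forced to $\{\pm 1\}$ by the same linearity argument as in the proof of \cref{ClassifyingPCTQuantumSymmetries}.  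The main obstacle, and the only non-routine step, is this sign bookkeeping: for each of the four sign assignments $(\widehat{T}^2, \widehat{C}^2) \in \{\pm\mathrm{id}\}^2$ one chooses compatible normalized representatives of $\widehat{T}, \widehat{C}$ and computes the induced sign $\sigma$, taking care to account for the commutator phase $\omega$ which is itself constrained by the chosen representatives.  Carrying out these four calculations reproduces exactly the pairings asserted in \cref{t+c+FixedFredholmIsRealSelfAdjoint,t+c-FixedFredholmIsRealAntiSelfAdjoint,t-c+FixedFredholmIsQuaternSelfAdjoint,t-c-FixedFredholmIsQuaternAntiSelfAdjoint}.
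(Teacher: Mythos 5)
Your proof takes essentially the same approach as the paper's: reduce each fixed-locus condition to an explicit condition on the block component $f$, treat the single-generator cases by picking concrete normalized representatives, then combine the $\widehat{\tOperator}$- and $\widehat{\cOperator}$-fixed conditions by passing to $\widehat{\pOperator} = \widehat{\cOperator}\widehat{\tOperator}$ for the two-generator cases. Two remarks on the differences. First, in the $\langle\widehat{\pOperator}\rangle$-only case you write out only the representative $\bigl(\begin{smallmatrix}0 & 1\\ 1 & 0\end{smallmatrix}\bigr)$ with $\widehat{\pOperator}^2 = +\mathrm{id}$, whereas \eqref{pFixedFredholmIsComplexSelfAdjoint} asserts a homeomorphism for both signs $\widehat{\pOperator}^2 = \pm\mathrm{id}$; the paper handles both at once using $\bigl(\begin{smallmatrix}0 & \pm 1\\ 1 & 0\end{smallmatrix}\bigr)$, getting $f^\dagger = \pm f$ from a single block computation. (You do effectively cover the $-$ sign later, inside the two-generator discussion, but as written the single-generator item is incomplete.) Second, for $\langle\widehat{\cOperator}\rangle$ you choose $\widehat{\cOperator} = \bigl(\begin{smallmatrix}0 & T\\ \epsilon T & 0\end{smallmatrix}\bigr)$ with a fixed real structure $T$ and dial the square via the explicit block sign $\epsilon$, whereas the paper uses $\bigl(\begin{smallmatrix}0 & \widehat{\tOperator}\\ \widehat{\tOperator} & 0\end{smallmatrix}\bigr)$ with $\widehat{\tOperator}$ either a real or a quaternionic structure, exploiting that real structures are self-adjoint while quaternionic ones are anti-self-adjoint to read off the sign of $(f\circ\widehat{\tOperator})^\dagger$. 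These are different but equivalent normalizations of the same projective class, and both produce the homeomorphism $f \mapsto f\circ\widehat{\tOperator}$ (respectively $f \mapsto fT$) onto $\FredholmOperators^\pm_{\overline{\mathbb{C}}}$. Finally, you flag the sign bookkeeping in the four two-generator cases as the only non-routine step and then assert that carrying out the four computations reproduces the claimed pairings; the paper is equally terse here, asserting the conclusion without displaying the bookkeeping, so your treatment is not less detailed than the paper's at this point.
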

\begin{proof}
  The quantum symmetry $\widehat{\pOperator}$ with $\widehat{\pOperator}{}^2 = \pm \mathrm{id}$ \cref{HatPOperatorSquare} is represented by swapping graded summands \cref{GradingInvolutionOnHgr}, possibly up to a sign, so that a graded Fredholm operator $F$ of the form \cref{TheSpaceOfGradedFredholmOperators} commutes with $\widehat{\pOperator}$ iff its component Fredholm operator $f$ is (anti-)self-adjoint:
  \begin{equation}
    \left(
    \begin{matrix}
      0 & f
      \\
      f^\dagger & 0
    \end{matrix}
    \right)
    \cdot
    \left(
    \begin{matrix}
      0 & \pm 1
      \\
      1 & 0
    \end{matrix}
    \right)
    =
    \left(
    \begin{matrix}
      0 & \pm 1
      \\
      1 & 0
    \end{matrix}
    \right)
    \cdot
    \left(
    \begin{matrix}
      0 & f
      \\
      f^\dagger & 0
    \end{matrix}
    \right)
    \;\;\;\;
    \Leftrightarrow
    \;\;\;\;
    f^\dagger = \pm f
    \,.
  \end{equation}
   This gives the homeomorphy \cref{pFixedFredholmIsComplexSelfAdjoint}.

  The quantum symmetry $\widehat{\tOperator}$ with $\widehat{\tOperator}{}^2 = \pm \mathrm{id}$ \cref{HatTOperatorSquare} is manifestly a real/quaternionic structure on $\HilbertSpace$, \cref{TheRealStructure,TheQuaternionicStructure}, whence operators commuting with it are manifestly $\mathbb{R}$/$\mathbb{C}$-linear. This gives the homeomorphisms \cref{t+FixedFredholmIsReal,t-FixedFredholmIsQuaternionic}.

  Combining these two cases, if the quantum symmetries involve $\widehat{\tOperator}$ and $\widehat{\cOperator} = \widehat{\tOperator} \widehat{\pOperator}$ then the fixed locus is equivalently that of $\widehat{\tOperator}$ and $\widehat{\pOperator}$ and is hence given by $\mathbb{R}$/$\mathbb{C}$-linear (anti-)self-adjoint operators. This gives the homeomorphisms \cref{t+c+FixedFredholmIsRealSelfAdjoint,t+c-FixedFredholmIsRealAntiSelfAdjoint,t-c+FixedFredholmIsQuaternSelfAdjoint,t-c-FixedFredholmIsQuaternAntiSelfAdjoint}.

  Finally, if there is only $\widehat{\cOperator}$ then the condition on a Fredholm operator $F$ to be fixed is that 
  \begin{equation}
    \label{ConditionOnGradedFredholmFixedByHatC}
    \left(
    \begin{matrix}
      0 & f
      \\
      f^\dagger & 0
    \end{matrix}
    \right)
    \cdot
    \left(
    \begin{matrix}
      0 & \widehat{\tOperator}
      \\
      \widehat{\tOperator} & 0
    \end{matrix}
    \right)
    =
    \left(
    \begin{matrix}
      0 & \widehat{\tOperator}
      \\
      \widehat{\tOperator} & 0
    \end{matrix}
    \right)
    \cdot
    \left(
    \begin{matrix}
      0 & f
      \\
      f^\dagger & 0
    \end{matrix}
    \right)
    \;\;\;\;
    \Leftrightarrow
    \;\;\;\;
    f \circ \widehat{\tOperator}
    =
    \widehat{\tOperator} \circ f^\dagger
    \,.
  \end{equation}
  Now to observe that $f$ satisfying this condition is equivalent to the anti-linear Fredholm operator $f \circ \widehat{\tOperator}$ being (anti-)self-adjoint
  \begin{equation}
    \big(
      f
      \circ
      \widehat{\tOperator}
    \big)^\dagger
    =
    \widehat{\tOperator}^{\dagger}    
      \circ 
    f^\dagger
    =
    (\pm)
    \widehat{\tOperator}    
      \circ 
    f^\dagger
    \underset{
      \scalebox{.7}{
        \cref{ConditionOnGradedFredholmFixedByHatC}
      }
    }{=}
    \pm
    f \circ \widehat{\tOperator}
    \,,
  \end{equation}
  where in the second step we used that real structures are self-adjoint \cref{TheRealStructure} while quaternionic structures are anti-self-adjoint \cref{TheQuaternionicStructure}. Therefore the remaining homeomorphisms \cref{c+FixedFredholmIsAntilinearSelfAdjoint,c-FixedFredholmIsAntilinearAntiSelfAdjoint} are given by $f \mapsto f \circ \widehat{\tOperator}$.
\end{proof}

\subsubsection{Twisted Orbifold K-Theory}
\label{OnTwistedOrbifoldKTheory}

We have in hand now:
\begin{enumerate}
\item
a general definition of (generalized nonabelian) twisted orbifold cohomology (\cref{OnTwistedOrbifoldCohomology})
with coefficients in any space $Y$ equipped with a topological group action $\Gamma \acts \, Y$,
\item 
a classifying space $Y = \GradedFredholmOperators(\GradedHilbertSpace)$ for topological K-theory (\cref{FredholmOperatorsAndKTheory}) equipped with a topological action $\QuantumSymmetries \acts \, \GradedFredholmOperators$ 
\cref{ConjugationActionOfProjectiveGradedUnitaryGroup}
by ``graded quantum symmetries'' $\QuantumSymmetries$ (\cref{TheQSGroup}), combining the graded projective unitary group with ``PCT symmetries'': $\QuantumSymmetries = \big(\UH^2/\mathrm{U}(1)\big) \rtimes \big(\mathbb{Z}_2^{\tSymmetry} \times \mathbb{Z}_2^{\cSymmetry}\big)$, and hence the corresponding stacky incarnation of the universal globally equivariant $\QuantumSymmetries$-associated $\GradedFredholmOperators$-fiber bundle (according to \cref{PrincipalBundleAsPullbackAlongCechCocycle}):
\begin{equation}
  \label{TheUniversalQSAssociatedFredBundle}
  \begin{tikzcd}
    \QuantumSymmetries
    \backsslash
    \GradedFredholmOperators
    \ar[
      d,
      "{
        p
          ^{\scalebox{.6}{$\QuantumSymmetries \acts \GradedFredholmOperators$}}
          _{\mathrm{univ}}
      }"{swap},
      "{
        \in \mathrm{Fib}
      }"
    ]
    \\
    \mathbf{B}\QuantumSymmetries
  \end{tikzcd}
\end{equation}
\end{enumerate}

Therefore we now immediately obtain a good definition of twisted orbifold K-theory (\cref{TwistedOrbifoldKTheory} below).

In fact, we obtain a highly structured version of such: Since the fixed loci by the PCT quantum symmetries inside the classifying space $\GradedFredholmOperators$ are the classifying spaces of topological K-theory in \emph{all its flavors and degrees} (\cref{The10PCTFixedLoci,TableOfPCTQuantumSymmetries}), the orbifolding of the cohomology theory represented by $\GradedFredholmOperators$ not only imposes equivariance conditions on any given  flavor of topological K-theory, but it also serves to ``dial'' which flavor and degree of K-theory is in effect where on the orbifold. Since for the $\tOperator$- and $\cOperator$-symmetries one wants (in applications to quantum physics) the corresponding orbifolding to be by orientation-reversing isotropy actions, one also speaks of \emph{orbi-orientifolds} (or just: \emph{orientifold}, for short):

\begin{definition}
  \label[definition]
   {OrbiOrientifoldWithKTheoryTwist}
\begin{enumerate}
  \item 
  An \emph{orbi-orientifold} (or \emph{Real parity orbifold}, with capital ``R'') is an orbifold $\mathcal{X}$ (\cref{Orbifolds}) equipped with a map to the delooping (\cref{DeloopingGroupoid}) of the PCT group \cref{IncarnationsOfPCTGroup}:
  \begin{equation}
    \label{GradingMapOnOrbifold}
    \begin{tikzcd}
      \sigma
      : 
      \mathcal{X}
      \ar[r]
      &
      \mathbf{B}\Big(
        \mathbb{Z}_2^{\tSymmetry}
        \times
        \mathbb{Z}_2^{\cSymmetry}
      \Big)
      \mathrlap{.}
    \end{tikzcd}
  \end{equation}
  \item
  A \emph{geometric K-theory twist}, or just \emph{twist} for short, on such an orbi-orientifold is map $\tau$ to the delooping of the group of graded quantum symmetries (\cref{TheQSGroup}) which lifts the PCT grading \cref{GradingMapOnOrbifold}, hence a dashed map making this diagram commute:
  \begin{equation}
    \begin{tikzcd}[
      column sep=15pt,
      row sep=4pt
    ]
      \mathllap{
        \tau
        :
        \;
      }
      \mathcal{X}
      \ar[
        dr,
        shorten >=-9pt,
        "{ \sigma }"{swap}
      ]
      \ar[
        rr,
        dashed,
        "{
        }"
      ]
      &&
      \mathbf{B}\QuantumSymmetries 
      \mathrlap{\,.}
      \ar[
        dl,
        shorten >=-5pt,
        "{
          \mathbf{B}P_{\mathrm{ct}}
        }"
      ]
      \\
      & 
      \mathbf{B}\Big(
        \mathbb{Z}_2^{\tSymmetry}
        \times
        \mathbb{Z}_2^{\cSymmetry}
      \Big)
    \end{tikzcd}
  \end{equation}
  \end{enumerate}
  Of course, such $\tau$ fully determines $\sigma$.
\end{definition}

Now recall that for a topological group $\Gamma$ acting continuously on a topological space $X$, we obtain the topological \emph{homotopy quotient} groupoid $\Gamma \backsslash X$ (\cref{ActionGroupoid}), canonically $X$-fibered over $\mathbf{B}\Gamma \defneq \ast \sslash G$ \cref{HomotopyFiberOfHoQuotientProjection}. Using this for the action $\QuantumSymmetries \acts \, \GradedFredholmOperators$ from \cref{SpaceOfFredholmOperators},  we obtain the following concise and transparent definition, in specialization of \cref{TwistedOrbifoldCohomology}:

\begin{definition}[Twisted orbi-orientifold K-theory {\parencites[Ex. 6.2.5]{SS25-Bun}[\S 2.2]{SS22-Ord}}]
  \label[definition]
    {TwistedOrbifoldKTheory}  
Given an orbi-orientifold $(\mathcal{X}, \sigma)$ with a K-theory twist $\tau$ (\cref{OrbiOrientifoldWithKTheoryTwist}) then its \emph{$\tau$-twisted K-cohomology} is the connected components of dashed slice maps (\cref{SliceMappingStack}) from $\tau$ into the universal stacky $\QuantumSymmetries$-associated $\GradedFredholmOperators$-fiber bundle \cref{TheUniversalQSAssociatedFredBundle}, as shown on the right here:
  \begin{equation}
    \label{DiagramForTwistedOrbifoldKTheory}
    \mathrm{K}^\tau\big(
      \mathcal{X}
    \big)
    :=
    \pi_0
    \left\{
    \begin{tikzcd}[
      row sep = 7pt,
      column sep=15pt
    ]
      && 
      \QuantumSymmetries
      \backsslash 
      \GradedFredholmOperators
      \ar[
        dd,
        ->>
      ]
      \\
      \\
      \mathcal{X}
      \ar[
        rr,
        "{ \tau }"{description}
      ]
      \ar[
        uurr,
        dashed
      ]
      &&
      \mathbf{B} \QuantumSymmetries
    \end{tikzcd}
    \right\}
    \mathrlap{.}
  \end{equation}
\end{definition}

\begin{example}[Reduction to equivariant $\mathrm{KU}$]
  \label[example]
   {ReductionToEquivariantKU}
  \Cref{TwistedOrbifoldKTheory} is a ``globally equivariant'' orbifold cohomology theory (in specialization of \cref{PropertiesOfOrbifoldCohomology}) in that the definition does not refer to any particular equivariance/isotropy group, and yet the construction automatically reduces to equivariant $\mathrm{KU}$-theory $\mathrm{KU}_G(-)$ \parencites[p 10]{AtiyahSegal1969}[\S 6]{AtiyahSegal2004} --- when:
  \begin{enumerate}
  \item 
  the domain 
  $\mathcal{X}$ is presented as a global quotient $\mathcal{X} \simeq G \backsslash X$ (\cref{GlobalQuotientOrbifolds}), 
  \item the twist $\tau$ factors as a (stable, cf. \cite[Lem. 5.1.45]{SS25-Bun}) projective $G$-representation $\rho$
  \begin{equation}
    \begin{tikzcd}[sep=15pt]
      G
      \ar[
        r,
        "{
          \rho
        }"
      ]
      &
      \PUH
      \ar[
        rr,
        hook,
        "{
          i_{\mathrm{pu}}
        }",
        "{
          \scalebox{.7}{%
            \cref{IncludionOfPUintoQuantumSymmetries}%
          }%
        }"{swap}
      ]
      &&
      \QuantumSymmetries
    \end{tikzcd}
  \end{equation}
  in that
  \begin{equation}
    \label{TwistExpressingGRep}
    \begin{tikzcd}[
      row sep=10pt, 
      column sep=35pt
    ]
      G \backsslash X
      \ar[
        rrrr,
        uphordown,
        "{
          \tau_{\rho}
        }"
      ]
      \ar[
        r,
        ->>
      ]
      \ar[d]
      \ar[
        drr,
        "{ \sigma }"{description}
      ]
      &
      \mathbf{B}G
      \ar[
        rr,
        "{
          \mathbf{B}\rho
        }"
      ]
      &[-45pt]&[-45pt]
      \mathbf{B}\PUH
      \ar[
        r,
        "{
          \mathbf{B}i_{\mathrm{pu}}
        }"
      ]
      &
      \mathbf{B}\QuantumSymmetries
      \ar[
        dll,
        "{
          \mathbf{B}p_{\mathrm{ct}}
        }"
      ]
      \\
      \ast
      \ar[rr]
      &
      &
      \mathbf{B}
      \Big(
        \mathbb{Z}_2^{\tSymmetry}
        \times
        \mathbb{Z}_2^{\cSymmetry}
      \Big)
      \mathrlap{\,,}
    \end{tikzcd}
  \end{equation}
  \end{enumerate}
  then
  \begin{equation}
    \label{OrbiKReducingToEquivariantK}
    \mathrm{K}^{\tau_{\rho}}(G \backsslash X)
    \simeq
    \mathrm{KU}_G(X)
    \mathrlap{\,.}
  \end{equation}
  In particular this shows that the orbifold K-theory is independent of the choice of global quotient presentation, a key property that is less manifest otherwise (cf. \cite[Prop. 4.1]{PronkScull2010}).
\end{example}
\begin{proof}
Using the general results from \cref{SomeCohesiveHomotopyTheory} we straightforwardly obtain that:
\begin{equation}
  \begin{aligned}
  \mathrm{K}^{\tau_\rho}(
    G \backsslash X
  )
  &
    \defneq
    \pi_0
    \left\{
    \begin{tikzcd}[
      column sep=22pt,
      row sep=10pt
    ]
      &&
      \QuantumSymmetries
      \backsslash
      \GradedFredholmOperators
      \ar[
        dd,
        ->>
      ]
      \\
      \\
      G \backsslash X
      \ar[
        r,
        ->>
      ]
      \ar[
        uurr,
        dashed
      ]
      &
      \mathbf{B}G
      \ar[
        r,
        "{ 
          \mathbf{B}\rho 
        }"
      ]
      &
      \mathbf{B} \QuantumSymmetries
    \end{tikzcd}
    \right\}
  \\
  &
  \underset{\mathclap{
    \scalebox{.7}{\cref{PullbackAction}}
  }}{\simeq}
    \pi_0
    \left\{
    \begin{tikzcd}[
      column sep=16.5pt,
      row sep=10pt
    ]
      &[-10pt] 
      G
      \backsslash^{\!\rho}
      \FredholmOperators
      \ar[
        ddr,
        phantom,
        "{ \lrcorner }"{pos=.15}
      ]
      \ar[
        r
      ]
      \ar[
        dd,
        ->>
      ]
      &
      \QuantumSymmetries
      \backsslash
      \GradedFredholmOperators
      \ar[
        dd,
        ->>
      ]
      \\
      \\
      G \backsslash X
      \ar[
        r,
        ->>
      ]
      \ar[
        uur,
        dashed
      ]
      &
      \mathbf{B}G
      \ar[
        r,
        "{ 
          \mathbf{B}\rho 
        }"
      ]
      &
      \mathbf{B} \QuantumSymmetries
    \end{tikzcd}
    \right\}
    \\
    & 
    \underset{\mathclap{
      \scalebox{.7}{\cref{GEquivariantMapsAsSliceMapsOverBG}}
    }}{\simeq}
    \pi_0
    \left\{
      \adjustbox{raise=-7pt}{
      \begin{tikzcd}[
        column sep=23pt
      ]
        X
        \ar[
          in=60,
          out=180-60,
          looseness=4,
          shift right=1pt,
          "{
            \,\mathclap{G}\,
          }"{description}
        ]
        \ar[
          r,
          dashed
        ]
        &
        \FredholmOperators
        \ar[
          in=60,
          out=180-60,
          looseness=4,
          shift right=1pt,
          "{
            \,\mathclap{G^{\mathrlap{\rho}}}\,
          }"{description}
        ]
      \end{tikzcd}
      }
    \right\}
    \mathrlap{.}
  \end{aligned}
\end{equation}
But the last line is the traditional definition of $\mathrm{KU}_G$ (cf. \parencites[p 10]{AtiyahSegal1969}).
\end{proof}

We saw around \cref{GEquivariantCohomologyAsStackyMaps} that equivariance is just a special case of twisting, and indeed, in direct generalization of \cref{ReductionToEquivariantKU}:
\begin{example}[Reduction to twisted equivariant $\mathrm{KU}$]
When:
\begin{enumerate}
  \item 
  the domain 
  $\mathcal{X}$ is presented as a global quotient $\mathcal{X} \simeq G \backsslash X$ (\cref{GlobalQuotientOrbifolds}), 
  \item 
  the twist $\tau$ classifies a $G$-equivariant $\PUH$-principal bundle $\begin{tikzcd}[sep=small]\mathcal{P} \ar[r, "{p}"] & X \end{tikzcd}$
  in that it factors through a corresponding equivariant {\v C}ech cocycle (cf.  \cref{PrincipalBundleAsPullbackAlongCechCocycle}, the arrows below denote maps of stacks, hence notationally suppressing the inverse equivalences seen on the left of \cref{PullbackOfUniversalStackyPrincipalBundleAlongCechCocycle})
  \begin{equation}
    \begin{tikzcd}
     G \backsslash
      \mathcal{P}
      \ar[
        r
      ]
      \ar[
        d
      ]
      \ar[
        dr,
        phantom,
        "{ \lrcorner }"{pos=.1}
      ]
      &
      \PUH \backsslash \PUH
      \ar[
        d,
        ->>,
        "{\,
          p_{\mathrm{univ}}
        }"
      ]
      \\
      G \backsslash X
      \ar[
        r,
        "{ \gamma }"
      ]
      &
      \mathbf{B}\PUH
    \end{tikzcd}
  \end{equation}
  as
  \begin{equation}
    \label{TwistEncodingAPUHBundle}
    \begin{tikzcd}[
      row sep=17pt, 
      column sep=35pt
    ]
      G \backsslash X
      \ar[
        rrr,
        uphordown,
        "{
          \tau_{{}_{\mathcal{P}}}
        }"
      ]
      \ar[
        d
      ]
      \ar[
        rr,
        "{
          \gamma
        }"
      ]
      \ar[
        dr,
        "{ \sigma }"
      ]
      &[-15pt]&[-65pt]
      \mathbf{B}\PUH
      \ar[
        r,
        "{
          \mathbf{B}
          i_{\mathrm{pu}}
        }"
      ]
      &
      \mathbf{B}
      \QuantumSymmetries
      \ar[
        dll,
        "{
          \mathbf{B}p_{\mathrm{ct}}
        }"
      ]
      \\
      \ast
      \ar[r]
      &
      \mathbf{B}\big(
        \mathbb{Z}_2^{\tSymmetry}
        \times
        \mathbb{Z}_2^{\cSymmetry}
      \big)
    \end{tikzcd}
  \end{equation}
\end{enumerate}
then the twisted orbi-K-theory (\cref{TwistedOrbifoldKTheory}) reduces to the corresponding $\mathcal{P}$-twisted equivariant K-theory $\mathrm{KU}_G^{[\mathcal{P}]}(X)$ of \cite[\S 6]{AtiyahSegal2004}:
\begin{equation}
  \mathrm{K}^{\tau_{\!{}_{\mathcal{P}}}}\big(
    G \backsslash X
  \big)
  \simeq
  \mathrm{KU}^{[P]}_G(X)
  \,.
\end{equation}
\end{example}
\begin{proof}
  As before, by the universal property of the pullback:
  \begin{equation}
    \begin{aligned}
      \mathrm{K}^{\tau_{\!{}_{\mathcal{P}}}}
      (G \backsslash X)
      & 
      \defneq
      \pi_0
      \left\{
      \hspace{.75cm}
      \begin{tikzcd}[
        column sep=25pt,
        row sep=22pt
      ]
      &[+6pt]&
      \QuantumSymmetries
       \backsslash 
      \GradedFredholmOperators
      \ar[
        d
      ]
      \\
      G \backsslash X
      \ar[
        r,
        "{ \gamma }"{description}
      ]
      \ar[
        urr,
        dashed
      ]
      &
      \mathbf{B}\PUH
      \ar[
        r,
        "{
          \mathbf{B}
          i_{\mathrm{pu}}
        }"
      ]
      &
      \mathbf{B}
      \QuantumSymmetries
      \end{tikzcd}
      \right\}
      \\
      &
      \underset{\mathclap{
        \scalebox{.7}{%
          \cref{PullbackAction}%
        }
      }}{
        \simeq
      }
      \pi_0
      \left\{
      \hspace{.75cm}
      \begin{tikzcd}[
        column sep=10pt,
        row sep=22pt
      ]
      &[+10pt]
      \PUH 
      \backsslash 
      \GradedFredholmOperators
      \ar[r]
      \ar[d]
      \ar[
        dr,
        phantom,
        "{ \lrcorner }"{pos=.1}
      ]
      &
      \QuantumSymmetries
       \backsslash 
      \GradedFredholmOperators
      \ar[
        d
      ]
      \\
      G \backsslash X
      \ar[
        r,
        "{ \gamma }"{description}
      ]
      \ar[
        ur,
        dashed
      ]
      &
      \mathbf{B}\PUH
      \ar[
        r,
        "{
          \mathbf{B}
          i_{\mathrm{pu}}
        }"
      ]
      &
      \mathbf{B}
      \QuantumSymmetries
      \end{tikzcd}
      \right\}
      \\
      &
      \underset{\mathclap{
        \scalebox{.7}{%
          \cref{PullbackOfUniversalStackyFiberBundleAlongCechCocycle}
        }
      }}{\simeq}
      \pi_0
      \left\{
      \begin{tikzcd}[
        column sep=8pt,
        row sep=22pt
      ]
      G\backsslash(
      \mathcal{P}
      \times_\Gamma
      \FredholmOperators
      )
      \ar[
        r,
        shorten=-2pt
      ]
      \ar[d]
      \ar[
        dr,
        phantom,
        "{ \lrcorner }"{pos=.1}
      ]
      &
      \PUH 
      \backsslash 
      \FredholmOperators
      \ar[
        r,
        shorten=-2pt
      ]
      \ar[d]
      \ar[
        dr,
        phantom,
        "{ \lrcorner }"{pos=.1}
      ]
      &[0pt]
      \QuantumSymmetries
       \backsslash 
      \GradedFredholmOperators
      \ar[
        d
      ]
      \\
      G \backsslash X
      \ar[
        r,
        "{ \gamma }"{description}
      ]
      \ar[
        u,
        bend left=30,
        dashed,
        shift left=3pt
      ]
      &
      \mathbf{B}\PUH
      \ar[
        r,
        "{
          \mathbf{B}
          i_{\mathrm{pu}}
        }"
      ]
      &
      \mathbf{B}
      \QuantumSymmetries
      \end{tikzcd}
      \right\}
      \\
      & 
      \underset{\mathclap{%
        \scalebox{.7}{\cref{SectionsAsSliceMapping}}
      }}{\simeq}
      \pi_0
      \,
      \Gamma_{\!{}_{X}}\big(
        \mathcal{P}
        \!\otimes_\Gamma\!
        \FredholmOperators
      \big)^{\!G}
      .
    \end{aligned}
  \end{equation}
  But the last expression is just the traditional definition (\cite[Def. 6.1]{AtiyahSegal2004}) of twisted equivariant $\mathrm{KU}^0$. (The equivariant local triviality condition on $\mathcal{P}$ which is required by \cite{AtiyahSegal2004} is actually implied by our stacky construction, see \cite[\S 5.2 \& Ex. 6.1.2]{SS25-Bun}.)
\end{proof}

\begin{example}[Reduction to $\mathrm{KR}$ in any degree]
  \label[example]
   {ReductionOfOrbiKToAnyKRDegree}
  When:
  \begin{enumerate}
    \item 
    the grading $\sigma$ is \emph{globally constant} over $\mathcal{X}$, in that there is a PCT-subgroup $G$
    of which $\mathcal{X}$ is a global quotient orbifold 
    (\cref{GlobalQuotientOrbifolds})
    of a manifold $X$ fixed by the action:     
    \begin{equation}
      \begin{tikzcd}
      G
      \ar[
        r, 
        hook,
        "{ \iota }"
      ]
      &
      \mathbb{Z}_2^{\tSymmetry}
      \times
      \mathbb{Z}_2^{\cSymmetry}
      \end{tikzcd}
      \,\;\;\;
      \mathcal{X}
      \simeq
      G
      \backsslash
      X
      \simeq
      X
      \times
      \mathbf{B}G
      \mathrlap{\,,}
    \end{equation}
    \item and there is no further twist besides a PCT quantum symmetry $\big[\widehat{(-)}\big]$
    $$
      \begin{tikzcd}[
        column sep=50pt
      ]
        \mathcal{X}'
        \times
        \mathbf{B}G
        \ar[
          dr,
          "{
            \sigma
          }"{description}
        ]
        \ar[
          r,
          "{
            \mathrm{pr}_2
          }"{description}
        ]
        \ar[
          rr,
          uphordown,
          "{
            \tau_{\widehat{G}}
          }"
        ]
        &
        \mathbf{B}
        G
        \ar[
          d,
          "{
            \mathbf{B}\iota
          }"
        ]
        \ar[
          r,
          "{
            \big[
              \widehat{(-)}
            \big]
          }"{description, pos=.45}
        ]
        &
        \mathbf{B}
        \QuantumSymmetries
        \ar[
          dl,
          "{
            \mathbf{B}
            p_{\mathrm{ct}}
          }"
        ]
        \\
        &
        \mathbf{B}
        \big(
          \mathbb{Z}_2^{\tOperator}
          \times
          \mathbb{Z}_2^{\cOperator}
        \big)
        \mathrlap{\,,}
      \end{tikzcd}
    $$
  \end{enumerate}
then the $\tau_{\widehat{G}}$-twisted orbi K-cohomology according to \cref{TwistedOrbifoldKTheory} reduces to the flavor and degree of K-theory corresponding to the PCT quantum symmetry as shown in \cref{TableOfPCTQuantumSymmetries}:
$$
  \mathrm{K}
    ^{\tau_{\widehat{G}}}%
  (X \times \mathrm{B}G)
  \simeq
  \left\{
  {
  \def\arraystretch{1.3}
  \begin{array}{lcl}
    \mathrm{KU}^0(X)
    &\big\vert&
    \widehat{G}
    =
    \mbox{A}
    \\
    \mathrm{KU}^1(X)
    \sqcup \{+,-\}
    &\big\vert&
    \widehat{G}
    =
    \mbox{AIII}
    \\
    \mathrm{KO}^0(X)
    &\big\vert&
    \widehat{G}
    =
    \mbox{AI}
    \\
    \mathrm{KO}^1(X)
    \sqcup \{+,-\}
    &\big\vert&
    \widehat{G}
    =
    \mbox{BDI}
    \\
    \mathrm{KO}^2(X)
    &\big\vert&
    \widehat{G}
    =
    \mbox{D}
    \\
    \mathrm{KO}^3(X)
    &\big\vert&
    \widehat{G}
    =
    \mbox{BDI}
    \\
    \mathrm{KO}^4(X)
    &\big\vert&
    \widehat{G}
    =
    \mbox{AII}
    \\
    \mathrm{KO}^5(X)
    \sqcup \{+,-\}
    &\big\vert&
    \widehat{G}
    =
    \mbox{CII}
    \\
    \mathrm{KO}^6(X)
    &\big\vert&
    \widehat{G}
    =
    \mbox{C}
    \\
    \mathrm{KO}^7(X)
    &\big\vert&
    \widehat{G}
    =
    \mbox{CI}
    \mathrlap{\,.}
  \end{array}
  }
  \right.
$$
\end{example}
\begin{proof}
Using the general results from \cref{SomeCohesiveHomotopyTheory} we obtain straightforwardly:
\begin{equation}
 \begin{aligned}
  \mathrm{K}
    ^{\tau_{\widehat{G}}}%
  (X \times \mathbf{B}G)
  & 
  \defneq
  \pi_0
  \left\{
  \begin{tikzcd}[
    column sep=20.5pt
  ]
    &&
    \QuantumSymmetries 
    \backsslash
    \GradedFredholmOperators
    \ar[
      d
    ]
    \\
    X
    \times
    \mathbf{B}G
    \ar[
      urr,
      dashed
    ]
    \ar[
      r
    ]
    &
    \mathbf{B}
    G
    \ar[
      r,
      "{
         [\widehat{-}]
      }"{pos=.45}
    ]
    &
    \mathbf{B}\QuantumSymmetries
  \end{tikzcd}
  \right\}
  \\
  & 
  \underset{\mathclap{
    \scalebox{.7}{\cref{PullbackAction}}
  }}{\simeq}
  \pi_0
  \left\{
  \begin{tikzcd}[
    column sep=8pt
  ]
    &
    \widehat{G}
    \backsslash
    \GradedFredholmOperators
    \ar[
      r,
      shorten <=-3pt,
    ]
    \ar[
      d
    ]
    &
    \QuantumSymmetries 
    \backsslash
    \GradedFredholmOperators
    \ar[
      d
    ]
    \\
    X
    \times
    \mathbf{B}G
    \ar[
      ur,
      dashed
    ]
    \ar[
      r
    ]
    &
    \mathbf{B}
    G
    \ar[
      r,
      "{
        [\widehat{-}]
      }"
    ]
    &
    \mathbf{B}\QuantumSymmetries
  \end{tikzcd}
  \right\}
  \\
  & 
  \underset{\mathclap{
    \scalebox{.7}{\cref{LiftsOfBGThroughQuotientProjectionAreFixedPoints}}
  }}{\simeq}
  \pi_0
  \left\{
  \begin{tikzcd}[
    column sep=25pt
  ]
    X
    \ar[
      r,
      dashed
    ]
    &
    \GradedFredholmOperators
      ^{ \widehat{G} }
  \end{tikzcd}
  \right\}
  \mathrlap{\,.}
  \end{aligned}
\end{equation}
But the last line implies the claim by \cref{The10PCTFixedLoci}.
\end{proof}

These examples show that the twisted orbi-orientifold K-theory of  \cref{TwistedOrbifoldKTheory} unifies and mixes these aspects of K-theory: $\mathbb{R}$/$\mathbb{C}$/$\mathbb{H}$-flavors, degrees, twisting and equivariance. Along the same lines the reader can find the sectors of twisted orbi $\mathrm{KU}^1$ and $\mathrm{KO}^n$, etc.
 
In the following subsections we focus on the $\mathrm{KU}^0$-sector, twisted/orbifolded by $\Gamma \in \{ \mathrm{U}(2), \mathrm{Sp}(2) \}$.

\subsection
  {The Equivariant Tautological Line Bundle}
\label{TheEquivariantLineBundle}

In preparation of constructing the four/ten-dimensional $\mathrm{U}(2)$/$\mathrm{Sp}(2)$-equivariant orientation of orbi-$\mathrm{KU}^0$, below \cref{TheEquivariantOrientation},
here  we discuss the canonical $\mathrm{U}(2)$/$\mathrm{Sp}(2)$-equivariant $\mathbb{C}$/$\mathbb{H}$-line bundle on $\mathbb{C}P^1$/$\mathbb{H}P^1$, and the equivariant trivialization of its pullback along the $\mathbb{C}$/$\mathbb{H}$-Hopf fibration.
We use quaternionic 2-component spinor calculus (following \cite[\S 2.1]{FSS22-Twistorial}) in a way that lends itself to constructing the required Fredholm operator families.

This subsection uses well-known constructions in quaternionic algebra, and the result will not be surprising to experts. But since it does not seem properly citable, and in order to establish notation needed in \cref{TheEquivariantOrientation},  we spell out the pleasant construction explicitly.

\subsubsection{Quaternion algebra}

\begin{definition}[{cf. \parencites{Zhang1997}[\S 1]{Morais2013}}]
  \label[definition]{Quaternions}
  We write:
  
 \begin{itemize}

  \item $\mathbb{H}$ for the real vector space of \emph{quaternions}, 

  \item with their real associative product $(-)\cdot(-)$,
  
  \item which is a real star-algebra via conjugation $(-)^\ast$, 
  
  \item and equipped with a norm ${\vert-\vert}$ given by ${\vert q \vert}^2 = q q^\ast$.

  \item $\mathbb{H}_{\mathrm{in}} \subset \mathbb{H}$ for the real subspace of \emph{imaginary quaternions}, $q^\ast \,=\, - q$,

\item for which we choose, as usual, an orthonormal linear basis $\mathbf{i}, \mathbf{j}, \mathbf{k} \in \mathbb{H}_{\mathrm{im}}$ such that 
\begin{equation}
  \mathbf{i} \cdot \mathbf{j} \,=\, \mathbf{k}
  \mathrlap{\,,}
\end{equation}

  \item  which in fact generates $\mathbb{H}$ subject to the further relations 
  \begin{equation}
    \begin{aligned}
    \mathbf{i}\, \mathbf{j} 
    =  - \mathbf{j} \, \mathbf{i}
    , \quad 
    \mathbf{i}^2 
     = \mathbf{j}^2 \,=\, -1
    \mathrlap{\,,}
    \end{aligned}
  \end{equation}

  \item and exhibits a star-algebra inclusion of the complex numbers:
  \begin{equation}
  \label{ComplexNumbersInsideQuaternions}
  \begin{tikzcd}[sep=-3pt]
    \mathbb{C} 
    \ar[rr, hook]
    &&
    \mathbb{H}
    \\
    \mathrm{i} &\longmapsto& \mathbf{i}
    \mathrlap{\,.}
  \end{tikzcd}
  \end{equation}
  \item 
  $\BoundedOperators(\mathbb{H}^2)$
  for the real algebra of $2 \times 2$ matrices with quaternion entries,
  \item 
  with star-operation
  \begin{equation}
    \label{MatrixConjugation}
    \left(
    \begin{matrix}
      a & b
      \\
      c & d
    \end{matrix}
    \right)^\dagger
    :=
    \left(
    \begin{matrix}
      a^\ast & c^\ast
      \\
      b^\ast & d^\ast 
    \end{matrix}
    \right)
    \mathrlap{.}
  \end{equation}
  \item
  $\mathrm{Sp}(1) = S(\mathbb{H})$ for the unit norm quaternions, with their group structure under quaternion multiplication.
\end{itemize} 
\end{definition}

\begin{example}[{cf. \cite[\S 1.27]{Morais2013}}]
  \label[example]
  {PauliMatricesAsStarRepresentation}
  The {\it Pauli matrices} define a homomorphism of real star-algebras from the quaternions (\cref{Quaternions}) to the linear operators on $\mathbb{C}^2$:
  \begin{equation}
    \label{MatrixRepresentationOfQuaternions}
    \begin{tikzcd}[
      sep=9pt
    ]
      \mathbb{H}
      \ar[
        rr,
        "{ \CliffordElement }"
      ]
      &&
      \BoundedOperators(\mathbb{C}^2)
      \mathrlap{\,,}
    \end{tikzcd}
  \end{equation}
given by 
$$
    \begin{tikzcd}[
      sep=-2pt,
      ampersand replacement=\&
    ]
        1 
        \&\longmapsto\&
      \left[
      \begin{array}{@{\hspace{-2pt}}r@{\hspace{3pt}}r@{}}
        \phantom{+}1 & \phantom{+}0 
        \\
        \phantom{+}0 & \phantom{+}1
      \end{array}
      \right]
     , \quad 
      \mathbf{i}
        \&\longmapsto\&
      \mathrm{i}
      \left[
      \begin{array}{@{\hspace{-2pt}}r@{\hspace{3pt}}r@{}}
        \phantom{+}1 & \phantom{+}0 
        \\
        \phantom{+}0 & \scalebox{.94}{$-$}\hspace{-.8pt}1
      \end{array}
      \right]
      ,\quad 
      \mathbf{j}
        \&\longmapsto\&
      \mathrm{i}
      \left[
      \begin{array}{@{\hspace{-2pt}}r@{\hspace{3pt}}r@{}}
        \phantom{+}0 & \phantom{+}1 
        \\
        \phantom{+}1 & \phantom{+}0
      \end{array}
      \right]
      ,\quad 
      \mathbf{k}
        \&\longmapsto\&
      \mathrm{i}
      \left[
      \begin{array}{@{\hspace{-2pt}}r@{\hspace{3pt}}r@{}}
        \phantom{+}0 & \phantom{+}\mathrm{i} 
        \\
        - \mathrm{i} & \phantom{+}0
      \end{array}
      \right]
      \mathrlap{.}
    \end{tikzcd}
$$
  One immediately checks that for a general quaternion $x := x_0  + x_1 \mathbf{i} + x_2 \mathbf{j} + x_3 \mathbf{k} \in \mathbb{H}$, with $(x_i \in \mathbb{R})_{i=0}^3$, one has 
  $
    \mathrm{det}(
      \CliffordElement_{x}
    )
    =
    \textstyle{\sum_{i=0}^3} (x_i)^2
    =
    \CliffordElement_x 
    \CliffordElement_x^\dagger
  $.
  This implies that under the matrix representation \cref{MatrixRepresentationOfQuaternions} the unit norm quaternions are identified with the special unitary $2 \times 2$ matrices:
  \begin{equation}
    \label{UnitQuaternionsAsSU2}
    \begin{tikzcd}
      S(\mathbb{H})
      \;=\;
      \mathrm{Sp}(1)
      \ar[
        r,
        "{ \CliffordElement }",
        "{ \sim }"{swap}
      ]
      &
      \mathrm{SU}(2)
      \,\subset\,
      \BoundedOperators(\mathbb{C}^2)
      \,.
    \end{tikzcd}
  \end{equation}
\end{example}

\begin{definition}[cf. {\parencites[p 11]{Cohen2000}[(2.4)]{VenancioBatista2021}}]
\label[definition]{GroupOfUnimodularHMatrices}
  The \emph{group of unimodular quaternionic $2 \times 2$ matrices} is
  \begin{equation}
    \label{SL2H}
    \mathrm{SL}(\mathbb{H}^2)
    :=
    \Big\{
      A 
        \in 
      \BoundedOperators(\mathbb{H}^2)
      \;\big\vert\;
      \mathrm{det}_D(A)
      =
      1
    \Big\}
    \mathrlap{\,,}
  \end{equation}
  where
  \begin{equation}
    \label{QuaternionicDeterminant}
    \mathrm{det}_D
    \left(
    \begin{matrix}
      a & b 
      \\
      c & d
    \end{matrix}
    \right)
    =
    \left\{
    {
    \renewcommand{\arraystretch}{1.3}
    \begin{array}{ccl}
      0 
        & \mathrm{if} & 
      b = c = d = 0
      \\
      \big\vert
        a d - a c a^{-1} b
      \big\vert
        & \mathrm{if} & 
      a \neq 0
      \\
      \big\vert
        d a - d b d^{-1} c
      \big\vert
        & \mathrm{if} & 
      d \neq 0
      \\
      \big\vert
        b d b^{-1} a - b c
      \big\vert
        & \mathrm{if} & 
      b \neq 0
      \\
      \big\vert
        c a c^{-1} d - c b
      \big\vert
        & \mathrm{if} & 
      c \neq 0
      \mathrlap{\,.}
    \end{array}
    }
    \right.
  \end{equation}
\end{definition}
\begin{definition}[cf. {\cite[p. 28]{Zhang1997}}]
  The \emph{group of unitary quaternionic $2 \times 2$ matrices} (also: \emph{compact symplectic group}) is:
  \begin{equation}
    \label{SP2}
    \mathrm{Sp}(2)
    \equiv
    \mathrm{U}(\mathbb{H}^2)
    :=
    \Big\{
      U \in 
      \BoundedOperators(\mathbb{H}^2)
      \;\big\vert\;
      U \cdot U^\dagger = 1
    \Big\}
    \mathrlap{\,.}
  \end{equation}
\end{definition}
\begin{remark}
  The condition in \cref{SP2} is indeed sufficient, 
  because for $B \in \BoundedOperators(\mathbb{H}^2)$ we have (cf. \cite[Prop. 4.1]{Zhang1997}):
    \begin{equation}
      B \cdot B^\dagger = 1
      \hspace{.5cm}
      \Leftrightarrow
      \hspace{.5cm}
      B^\dagger \cdot B = 1
      \mathrlap{\,.}
    \end{equation}
\end{remark}
\begin{remark}
  \label[remark]{StarAlgebraHomFromPauliMatrices}
  Applying the star-algebra homomorphism $\CliffordElement$ \cref{MatrixRepresentationOfQuaternions} entrywise yields a star-algebra homomorphism
  \begin{equation}
    \label{QuaternionicMatricesAsComplexMatrices}
    \begin{tikzcd}[
      ampersand replacement=\&,
      row sep=-2pt, column sep=0pt
    ]
      \BoundedOperators(\mathbb{H}^2)
      \ar[
        rr,
        "{ \gamma }"
      ]
      \&\&
      \BoundedOperators(\mathbb{C}^4)
      \\
      \left(
      \begin{matrix}
        a & b
        \\
        c & d
      \end{matrix}
      \right)
      \&\longmapsto\&
      \left(
      \begin{matrix}
        \gamma_a & \gamma_b
        \\
        \gamma_c & \gamma_d
      \end{matrix}
      \right)      
      \mathrlap{\,,}
    \end{tikzcd}
  \end{equation}
  where on the right the 2x2 matrix of complex 2x2 matrices is canonically regarded as a 4x4 matrix.
  In particular,  \eqref{QuaternionicMatricesAsComplexMatrices} gives a 
  complex-unitary representation of the quaternionic unitary group \cref{SP2}:
  \begin{equation}
    \label{TheComplexSp2Representation}
    \begin{tikzcd}
      \mathbb{C}^4
      \ar[
        out=180-60,
        in=60,
        looseness=4,
        shift right=2.3pt,
        "{
          \hspace{4pt}\mathclap{\mathrm{Sp}(2)}\hspace{5pt}
        }"{description}
      ]
    \end{tikzcd}
    :
    \begin{tikzcd}[
      ampersand replacement=\&,
    ]
      \mathrm{Sp}(2)
      \ar[
        r,
        "{
          \gamma
        }"
      ]
      \&
      \mathrm{SU}(4)
      \mathrlap{\,.}
    \end{tikzcd}
  \end{equation}
\end{remark}
\begin{lemma}[cf. {\cite[Cor. 6.4]{Cohen2000}}]
  \label[lemma]{Sp2InSL2H}
  Every unitary matrix according to \cref{SP2} is unimodular according to \cref{SL2H}:
  \begin{equation}
    \label{Sp2AsSubgroupOfSL2H}
    \mathrm{Sp}(2)
    \subset
    \mathrm{SL}(\mathbb{H}^2)
    \mathrlap{\,.}
  \end{equation}
\end{lemma}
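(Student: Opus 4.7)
The plan is to reduce the statement to a direct computation using the explicit formula for $\det_D$ in \cref{QuaternionicDeterminant}, together with the entry-wise relations on $U \in \mathrm{Sp}(2)$ implied by $UU^\dagger = \mathrm{id} = U^\dagger U$. Since $\det_D(U)$ is defined by cases, the plan is first to reduce to the case $a \neq 0$: if $a = 0$, then unitarity forces $bb^\ast = 1$ and $d = 0$, hence $cc^\ast = 1$, and the ``$b \neq 0$'' branch of \cref{QuaternionicDeterminant} evaluates to $|bc| = |b||c| = 1$ immediately. By symmetry, it will therefore suffice to establish the claim when $a \neq 0$.

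Assuming $a \neq 0$, write $U = \begin{pmatrix} a & b \\ c & d\end{pmatrix}$, so that by \cref{QuaternionicDeterminant},
\begin{equation}
  \det_D(U)
  \,=\,
  |ad - aca^{-1}b|
  \,=\,
  \bigl|a\bigl(d - ca^{-1}b\bigr)\bigr|
  \,=\,
  |a|\cdot\bigl|d - ca^{-1}b\bigr|,
\end{equation}
where the last step uses multiplicativity of the quaternionic norm. The plan is then to square this and expand
\begin{equation}
  \bigl|d - ca^{-1}b\bigr|^2
  \,=\,
  \bigl(d - ca^{-1}b\bigr)\bigl(d^\ast - b^\ast (a^\ast)^{-1} c^\ast\bigr),
\end{equation}
making systematic use of the unitarity relations $db^\ast = -ca^\ast$ and $bd^\ast = -ac^\ast$ from $UU^\dagger = \mathrm{id}$, the identity $bb^\ast = 1 - aa^\ast$, and the commutation $a a^\ast = a^\ast a = |a|^2$ (so that $a^{-1}(a^\ast)^{-1} = |a|^{-2}$). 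The two cross terms each collapse to $cc^\ast$, while the final term becomes $cc^\ast/|a|^2 - cc^\ast$, so the intermediate result simplifies to
\begin{equation}
  \bigl|d - ca^{-1}b\bigr|^2 \,=\, dd^\ast + cc^\ast + cc^\ast/|a|^2 \,=\, 1 + cc^\ast/|a|^2,
\end{equation}
using $cc^\ast + dd^\ast = 1$. Multiplying by $|a|^2$ and invoking the second unitarity relation $aa^\ast + cc^\ast = 1$ (which follows from $U^\dagger U = \mathrm{id}$) yields $\det_D(U)^2 = 1$, hence $\det_D(U) = 1$ since $\det_D$ is a nonnegative real.

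The main obstacle is purely bookkeeping: the Dieudonné formula \cref{QuaternionicDeterminant} is piecewise, so in principle one must verify the computation is independent of which nonvanishing entry one chooses — this is automatic by the multiplicativity of $\det_D$ (cf.\ \cite[Cor.~6.4]{Cohen2000}) but could alternatively be handled by noting that the relations from $U U^\dagger = \mathrm{id}$ are symmetric under swapping rows and columns of $U$, so the other three branches of the formula reduce to the same computation after relabeling. A more conceptual (but heavier) alternative would be to invoke the complex representation $\gamma \colon \mathrm{Sp}(2) \hookrightarrow \mathrm{SU}(4)$ from \cref{TheComplexSp2Representation} together with the Study--Dieudonné identity $\det_D(A)^2 = \det_\mathbb{C}(\gamma(A))$, but this requires citing extra machinery that the excerpt avoids.
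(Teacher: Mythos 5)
Your proof is correct, but it takes a genuinely different route from the paper's. The paper never touches the explicit piecewise Dieudonn\'e formula \cref{QuaternionicDeterminant}; instead it invokes the abstract characteristic properties of $\det_D$ recorded in \cref{PropertiesOfQuaternionicDeterminant} --- nonnegativity, $\det_D(1)=1$, multiplicativity, and crucially $\det_D(A^\dagger) = \det_D(A)$ --- and concludes in two lines from
$1 = \det_D(U U^\dagger) = \det_D(U)\,\det_D(U^\dagger) = \bigl(\det_D(U)\bigr)^2$
together with nonnegativity. You instead compute $|ad - a c a^{-1} b|^2$ by hand from the unitarity relations, which is more laborious but verifiably correct. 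What the paper's approach buys is brevity and the fact that the argument is entirely independent of which branch of the formula applies; what your approach buys is that the computation is self-contained at the level of quaternion algebra and does not lean on the (nontrivial) multiplicativity and $\dagger$-invariance theorems, beyond nonnegativity and the elementary multiplicativity of the quaternion norm. One small observation on your closing remark: worrying about agreement across the branches of \cref{QuaternionicDeterminant} is not really a gap to fill --- the function $\det_D$ is well-defined by Cohen--De Leo's results, so computing in any one applicable branch (as you do, covering $a\neq 0$ and, separately, $a=0$ via the $b\neq 0$ branch) suffices and is not circular. Your suggested ``heavier'' alternative via $\det_D(A)^2 = \det_{\mathbb{C}}(\gamma(A))$ and $\gamma(\mathrm{Sp}(2)) \subset \mathrm{SU}(4)$ is indeed a valid third route, closest in spirit to the paper's in that it also reduces to a determinant identity rather than entrywise algebra.
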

\begin{proof}
  The characteristic properties of $\mathrm{det}_D$ \cref{QuaternionicDeterminant} are (cf. \cite[Thm. 5.1,  \& Cor. 6.4]{Cohen2000}), for all $A,B, \in \mathrm{Mat}_{2 \times 2}(\mathbb{H})$
  \begin{equation}
    \label{PropertiesOfQuaternionicDeterminant}
    \begin{aligned}
      \mathrm{det}_D(A) 
      & \in \mathbb{R}_{\geq 0}
      \\
      \mathrm{det}_D(1)
      & 
      = 
      1
      \\
      \mathrm{det}_D(A \cdot B)
      &
      =
      \mathrm{det}_D(A)
      \cdot
      \mathrm{det}_D(B)
      \\
      \mathrm{det}_D\big(A^\dagger\big)
      & = 
      \mathrm{det}_D(A)
      \mathrlap{\,.}
    \end{aligned}
  \end{equation}
  Thereby, $U \cdot U^\dagger = 1$ implies
  \begin{equation}
    \begin{aligned}
      1 & = 
      \mathrm{det}_D\big(  
        U \cdot U^\dagger
      \big)
      \\
      & =
      \mathrm{det}_D(U) 
      \cdot
      \mathrm{det}_D\big(U^\dagger\big)
      \\
      & =
      \big(\mathrm{det}_D(U)\big)^2
      \,,
    \end{aligned}
  \end{equation}
  and the only solution to that in $\mathbb{R}_{\geq 0}$ is $\mathrm{det}_D(U) = 1$.
\end{proof}
\begin{remark}
  \Cref{Sp2InSL2H} entails that the quaternionic unitary group \cref{SP2} plays the role of a \emph{special} unitary group, certainly in its following appearance in \cref{6DSpacetimeViaQuternionMatrices}. In view of this, the notation ``$\mathrm{Sp}(2)$'' is more suggestive than ``$\mathrm{U}(2,\mathbb{H})$'', with its (coincidental but fortunate) alliteration to ``$\mathrm{SU}(2)$''.
\end{remark}

\subsubsection{The Tautological Line Bundle}

What drives the following construction is this fact:

\begin{proposition}[{\cite{Kugo1983}, streamlined review in \parencites{BaezHuerta2010, VenancioBatista2021}[\S 3.2]{FSS21-Emergence}}]
\label[proposition]{6DSpacetimeViaQuternionMatrices}
The following isomorphism of quadratic real vector spaces, from 6D Minkowski spacetime with signature $\eta := \mathrm{diag}(-1,+1, \cdots, +1)$, to the space of hermitian $2 \times 2$ quaternionic matrices with quadratic form being minus the ordinary determinant, intertwines the canonical $\mathrm{Spin}(1,5) \xrightarrow{\;} \mathrm{SO}(1,5)$-action on the left with the conjugation action 
\begin{equation}
  \label{ActionOfSL2HOnHermitianMatrices}
  \begin{tikzcd}[row sep=-4pt, column sep=-1pt]
    \mathrm{SL}(\mathbb{H}^2)
    \ar[
      rr,
      "{ \mathrm{conj} }"
    ]
    &&
    \mathrm{Aut}_{\mathbb{R}}\Big(
      \big\{
        A \in \BoundedOperators({\mathbb{H}^2})
        \,\big\vert\,
        A^\dagger = A
      \big\}
    \Big)
    \\
    A &\longmapsto&
    A \cdot (-) \cdot A^\dagger
  \end{tikzcd}
\end{equation}
of $\mathrm{SL}(\mathbb{H}^2)$ \textup{(\cref{GroupOfUnimodularHMatrices})} on the right:
\begin{equation}
  \label{6DMinkowskiSpacetimeAsQuaternionMatrices}
  \begin{tikzcd}[
    column sep=2pt,
   row sep=-3pt
  ]
  \mathrm{Spin}(1,5)
  \acts_{\mathrm{can}}
  \big(
    \mathbb{R}^{1,5},
    \eta
  \big)
  \ar[
    rr,
    "{
      \sim
    }"
  ]
  &[-20pt]&
  \mathrm{SL}(\mathbb{H}^2)
  \acts_{\scalebox{.6}{\cref{ActionOfSL2HOnHermitianMatrices}}}
  \Big(
   \big\{
     A 
     \in
     \BoundedOperators(\mathbb{H}^2)
     \,\vert\,
     A^\dagger = A
   \big\},
   - \mathrm{det}
  \Big)
  \\
  \phantom{---}
  \left[
  \begin{matrix}
    x^0
    \\
    x^1
    \\
    \vdots
    \\
    x^5
  \end{matrix}
  \right]
  &\longmapsto&
  \left(
  {
  \setlength{\arraycolsep}{1pt}
  \begin{matrix}
    x^0 - x^1
    &
    x^2 
    + \mathbf{i} x^3 
    + \mathbf{j} x^4
    + \mathbf{k} x^5
    \\
    x^2 
    - \mathbf{i} x^3 
    - \mathbf{j} x^4
    - \mathbf{k} x^5
    & x^0 + x^1
  \end{matrix}
  }
  \right),
  \end{tikzcd}
\end{equation}
and analogously so for 4D Minkowski spacetime and complex matrices, by restriction along the inclusion $\mathbb{C} \hookrightarrow \mathbb{H}$ \cref{ComplexNumbersInsideQuaternions}:
\begin{equation}
  \begin{tikzcd}[
   row sep=-3pt, column sep=0pt
  ]
  \mathrm{Spin}(1,3)
  \acts_{\mathrm{can}}
  \big(
    \mathbb{R}^{1,3},
    \eta
  \big)
  \ar[
    rr,
    "{
      \sim
    }"
  ]
  &&
  \mathrm{SL}(\mathbb{C}^2)
  \acts \,
  \Big(
    \big\{
      A \in 
      \BoundedOperators(\mathbb{C}^2)
      \,\big\vert\,
      A^\dagger = A
    \big\}
    ,
      - \mathrm{det}
  \Big)
  \\
  \left[
  \begin{matrix}
    x^0
    \\
    x^1
    \\
    \vdots
    \\
    x^3
  \end{matrix}
  \right]
  &\longmapsto&
  \left(
  {
  \setlength{\arraycolsep}{2pt}
  \begin{matrix}
    x^0 - x^1
    &
    x^2 
    + \mathrm{i} x^3 
    \\
    x^2 
    - \mathrm{i} x^3 
    & x^0 + x^1
  \end{matrix}
  }
  \right).
  \end{tikzcd}
\end{equation}
\end{proposition}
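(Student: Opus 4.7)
The plan is to break the claim into four conceptually separate verifications, handling the 6D case first and obtaining the 4D case by restriction along $\mathbb{C}\hookrightarrow\mathbb{H}$ \cref{ComplexNumbersInsideQuaternions}.

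First, I would check that the displayed assignment is a linear isomorphism of real vector spaces. A general hermitian quaternionic $2\times 2$ matrix has the shape $\left(\begin{smallmatrix} a & q \\ q^\ast & b \end{smallmatrix}\right)$ with $a,b\in\mathbb{R}$ and $q\in\mathbb{H}$ (using \cref{MatrixConjugation}), so this space is real-6-dimensional, matching $\mathbb{R}^{1,5}$, and the formula on the right of \cref{6DMinkowskiSpacetimeAsQuaternionMatrices} is manifestly real-linear and bijective onto that subspace.

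Second, I would compute the Moore/Dieudonné determinant on the image. For $A$ of the hermitian form above with $a\neq 0$, \cref{QuaternionicDeterminant} gives $\mathrm{det}_D(A) = |ab - aqa^{-1}q^\ast|$; since $a\in\mathbb{R}$ this simplifies to $|ab - qq^\ast|=|ab-|q|^2|$ (with the degenerate and $d\neq 0$ cases checked analogously). Substituting $a=x^0-x^1$, $b=x^0+x^1$, $q=x^2+\mathbf{i}x^3+\mathbf{j}x^4+\mathbf{k}x^5$ yields $ab - |q|^2 = (x^0)^2 - \sum_{i=1}^{5}(x^i)^2 = -\eta(x,x)$, so $-\mathrm{det}_D(A) = \eta(x,x)$, as required.

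Third, I would verify that the conjugation action \cref{ActionOfSL2HOnHermitianMatrices} is well-defined and isometric. Hermiticity is preserved because $(AXA^\dagger)^\dagger = A X^\dagger A^\dagger = AXA^\dagger$, and the multiplicativity of $\mathrm{det}_D$ in \cref{PropertiesOfQuaternionicDeterminant} together with $\mathrm{det}_D(A^\dagger)=\mathrm{det}_D(A)=1$ shows $\mathrm{det}_D(AXA^\dagger)=\mathrm{det}_D(X)$. This yields a continuous homomorphism $\Phi:\mathrm{SL}(\mathbb{H}^2)\to \mathrm{O}(1,5)$ which, by connectedness of $\mathrm{SL}(\mathbb{H}^2)$, factors through the identity component $\mathrm{SO}^+(1,5)$.

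Finally, I would identify $\Phi$ with the spin cover. A direct computation shows $\ker(\Phi)=\{\pm 1\}$, since a diagonal conjugation $A=\alpha\cdot 1$ acts trivially iff $|\alpha|^2=1$ and $\alpha X\alpha^{-1}=X$ for all hermitian $X$, forcing $\alpha\in\{\pm 1\}$ by testing on the off-diagonal generators. A dimension count gives $\dim_{\mathbb{R}}\mathrm{SL}(\mathbb{H}^2)=4\cdot 4 - 1 = 15 = \dim \mathrm{SO}(1,5)$, so $\Phi$ is a local diffeomorphism and hence a connected double cover of $\mathrm{SO}^+(1,5)$. The main obstacle is then to promote this double cover to the spin cover, which I would do by invoking simple connectedness of $\mathrm{SL}(\mathbb{H}^2)$ (most efficiently via the Iwasawa-type polar decomposition $\mathrm{SL}(\mathbb{H}^2)\simeq \mathrm{Sp}(2)\times \mathbb{R}^9$ of \cref{Sp2InSL2H}, with $\mathrm{Sp}(2)$ simply connected as a compact classical group) so that $\Phi$ must be the universal cover $\mathrm{Spin}(1,5)\twoheadrightarrow\mathrm{SO}^+(1,5)$. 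The 4D statement is then immediate: restricting all matrices to have entries in $\mathbb{C}\subset\mathbb{H}$ gives $\mathrm{SL}(\mathbb{C}^2)\to\mathrm{SO}^+(1,3)$ by the same four steps, recovering the classical $\mathrm{SL}(2,\mathbb{C})\simeq\mathrm{Spin}(1,3)$.
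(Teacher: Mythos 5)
The paper gives no proof of this proposition --- it is cited as known from the literature (Kugo--Townsend, Baez--Huerta, etc.). So your argument is evaluated on its own terms, not against a paper proof.

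Your overall strategy --- verify the linear isomorphism, identify the quadratic forms, show conjugation is isometric and factors through $\mathrm{SO}^+(1,5)$, compute $\ker\Phi = \{\pm 1\}$, and identify $\Phi$ with the universal cover via simple connectedness of $\mathrm{SL}(\mathbb{H}^2)$ --- is sound and essentially the standard route. The kernel computation and the dimension count are correct. (A minor slip: the polar decomposition is $\mathrm{SL}(\mathbb{H}^2)\simeq \mathrm{Sp}(2)\times\mathbb{R}^{5}$, not $\mathbb{R}^9$, since $15 - 10 = 5$; this does not affect the conclusion.)

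There is, however, a genuine error in your second step, and it propagates into your third. You invoke the Dieudonn\'e determinant $\mathrm{det}_D$ of \cref{QuaternionicDeterminant} and claim $-\mathrm{det}_D(A) = \eta(x,x)$. But by the first property in \cref{PropertiesOfQuaternionicDeterminant}, $\mathrm{det}_D$ takes values in $\mathbb{R}_{\geq 0}$, so $-\mathrm{det}_D$ is everywhere $\leq 0$ and cannot equal a Lorentzian quadratic form. What your substitution actually yields is $\mathrm{det}_D(A) = \big|ab - |q|^2\big| = |\eta(x,x)|$. The quadratic form in \cref{6DSpacetimeViaQuternionMatrices} is ``minus the \emph{ordinary} determinant'', which for a hermitian quaternionic $2\times 2$ matrix $\left(\begin{smallmatrix} a & q \\ q^\ast & b\end{smallmatrix}\right)$ with $a,b\in\mathbb{R}$, $q\in\mathbb{H}$ means the \emph{signed} real number $\det(A) := ab - qq^\ast = ab - |q|^2$ (the Moore determinant specialized to $n=2$, which for hermitian matrices is unambiguously $\mathbb{R}$-valued, with $\mathrm{det}_D(A) = |\det(A)|$). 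With this correction $-\det(A) = \eta(x,x)$ holds on the nose. The same confusion undermines your invariance argument: multiplicativity of $\mathrm{det}_D$ gives only $|\det(AXA^\dagger)| = |\det(X)|$, not preservation of the sign. This is easily repaired --- the function $A\mapsto\det(AXA^\dagger)$ is continuous, equals $\det(X)$ at $A=1$, and by the Dieudonn\'e identity can differ from $\det(X)$ only by a sign, so connectedness of $\mathrm{SL}(\mathbb{H}^2)$ forces equality (alternatively, verify directly that conjugation preserves the signed Moore determinant). With these corrections, your proof goes through.
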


\begin{corollary}
\label[corollary]{IdentifyingThe4SphereViaMatrices}
Under the identification of \cref{6DSpacetimeViaQuternionMatrices},
the subspace $\mathbb{R}^5 \subset \mathbb{R}^{1,5}$ with its Euclidean norm $g$, and furthermore (pointed) $S^4 \subset \mathbb{R}^5$, are identified with those Hermitian matrices that are traceless and traceless unitary, respectively, and the corresponding subgroups $\mathrm{Spin}(4) \subset \mathrm{Spin}(5) \subset \mathrm{Spin}(1,5)$ are identified with $\mathrm{Sp}(1) \times \mathrm{Sp}(1) \subset \mathrm{Sp}(2) \subset \mathrm{SL}(\mathbb{H}^2)$ \cref{Sp2AsSubgroupOfSL2H},
as follows:
\begin{equation}
  \label{4SphereViaMatrices}
  \begin{tikzcd}[
    column sep=15pt,
    row sep=9pt
  ]
  \mathrm{Spin}(1,5)
  \acts_{\mathrm{can}}
  \big(
    \mathbb{R}^{1,5},
    \eta
  \big)
  \ar[
    r,
    "{
      \sim
    }"
  ]
  &
  \mathrm{SL}(\mathbb{H}^2)
  \acts_{\mathrm{conj}}
  \Big(
   \big\{
     A 
     \in
     \BoundedOperators(\mathbb{H}^2)
     \,\vert\,
     \substack{
       A^\dagger = A
      }
   \big\},
   - \mathrm{det}
  \Big)
    \\
    \mathrm{Spin}(5)
    \acts_{\mathrm{can}}
    \big(
      \mathbb{R}^{5},
      g
    \big)
    \ar[
      r,
      "{ \sim }"
    ]
    \ar[
      u,
      hook,
      shift right=9
    ]
    &
    \mathrm{Sp}(2)
    \acts_{\mathrm{conj}}
    \bigg(
      \Big\{
        A 
          \in 
        \BoundedOperators(\mathbb{H}^2)
        \,\Big\vert\,
        \substack{
          A^\dagger = A
          \\
          \mathrm{tr}(A) = 0
        }
      \Big\},
      -\mathrm{det}
    \bigg)
    \ar[
      u,
      hook,
      shorten >=-3pt,
      shorten <=-4pt
    ]
    \\
    \mathrm{Spin}(5)
    \acts_{\mathrm{can}}
    S^4,
    \ar[
      r,
      "{ \sim }"
    ]
    \ar[
      u, 
      hook,
      shift right=9
    ]
    &
    \mathrm{Sp}(2)
    \acts_{\mathrm{conj}}
      \bigg\{
        A 
          \in 
        \BoundedOperators(\mathbb{H}^2)
        \,\bigg\vert\,
        \substack{
          A^\dagger = A
          \\
          \mathrm{tr}(A) = 0
          \\
          A \cdot A^\dagger = 1
        }
      \bigg\}
    \ar[
      u, 
      hook,
      shorten >=-5pt,
      shorten <=-7pt
    ]
    \\
    \mathrm{Spin}(4)
    \acts_{\mathrm{can}}
    S^4_{\mathrm{nth}},
    \ar[
      r,
      "{ \sim }"
    ]
    \ar[
      u, 
      shift right=9
    ]
    &
    \mathrm{Sp}(1)^2
    \acts_{\mathrm{conj}}
      \bigg\{
        A 
          \in 
        \BoundedOperators(\mathbb{H}^2)
        \,\bigg\vert\,
        \substack{
          A^\dagger = A
          \\
          \mathrm{tr}(A) = 0
          \\
          A \cdot A^\dagger = 1
        }
      \bigg\}_{\!\!
        \mathrlap{
          \left(
          \substack{%
            1 \; \phantom{-}0%
            \\%
            0 \; -1%
          }
          \right)
          \mathrlap{\,.}
        }
      }
    \ar[
      u, 
      shorten >=-5pt,
      shorten <=-7pt
    ]
  \end{tikzcd}
\end{equation}
\end{corollary}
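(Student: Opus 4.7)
The plan is to derive each row of the diagram \eqref{4SphereViaMatrices} by inspection of the explicit identification \eqref{6DMinkowskiSpacetimeAsQuaternionMatrices}, and then verify equivariance by a short computation using the conjugation action \eqref{ActionOfSL2HOnHermitianMatrices}.

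First I would extract the trace from \eqref{6DMinkowskiSpacetimeAsQuaternionMatrices}: the matrix assigned to $(x^0,\dots,x^5)$ has diagonal entries $x^0 \mp x^1$, so $\mathrm{tr}(A) = 2 x^0$. Hence the time-like slice $\{x^0 = 0\} \simeq \mathbb{R}^5$ corresponds precisely to the tracelessness condition, giving the second row. Moreover, under the conjugation action of $\mathrm{SL}(\mathbb{H}^2)$, tracelessness is preserved exactly by $U$ satisfying $\mathrm{tr}(U A U^\dagger) = \mathrm{tr}(A)$ for all Hermitian $A$; cyclicity together with $U^\dagger U = 1$ reduces this to the condition $U \in \mathrm{Sp}(2)$, so that $\mathrm{Sp}(2) \subset \mathrm{SL}(\mathbb{H}^2)$ is precisely the stabilizer of $\mathbb{R}^5 \subset \mathbb{R}^{1,5}$, which matches $\mathrm{Spin}(5) \subset \mathrm{Spin}(1,5)$ under the isomorphism of \cref{6DSpacetimeViaQuternionMatrices}.

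For the third row, I would show that restricted to traceless Hermitian matrices, the conditions $A \cdot A^\dagger = 1$ and $\det_D(A) = 1$ coincide and both express $g(x,x) = 1$. The key computation: writing a traceless Hermitian matrix as $A = \left(\begin{smallmatrix} -x^1 & b \\ b^\ast & x^1\end{smallmatrix}\right)$ with $b = x^2 + \mathbf{i} x^3 + \mathbf{j} x^4 + \mathbf{k} x^5 \in \mathbb{H}$, a direct multiplication (using that $x^1 \in \mathbb{R}$ commutes with $b, b^\ast$) yields
\begin{equation}
  A \cdot A^\dagger \;=\; A^2 \;=\; \big((x^1)^2 + \vert b\vert^2\big)\cdot I \;=\; g(x,x)\cdot I\,,
\end{equation}
and the same scalar is $\det_D(A)$ by the defining formula \eqref{QuaternionicDeterminant} (the non-commutative case $a \neq 0$: $\det_D(A) = \vert a d - a c a^{-1} b\vert = \vert -(x^1)^2 - |b|^2\vert$). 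Equivariance of $\mathrm{Sp}(2)$ is then immediate: $(U A U^\dagger)(U A U^\dagger)^\dagger = U A A^\dagger U^\dagger = U U^\dagger = 1$.

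For the bottom row, I would designate the ``north pole'' in terms of matrices as $A_{\mathrm{nth}} := \mathrm{diag}(1,-1)$ (corresponding to $x^1 = -1$, all other $x^i = 0$), and compute its stabilizer under $U \mapsto U A_{\mathrm{nth}} U^\dagger$ in $\mathrm{Sp}(2)$. An easy block computation shows that $U A_{\mathrm{nth}} U^\dagger = A_{\mathrm{nth}}$ forces $U$ to be block-diagonal, and the unitarity condition $U U^\dagger = 1$ then reduces block by block to each diagonal entry lying in $\mathrm{Sp}(1) = S(\mathbb{H})$, giving the embedding $\mathrm{Sp}(1) \times \mathrm{Sp}(1) \hookrightarrow \mathrm{Sp}(2)$ as the diagonal subgroup. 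Since stabilizers of the north pole in $\mathrm{Spin}(5) \acts S^4$ give the standard inclusion $\mathrm{Spin}(4) \hookrightarrow \mathrm{Spin}(5)$, this matches the abstract group identification. The main obstacle I anticipate is bookkeeping with the non-commutative Dieudonn{\'e} determinant in step two: one has to verify that the formula chosen in \eqref{QuaternionicDeterminant} gives the same value for traceless Hermitian $A$ as the square of $A$ computes, but using the multiplicativity \eqref{PropertiesOfQuaternionicDeterminant} together with $A^2 = g(x,x)\cdot I$ sidesteps any case distinction.
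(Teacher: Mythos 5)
Your overall plan matches the paper's proof row by row: read off tracelessness from the explicit matrix formula, identify the stabilizer of the traceless hyperplane with $\mathrm{Sp}(2)$, characterize the unit sphere, and compute the stabilizer of the north pole $\mathrm{diag}(1,-1)$. Rows one, three and four of your argument are essentially the paper's, and your direct computation $A^2 = \big((x^1)^2 + \vert b\vert^2\big)\cdot I$ for traceless Hermitian $A$ is a fine (equivalent) way to see the third row.

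There is, however, a genuine gap in your second row, which is the nontrivial heart of the statement. You assert that preserving tracelessness is equivalent to preserving the trace, and then say that ``cyclicity together with $U^\dagger U = 1$ reduces this to $U \in \mathrm{Sp}(2)$.'' This is circular: $U^\dagger U = 1$ \emph{is} the condition $U \in \mathrm{Sp}(2)$, so invoking it only proves the easy converse implication (unitarity preserves the trace), not the claim you actually need, namely that an arbitrary $G \in \mathrm{SL}(\mathbb{H}^2)$ preserving the traceless Hermitian matrices must be unitary. Cyclicity of the trace is also not valid for quaternionic matrices in general (only the real part of the trace is cyclic), so even the converse direction as you state it needs care. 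The paper's argument avoids all this: $G$ preserves the traceless hyperplane iff (with respect to the Lorentzian quadratic form $-\det$ on Hermitian matrices) it preserves the orthogonal complement of that hyperplane, which is the real line spanned by the identity matrix; hence $G\cdot 1 \cdot G^\dagger$ must be a positive multiple of $1$, and the unimodularity $\det_D(G) = 1$ together with multiplicativity of $\det_D$ forces this multiple to be $1$, i.e.\ $G G^\dagger = 1$. That is the step your write-up is missing.

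A smaller imprecision: preserving a hyperplane only means the defining linear functional is preserved up to a nonzero scalar, not on the nose, so ``tracelessness is preserved exactly by $U$ satisfying $\mathrm{tr}(U A U^\dagger) = \mathrm{tr}(A)$'' overstates what follows. Similarly, keep the two ``determinants'' separate: the $-\det$ appearing as the quadratic form in \cref{6DSpacetimeViaQuternionMatrices} is the (signed) Moore-type determinant of a Hermitian matrix, not the nonnegative Dieudonn\'e determinant $\det_D$ of \cref{QuaternionicDeterminant}; they only agree up to sign on the traceless slice.
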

\begin{proof}
  From the component expression in \cref{6DMinkowskiSpacetimeAsQuaternionMatrices} it is manifest that the condition $x^0 = 0$, characterizing the subset $\mathbb{R}^5 \subset \mathbb{R}^{1,5}$, corresponds to vanishing trace on the Hermitian matrices.
    To see that its is precisely $\mathrm{Sp}(2) = \mathrm{U}(\mathbb{H}^2) \subset \mathrm{SL}(\mathbb{H}^2)$ which preserves this tracelessness condition, we note that this is equivalent to this subgroup preserving the orthogonal complement of the traceless Hermitian matrices. Again by the component expression in  \cref{6DMinkowskiSpacetimeAsQuaternionMatrices}, this orthogonal complement is spanned by the identity matrix $1$. Therefore $G \in \mathrm{SL}(\mathbb{H}^2)$ preserves the traceless matrices iff $G \cdot 1 \cdot G^\dagger = 1$, hence iff $G$ is unitary \cref{SP2}.

    Further from the component expression in \cref{6DMinkowskiSpacetimeAsQuaternionMatrices} one sees that  
    \begin{equation}
      \left.
      \substack{
        A^\dagger = A
        \\[2pt]
        \mathrm{tr}(A) = 0
      }
      \right\}
      \;\;\;
      \Rightarrow
      \;\;\;
      A A^\dagger = -\mathrm{det}(A) 
      \,,
    \end{equation}
    which shows that the unitary traceless Hermitian matrices form the unit sphere inside $\mathbb{R}^5$.
    Finally, it is readily seen that the subgroup of $\mathrm{Sp}(2)$ whose conjugation action fixes the base point 
    $\scalebox{0.7}{$\left(\begin{matrix}1 & \phantom{-}0 \\ 0  & -1\end{matrix}\right)$}$ among these matrices is the diagonal 
    $$
      \mathrm{Sp}(1) \times \mathrm{Sp}(1) \simeq \left(\begin{matrix}\mathrm{Sp}(1) & 0 \\ 0 & \mathrm{Sp}(1)\end{matrix}\right).
    $$
  This concludes the proof.
\end{proof}

\begin{remark}
  The direct analogue of \cref{IdentifyingThe4SphereViaMatrices} over the complex numbers gives
  \begin{equation}
    \label{SU2ActionOnS2}
    \begin{tikzcd}
      \mathrm{Spin}(3)
      \acts_{\mathrm{can}}
      S^2
      \ar[
        rr,
        "{ \sim }"
      ]
      &&
      \mathrm{SU}(2)
      \acts_{\mathrm{conj}}
      \left\{
        A \in
        \BoundedOperators(\mathbb{C}^2)
      \,\middle\vert\,
        \substack{
          A^\dagger = A
          \\
          \mathrm{tr}(A) = 0
          \\
          A \cdot A^\dagger = 1
        }
      \right\}
      \mathrlap{.}
    \end{tikzcd}
  \end{equation}
  However --- in line with the fact that $\mathrm{Sp}(2) = \mathrm{U}(\mathbb{H}^2)$ is actually the quaternionic unitary group \cref{SP2} and since unity of the determinant no longer plays a role when we conjugating matrices $A$ constrained to have $\mathrm{det}(A) = -1$ --- the $\mathrm{SU}(2)$-action on the right of \cref{SU2ActionOnS2} evidently extends to a $\mathrm{U}(2) \simeq \mathrm{Spin}^c(3)$-action:
  \begin{equation}
    \label{U2ActionOnS2}
    \begin{tikzcd}
      \mathrm{Spin}^c(3)
      \acts_{\mathrm{can}}
      S^2
      \ar[
        rr,
        "{ \sim }"
      ]
      &&
      \mathrm{U}(2)
      \acts_{\mathrm{conj}}
      \left\{
        A \in
        \BoundedOperators(\mathbb{C}^2)
      \,\middle\vert\,
        \substack{
          A^\dagger = A
          \\
          \mathrm{tr}(A) = 0
          \\
          A \cdot A^\dagger = 1
        }
      \right\}
      \mathrlap{.}
    \end{tikzcd}
  \end{equation}
\end{remark}

\begin{definition}[{cf. \cite[\S 5.3]{Brown1968}}]
\label[definition]{ProjectiveSpace}
For $\mathbb{K} \in \{\mathbb{R}, \mathbb{C}, \mathbb{H}\}$,
The \emph{$\mathbb{K}$-projective space} of dimension $n \in \mathbb{N}$ is the space of \emph{right} $\mathbb{K}$-lines in $\mathbb{K}^{n+1}$,
\begin{equation}
  \label{HP1}
  \begin{aligned}
  \mathbb{K}P^{n}
  &=
  \big\{
    v \cdot \mathbb{K}
    \subset 
    \mathbb{K}^{n+1}
    \;\big\vert\;
    v \in \mathbb{K}^{n+1} \setminus \{0\}
  \big\}
  \mathrlap{\,,}
  \end{aligned}
\end{equation}
traditionally denoted
\begin{equation}
  \label{TradNotationForLines}
  \big[
    v_1 : v_2 : \cdots : v_{n+1}
  \big]
  :=
  v \cdot \mathbb{K}
  \,,
  \;\;
  \mbox{
    for 
    $v = (v_1, \cdots, v_{n+1}) \in \mathbb{K}^{n+1} \setminus \{0\}$
  }
  \mathrlap{,}
\end{equation}
with standard injections
\begin{equation}
  \label{StandardInclusionOfProjectiveSpaces}
  \begin{tikzcd}[
    sep=0pt
  ]
    \mathbb{K}P^n
    \ar[
      rr, 
      hook
    ]
    &&
    \mathbb{K}P^{n+1}
    \\
    \big[
      v_1 : \cdots : v_{n+1}
    \big]
    &\mapsto&
    \big[
      v_1 : \cdots : v_{n+1} : 0
    \big]
    \mathrlap{\,,}
  \end{tikzcd}
\end{equation}
canonical fibrations
\begin{equation}
  \begin{tikzcd}[
    column sep=3pt,
    row sep=0pt
  ]
    \mathbb{R}P^{2n+1}
    \ar[
      rr,
      ->>,
    ]
    &&
    \mathbb{C}P^n
    \\
    v \cdot \mathbb{R}
    &\mapsto&
    v \cdot \mathbb{C}
    \mathrlap{\,,}
  \end{tikzcd}
  \hspace{1cm}
  \begin{tikzcd}[
    column sep=3pt,
    row sep=0pt
  ]
    \mathbb{C}P^{2n+1}
    \ar[
      rr,
      ->>,
    ]
    &&
    \mathbb{H}P^n
    \\
    v \cdot \mathbb{C}
    &\mapsto&
    v \cdot \mathbb{H}
    \mathrlap{\,,}
  \end{tikzcd}
\end{equation}
and their composites
\begin{equation}
  \label{MappingRP2nToCPn}
  \begin{tikzcd}
    \mathbb{R}P^{2n}
    \ar[
      r,
      hook
    ]
    \ar[
      rr,
      uphordown,
      "{ 
        f^{\mathbb{R}}_{\mathbb{C}} 
      }"{description}
    ]
    &
    \mathbb{R}P^{2n+1}
    \ar[
      r,
      ->>,
    ]
    &
    \mathbb{C}P^{n}
    \mathrlap{\,,}
  \end{tikzcd}
  \;\;\;\;\;
  \begin{tikzcd}
    \mathbb{C}P^{2n}
    \ar[
      r,
      hook
    ]
    \ar[
      rr,
      uphordown,
      "{ 
        f^{\mathbb{C}}_{\mathbb{H}} 
      }"{description}
    ]
    &
    \mathbb{C}P^{2n+1}
    \ar[
      r,
      ->>,
    ]
    &
    \mathbb{H}P^{n}
    \mathrlap{.}
  \end{tikzcd}
\end{equation}
These spaces
carry a canonical action of the unitary group $\mathrm{U}(n+1, \mathbb{K})$ \cref{SP2} by left multiplication:
\begin{equation}
  \label{Sp2ActionOnHP1}
  \begin{tikzcd}[
    sep=0
  ]
    \mathrm{O}(n+1) 
    \times
    \mathbb{R}P^n
    \ar[
      rr,
      "{ \mathrm{mult} }"
    ]
    &&
    \mathbb{R}P^{n}
    \\
    \mathrm{U}(n+1) 
    \times
    \mathbb{C}P^n
    \ar[
      rr,
      "{ \mathrm{mult} }"
    ]
    &&
    \mathbb{C}P^{n}
    \\
    \mathrm{Sp}(n+1) 
    \times
    \mathbb{H}P^n
    \ar[
      rr,
      "{ \mathrm{mult} }"
    ]
    &&
    \mathbb{H}P^{n}
    \\
    \big(
      G,\, v \cdot \mathbb{K}
    \big)
    &\longmapsto&
    G \cdot v \cdot \mathbb{K}
    \mathrlap{\,.}
  \end{tikzcd}
\end{equation}
\end{definition}

\begin{corollary}
      We have an identification of the two/four-sphere with the $\mathbb{C}/\mathbb{H}$-projective line \cref{HP1} which intertwines the canonical $\mathrm{Spin}^c(3)$/\,$\mathrm{Spin}(5)$-action on $S^2/S^4$ with the multiplication action \cref{Sp2ActionOnHP1} of $\mathrm{U}(2)$/\,$\mathrm{Sp}(2)$ on 
  $\mathbb{C}P^1$/\,$\mathbb{H}P^1$ \cref{HP1}:
  \begin{equation}
    \label{4SphereAsHP1}
    \begin{tikzcd}[
      row sep=0pt, column sep=15pt
    ]
      \mathrm{Spin}^c(3) \acts_{\mathrm{can}} S^2
      \ar[
        rr,
        "{ \sim }"
      ]
      &&
      \mathrm{U}(2) 
        \acts_{\mathrm{mult}}
      \mathbb{C}P^1,
      \\
      \mathrm{Spin}(5) \acts_{\mathrm{can}} S^4
      \ar[
        rr,
        "{ \sim }"
      ]
      &&
      \mathrm{Sp}(2) \acts_{\mathrm{mult}}
      \mathbb{H}P^1.
    \end{tikzcd}
  \end{equation}
\end{corollary}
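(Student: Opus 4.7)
My plan is to construct, uniformly for $\mathbb{K}\in\{\mathbb{C},\mathbb{H}\}$, an explicit equivariant homeomorphism
$$
\phi \,:\, \mathbb{K}P^1 \;\xrightarrow{\;\sim\;}\; \big\{ A\in\BoundedOperators(\mathbb{K}^2)\,\big\vert\, A^\dagger = A,\ \mathrm{tr}(A)=0,\ A\cdot A^\dagger = \mathbf{1}\big\}
$$
from the multiplication action of $\mathrm{U}(2,\mathbb{K})$ on the projective line to the conjugation action on the right-hand side, and then to compose $\phi$ with the equivariant identifications already supplied by \cref{U2ActionOnS2} (for $\mathbb{K}=\mathbb{C}$) and by \cref{IdentifyingThe4SphereViaMatrices} (for $\mathbb{K}=\mathbb{H}$), which identify those matrix spaces with $S^2$ and $S^4$ carrying the canonical $\mathrm{Spin}^c(3)$- and $\mathrm{Spin}(5)$-actions respectively.

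The map $\phi$ will assign to a $\mathbb{K}$-line $[v]\in\mathbb{K}P^1$ the operator $\phi([v]) := \mathbf{1} - 2P_v$, where $P_v := v(v^\dagger v)^{-1}v^\dagger$ is the orthogonal projector onto $v\cdot\mathbb{K}$; this is well-defined over $\mathbb{H}$ because $v^\dagger v = \vert v\vert^2 \in \mathbb{R}_{>0}$, and is independent of the choice of representative because under $v\rightsquigarrow v\cdot q$ the factors of $q$ and $q^{-1}$ cancel in the definition of $P_v$. Hermiticity of $\phi([v])$ is immediate from $P_v^\dagger = P_v$; tracelessness follows from $\mathrm{tr}(P_v) = \vert v\vert^{-2}\,\mathrm{tr}(vv^\dagger) = \vert v\vert^{-2}\, v^\dagger v = 1$, so that $\mathrm{tr}(\mathbf{1}-2P_v)=0$; and unitarity follows from $P_v^2=P_v$, giving $(\mathbf{1}-2P_v)^2 = \mathbf{1}$. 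An inverse is constructed by sending any traceless, unitary, Hermitian $A$ to the $\mathbb{K}$-line $\mathrm{im}\big((\mathbf{1}-A)/2\big)$, using that $A^2 = \mathbf{1}$ makes $(\mathbf{1}-A)/2$ a Hermitian idempotent of trace $1$. Continuity of both maps and compactness of $\mathbb{K}P^1$ then promote $\phi$ to a homeomorphism.

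Equivariance is the crucial step: for $G\in\mathrm{U}(2,\mathbb{K})$, unitarity $G^\dagger G = \mathbf{1}$ gives $P_{Gv} = (Gv)(v^\dagger G^\dagger Gv)^{-1}(v^\dagger G^\dagger) = G\,v(v^\dagger v)^{-1}v^\dagger\,G^\dagger = G P_v G^\dagger$, so $\phi(G\cdot[v]) = \mathbf{1} - 2\,G P_v G^\dagger = G\,(\mathbf{1}-2P_v)\,G^\dagger = G\cdot\phi([v])\cdot G^\dagger$, matching the conjugation action \cref{ActionOfSL2HOnHermitianMatrices}. Composing with the earlier equivariant identifications of the matrix spaces with $S^2$ and $S^4$ then gives the two asserted equivariant homeomorphisms \cref{4SphereAsHP1}.

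The only subtlety I anticipate is bookkeeping over the non-commutative base ring $\mathbb{H}$: one must consistently treat $\mathbb{H}^2$ as a \emph{right} $\mathbb{H}$-module with the matrix group $\mathrm{Sp}(2)$ acting by \emph{left} multiplication, use the convention that the inner product $u^\dagger v$ takes values in $\mathbb{H}$ while the norm-square $v^\dagger v$ is real, and verify that the identity $\mathrm{tr}(vv^\dagger) = v^\dagger v$ (which would be false for generic quaternionic entries under the trace) is legitimate here precisely because the summands $v_i v_i^\ast = \vert v_i\vert^2$ are real. None of this is deep, but if any step is mishandled the projector formula loses its meaning, so the proof should flag these points explicitly.
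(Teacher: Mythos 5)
Your proof is correct and follows essentially the same route as the paper: pass from the line $[v]$ to the rank-one Hermitian projector, then to the traceless Hermitian unitary $\mathbf{1}-2P_v$, and read off equivariance from the conjugation identity $P_{Gv}=GP_vG^\dagger$; the paper composes these same bijections and phrases equivariance dually as $\mathrm{ker}(GPG^\dagger)=G\cdot\mathrm{ker}(P)$. The only cosmetic difference is a sign: the paper takes the projector with kernel $v\cdot\mathbb{K}$ (so $A=\mathbf{1}-2P$ there is the antipode of your $\phi([v])$), which is an equally valid identification; your flags about right $\mathbb{H}$-linearity and $v^\dagger v$ being real are the right subtleties to call out and match the conventions the paper sets in \cref{ProjectiveSpace} and \cref{Sp2ActionOnHP1}.
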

\begin{proof}
We spell out the argument over $\mathbb{H}$:
By \cref{IdentifyingThe4SphereViaMatrices}, the 4-sphere is identified with traceless unitary hermitian matrices. These  are furthermore identified, first with rank=1 projection operators on $\mathbb{H}^2$, and then with lines in $\mathbb{H}^2$, like this:
\begin{equation}
  \label{IdentifyingS4WithHP1}
  \begin{tikzcd}[
    row sep=-2pt,
    column sep=-2pt
  ]
    S^4
    \ar[
      rr,
      <->,
      "{ \sim }",
      "{
        \scalebox{.6}{%
          \cref{4SphereViaMatrices}%
        }
      }"{swap, yshift=-1pt}
    ]
    &&
    \bigg\{
      A \in \BoundedOperators(\mathbb{H}^2)
      \,\bigg\vert\,
      \substack{
        A^\dagger = A
        \\
        \mathrm{tr}(A) = 0
        \\
        A\cdot A = 1
      }
    \bigg\}
    \ar[
      rr,
      <->,
      "{ \sim }"
    ]
    &&
    \bigg\{
      P \in \BoundedOperators(\mathbb{H}^2)
      \,\bigg\vert\,
      \substack{
        P^\dagger = P
        \\
        \mathrm{tr}(P) = 1
        \\
        P \cdot P = P
        \\
      }
    \bigg\}
    \ar[
      rr,
      <->,
      "{ \sim }"
    ]
    &&
    \mathbb{H}P^1
    \\
    &\hspace{12pt}&
    A 
      &\mapsto& 
    P_A := \tfrac{1}{2}(1 - A)
      &\mapsto&
    \mathrm{ker}(P_A)
    \\
    &&
    A_P := 
    (1 - 2 P) 
      &\mapsfrom& 
    P,\,
    P_v
    :=
    1
      -
    \tfrac
      {v \cdot v^\dagger}
      {\rule{0pt}{5.8pt}\vert v \vert^2} 
      &\mapsfrom&
    v \cdot \mathbb{H}
    \mathrlap{\,.}
  \end{tikzcd}
\end{equation}
Now, \cref{IdentifyingThe4SphereViaMatrices} also shows that the $\mathrm{Spin}(5)$-action on $S^4$ becomes the $\mathrm{Sp}(2)$-conjugation action \cref{ActionOfSL2HOnHermitianMatrices} on the matrices $A$. Under the above bijection \cref{IdentifyingS4WithHP1}, this translates first to the same kind of conjugation action on projection operators $P$, and then to left multiplication on $\mathbb{H}$-lines, as claimed, since:
\begin{equation}
  \label{KernelOfConjugatedProjectionOperator}
  \mathrm{ker}\big(
    G \cdot P \cdot G^\dagger
  \big)
  \simeq
  G \cdot \mathrm{ker}(P)
  \,.
  \qedhere
\end{equation}

The argument over $\mathbb{C}$ is verbatim the same, now using the $\mathrm{U}(2)$-action from \cref{U2ActionOnS2}.
\end{proof}

\begin{definition}
  \label[definition]{TheTautologicalHLineBundle}
  The kernel bundle of the family of the $\mathbb{C}P^1/\mathbb{H}P^1$-parameterized projection operators
  \cref{IdentifyingS4WithHP1}
  is the \emph{tautological $\mathbb{C}/\mathbb{H}$-line bundle} over $\mathbb{C}P^1/\mathbb{H}P^1$:
  \begin{equation}
    \label{TautologicalLineBundleAsSubbundle}
    \begin{tikzcd}[
      column sep=18pt,
      row sep=-1pt
    ]
    &
      \mathllap{
        \mathcal{L}^{\mathrm{taut}}_{\mathbb{K}P^1}
        :=
        \;
      }
      \Big\{
        (P,v)
        \,\Big\vert\,
        \substack{
          P \in \mathbb{K}P^1
          \\
          v \in \mathrm{ker}(P)
        }
      \Big\}
      \ar[
        rr,
        hook
      ]
      \ar[
        dr
      ]
      &&
      \mathbb{K}P^1
      \times
      \mathbb{K}^2
      \mathrlap{\,.}
      \ar[
        dl
      ]
      \\
&       & \mathbb{H}P^1
    \end{tikzcd}
  \end{equation}
  
\end{definition}

Next we are after the canonical trivialization of this tautological line bundle after pullback along the \emph{Hopf fibration}.

\subsubsection{The Hopf Fibration}

\begin{definition}
\label[definition]{HopfFibration}
  The $\mathbb{R}/\mathbb{C}/\mathbb{H}$-\emph{Hopf fibration} is the map which sends unit norm elements in $\mathbb{R}^2/\mathbb{C}^2/\mathbb{H}^2$ (\cref{Quaternions}) to the $\mathbb{R}/\mathbb{C}/\mathbb{H}$-lines which they span \cref{HP1}:
  \begin{equation}
    \label{TheHopfFibration}
    \begin{tikzcd}[
      column sep=-5pt,
      row sep=20pt
    ]
      S^1
      \ar[
        d,
        "{
          \RealHopfFibration
        }"
      ]
      &\simeq&
      S\big(
        \mathbb{R}^2
      \big)
      \ar[
        d
      ]
      &\ni&
      v
      \ar[
        d,
        |->,
        shorten=6pt
      ]
      \\
      S^1
      &\simeq&
      \mathbb{R}P^1
      &\ni&
      v \cdot \mathbb{R}
      \mathrlap{\,,}
    \end{tikzcd}
    \hspace{.8cm}
    \begin{tikzcd}[
      column sep=-5pt,
      row sep=20pt
    ]
      S^3
      \ar[
        d,
        "{
          \ComplexHopfFibration
        }"
      ]
      &\simeq&
      S\big(
        \mathbb{C}^2
      \big)
      \ar[
        d
      ]
      &\ni&
      v
      \ar[
        d,
        |->,
        shorten=6pt
      ]
      \\
      S^2
      &\simeq&
      \mathbb{C}P^1
      &\ni&
      v \cdot \mathbb{C}
      \mathrlap{\,,}
    \end{tikzcd}
    \hspace{.8cm}
    \begin{tikzcd}[
      column sep=-5pt,
      row sep=20pt
    ]
      S^7
      \ar[
        d,
        "{
          \QuaternionicHopfFibration
        }"
      ]
      &\simeq&
      S\big(
        \mathbb{H}^2
      \big)
      \ar[
        d
      ]
      &\ni&
      v
      \ar[
        d,
        |->,
        shorten=6pt
      ]
      \\
      S^4
      &
        \underset{%
          \mathclap{
          \scalebox{.6}{
            \cref{4SphereAsHP1}
          }        
          }%
        }{
          \simeq
        }
      &
      \mathbb{H}P^1
      &\ni&
      v \cdot \mathbb{H}
      \mathrlap{\,.}
    \end{tikzcd}
  \end{equation}
\end{definition}
\begin{remark}
  \label[remark]{FactorizationOfHopfFibration}
  By incrementally forming lines along the sequence of inclusions $\mathbb{R} \subset \mathbb{C} \subset \mathbb{H}$, the $\mathbb{C}/\mathbb{H}$-Hopf fibration \cref{TheHopfFibration} canonically factors
  through further relevant fibrations: 
  \begin{equation}
    \label{FactoringTheCHopfFibration}
    \begin{tikzcd}[row sep=-1pt, column sep=small]
      S(\mathbb{C}^2)
      \ar[
        rrrr,
        uphordown,
        ->>,
        "{
          \ComplexHopfFibration
        }"
      ]
      \ar[
        rr,
        ->>
      ]
      &&
      \mathbb{R}P^3
      \ar[
        rr,
        ->>,
        "{
          \FactoredCHopfFibration
        }"
      ]
      &&
      \mathbb{C}P^1
      \\
      v 
        &\longmapsto&
      v \cdot \mathbb{R}
        &\longmapsto&
      v \cdot \mathbb{C}
      \mathrlap{\,,}
    \end{tikzcd}
  \end{equation}
  and
  \begin{equation}
    \label{FactoringTheHHopfFibration}
    \begin{tikzcd}[row sep=-1pt, column sep=small]
      S(\mathbb{H}^2)
      \ar[
        rrrrrr,
        uphordown,
        ->>,
        "{
          \QuaternionicHopfFibration
        }"
      ]
      \ar[
        rr,
        ->>
      ]
      &&
      \mathbb{R}P^7
      \ar[
        rr,
        ->>
      ]
      &&
      \mathbb{C}P^3
      \ar[
        rr,
        ->>,
        "{
          \TwistorFibration
        }"
      ]
      &&
      \mathbb{H}P^1
      \\
      v 
        &\longmapsto&
      v \cdot \mathbb{R}
        &\longmapsto&
      v \cdot \mathbb{C}
        &\longmapsto&
      v \cdot \mathbb{H}
      \mathrlap{\,.}
    \end{tikzcd}
  \end{equation}
  The map $\FactoredCHopfFibration :\begin{tikzcd}[sep=small]\mathbb{R}P^3 \ar[r, ->>] & \mathbb{C}P^1\end{tikzcd}$ \cref{FactoringTheCHopfFibration} plays a key role below in \cref{ClassifyingFibrationForMassTerms}, while
  $\TwistorFibration : \begin{tikzcd}[sep=small]\mathbb{C}P^3 \ar[r, ->>] & \mathbb{H}P^1\end{tikzcd}$ \cref{FactoringTheHHopfFibration}, also known as the \emph{twistor fibration} (cf. \cite[\S 2]{FSS22-Twistorial}), plays a key role in more refined variants of flux quantization on M5-branes, discussed elsewhere (cf. \parencites{SS25-Seifert}{SS25-Srni}[\S 12]{FSS23-Char}{SS25-TEC}).
\end{remark}
\begin{lemma}
  \label[lemma]{EquivarianceOfHopfFibration}
  The $\mathbb{C}/\mathbb{H}$-Hopf fibration \cref{TheHopfFibration} is equivariant with respect to
  \begin{enumerate}
  \item
  the canonical $\mathrm{U}(2)$/\,$\mathrm{Sp}(2)$-action on $S(\mathbb{C}^2)/S(\mathbb{H}^2)$;

  \item the multiplication action \cref{Sp2ActionOnHP1} on $\mathbb{C}P^1$/\,$\mathbb{H}P^1$:
  \end{enumerate}
  \begin{equation}
    \label{TheEquivariantOfTheHopfFibration}
    \begin{tikzcd}[
      column sep=-2pt,
      row sep=33pt
    ]
      S^3 
      \ar[
        out=31+180,
        in=-31+180,
        looseness=5,
        shift right=1,
        "{
          \mathrm{Spin}^c(3)
        }"{description}
      ]
      \ar[
        d,
        "{
          \ComplexHopfFibration
        }"
      ]
      &\simeq&
      S(\mathbb{C}^2)
      \ar[
        out=-22,
        in=+22,
        looseness=3.4,
        shift left=1,
        "{
          \mathrm{U}(2)
        }"{description}
      ]
      \ar[
        d,
        "{
          \substack{
            v
            \\
            \rotatebox[origin=c]{-90}{$\mapsto$}
            \\
            v \cdot \mathbb{C}
          }
        }"
      ]
      \\
      S^2
      \ar[
        out=31+180,
        in=-31+180,
        looseness=5,
        shift right=1,
        "{
          \mathrm{Spin}^c(3)
        }"{description}
      ]
      &\simeq&
      \mathbb{C}P^1
      \ar[
        out=-25,
        in=+25,
        looseness=4,
        shift left=1.5,
        "{
          \mathrm{U}(2)
        }"{description}
      ]
    \end{tikzcd}
    \hspace{1.5cm}
    \begin{tikzcd}[
      column sep=-2pt,
      row sep=33pt
    ]
      S^7 
      \ar[
        out=31+180,
        in=-31+180,
        looseness=5,
        shift right=1,
        "{
          \mathrm{Spin}(5)
        }"{description}
      ]
      \ar[
        d,
        "{
          \QuaternionicHopfFibration
        }"
      ]
      &\simeq&
      S(\mathbb{H}^2)
      \ar[
        out=-22,
        in=+22,
        looseness=3.4,
        shift left=1,
        "{
          \mathrm{Sp}(2)
        }"{description}
      ]
      \ar[
        d,
        "{
          \substack{
            v
            \\
            \rotatebox[origin=c]{-90}{$\mapsto$}
            \\
            v \cdot \mathbb{H}
          }
        }"
      ]
      \\
      S^4
      \ar[
        out=31+180,
        in=-31+180,
        looseness=5,
        shift right=1,
        "{
          \mathrm{Spin}(5)
        }"{description}
      ]
      &\simeq&
      \mathbb{H}P^1
      \ar[
        out=-25,
        in=+25,
        looseness=4,
        shift left=1.5,
        "{
          \mathrm{Sp}(2)
          \mathrlap{\,.}
        }"{description}
      ]
    \end{tikzcd}
  \end{equation}
  This equivariance extends canonically to the factorizations of \cref{FactorizationOfHopfFibration}.
\end{lemma}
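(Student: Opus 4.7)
The plan is to verify equivariance by unwinding the defining formulas $v \mapsto v\cdot\mathbb{K}$ of the Hopf fibration \cref{TheHopfFibration} and the multiplication action \cref{Sp2ActionOnHP1} on projective space, which reduces the claim to an associativity check; the identification with the canonical $\mathrm{Spin}^c(3)/\mathrm{Spin}(5)$-action on $S^2/S^4$ has already been arranged in \cref{4SphereAsHP1}.

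First I would handle the main case of the $\mathbb{H}$-Hopf fibration. For $G \in \mathrm{Sp}(2)$ and $v \in S(\mathbb{H}^2)$, unitarity $G G^\dagger = 1$ gives $\vert G \cdot v\vert = \vert v\vert = 1$, so left multiplication restricts to an action on $S(\mathbb{H}^2)$. Equivariance of $\QuaternionicHopfFibration$ is then the identity
\[
\QuaternionicHopfFibration(G \cdot v)
\,=\,
(G \cdot v)\cdot\mathbb{H}
\,=\,
G\cdot(v\cdot\mathbb{H})
\,=\,
G\cdot\QuaternionicHopfFibration(v)
\,,
\]
which is just associativity of the $(\mathrm{Sp}(2),\mathbb{H})$-bimodule structure on $\mathbb{H}^2$ (matrix multiplication on the left commuting with right scalar multiplication entrywise). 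The identification of this $\mathrm{Sp}(2)$-action with the canonical $\mathrm{Spin}(5)$-action on $S^4$ is the content of \cref{IdentifyingThe4SphereViaMatrices} (via the bijection $\mathbb{H}P^1 \leftrightarrow S^4$ spelled out in \cref{IdentifyingS4WithHP1}, where $\mathrm{Sp}(2)$-conjugation on rank-1 projectors $P_v$ corresponds precisely to $G\cdot(v\cdot\mathbb{H})$ by \cref{KernelOfConjugatedProjectionOperator}). The $\mathbb{C}$-case is treated verbatim, using \cref{U2ActionOnS2} in place of \cref{IdentifyingThe4SphereViaMatrices}.

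Next I would extend to the factorizations of \cref{FactorizationOfHopfFibration}. Each intermediate arrow in \cref{FactoringTheCHopfFibration} and \cref{FactoringTheHHopfFibration} is of the form $v \mapsto v\cdot\mathbb{K}'$ for a subring $\mathbb{K}' \subset \mathbb{K} \in \{\mathbb{C},\mathbb{H}\}$, so the same associativity argument applies provided the matrix group acts $\mathbb{K}'$-linearly on the left. For $\mathbb{R}P^3$ and $\mathbb{R}P^7$ this is automatic, since $\mathrm{U}(2)$ and $\mathrm{Sp}(2)$ are real-linear. For the twistor factor $\TwistorFibration : \mathbb{C}P^3 \twoheadrightarrow \mathbb{H}P^1$, one uses the standard identification $\mathbb{H}^2 \simeq \mathbb{C}^4$ via the decomposition $\mathbb{H} = \mathbb{C} \oplus \mathbb{C}\mathbf{j}$ with complex structure induced by \emph{right} multiplication by $\mathbf{i}$; since left matrix multiplication by elements of $\mathrm{Sp}(2)$ commutes with this right $\mathbf{i}$-action, the $\mathrm{Sp}(2)$-action is $\mathbb{C}$-linear, hence descends to $\mathbb{C}P^3$.

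There is no substantive obstacle — everything is associativity of bimodule structures plus invocation of the already-proved identifications of the canonical spin actions with matrix actions. The only point requiring a moment's care is the right-$\mathbb{C}$-linearity used for the twistor factorization, which I would make explicit in a single line.
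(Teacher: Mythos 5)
Your proof is correct and takes the same approach as the paper's: the paper's entire proof is the observation that $v \mapsto G\cdot v \mapsto (G\cdot v)\cdot\mathbb{H}$ and $v \mapsto v\cdot\mathbb{H} \mapsto G\cdot(v\cdot\mathbb{H})$ agree by associativity, plus the remark "analogously over $\mathbb{C}$." Your explicit treatment of the twistor factorization via right-$\mathbf{i}$-linearity is a welcome addition, as the paper's proof does not address the lemma's final sentence about the factorizations at all.
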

\begin{proof}
  Evidently, for $v \in \mathbb{H}^2$ and $G \in \mathrm{Sp}(2)$, the assignments
  \begin{equation}
    \begin{tikzcd}[
      column sep=20pt,
      row sep=15pt
    ]
      v 
        \ar[
          r,
          shorten=3pt,
          |->
        ]
        \ar[
          d,
          shorten=3pt,
          |->
        ]
        &
      G \cdot v
        \ar[
          d,
          shorten=3pt,
          |->
        ]
      \\
      v \cdot \mathbb{H}
        \ar[
          r,
          shorten=3pt,
          |->
        ]
      &
      G \cdot v \cdot \mathbb{H}
    \end{tikzcd}
  \end{equation}
  commute. Analogously so over $\mathbb{C}$.
\end{proof}

\begin{lemma}
  \label[lemma]{CofibersOfHopfFibration}
  The cell attachments \textup{(\cref{CellAttachment})} along the Hopf fibrations \textup{(\cref{HopfFibration})} and their factorizations \textup{(\cref{FactorizationOfHopfFibration})} are projective spaces \textup{(\cref{ProjectiveSpace})} as follows:%
  \def\ColumnSep{15pt}
  \begin{equation}
    \begin{tikzcd}[
      column sep= 10pt,
      row sep=37pt
    ]
      S\big(\mathbb{R}^2\big)
      \ar[
        dd,
        leftvertright,
        ->>,
        "{
          \RealHopfFibration
        }"{description}
      ]
      \ar[
        r,
        hook
      ]
      \ar[
        dd,
        ->>
      ]
      &
      D\big(\mathbb{R}^2\big)
      \ar[
        dd,
        ->
      ]
      \\
      {}
      \ar[
        dr,
        phantom,
        "{
          \mathrm{(po)}
        }"{pos=.75, scale=.6}
      ]
      \\
      \mathbb{R}P^1
      \ar[
        r,
        hook
      ]
      &
      \mathbb{R}P^2
      \mathrlap{\,,}
    \end{tikzcd}
    \;\;\;\;
    \begin{tikzcd}[
      column sep=\ColumnSep,
      row sep=34pt
    ]
      S\big(\mathbb{C}^2\big)
      \ar[
        dd,
        leftvertright,
        ->>,
        "{
          \ComplexHopfFibration
        }"{description}
      ]
      \ar[
        r,
        hook
      ]
      \ar[
        d,
        ->>
      ]
      \ar[
        dr,
        phantom,
        "{
          \mathrm{(po)}
        }"{pos=.75, scale=.6}
      ]
      &
      D\big(\mathbb{C}^2\big)
      \ar[
        d,
        ->
      ]
      \\
      \mathbb{R}P^3
      \ar[
        r,
        hook
      ]
      \ar[
        d,
        ->>
      ]
      \ar[
        dr,
        phantom,
        "{
          \mathrm{(po)}
        }"{pos=.75, scale=.6}
      ]
      &
      \mathbb{R}P^4
      \ar[
        d,
        "{ 
          f^{\mathbb{R}}_{\mathbb{C}} 
        }"{description}
      ]
      \\
      \mathbb{C}P^1
      \ar[
        r,
        hook
      ]
      &
      \mathbb{C}P^2
      \mathrlap{\,,}
    \end{tikzcd}
    \;\;\;\;
    \begin{tikzcd}[
      column sep=\ColumnSep
    ]
      S\big(\mathbb{H}^2\big)
      \ar[
        ddd,
        leftvertright,
        ->>,
        "{
          \QuaternionicHopfFibration
        }"{description}
      ]
      \ar[
        r,
        hook
      ]
      \ar[
        d,
        ->>
      ]
      \ar[
        dr,
        phantom,
        "{
          \mathrm{(po)}
        }"{pos=.75, scale=.6}
      ]
      &
      D\big(\mathbb{H}^2\big)
      \ar[
        d,
        ->
      ]
      \\
      \mathbb{R}P^7
      \ar[
        r,
        hook
      ]
      \ar[
        d,
        ->>
      ]
      \ar[
        dr,
        phantom,
        "{
          \mathrm{(po)}
        }"{pos=.75, scale=.6}
      ]
      &
      \mathbb{R}P^8
      \ar[
        d,
        "{
          f^{\mathbb{R}}_{\mathbb{C}}
        }"
      ]
      \\
      \mathbb{C}P^3
      \ar[
        d,
        ->>
      ]
      \ar[
        r,
        hook
      ]
      \ar[
        dr,
        phantom,
        "{
          \mathrm{(po)}
        }"{pos=.75, scale=.6}
      ]
      &
      \mathbb{C}P^4
      \ar[
        d,
        "{
          f^{\mathbb{C}}_{\mathbb{H}}
        }"
      ]
      \\
      \mathbb{H}P^1
      \ar[
        r,
        hook
      ]
      &
      \mathbb{H}P^2
      \mathrlap{\,,}
    \end{tikzcd}
  \end{equation}
  where the top right maps are given by
  \begin{equation}
    \label{TopCellInclusionIntoRPn}
    \begin{tikzcd}[row sep=-3pt,column sep=0pt]
      D(\mathbb{R}^{n})
      \ar[rr]
      &&
      \mathbb{R}P^{n}
      \\
      (v_1, \cdots, v_n)
      &\mapsto&
      \Big[
        v_1 
          : 
        \cdots 
          : 
        v_n 
          : 
        \sqrt{
          1 - \vert \vec v \vert^2
        }
      \,\Big]
      \mathrlap{\,.}
    \end{tikzcd}
  \end{equation}
\end{lemma}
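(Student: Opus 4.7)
The plan is to derive every pushout square in the three diagrams from the single assertion that $\mathbb{K}P^{n}$ arises from $\mathbb{K}P^{n-1}$ by attaching one cell along the Hopf-type quotient $p \colon S(\mathbb{K}^{n}) \to \mathbb{K}P^{n-1}$, and then to chain these together via \cref{PastingLaw}.

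To prove the cell-attachment claim, I would generalize the explicit map \cref{TopCellInclusionIntoRPn} to all three scalar algebras $\mathbb{K} \in \{\mathbb{R}, \mathbb{C}, \mathbb{H}\}$ by setting
\[
  \phi_{\mathbb{K},n}\colon D(\mathbb{K}^{n}) \longrightarrow \mathbb{K}P^{n},
  \qquad
  (v_{1}, \ldots, v_{n}) \longmapsto \bigl[v_{1} : \cdots : v_{n} : \sqrt{1 - |\vec v|^{2}}\,\bigr].
\]
On the boundary sphere the last coordinate vanishes, so this composes with the standard inclusion $\mathbb{K}P^{n-1} \hookrightarrow \mathbb{K}P^{n}$ of \cref{StandardInclusionOfProjectiveSpaces} to yield precisely the Hopf quotient $p_{\mathbb{K}}$. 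The resulting commuting square then induces a continuous comparison map $c$ from the pushout $\mathbb{K}P^{n-1} \sqcup_{S(\mathbb{K}^{n})} D(\mathbb{K}^{n})$ to $\mathbb{K}P^{n}$. Surjectivity of $c$ is clear: any class $[u_{1} : \cdots : u_{n+1}]$ with $u_{n+1} \neq 0$ admits a unique representative in the image of the open disk, obtained by right-scaling by $\lambda = u_{n+1}^{-1}|u_{n+1}| \in \mathbb{K}^{\times}$ so that $u_{n+1} \in \mathbb{R}_{>0}$, and then normalizing. For injectivity, two distinct interior points giving the same class would have to differ by a right-scalar $\lambda \in \mathbb{K}^{\times}$; comparing last coordinates (both positive real) forces $\lambda \in \mathbb{R}_{>0}$, and comparing norms forces $|\lambda| = 1$, so $\lambda = 1$. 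Injectivity on the boundary is exactly the relation already collapsed by $p_{\mathbb{K}}$. Since the pushout is compact (being built from compact pieces as in \cref{CellAttachment}) and $\mathbb{K}P^{n}$ is Hausdorff, the continuous bijection $c$ is a homeomorphism.

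With this main fact in hand, each square in the lemma is identified as follows. The $\mathbb{R}$-diagram and the bottommost squares of the $\mathbb{C}$- and $\mathbb{H}$-diagrams are direct instances at $(\mathbb{K},n) = (\mathbb{R},2), (\mathbb{C},2), (\mathbb{H},2)$ respectively. The topmost squares of the $\mathbb{C}$- and $\mathbb{H}$-diagrams are instances at $(\mathbb{K},n)=(\mathbb{R},4)$ and $(\mathbb{R},8)$, using the canonical identifications $S(\mathbb{C}^{2}) = S(\mathbb{R}^{4})$ and $S(\mathbb{H}^{2}) = S(\mathbb{R}^{8})$. Every intermediate square then follows by \cref{PastingLaw}: wherever the top square and the composite rectangle are both pushouts, the remaining square must be a pushout as well. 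In the quaternionic tower this requires two successive applications of the pasting law, once to obtain the middle $\mathbb{C}P^{3} \hookrightarrow \mathbb{C}P^{4}$ square (using the $(\mathbb{K},n)=(\mathbb{C},4)$ pushout as the relevant composite rectangle) and once again for the final passage to $\mathbb{H}P^{1} \hookrightarrow \mathbb{H}P^{2}$.

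The main obstacle is the bijectivity analysis for $\phi_{\mathbb{K},n}$ in the quaternionic case: while the argument is essentially identical over $\mathbb{R}$ and $\mathbb{C}$, one must keep careful track of right-scalar actions since $\mathbb{H}P^{n}$ is defined via right $\mathbb{H}$-lines in \cref{ProjectiveSpace}, and the noncommutativity of $\mathbb{H}$ makes it easy to get the side of the scalar multiplication wrong. Once this bookkeeping is in place, the remaining pasting-law arguments are entirely formal.
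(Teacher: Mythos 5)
Your proposal is correct and takes essentially the same approach as the paper: both reduce, via \cref{PastingLaw}, to showing that the single cell attachment $\mathbb{K}P^{n-1} \sqcup_{S(\mathbb{K}^n)} D(\mathbb{K}^n) \simeq \mathbb{K}P^n$ via the explicit map \eqref{TopCellInclusionIntoRPn}, and then chain the squares using the identifications $S(\mathbb{C}^2) = S(\mathbb{R}^4)$, $S(\mathbb{H}^2) = S(\mathbb{R}^8)$. Your bijectivity and compact-to-Hausdorff argument is more detailed than the paper's, which sketches only surjectivity of the candidate map and cites Massey for the standard fact.
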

\begin{proof}
  By the pasting law (\cref{PastingLaw}), the claim follows from the general and standard statement (cf. \cite[\S IX.3.6]{Massey1991}) that the following is a pushout for all $n \in \mathbb{N}$ and $\mathbb{K} \in \{\mathbb{R}, \mathbb{C}, \mathbb{H}\}$:
  \begin{equation}
    \begin{tikzcd}[row sep=15pt, column sep=0pt]
      S(\mathbb{K}^{n+1})
      \ar[r, hook]
      \ar[d, ->>]
      \ar[
        dr,
        phantom,
        "{ 
          \mathrm{(po)} 
        }"{pos=.7, scale=.6}
      ]
      &[20pt]
      D(\mathbb{K}^{n+1})
      \ar[
        d
      ]
      &
      {\big[
        v_1 : \cdots : v_{n+1}
      \big]}
      \ar[
        d,
        |->,
        shorten=3pt
      ]
      \\
      \mathbb{K}P^n
      \ar[
        r,
        hook
      ]
      &
      \mathbb{K}P^{n+1}
      &
      {\Big[v_1
        : 
        \cdots 
        : 
        v_{n+1} 
        : 
        \sqrt{%
          1 - \vert \vec v \vert^2 
        } 
        \,
      \Big]}
      \mathrlap{\,.}
    \end{tikzcd}
  \end{equation}
  Namely this expresses the fact that an element $[v_1 : \cdots : v_{n+1} : v_{n+2}] \in \mathbb{K}P^{n+1}$:
  \begin{itemize}
  \item either has $v_{n+2} = 0$, in which case it is in the image of $\begin{tikzcd}[sep=small] \mathbb{K}P^n \ar[r, hook] & \mathbb{K}P^{n+1}\end{tikzcd}$ and equivalently in the image of $\begin{tikzcd}[sep=small]S(\mathbb{K}^{n+1}) \ar[r, hook] & D(\mathbb{K}^{n+1}) \ar[r] & \mathbb{K}P^{n+1}\end{tikzcd}$, 
  \item or else it equals, 
  (setting $v'_n := v_n/v_{n+2}$ for $n \in \{1, \cdots, n+1\}$):
  $$
    \Big[
      v'_1 : \cdots : v'_{n+1} : 1 
    \Big]
    =
    \left[
      \tfrac{
        \rule[-3pt]{0pt}{2pt}%
        v'_1
      }{ 
        \rule{0pt}{9pt}%
        \sqrt{
          \vert \vec v' \vert^2 + 1  
        }
      } 
      : \cdots : 
      \tfrac{
        \rule[-3pt]{0pt}{2pt}%
        v'_{n+1}
      }{ 
        \rule{0pt}{9pt}%
        \sqrt{
          \vert \vec v' \vert^2 + 1  
        }
      } 
      :
      \tfrac{
        \rule[-2pt]{0pt}{2pt}%
        1
      }{ 
        \rule{0pt}{9pt}%
        \sqrt{
          \vert \vec v' \vert^2 + 1  
        }
      }
      =
      \scalebox{.88}{$
        \sqrt{
          1 -
          \left\vert
            \frac{
              \rule[-2pt]{0pt}{2pt}%
              \vec v'
            }{
              \rule{0pt}{9pt}%
              \sqrt{
                \vert \vec v' \vert^2
                +1
              }
            }
          \right\vert^2
        }
      $}
      \,
    \right]
  $$
  and hence is in the image of $\begin{tikzcd}[sep=small] D(\mathbb{K}^{n+1}) \ar[r] & \mathbb{K}P^{n+1}\end{tikzcd}$.
  \qedhere
  \end{itemize}
\end{proof}

\begin{lemma}
  \label[lemma]{CosetRealizationOfHopfFibration}
  The $\mathbb{R}/\mathbb{C}/\mathbb{H}$-Hopf fibrations \textup{(\cref{HopfFibration})} are equivalently the following coset coprojections:
  \begin{subequations}
    \begin{align}
    \label{RHopfFibrationAsCosetCoprojection}
    &
    \begin{tikzcd}[
      ampersand replacement=\&,
      column sep=-1pt
    ]
       S^1
       \ar[
         d,
         "{
           \RealHopfFibration
         }"
       ]
       \&
       \simeq
       \&
       S(\mathbb{R}^2)
       \ar[
         d,
         "{
           \substack{
             v 
             \\
             \rotatebox[origin=c]{-90}{$\mapsto$}
             \\
             v \cdot \mathbb{R}
           }
         }"
       ]
       \&\simeq\&
       \frac{
         \rule[10pt]{0pt}{0pt}%
         \mathrm{O}(2)
       }{
         \rule{0pt}{6pt}%
         \mathrm{O}(1)_{\mathrm{stb}}
       }
       \ar[
         d,
         ->>,
         shorten >=-4pt
       ]
       \&
       :=
       \&
       \smash{
       \mathrm{O}(2)
       \big/
       \left(
       \begin{matrix}
         \;\,1\;\, & 0
         \\
         0 & \mathrm{O}(1)
       \end{matrix}
       \right)
       }
       \\
       S^1
       \&
       \simeq
       \&
       \mathbb{R}P^1
       \&
       \simeq
       \&
       \frac{
         \rule[10pt]{0pt}{0pt}%
         \mathrm{O}(2)
       }{
         \rule{0pt}{6pt}%
         \mathrm{O}(1)^2
       }
       \&:=\&
       \mathrm{O}(2) 
       \big/
       \left(
       \begin{matrix}
         \mathrm{O}(1) & 0
         \\
         0 & \mathrm{O}(1)
       \end{matrix}
       \right)
       \mathrlap{,}
    \end{tikzcd}
    \\
    \label{CHopfFibrationAsCosetCoprojection}
    &
    \begin{tikzcd}[
      ampersand replacement=\&,
      column sep=-1pt
    ]
       S^3
       \ar[
         d,
         "{
           \ComplexHopfFibration
         }"
       ]
       \&
       \simeq
       \&
       S(\mathbb{C}^2)
       \ar[
         d,
         "{
           \substack{
             v 
             \\
             \rotatebox[origin=c]{-90}{$\mapsto$}
             \\
             v \cdot \mathbb{C}
           }
         }"
       ]
       \&\simeq\&
       \frac{
         \rule[10pt]{0pt}{0pt}%
         \mathrm{U}(2)
       }{
         \rule{0pt}{6pt}%
         \mathrm{U}(1)_{\mathrm{stb}}
       }
       \ar[
         d,
         ->>,
         shorten >=-4pt
       ]
       \&
       :=
       \&
       \smash{
       \mathrm{U}(2)
       \big/
       \left(
       \begin{matrix}
         \;\,1\;\, & 0
         \\
         0 & \mathrm{U}(1)
       \end{matrix}
       \right)
       }
       \\
       S^2
       \&
       \simeq
       \&
       \mathbb{C}P^1
       \&
       \simeq
       \&
       \frac{
         \rule[10pt]{0pt}{0pt}%
         \mathrm{Sp}(2)
       }{
         \rule{0pt}{6pt}%
         \mathrm{U}(1)^2
       }
       \&:=\&
       \mathrm{U}(2) 
       \big/
       \left(
       \begin{matrix}
         \mathrm{U}(1) & 0
         \\
         0 & \mathrm{U}(1)
       \end{matrix}
       \right)
       \mathrlap{,}
    \end{tikzcd}
    \\ 
    \label{HHopfFibrationAsCosetCoprojection}
    &
    \begin{tikzcd}[
      ampersand replacement=\&,
      column sep=-2pt
    ]
       S^7
       \ar[
         d,
         "{
           \QuaternionicHopfFibration
         }"
       ]
       \&
       \simeq
       \&
       S(\mathbb{H}^2)
       \ar[
         d,
         "{
           \substack{
             v 
             \\
             \rotatebox[origin=c]{-90}{$\mapsto$}
             \\
             v \cdot \mathbb{H}
           }
         }"
       ]
       \&\simeq\&
       \frac{
         \rule[10pt]{0pt}{0pt}%
         \mathrm{Sp}(2)
       }{
         \rule{0pt}{6pt}%
         \mathrm{Sp}(1)_{\mathrm{stb}}
       }
       \ar[
         d,
         ->>,
         shorten >=-4pt
       ]
       \&
       :=
       \&
       \smash{
       \mathrm{Sp}(2)
       \big/
       \left(
       \begin{matrix}
         \;\,1\;\, & 0
         \\
         0 & \mathrm{Sp}(1)
       \end{matrix}
       \right)
       }
       \\
       S^4
       \&
       \simeq
       \&
       \mathbb{H}P^1
       \&
       \simeq
       \&
       \frac{
         \rule[10pt]{0pt}{0pt}%
         \mathrm{Sp}(2)
       }{
         \rule{0pt}{6pt}%
         \mathrm{Sp}(1)^2
       }
       \&:=\&
       \mathrm{Sp}(2) 
       \big/
       \left(
       \begin{matrix}
         \mathrm{Sp}(1) & 0
         \\
         0 & \mathrm{Sp}(1)
       \end{matrix}
       \right)
       \mathrlap{.}
    \end{tikzcd}
    \end{align}
  \end{subequations}
\end{lemma}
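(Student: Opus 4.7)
The plan is to verify each coset realization via the orbit-stabilizer theorem, uniformly in $\mathbb{K} \in \{\mathbb{R}, \mathbb{C}, \mathbb{H}\}$, writing $\mathrm{U}(2,\mathbb{K}) \in \{\mathrm{O}(2), \mathrm{U}(2), \mathrm{Sp}(2)\}$ for the appropriate unitary group. First I would establish transitivity of the $\mathrm{U}(2,\mathbb{K})$-action on $S(\mathbb{K}^2)$: for any $v \in S(\mathbb{K}^2)$, Gram--Schmidt extension of $\{v\}$ to an orthonormal $\mathbb{K}$-basis $(v,w)$ of $\mathbb{K}^2$ yields a unitary matrix with columns $(v \mid w) \in \mathrm{U}(2,\mathbb{K})$ that carries $e_1 := (1,0)$ to $v$. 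Next I would compute the stabilizer of $e_1$: for $G \in \mathrm{U}(2,\mathbb{K})$ with $G e_1 = e_1$, the first column of $G$ equals $e_1$, and unitarity (orthogonality of the two columns together with unit norm of the second column) forces the second column to take the form $(0, d)^\mathsf{T}$ with $d \in \mathrm{U}(1,\mathbb{K})$. Hence the stabilizer is precisely the block-diagonal subgroup $\mathrm{U}(1,\mathbb{K})_{\mathrm{stb}}$ appearing in the right-hand side of the claimed identifications.

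For $\mathbb{K}P^1$, transitivity follows by the same argument applied after normalizing a representative vector. The stabilizer of the line $e_1 \cdot \mathbb{K}$ now consists of those $G$ sending $e_1$ to some $e_1 \cdot q$ with $q \in \mathbb{K}^\times$; unitarity forces $|q| = 1$, and the previous column analysis then gives the block-diagonal subgroup $\mathrm{U}(1,\mathbb{K})^2$. Consequently the orbit maps
\[
\mathrm{U}(2,\mathbb{K}) \big/ \mathrm{U}(1,\mathbb{K})_{\mathrm{stb}} \longrightarrow S(\mathbb{K}^2), \qquad \mathrm{U}(2,\mathbb{K}) \big/ \mathrm{U}(1,\mathbb{K})^2 \longrightarrow \mathbb{K}P^1,
\]
sending $[G] \mapsto G e_1$ and $[G] \mapsto G e_1 \cdot \mathbb{K}$ respectively, are continuous bijections; by compactness of $\mathrm{U}(2,\mathbb{K})$ and Hausdorffness of the targets they are in fact homeomorphisms (equivariantly so).

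Finally, the Hopf fibration compatibility follows by direct inspection. The canonical coset coprojection induced by the subgroup inclusion $\mathrm{U}(1,\mathbb{K})_{\mathrm{stb}} \hookrightarrow \mathrm{U}(1,\mathbb{K})^2$ sends $[G]$ to $[G]$, which under the two identifications above corresponds to sending $G e_1 \mapsto G e_1 \cdot \mathbb{K}$, precisely the Hopf projection $v \mapsto v \cdot \mathbb{K}$ of \cref{HopfFibration}. The factorizations of \cref{FactorizationOfHopfFibration} are obtained analogously by inserting intermediate stabilizer subgroups (e.g.\ $\mathrm{O}(1)_{\mathrm{stb}}$ sitting between $\mathrm{U}(1)_{\mathrm{stb}}$ and $\mathrm{U}(1)^2$ to produce the intermediate $\mathbb{R}P^3$).

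I do not anticipate a serious obstacle: the orbit-stabilizer technique is textbook and the three cases run in parallel. The only real subtlety is keeping conventions consistent, since by \cref{ProjectiveSpace} $\mathbb{K}P^1$ is the space of \emph{right} $\mathbb{K}$-lines while $\mathrm{U}(2,\mathbb{K})$ acts on the \emph{left}; this is what makes the action descend cleanly, but for $\mathbb{K} = \mathbb{H}$ one must be careful in the stabilizer computation not to confuse left and right scalar multiplications. This is bookkeeping rather than a genuine difficulty.
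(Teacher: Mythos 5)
Your proposal is correct and follows essentially the same route as the paper: both rest on the orbit--stabilizer theorem, identifying the block-diagonal stabilizer subgroups of the base point in $S(\mathbb{K}^2)$ and of the base line in $\mathbb{K}P^1$. The paper obtains transitivity by citing the earlier equivariance lemma and instead exhibits the Hopf fiber over the base point as a coset $\bigl(\begin{smallmatrix}1\\0\end{smallmatrix}\bigr)\cdot\bigl(\begin{smallmatrix}\mathrm{Sp}(1)&0\\0&1\end{smallmatrix}\bigr)$ rather than computing the line-stabilizer directly, but these are merely two facets of the same argument, and your care with the left-action/right-line convention over $\mathbb{H}$ is precisely the point that needs attention.
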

\begin{proof}
  We indicate the argument over $\mathbb{H}$, the other two cases are directly analogous:
  By \cref{EquivarianceOfHopfFibration}, we have a transitive $\mathrm{Sp}(2)$-action, exhibiting the 7-sphere as the $\mathrm{Sp}(2)$-translations of its base point
  \begin{equation}
    \mathrm{Sp}(2) 
      \cdot
    \begin{pmatrix}
      1
      \\
      0
    \end{pmatrix}
    \simeq
    S(\mathbb{H}^2)
    \mathrlap{\,.}
  \end{equation}
  Now the stabilizer group of the base point is manifestly
  \begin{equation}
    \label{StabilizerSubgroup}
    \mathrm{Stab}_{\mathrm{Sp}(2)}
    \left(
    \begin{matrix}
      1
      \\
      0
    \end{matrix}
    \right)
    =
    \left(
    \begin{matrix}
      0 & 0
      \\
      0 & \mathrm{Sp}(1)
    \end{matrix}
    \right)
    =:
    \mathrm{Sp}(1)_{\mathrm{stb}}
    \subset
    \mathrm{Sp}(2)
    \,,
  \end{equation}
  while the fiber of the base point is manifestly 
  \begin{equation}
    \mathrm{fib}_{\QuaternionicHopfFibration}
    =
    \left(
    \begin{matrix}
      \mathrm{Sp}(1)
      \\
      0
    \end{matrix}
    \right)
    =
    \left(
    \begin{matrix}
      1 
      \\
      0
    \end{matrix}
    \right)
    \cdot
    \left(
    \begin{matrix}
      \mathrm{Sp}(1) & 0
      \\
      0 & 1
    \end{matrix}
    \right).
  \end{equation}
  This implies the claim by the \emph{orbit-stabilizer theorem} for Lie group actions (cf. \parencites[Thm. 3.62]{Warner1983}[Thm. 21.18]{Lee2012}). 
\end{proof}
Over $\mathbb{C}$ we obtain from this yet another incarnation of the Hopf fibration, which is useful to make explicit (cf. \cite[(3)]{Lyons2003}):
\begin{corollary}
The $\mathbb{C}$-Hopf fibration is equivalently given by the conjugation action of $\mathrm{SU}(2)$ on the matrix representing the basepoint of $S^2$ \cref{4SphereViaMatrices}:
\begin{equation}
  \begin{tikzcd}[
    ampersand replacement=\&,
    row sep=40pt
  ]
    S^3
    \ar[
      r,
      <->,
      "{ \sim }"
    ]
    \ar[
      d,
      "{
        \ComplexHopfFibration
      }"
    ]
    \&
    \mathrm{SU}(2)
    \ar[
      d,
      "{
        \substack{
          g
          \\
          \rotatebox[origin=c]{-90}{$\mapsto$}
          \\
          g 
          \cdot
          \left(%
          \renewcommand{\arraystretch}{.9}%
          \setlength{\arraycolsep}{1pt}%
          \begin{matrix}%
            +1 & \;0
            \\
            \, 0 & -1
          \end{matrix}%
          \right)
          \cdot
          g^{-1}
        }
      }"
    ]
    \\
    S^2
    \ar[
      r,
      <->,
      "{ \sim }"
    ]
    \&
    \left\{
    A \in \BoundedOperators(\mathbb{C}^2)
    \,\middle\vert\,
    \substack{
      A^\dagger = A
      \\
      \mathrm{tr}(A) = 0
      \\
      A \cdot A = 1
    }
    \right\}.
  \end{tikzcd}
\end{equation}
\end{corollary}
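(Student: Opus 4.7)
The plan is to chase a generic element of $\mathrm{SU}(2)$ around the square and verify the two paths coincide; once the identifications are made explicit, the computation collapses to one line using $g^\dagger = g^{-1}$.

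First I would make the vertical isomorphisms explicit. On the left, $S^3 \simeq S(\mathbb{C}^2) \simeq \mathrm{SU}(2)$ via $g \mapsto v_g \defneq g \cdot \left(\begin{smallmatrix}1\\0\end{smallmatrix}\right)$ (first column); its inverse completes a unit vector $\left(\begin{smallmatrix}a\\b\end{smallmatrix}\right)$ to the special-unitary matrix $\left(\begin{smallmatrix}a & -\bar{b}\\b & \phantom{-}\bar{a}\end{smallmatrix}\right)$, in direct analogy with \cref{UnitQuaternionsAsSU2}. The bottom horizontal identification is the complex specialisation of the chain in \cref{IdentifyingS4WithHP1}: a line $v \cdot \mathbb{C} \in \mathbb{C}P^1$ is sent to $A_{v \cdot \mathbb{C}} \defneq 1 - 2 P_v = -1 + 2\, v v^\dagger / |v|^2$, which is Hermitian, traceless and squares to $1$, manifestly independent of the representative $v$.

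Next I would verify that the square commutes. Chasing a general $g \in \mathrm{SU}(2)$ down-and-across produces $A_{v_g \cdot \mathbb{C}} = -1 + 2\, v_g v_g^\dagger$, and using $g^\dagger = g^{-1}$ together with $v_g v_g^\dagger = g \cdot \left(\begin{smallmatrix}1 & 0\\0 & 0\end{smallmatrix}\right) \cdot g^{-1}$, this rearranges to
\begin{equation*}
  A_{v_g \cdot \mathbb{C}}
  \;=\;
  g \cdot \Big( -1 + 2 \left(\begin{smallmatrix}1 & 0\\0 & 0\end{smallmatrix}\right)\Big) \cdot g^{-1}
  \;=\;
  g \cdot \left(\begin{smallmatrix}1 & \phantom{-}0\\0 & -1\end{smallmatrix}\right) \cdot g^{-1},
\end{equation*}
which is precisely the image of $g$ under the right-hand vertical. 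More conceptually, this is \cref{KernelOfConjugatedProjectionOperator} applied to the rank-one projection onto $v_g \cdot \mathbb{C}$, expressing equivariance of the bottom identification with respect to left multiplication on $S(\mathbb{C}^2)$ versus conjugation on the matrix model of $S^2$.

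I do not foresee a genuine obstacle; the only subtlety is the sign convention in the identification $\mathbb{C}P^1 \simeq S^2$-as-matrices (since one could use projection onto, or away from, the line, differing by an overall sign in $A$), and this is pinned down unambiguously by the basepoint check: at $g = \mathrm{id}$ one has $v_g = \left(\begin{smallmatrix}1\\0\end{smallmatrix}\right)$, hence $v_g v_g^\dagger = \left(\begin{smallmatrix}1 & 0\\0 & 0\end{smallmatrix}\right)$ and therefore $A_{v_g \cdot \mathbb{C}} = \left(\begin{smallmatrix}1 & \phantom{-}0\\0 & -1\end{smallmatrix}\right)$, matching the basepoint matrix appearing in the statement of the corollary.
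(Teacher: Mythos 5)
Your proof is correct and is essentially the paper's own argument: both reduce to the one-line identity $-1 + 2\,v_g v_g^\dagger = g \cdot \left(\begin{smallmatrix}1 & \phantom{-}0 \\ 0 & -1\end{smallmatrix}\right)\cdot g^{-1}$ using $g^\dagger = g^{-1}$ for $g \in \mathrm{SU}(2)$, together with the sign conventions of \cref{IdentifyingS4WithHP1}. The only cosmetic difference is that the paper reaches the top identification $\mathrm{SU}(2) \xrightarrow{\sim} S(\mathbb{C}^2)$ by first exhibiting $\mathrm{SU}(2) \xrightarrow{\sim} \mathrm{U}(2)/\mathrm{U}(1)_{\mathrm{stb}}$ via the coset realization of \cref{CosetRealizationOfHopfFibration}, whereas you build it directly from $g \mapsto g \cdot \left(\begin{smallmatrix}1\\0\end{smallmatrix}\right)$ with the explicit inverse completing a unit vector to a special unitary matrix; these are two phrasings of the same fact.
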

\begin{proof}
First to observe that we have a homeomorphism of underlying spaces:
\begin{equation}
  \begin{tikzcd}[
    row sep=-2pt, column sep=10pt
  ]
    \frac{
      \mathrm{U}(2)
    }{
      \mathrm{U}(1)_{\mathrm{stb}}
    }
    \ar[
      rr,
      <-,
      "{ \sim }"
    ]
    &&
    \mathrm{SU}(2)
    \\
    g \cdot \mathrm{U}(1)_{\mathrm{stb}}
      &\longmapsfrom& 
    g
    \mathrlap{\,,}
  \end{tikzcd}
\end{equation}
conversely reflecting the fact that every coset in$\mathrm{U}(2)/\mathrm{U}(1)_{\mathrm{stb}}$ contains precisely one special unitary matrix.

Next to note that under the isomorphisms of \cref{IdentifyingS4WithHP1}, one incarnation of the $\mathbb{C}$-Hopf fibration is as shown on the left and middle of the following diagram:
\begin{equation}
  \begin{tikzcd}[
    ampersand replacement=\&,
    column sep=2pt, row sep=13pt
  ]
    S(\mathbb{C}^2)
    \ar[
      d
    ]
    \&
    U \cdot v
    \ar[
      d,
      |->,
      shorten=4pt
    ]
    \&[+5pt]
    g 
      \cdot 
    \left(
    \begin{matrix}
      1
      \\
      0
    \end{matrix}
    \right)
    \ar[
      d,
      |->,
      shorten=1pt
    ]
    \\
    \left\{
    A \in \BoundedOperators(\mathbb{C}^2)
    \,\middle\vert\,
    \substack{
      A^\dagger  = A
      \\
      \mathrm{tr}(A) = 0
      \\
      A \cdot A = 1
    }
    \right\}
    \&
    U 
      \cdot 
    \big(
      2 v \cdot v^\dagger - 1
    \big)
    \cdot
    U^\dagger
    \&
    g \cdot 
    \left(
    \begin{matrix}
      +1 & \; 0 
      \\
     \, 0 & -1
    \end{matrix}
    \right)
    \cdot
    g^{-1}
    \mathrlap{\,.}
  \end{tikzcd}
\end{equation}
Combining these two statements gives that $g \in \mathrm{SU}(2)$ maps as shown on the right, which is the claim to be proven.
\end{proof}

In \cref{TheEquivariantOrientation} we will crucially use the following curious incarnation of the Hopf fibration in geometric homotopy theory (\cref{SomeCohesiveHomotopyTheory}), which may be less widely appreciated:
\begin{corollary}
  \label[corollary]{HopfFibrationOnHomotopyQuotients}
  After passage to homotopy $\mathrm{Sp}(2)$-quotients, the quaternionic Hopf fibration is equivalent to the delooping of the inclusion of the stabilizer subgroup $\mathrm{Sp}(1)_{\mathrm{stb}}$ into its product with the active $\mathrm{Sp}(1)$:
  \begin{equation}
    \begin{tikzcd}[
      column sep=-2pt
    ]
      \mathrm{Spin}(5)
      \backsslash
      S^7  
      \ar[
        d,
        "{
          \mathrm{Spin}(5)
          \backsslash
          \QuaternionicHopfFibration
        }"{swap}
      ]
      &\simeq&
      \mathrm{Sp}(2)
      \backsslash
      S(\mathbb{H}^2) 
      \ar[
        d
      ]
      &\simeq&
       \mathrm{Sp}(2)
       \backsslash
       \smash{
       \frac{
         \rule[10pt]{0pt}{0pt}%
         \mathrm{Sp}(2)
       }{
         \rule{0pt}{6pt}%
         \mathrm{Sp}(1)_{\mathrm{stb}}
       }
       }
      \ar[
        d
      ]
      &\simeq&
      \mathrm{Sp}(1)_{\mathrm{stb}} 
      \backsslash 
      \ast
      \ar[
        d
      ]
      \\
      \mathrm{Spin}(5)
      \backsslash
      S^4 
      &\simeq&
      \mathrm{Sp}(2)
      \backsslash
      \mathbb{H}P^1
      &\simeq&
       \mathrm{Sp}(2)
       \backsslash
       \smash{
       \frac{
         \rule[10pt]{0pt}{0pt}%
         \mathrm{Sp}(2)
       }{
         \rule{0pt}{6pt}%
         \mathrm{Sp}(1)^2
       }
       }
      &\simeq&
      \mathrm{Sp}(1)^2
      \backsslash \ast
      \mathrlap{\,.}
    \end{tikzcd}
  \end{equation}
\end{corollary}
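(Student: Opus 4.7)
The plan is to reduce each link in the claimed chain of equivalences to one of two already-established facts: the coset presentations of the Hopf fibration from \cref{CosetRealizationOfHopfFibration}, and the general identification $G \backsslash G/H \simeq \mathbf{B}H$ from \cref{HoQuotientOfGModHByG}. The overall strategy is to compute both homotopy quotients in their coset incarnation and then check that the induced map is precisely the delooping of the evident subgroup inclusion.

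First I will rewrite the total space $\mathrm{Sp}(2) \backsslash S(\mathbb{H}^2)$ by invoking the homeomorphism $S(\mathbb{H}^2) \simeq \mathrm{Sp}(2)/\mathrm{Sp}(1)_{\mathrm{stb}}$ from \cref{HHopfFibrationAsCosetCoprojection}, which intertwines the left $\mathrm{Sp}(2)$-action on the 7-sphere (\cref{EquivarianceOfHopfFibration}) with left multiplication on the coset space. By \cref{HoQuotientOfGModHByG} applied to $G = \mathrm{Sp}(2)$ and $H = \mathrm{Sp}(1)_{\mathrm{stb}}$ \cref{StabilizerSubgroup}, the resulting homotopy quotient is equivalent to $\mathbf{B}\mathrm{Sp}(1)_{\mathrm{stb}} \simeq \mathrm{Sp}(1)_{\mathrm{stb}} \backsslash \ast$. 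Analogously, the base $\mathrm{Sp}(2) \backsslash \mathbb{H}P^1$ is rewritten via $\mathbb{H}P^1 \simeq \mathrm{Sp}(2)/\mathrm{Sp}(1)^2$ and then, by the same lemma with $H = \mathrm{Sp}(1) \times \mathrm{Sp}(1)$, is equivalent to $\mathbf{B}\bigl(\mathrm{Sp}(1)_{\mathrm{stb}} \times \mathrm{Sp}(1)\bigr) \simeq \mathrm{Sp}(1)^2 \backsslash \ast$. The identifications $\mathrm{Spin}(5) \simeq \mathrm{Sp}(2)$ and the compatible actions on $S^7, S^4$ are just the content of \cref{IdentifyingThe4SphereViaMatrices} together with \cref{4SphereAsHP1}, so the two outer equivalences in the displayed chain reduce to bookkeeping.

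The remaining step — and the only one requiring genuine verification rather than citation — is that, under these identifications, the map $\mathrm{Sp}(2) \backsslash \QuaternionicHopfFibration$ is naturally equivalent to $\mathbf{B}\iota$ for the block-diagonal inclusion $\iota : \mathrm{Sp}(1)_{\mathrm{stb}} \hookrightarrow \mathrm{Sp}(1)_{\mathrm{stb}} \times \mathrm{Sp}(1)$. I would establish this by tracing through the commuting square
\begin{equation*}
  \begin{tikzcd}[column sep=large]
    \mathrm{Sp}(2) \backsslash \bigl(\mathrm{Sp}(2)/\mathrm{Sp}(1)_{\mathrm{stb}}\bigr)
    \ar[r, "{\sim}"]
    \ar[d]
    &
    \mathbf{B}\mathrm{Sp}(1)_{\mathrm{stb}}
    \ar[d, "{\mathbf{B}\iota}"]
    \\
    \mathrm{Sp}(2) \backsslash \bigl(\mathrm{Sp}(2)/\mathrm{Sp}(1)^2\bigr)
    \ar[r, "{\sim}"]
    &
    \mathbf{B}\bigl(\mathrm{Sp}(1)_{\mathrm{stb}} \times \mathrm{Sp}(1)\bigr)
  \end{tikzcd}
\end{equation*}
at the level of representing topological groupoids: the coset fibration on the left comes from the inclusion of stabilizers (the stabilizer of the base point of $\mathbb{H}P^1$ contains the stabilizer of the base point of $S^7$ as the first diagonal block), and this inclusion is exactly $\iota$. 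Under the equivalences \cref{HoQuotientOfGModHByG} — which send a coset action groupoid to the delooping of its stabilizer — this inclusion of stabilizers becomes the delooping homomorphism displayed on the right.

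The main obstacle I anticipate is purely notational: one must keep straight which $\mathrm{Sp}(1)$-factor is "stabilizer" and which is "active", since both are isomorphic as abstract groups but play distinct geometric roles (cf.\ the triality subtlety in \cref{Triality}). Once this is pinned down — most cleanly by choosing the common base point $[1:0] \in \mathbb{H}P^1$ lifted to $\binom{1}{0} \in S(\mathbb{H}^2)$ — the $\mathrm{Sp}(1)_{\mathrm{stb}}$ factor is canonically the block $\left(\begin{smallmatrix}1 & 0 \\ 0 & \mathrm{Sp}(1)\end{smallmatrix}\right)$, while the \emph{active} $\mathrm{Sp}(1)$ acts on the fiber $\mathrm{Sp}(1) \cdot \binom{1}{0}$ by right multiplication, which is exactly the block $\left(\begin{smallmatrix}\mathrm{Sp}(1) & 0 \\ 0 & 1\end{smallmatrix}\right)$. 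Putting these together completes the identification.
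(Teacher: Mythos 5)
Your proof is correct and follows essentially the same approach as the paper, which simply cites \cref{HoQuotientOfGModHByG}; your more detailed version tracing the coset fibration to $\mathbf{B}\iota$ is a useful spelling-out of what the paper leaves tacit. A small bookkeeping slip in your last paragraph: given your basepoint $\binom{1}{0}$, the stabilizer $\mathrm{Sp}(1)_{\mathrm{stb}}$ is the bottom-right block, hence the \emph{second} rather than the first diagonal factor of $\mathrm{Sp}(1)^2$, and the active block $\left(\begin{smallmatrix}\mathrm{Sp}(1) & 0 \\ 0 & 1\end{smallmatrix}\right)$ acts on the fiber $\left\{\binom{q}{0}\right\}$ by \emph{left} matrix multiplication rather than right — neither affects the validity of the argument.
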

\begin{proof}
  By the general equivalence $\mathbf{B}H \simeq G \backsslash G/H$ \cref{HoQuotientOfGModHByG}, for subgroups $H \subset G$.
\end{proof}

\subsubsection{Trivialization along Hopf Fibration}

For completeness, we close this discussion by highlighting the traditional trivialization of the pullback of the tautological line bundle along the Hopf fibration (but below in \cref{TheEquivariantOrientation} we instead use \cref{HopfFibrationOnHomotopyQuotients} for a more powerful argument):

\begin{proposition}
  \label{TrivializationOfPullbackOfTauLongHopf}
  The unit sphere bundle of the tautological line bundle \cref{TautologicalLineBundleAsSubbundle} on $\mathbb{C}P^1\!$/\,$\mathbb{H}P^1$ is isomorphic to the Hopf fibration \cref{TheHopfFibration}:
  \begin{equation}
    \begin{tikzcd}[
      column sep=25pt, row sep=10pt
    ]
      S\big(
        \mathcal{L}^{\mathrm{taut}}_{\mathbb{H}P^1}
      \big)
      \ar[d]
      \ar[
        r,
        "{ \sim }"
      ]
      &
      S^7
      \ar[
        d,
        "{
          \QuaternionicHopfFibration
        }"
      ]
      \\
      \mathbb{H}P^1
      \ar[
        r,
        "{ \sim }"
      ]
      &
      S^4
      \mathrlap{.}
    \end{tikzcd}
  \end{equation}
\end{proposition}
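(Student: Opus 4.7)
The plan is to unwind definitions and observe that the unit sphere bundle of $\mathcal{L}^{\mathrm{taut}}_{\mathbb{H}P^1}$ is tautologically the unit sphere $S(\mathbb{H}^2)$, with the bundle projection reducing to the $\mathbb{H}$-line map. I would carry this out in the following steps.

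First, I would unpack the matrix incarnation of the tautological bundle from \cref{TheTautologicalHLineBundle,IdentifyingS4WithHP1}. Under the bijection $\mathbb{H}P^1 \leftrightarrow \{P \in \BoundedOperators(\mathbb{H}^2) \mid P^\dagger = P,\, \mathrm{tr}(P)=1,\, P^2=P\}$, the right line $v \cdot \mathbb{H}$ (with $|v|=1$) corresponds to the projector $P_v := 1 - v \cdot v^\dagger$, using that $v v^\dagger$ is the rank-$1$ projector onto $v \cdot \mathbb{H}$ (since for any $q \in \mathbb{H}$, $(v v^\dagger)(v q) = v(v^\dagger v) q = v q$). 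Therefore $\mathrm{ker}(P_v) = v \cdot \mathbb{H}$, so
\begin{equation}
  \mathcal{L}^{\mathrm{taut}}_{\mathbb{H}P^1}
  \;\simeq\;
  \big\{
    (P_v, w) \;\big\vert\;
    v \cdot \mathbb{H} \in \mathbb{H}P^1,\;
    w \in v \cdot \mathbb{H}
  \big\}
  \;\subset\;
  \mathbb{H}P^1 \times \mathbb{H}^2\mathrlap{\,.}
\end{equation}

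Second, I would define the candidate isomorphism by ``forgetting the base point'':
\begin{equation}
  \begin{tikzcd}[row sep=-3pt, column sep=0pt]
    S\big(\mathcal{L}^{\mathrm{taut}}_{\mathbb{H}P^1}\big)
    \ar[rr, "{\Phi}"]
    &&
    S(\mathbb{H}^2)
    \\
    (P_v, w) &\longmapsto& w
    \mathrlap{\,,}
  \end{tikzcd}
\end{equation}
where on the left $|w|=1$. The inverse is $\Phi^{-1}(w) = \big(1 - w w^\dagger,\, w\big)$: this is well-defined since $|w|=1$ forces $w w^\dagger$ to be the hermitian idempotent projector of trace $1$ onto $w\cdot\mathbb{H}$, and $w \in \mathrm{ker}(1 - w w^\dagger)$ because $(1 - w w^\dagger) w = w - w(w^\dagger w) = 0$. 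Both $\Phi$ and $\Phi^{-1}$ are evidently continuous, so $\Phi$ is a homeomorphism.

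Third, I would verify that $\Phi$ intertwines the bundle projections. The projection of the unit sphere bundle sends $(P_v, w) \mapsto P_v$, which under $\mathbb{H}P^1 \leftrightarrow S^4$ corresponds to the line $v \cdot \mathbb{H} = w \cdot \mathbb{H}$ (since $|w|=1$ and $w \in v \cdot \mathbb{H}$ imply the two lines agree). But $w \mapsto w \cdot \mathbb{H}$ is exactly the definition of the quaternionic Hopf fibration in \cref{HopfFibration}, giving the commuting square
\begin{equation}
  \begin{tikzcd}[column sep=25pt, row sep=14pt]
    S\big(\mathcal{L}^{\mathrm{taut}}_{\mathbb{H}P^1}\big)
    \ar[d]
    \ar[r, "{\Phi}", "{\sim}"{swap}]
    &
    S(\mathbb{H}^2) = S^7
    \ar[d, "{\QuaternionicHopfFibration}"]
    \\
    \mathbb{H}P^1
    \ar[r, equals]
    &
    \mathbb{H}P^1 \simeq S^4
    \mathrlap{\,.}
  \end{tikzcd}
\end{equation}
No step is really an obstacle here; the only subtle point to get straight is the handedness convention, ensuring that ``right $\mathbb{H}$-lines'' match the projectors $v v^\dagger$ rather than $v^\dagger v$, after which the rest is formal. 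The complex case $S(\mathcal{L}^{\mathrm{taut}}_{\mathbb{C}P^1}) \simeq S^3 \xrightarrow{\ComplexHopfFibration} S^2$ proceeds verbatim by replacing $\mathbb{H}$ with $\mathbb{C}$ throughout.
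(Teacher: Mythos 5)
Your proof is correct and follows essentially the same route as the paper: both define the homeomorphism by $(P,v)\mapsto v$ with inverse $v\mapsto\big(1 - v v^\dagger/\vert v\vert^2,\,v\big)$ and then observe that the projection square commutes tautologically because $v\cdot\mathbb{H}=\mathrm{ker}(P_v)$. The extra care you take with the $\mathrm{tr}=1$ and idempotence checks and the handedness convention is sound but amounts to re-deriving what the paper already packages in \cref{IdentifyingS4WithHP1}.
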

\begin{proof}
Unwinding the definitions, this is again essentially a tautology:
\begin{equation}
  \begin{tikzcd}[
    column sep=0pt, 
    row sep=-5pt
  ]
    S\big(
      \mathcal{H}_{\mathbb{H}P^1}
    \big)
    \ar[
      rrr,
      "{ \sim }"
    ]
    \ar[
      ddd
    ]
    &&[15pt]&
    S(\mathbb{H}^2)
    \ar[
      ddd
    ]
    \\
    &
    (P,v)
    \ar[
      r,
      |->, 
      shorten=6pt,
      shift left=3pt
    ]
    \ar[
      d,
      |->, 
      shorten=3pt,
    ]
    \ar[
      r,
      <-|, 
      shorten=6pt,
      shift right=3pt
    ]
    &
    v
    \ar[
      d,
      |->, 
      shorten=4pt,
    ]
    \\[20pt]
    &
    P
    \ar[
      r,
      equals
    ]
    &
    1  -
    \frac{
      v v^{\smash{\dagger}}
    }{
      \rule{0pt}{6pt}\vert v \vert^2
    }
    \\
    \mathbb{H}P^1
    \ar[
      rrr,
      equals
    ]
    &&&
    \mathbb{H}P^1
    \mathrlap{,}
  \end{tikzcd}
\end{equation}
using here that every $v \in \mathbb{H}^2$ is in the kernel of a unique self-adjoint rank=1 projector \cref{IdentifyingS4WithHP1}, and observing that the condition $\vert v \vert = 1$ is the same on both sides.
\end{proof}

\begin{proposition}
  \label[proposition]{PullbackOfTautologicalLineBundleAlongHopfFibrationTrivializes}
  The pullback of the tautological $\mathbb{H}$-line bundle \cref{TautologicalLineBundleAsSubbundle}
  along the $\mathbb{H}$-Hopf fibration
  \cref{TheHopfFibration}
  trivializes $\mathrm{Sp}(2)$-equivariantly:
  \begin{equation}
    \begin{tikzcd}[
      column sep=25pt, row sep=-2pt
    ]
      S^7
      \times
      \mathbb{H}
      \ar[
        out=180-58, 
        in=58, 
        looseness=4, 
        shift right=1pt,
        "{
          \scalebox{.9}{$\;\,\mathclap{
            \mathrm{Sp}(2)
          }\;\,$}
        }"{description},
      ]
      \ar[
         rr,
         "{ \sim }"
      ]
      \ar[
        dr
      ]
      &&
      \QuaternionicHopfFibration^\ast
      \mathcal{L}^{\mathrm{taut}}_{\mathbb{H}P^1}
      \,.
      \ar[
        out=180-58, 
        in=60, 
        looseness=4, 
        shift right=1pt,
        "{
          \scalebox{.9}{$\;\;\mathclap{
            \mathrm{Sp}(2)
          }\;\;$}
        }"{description},
      ]
      \ar[
        dl
      ]
      \\
      & 
      S^7
    \end{tikzcd}
  \end{equation}
\end{proposition}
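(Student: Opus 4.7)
\medskip

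\textbf{Proof plan.} The plan is to write down an explicit trivializing bundle map and then check that it is $\mathrm{Sp}(2)$-equivariant ``by construction''. Concretely, I will define
\[
  \Phi \;:\; S^7 \times \mathbb{H} \;\longrightarrow\; \QuaternionicHopfFibration^{\!\ast}\mathcal{L}^{\mathrm{taut}}_{\mathbb{H}P^1},
  \qquad
  (v,q) \;\longmapsto\; \big(v,\, v\cdot q\big),
\]
where the right-hand side is understood as a point of the total space described in \cref{TautologicalLineBundleAsSubbundle}, namely the pair consisting of $v \in S^7 = S(\mathbb{H}^2)$ and the vector $v\cdot q \in \mathbb{H}^2$. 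Well-definedness amounts to checking that $v\cdot q$ lies in the fiber of $\QuaternionicHopfFibration^{\!\ast}\mathcal{L}^{\mathrm{taut}}_{\mathbb{H}P^1}$ over $v$, i.e.\ in $\ker(P_{v}) = v\cdot \mathbb{H}$ under the identification \cref{IdentifyingS4WithHP1}; this is immediate since $v\cdot q \in v\cdot \mathbb{H}$ by definition.

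Next I would verify that $\Phi$ is a morphism of $\mathbb{H}$-line bundles. Fiberwise, $\Phi$ is the map $\mathbb{H} \to v\cdot\mathbb{H}$, $q \mapsto v\cdot q$. This is right-$\mathbb{H}$-linear because the tautological bundle carries the right $\mathbb{H}$-module structure inherited from $\mathbb{H}^2$ (cf.\ \cref{ProjectiveSpace}) and right scalar multiplication commutes with left multiplication by the fixed vector $v$: $(v\cdot q)\cdot q' = v\cdot(qq')$. The map is bijective with continuous inverse $(v,w) \mapsto (v,\, v^{\dagger}w)$ on unit vectors $v$ (noting $v^\dagger v = |v|^2 = 1$ for $v\in S^7$, so that $v^\dagger v \cdot q = q$ and $v\cdot v^\dagger w = w$ whenever $w\in v\cdot\mathbb{H}$).

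Finally, I check $\mathrm{Sp}(2)$-equivariance. The action on $S^7\times\mathbb{H}$ is $G\cdot(v,q) = (Gv,q)$, trivial on the fiber factor; the action on $\QuaternionicHopfFibration^{\!\ast}\mathcal{L}^{\mathrm{taut}}_{\mathbb{H}P^1}$ is $G\cdot(v,w) = (Gv, Gw)$, inherited from the canonical action $\mathrm{Sp}(2)\,\acts\, \mathbb{H}^2$ via \cref{EquivarianceOfHopfFibration} and the subbundle description of $\mathcal{L}^{\mathrm{taut}}$. Then
\[
  \Phi\big(G\cdot(v,q)\big) \;=\; \Phi(Gv,q) \;=\; \big(Gv,\, (Gv)\cdot q\big)
  \;=\; \big(Gv,\, G(v\cdot q)\big) \;=\; G\cdot\Phi(v,q),
\]
using that left multiplication by $G\in\mathrm{Sp}(2) \subset \BoundedOperators(\mathbb{H}^2)$ is right-$\mathbb{H}$-linear so that $G(v\cdot q) = (Gv)\cdot q$. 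No step is a real obstacle here; the only conceptual point to emphasize is the bookkeeping of left (group) versus right ($\mathbb{H}$-module) multiplication, which commute precisely because $\mathrm{Sp}(2)$ acts by quaternion-linear operators. The same argument verbatim (replacing $\mathbb{H}$ by $\mathbb{C}$ and $\mathrm{Sp}(2)$ by $\mathrm{U}(2)$) also handles the $\mathbb{C}$-Hopf case, should one want the complex analogue.
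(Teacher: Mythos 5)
Your proof is correct and takes essentially the same approach as the paper: the paper also writes down the explicit bundle map $(u,q)\mapsto (u,\,1-uu^\dagger,\,u\cdot q)$ with inverse $(u,P,v)\mapsto (u,v/u)$ and notes that equivariance is manifest. You spell out a few details the paper leaves implicit (the fiberwise right-$\mathbb{H}$-linearity, the explicit inverse $v^\dagger w$, and the left/right commutation that underlies equivariance), but the underlying construction and verification are identical.
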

\begin{proof}
In components, the bundle isomorphism may be given as
\begin{equation}
  \begin{tikzcd}[
    column sep=3pt,
    row sep=-3pt
  ]
    \grayoverbrace
    {
      \bigg\{
        (u,q)
        \;\Big\vert\;
        \substack{
          u \in \mathbb{H}^2
          \\
          q \in \mathbb{H}
          \\
          \vert u \vert = 1
        }
      \bigg\}
    }
    {
      \mathcolor{black}{
        S(\mathbb{H}^2) 
          \times 
        \mathbb{H}
      }
    }
  \;  \ar[
      rr,
      <->,
      "{ \sim }"
    ]
    &&
    \;
    \grayoverbrace{
    \bigg\{
    (u, P,v)
    \;\bigg\vert\;
    \substack{
      P \in \mathbb{H}P^1
      \\
      u,v \in \mathrm{ker}(P)
      \\
      \vert u \vert = 1
    }
    \bigg\}
    }{
      \mathcolor{black}{
        \QuaternionicHopfFibration^\ast
        \mathfrak{L}_{\mathbb{H}P^1}    
      }
    }
    \\
    (u,q)
    &\longmapsto&
    \big(
      u
      ,
      1 
        - 
      {u u^\dagger}
      , 
      u \cdot q
    \big)
    \\
    \big(
      u, v/u
    \big)
    &\longmapsfrom&
    (u, P, v)
    \mathrlap{\,,}
  \end{tikzcd}
\end{equation}
whose equivariance is manifest.
\end{proof}

\begin{remark}
  \label[remark]{TrivializationOfPullbackOfTautologicalLineBundleAlongHopfFibrationInTermsOfClassifyingMaps}
  In terms of classifying maps, the trivialization of \cref{TrivializationOfPullbackOfTauLongHopf} is a homotopy of this form
  \begin{equation}
    \begin{tikzcd}[row sep=small, column sep=large]
      S^7
      \ar[
        dd,
        "{
          \ComplexHopfFibration
        }"
      ]
      \ar[
        rr
      ]
      &&
      \ast
      \ar[
        dd,
        "{
        }"
      ]
      \ar[
        ddll,
        shorten=15pt,
        Rightarrow,
        "{ \sim }"{sloped}
      ]
      \\
      \\
      \mathbb{H}P^1
      \ar[
        rr,
        "{
          \vdash 
          \mathcal{L}^{\smash{\mathbb{H}P^1}}
        }"
        {description}
      ]
      &&
      \mathbb{H}P^\infty
      \mathrlap{.}
    \end{tikzcd}
  \end{equation}
  This shows that postcomposition with a quaternion orientation of complex K-theory, 
  $\mathbb{H}P^\infty \to \mathrm{KU}$, gives a homotopy $\mathcolor{purple}{h_3^{\mathrm{KU}}}$ according to \cref{MeasuringRelativeChargesInQCohom}.
\end{remark}
Next we turn to constructing this in detail and equivariantly (cf. \cref{PullbackOfTautologicalFredholmOperatorAlongQuaternionicHopfFibrationTrivializes} below).

\subsection
  {The Unstable Equivariant K-Orientation}
\label{TheEquivariantOrientation}

Recall the construction of the $\mathrm{Sp}(2)$-equivariant tautological quaternionic line bundle $\mathcal{L}^{\mathrm{taut}}_{\mathbb{H}P^1}$ (\cref{TheTautologicalHLineBundle}) over $S^4 \simeq \mathbb{H}P^1$, from \cref{TheEquivariantLineBundle}. We now explicitly construct its incarnation in the orbifold K-theory of \cref{TwistedOrbifoldKTheory} for trivial PCT symmetry (hence in the $\mathrm{KU}^0$-sector, by \cref{ReductionOfOrbiKToAnyKRDegree}), and exhibit the trivialization of its pullback along the quaternionic 
Hopf fibration. 
This explicitly exhibits the four/ten-dimensional equivariant orientation of $\mathrm{KU}^0$ in the manner explained around \cref{MeasuringRelativeChargesInQCohom}.

\subsubsection{The equivariant unit}
\label{TheEquivariantUnit}

In fact, the tautological realization of $\mathcal{L}^{\mathrm{taut}}_{\mathbb{H}P^1}$ as the kernel bundle of the projectors which are the points of $\mathbb{H}P^1$ (\cref{TheTautologicalHLineBundle}) induces a similarly tautological incarnation of its reduced K-theory class as the virtual kernel of parameterized Fredholm operators \cref{JaenichIndexMap}, by use of \cref{FredholmFromProjector}:

\begin{lemma}
  \label[lemma]{TautologicalHP1IndexedFredholm}
  The complex vector bundle underlying the tautological $\mathbb{H}$-line bundle over $\mathbb{H}P^1$ \textup{(\cref{TheTautologicalHLineBundle})} is the virtual kernel bundle \cref{VirtualVectorBundleFromFredholmMap} of the following $\mathbb{H}P^1$-parameterized graded Fredholm operator:
  \begin{equation}
    \label{TheTautologicalParameterizedFredholmOperator}
    \begin{tikzcd}[
      row sep=0pt,
      column sep=-1pt
    ]
      \mathbb{H}P^1
      \ar[
        rrrrrr,
        uphordown,
        "{
           F
             ^{\mathrm{taut}}
             _{\mathbb{H}P^1}
        }"
      ]
      \ar[
        rr,
        "{
          \sim
        }",
        "{
          \scalebox{.7}{
            \cref{IdentifyingS4WithHP1}
          }
        }"{swap}
      ]
      &&
      \Big\{
        P \in 
        \BoundedOperators(\mathbb{H}^2)
        \,\Big\vert\,
        \substack{
          P^\dagger = P
          \\
          P \circ P = P
          \\
          \mathrm{tr}_{\mathbb{H}}(P) = 1
        }
      \Big\}
      \ar[
        rr,
        "{ \gamma }",
        "{
          \scalebox{.7}{
            \cref{QuaternionicMatricesAsComplexMatrices}
          }
        }"{swap}
      ]
      &&
      \Big\{
        P \in 
        \BoundedOperators(\mathbb{C}^4)
        \,\Big\vert\,
        \substack{
          P^\dagger = P
          \\
          P \circ P = P
        }
      \Big\}
      \ar[
        rr,
        "{ F_{(-)} }",
        "{
          \scalebox{.7}{
            \cref{MapFromProjectorsToFredholmOperators}
          }
        }"{swap}
      ]
      &&
      \GradedFredholmOperators
      \\
      P
        &\mapsto& 
      P 
        &\mapsto&
      \gamma(P)
        &\mapsto&
      F_{\gamma(P)}
      \mathrlap{\,,}
    \end{tikzcd}
  \end{equation}
  in that
  \begin{equation}
   \mathrm{ker}\big(
     F^{\mathrm{taut}}
       _{\mathbb{H}P^1}
  \big)
     =
     \gamma(
     \mathcal{L}^{\mathrm{taut}}_{\mathbb{H}P^1}
     )
     \ominus
     0
     \mathrlap{\,.}
  \end{equation}
\end{lemma}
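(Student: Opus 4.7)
The plan is to reduce the lemma to a pointwise computation that combines the explicit virtual kernel formula of \cref{FredholmFromProjector} with the elementary fact that the real star-algebra embedding $\gamma$ of \cref{QuaternionicMatricesAsComplexMatrices} sends a quaternionic rank-one projector to a complex rank-two projector whose $\mathbb{C}$-kernel is exactly the underlying complex vector space of the $\mathbb{H}$-kernel.

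\textbf{Step 1.} By the definition \cref{TheTautologicalParameterizedFredholmOperator} of $F^{\mathrm{taut}}_{\mathbb{H}P^1}$, for each $P \in \mathbb{H}P^1$ the fibre $F^{\mathrm{taut}}_{\mathbb{H}P^1}(P)$ equals $F_{\gamma(P)}$, the graded Fredholm operator produced from $\gamma(P) \in \BoundedOperators(\mathbb{C}^4)$ by the construction of \cref{FredholmFromProjector}. To apply that construction I first have to check that $\gamma(P)$ is a hermitian projector on $\mathbb{C}^4$; since $\gamma$ is a real star-algebra homomorphism \cref{QuaternionicMatricesAsComplexMatrices}, this is immediate from $\gamma(P)^\dagger = \gamma(P^\dagger) = \gamma(P)$ and $\gamma(P) \circ \gamma(P) = \gamma(P \circ P) = \gamma(P)$. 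The defining property \cref{VirtualKernelOfGradedFredOfProjector} of $F_{(-)}$ then gives pointwise
\[
  \mathrm{ker}\bigl(F_{\gamma(P)}\bigr) \;=\; \mathrm{ker}_{\mathbb{C}}\bigl(\gamma(P)\bigr) \,\ominus\, 0 .
\]

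\textbf{Step 2.} Next I would identify $\mathrm{ker}_{\mathbb{C}}\bigl(\gamma(P)\bigr)$ with the complex vector space underlying the quaternionic kernel of $P$. The inclusion $\mathrm{ker}_{\mathbb{H}}(P) \subseteq \mathrm{ker}_{\mathbb{C}}\bigl(\gamma(P)\bigr)$ is automatic from $\gamma$ being a homomorphism: if $P v = 0$ in $\mathbb{H}^2$, then viewing $v$ as an element of $\mathbb{C}^4$ via $\gamma$ we have $\gamma(P) \cdot v = \gamma(Pv) = 0$. Equality then follows by a dimension count. Since $\mathrm{tr}_{\mathbb{H}}(P) = 1$, the projector $P$ has quaternionic rank one, so $\mathrm{ker}_{\mathbb{H}}(P) \subset \mathbb{H}^2 \simeq_{\mathbb{C}} \mathbb{C}^4$ has complex dimension two. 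Correspondingly $\gamma(P)$ is a rank-two complex hermitian projector (its complex trace equals $2\,\mathrm{Re}\,\mathrm{tr}_{\mathbb{H}}(P) = 2$), so its complex kernel also has dimension two. Combining this with the tautological description $\mathcal{L}^{\mathrm{taut}}_{\mathbb{H}P^1}\big|_P = \mathrm{ker}_{\mathbb{H}}(P)$ from \cref{TautologicalLineBundleAsSubbundle} delivers the pointwise identification
\[
  \mathrm{ker}\bigl(F^{\mathrm{taut}}_{\mathbb{H}P^1}(P)\bigr) \;=\; \gamma\bigl(\mathcal{L}^{\mathrm{taut}}_{\mathbb{H}P^1}\bigr)_P \,\ominus\, 0.
\]

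\textbf{Step 3.} Finally, I would globalise: the complex dimension $\mathrm{dim}_{\mathbb{C}}\,\mathrm{ker}_{\mathbb{C}}(\gamma(P)) = 2$ is \emph{constant} on $\mathbb{H}P^1$, so the local constancy hypothesis of \cref{VirtualKernelBundles} is satisfied and the pointwise virtual kernels assemble into an honest virtual complex vector bundle over $\mathbb{H}P^1$, necessarily equal to $\gamma(\mathcal{L}^{\mathrm{taut}}_{\mathbb{H}P^1}) \ominus 0$. Continuity of the parameterized Fredholm operator itself is clear, since each of the three maps composed in \cref{TheTautologicalParameterizedFredholmOperator} is continuous. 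There is essentially no obstacle here: the statement is a bookkeeping consequence of \cref{FredholmFromProjector} once the $\mathbb{C}$- versus $\mathbb{H}$-linear identifications have been made; the one point that would be easy to state sloppily and so deserves the explicit dimension argument is that the forgetful passage $\gamma$ from $\mathbb{H}$-modules to $\mathbb{C}$-modules really does turn $\mathrm{ker}_{\mathbb{H}}(P)$ into all of $\mathrm{ker}_{\mathbb{C}}(\gamma(P))$.
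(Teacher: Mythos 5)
Your proof is correct and takes essentially the same route as the paper: apply \cref{FredholmFromProjector} pointwise to identify $\mathrm{ker}(F_{\gamma(P)})$ with $\mathrm{ker}_{\mathbb{C}}(\gamma(P))\ominus 0$, identify $\mathrm{ker}_{\mathbb{C}}(\gamma(P))$ with the underlying $\mathbb{C}$-module of $\mathrm{ker}_{\mathbb{H}}(P)$, and recognize the latter as the fibre of $\mathcal{L}^{\mathrm{taut}}_{\mathbb{H}P^1}$. The paper's own proof is a three-line citation chain; you add the explicit dimension count (complex trace $= 2\,\mathrm{Re}\,\mathrm{tr}_{\mathbb{H}}(P) = 2$) and the local-constancy observation needed for \cref{VirtualKernelBundles}, which the paper leaves implicit — a reasonable amount of extra care, though the notation $\gamma(Pv)$ is a slight abuse (your $\gamma$ acts on operators, not vectors; what you mean is the identification $\mathbb{H}^2 \simeq_{\mathbb{C}} \mathbb{C}^4$ intertwining left $P$-multiplication with $\gamma(P)$).
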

\begin{proof}
  By \cref{FredholmFromProjector}, the virtual kernel of $F^{\mathrm{taut}}_P \defneq F_{\gamma(P)}$ is that of $\gamma(P)$, by \cref{StarAlgebraHomFromPauliMatrices} this is the underlying complex vector space of the kernel of $P$, 
  and that, by \cref{TheTautologicalHLineBundle}, is the fiber of the tautological $\mathbb{H}$-line bundle over $P$.
\end{proof}

\begin{example}
The virtual difference \cref{VirtualDifferenceOfGradedFredholmOps} of the above \cref{TheTautologicalParameterizedFredholmOperator} and the constant $\mathbb{H}P^1$-parametrized Fredholm operator $P \mapsto F_{\mathbb{C}^2}$ \cref{FRedholmOperatorForTrivialBundle}
\begin{equation}
  \label{ReducedTautologicalFredholmOperator}
  \begin{tikzcd}
    \mathbb{H}P^1
    \ar[
      rrr,
      "{ 
        F^{\mathrm{taut}}
          _{\mathbb{H}P^1} 
          \ominus 
        F^{\mathbb{C}^2}
         _{\mathbb{H}P^1}
      }"
    ]
    &&&
    \GradedFredholmOperators
  \end{tikzcd}
\end{equation}
has as virtual kernel bundle the ``reduced'' version of the (underlying complex vector bundle) of the tautological $\mathbb{H}$-line bundle,
\begin{equation}
  \mathrm{ker}\big(
    F^{\mathrm{taut}}
      _{\mathbb{H}P^1} 
      \ominus 
    F^{\mathbb{C}^2}
     _{\mathbb{H}P^1}
  \big)
  =
  \gamma(
    \mathcal{L}^{\mathrm{taut}}_{\mathbb{H}P^1}
  )
  \ominus 
  \mathbb{C}^2_{\mathbb{H}P^1},
\end{equation}
of vanishing virtual dimension,
$
  \mathrm{dim}\Big(
  \mathrm{ker}\big(
    F^{\mathrm{taut}}
      _{\mathbb{H}P^1} 
      \ominus 
    F^{\mathbb{C}^2}
     _{\mathbb{H}P^1}
  \big)
  \Big) = 0
  \mathrlap{\,.}
$
\end{example}

\begin{notation}
  \label[notation]{StableActionOnFred}
  Let 
  \begin{equation}
    \label{StableUnitarySp2Action}
    \begin{tikzcd}[
      column sep=15pt
    ]
      \mathrm{Sp}(2)
      \ar[
        rr,
        "{
          \displaystyle{
            \oplus_{\mathbb{N}}
          }
          \,
          \gamma
        }"
      ]
      &&
      \UH
      \ar[r, ->>]
      &
      \PUH
      \ar[
        r, 
        hook
      ]
      &
      \GradedPUH
      \ar[
        r,
        hook
      ]
      &
      \mathrm{Aut}\big(
       \GradedFredholmOperators
      \big)
    \end{tikzcd}
  \end{equation}
  be (the projective image of) $\mathbb{N}$ direct summands of the 4-dimensional irrep \cref{TheComplexSp2Representation}, realizing the Hilbert space $\HilbertSpace$ \cref{TheHilbertSpace} as the unitary $\mathrm{Sp}(2)$-representation which is the $\mathbb{N}$-indexed direct sum of these irreps \cref{TheComplexSp2Representation}:
  \footnote{
    Generally one would take $\HilbertSpace$ to be a \emph{stable representation} of $\mathrm{Sp}(2)$, namely the $\mathbb{N}$-fold direct sum of the direct sum of all the complex irreducible representations of $\mathrm{Sp}(2)$. But for the present purpose, and not to overburden the notation unnecessarily, we may stick with \cref{StableUnitarySp2Action}.
  }
  \begin{equation}
    \label{NDirectSumOfSp2Representations}
      \begin{tikzcd}[sep=-3pt]
      \HilbertSpace
      \ar[
        out=180-55,
        in=55,
        looseness=4,
        shift right=2.3pt,
        "{
          \hspace{4pt}\mathclap{\mathrm{Sp}(2)}\hspace{6pt}
        }"{description}
      ]
      &:=&
      \bigoplus_{\mathbb{N}}
      &
      \mathbb{C}^4
      \mathrlap{\,.}
      \ar[
        out=180-60,
        in=60,
        looseness=4,
        shift right=2.3pt,
        "{
          \hspace{4pt}\mathclap{\mathrm{Sp}(2)}\hspace{5pt}
        }"{description}
      ]
    \end{tikzcd}
  \end{equation}
\end{notation}
\begin{remark}
  \label[remark]{UnshufflingTrivialAndDefiningReps}
  Below we will be concerned with the restriction of this $\mathrm{Sp}(2)$-representation \cref{NDirectSumOfSp2Representations} to a representation of the stabilizer subgroup \cref{StabilizerSubgroup},
  \begin{equation}
    \begin{tikzcd}[sep=small]
      \mathrm{Sp}(1)_{\mathrm{stb}} 
      \ar[
        rr, 
        hook,
        "{ 
          \iota_{\mathrm{stb}} 
        }"
      ] 
      &&
      \mathrm{Sp}(2) 
      \mathrlap{\,,}
    \end{tikzcd}
  \end{equation}
  in which form it becomes equivariantly unitarily equivalent, as in \cref{UnshuffleUnitaryTransformation}, to the direct sum of an infinite-dimensional trivial representation with an $\mathbb{N}$-indexed sum of copies of the defining representation $\mathbb{C}^2_{\mathrm{def}}$ of $\mathrm{Sp}(1) \simeq \mathrm{SU}(2)$ \cref{UnitQuaternionsAsSU2}:
  \begin{equation}
    \label{RestrictedNDirectSumOfSp2Representations}
      \begin{aligned}
      \;\,
      \mathclap{
      \begin{tikzcd}[sep=-3pt]
      \HilbertSpace
      \ar[
        out=180-55,
        in=55,
        looseness=4,
        shift right=2.3pt,
        "{
          \hspace{4pt}\mathclap{\mathrm{Sp}(1)_{\mathrm{stb}}}\hspace{6pt}
        }"{description}
      ]
      \end{tikzcd}
      }
      \;\,
      & =
      \textstyle{\bigoplus_{\mathbb{N}}}
      \big(
          \;\;\,
          \mathclap{
          \begin{tikzcd}
          \mathbb{C}^4
          \ar[
            out=180-60,
            in=60,
            looseness=4,
            shift right=2.3pt,
            "{
              \hspace{4pt}\mathclap{\mathrm{Sp}(1)_{\mathrm{stb}}}\hspace{5pt}
            }"{description}
          ]
          \end{tikzcd}
          }
          \;\;\,
      \big)
      =
      \textstyle{\bigoplus_{\mathbb{N}}}
      \big(
        \mathbb{C}^2_{\mathrm{triv}}
        \oplus
          \;\;\,
          \mathclap{
          \begin{tikzcd}
          \mathbb{C}^2_{\mathrm{def}}
          \ar[
            out=180-60,
            in=60,
            looseness=4,
            shift right=2.3pt,
            "{
              \hspace{4pt}\mathclap{\mathrm{Sp}(1)_{\mathrm{stb}}}\hspace{5pt}
            }"{description}
          ]
          \end{tikzcd}
          }
          \;\;\,
      \big)
      \\
      & 
      \xrightarrow[\sim]{\quad U \quad}
      \grayunderbrace{
      \Big(
      \textstyle{\bigoplus_{\mathbb{N}}}
      \mathbb{C}^2_{\mathrm{triv}}
      \Big)
      }{%
        \mathcolor{black}{%
          \HilbertSpace_{\mathrm{triv}}%
        }%
      }
      \oplus
      \grayunderbrace{
      \Big(
      \textstyle{\bigoplus_{\mathbb{N}}}
      \;\;\,
      \mathclap{
      \begin{tikzcd}
      \mathbb{C}^2_{\mathrm{def}}
      \ar[
        out=180-60,
        in=60,
        looseness=4,
        shift right=2.3pt,
        "{
          \hspace{4pt}\mathclap{\mathrm{Sp}(1)_{\mathrm{stb}}}\hspace{5pt}
        }"{description}
      ]
      \end{tikzcd}
      }
      \;\;\,
      \Big)
      }{%
        \mathcolor{black}{%
          \HilbertSpace_{\mathrm{def}}%
        }%
      }%
      \mathrlap{\,.}
    \end{aligned}
  \end{equation}  
  Under this transformation, the Fredholm operator $F_{P_0}$ transforms into 
  \cref{TransformedFredholmUnderUnshuffle}
  \begin{equation}
    \label{TransformedFredholmUnderRepUnshuffle}
    U \circ F_{P_0} \circ U^{-1}
    =
    F^{\mathbb{C}^2_{\mathrm{triv}}}
    \oplus
    F^{0 \cdot \mathbb{C}^2_{\mathrm{def}}}
    \mathrlap{\,.}
  \end{equation}
\end{remark}

\begin{lemma}
  \label[lemma]{EquivarianceOfTautologicalFred}
  The tautological $\mathbb{H}P^1$-parameterized Fredholm operator \cref{TheTautologicalParameterizedFredholmOperator} is $\mathrm{Sp}(2)$-equivariant with respect to the canonical action on $\mathbb{H}P^1$ \cref{TheEquivariantOfTheHopfFibration} and the action on $\GradedFredholmOperators$ from \cref{StableUnitarySp2Action}:
  \begin{equation}
    \begin{tikzcd}
      \mathbb{H}P^1
      \ar[
        out=180-60,
        in=60,
        looseness=4,
        shift right=2.3pt,
        "{
          \hspace{4pt}\mathclap{\mathrm{Sp}(2)}\hspace{5pt}
        }"{description}
      ]
      \ar[
        rr,
        "{
          F^{\mathrm{taut}}
          _{\mathbb{H}P^1}
        }"
      ]
      &&
      \GradedFredholmOperators
      \mathrlap{\,,}
      \ar[
        out=180-60,
        in=60,
        looseness=4,
        shift right=2.3pt,
        "{
          \hspace{4pt}\mathclap{\mathrm{Sp}(2)}\hspace{5pt}
        }"{description}
      ]
    \end{tikzcd}
  \end{equation}
  and hence descends to a map of homotopy quotient groupoids \cref{GEquivariantMapsAsSliceMapsOverBG}:
  \begin{equation}
    \begin{tikzcd}[
      column sep=25pt,
      row sep=-1pt
    ]
      \mathrm{Sp}(2)
      \backsslash
      \mathbb{H}P^1
      \ar[
        rr,
        "{
          F^{\mathrm{taut}}
            _{\mathbb{H}P^1}
          \sslash
          \mathrm{Sp}(2)
        }"
      ]
      \ar[
        dr,
        shorten=-1pt
      ]
      &&
      \mathrm{Sp}(2)
      \backsslash
      \GradedFredholmOperators
      \mathrlap{\,.}
      \ar[
        dl,
        shorten=-1pt
      ]
      \\
      &
      \mathbf{B}\mathrm{Sp}(2)
    \end{tikzcd}
  \end{equation}
\end{lemma}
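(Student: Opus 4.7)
The plan is to chase equivariance step-by-step along the three-stage factorization of $F^{\mathrm{taut}}_{\mathbb{H}P^1}$ displayed in \cref{TheTautologicalParameterizedFredholmOperator}, and then to invoke \cref{EquivariantMapsAsSliceMapsOfHomotopyQuotients} for the descent to homotopy quotient groupoids. Each of the three stages uses $\mathrm{Sp}(2)$-actions which are compatible by construction, so the work is purely a matter of recording the compatibilities.

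First, I would verify that the identification $\mathbb{H}P^1 \simeq \{ P \in \BoundedOperators(\mathbb{H}^2) \mid P^\dagger = P,\ P^2 = P,\ \mathrm{tr}_{\mathbb{H}}(P) = 1 \}$ from \cref{IdentifyingS4WithHP1} intertwines the left-multiplication action of $\mathrm{Sp}(2)$ on $\mathbb{H}$-lines (cf. \cref{Sp2ActionOnHP1}) with the conjugation action $P \mapsto G \cdot P \cdot G^\dagger$ of $\mathrm{Sp}(2)$ on projectors. This is exactly the content of \cref{KernelOfConjugatedProjectionOperator}: a line $v \cdot \mathbb{H}$ is sent to the projector $1 - v v^\dagger / |v|^2$, and the correspondence $\mathrm{ker}(G P G^\dagger) = G \cdot \mathrm{ker}(P)$ makes the equivariance manifest.

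Second, I would note that $\gamma : \BoundedOperators(\mathbb{H}^2) \to \BoundedOperators(\mathbb{C}^4)$ is a $*$-algebra homomorphism by \cref{StarAlgebraHomFromPauliMatrices}, so it tautologically converts conjugation to conjugation: $\gamma(G \cdot P \cdot G^\dagger) = \gamma(G) \cdot \gamma(P) \cdot \gamma(G)^\dagger$. Since $\gamma$ sends $\mathrm{Sp}(2)$ into $\mathrm{SU}(4) \subset \mathrm{U}(\mathbb{C}^4)$ (\cref{TheComplexSp2Representation}), this gives equivariance of the middle arrow in \cref{TheTautologicalParameterizedFredholmOperator} with respect to conjugation by $\gamma(\mathrm{Sp}(2))$.

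Third---and this is the only step requiring genuine inspection---I would check that the map $F_{(-)}$ from \cref{MapFromProjectorsToFredholmOperators} is equivariant with respect to conjugation by operators of the ``stabilized'' form $\bigoplus_{\mathbb{N}} U \in \mathrm{U}(\HilbertSpace)$. Reading off the block form of $f_P$ in \cref{FredholmOperatorFromProjection}, each entry is one of $P$, $1-P$, or $0$, each of which transforms by conjugation under $U \mapsto U(-)U^{-1}$. Hence $\big(\bigoplus_{\mathbb{N}} U\big) \circ f_P \circ \big(\bigoplus_{\mathbb{N}} U\big)^{-1} = f_{U P U^{-1}}$, and dually for the graded operator $F_P$. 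Projecting from $\UH$ to $\PUH \hookrightarrow \GradedPUH$ gives equivariance with respect to the action $\bigoplus_{\mathbb{N}} \gamma$ from \cref{StableUnitarySp2Action}; crucially, the $\mathrm{U}(1)$-center acts trivially by conjugation, so the lift through \cref{PUHFiberSequence} is automatic. Composing the three stages yields the claimed $\mathrm{Sp}(2)$-equivariance, and the descent to a map $F^{\mathrm{taut}}_{\mathbb{H}P^1} \sslash \mathrm{Sp}(2)$ of slice objects over $\mathbf{B}\mathrm{Sp}(2)$ is then immediate from \cref{GEquivariantMapsAsSliceMapsOverBG}. The main---mild---obstacle is bookkeeping in the third step, i.e., confirming that the block-diagonal stabilized action on $\HilbertSpace$ really is the one under which $F_{(-)}$ is equivariant, which is ensured by using the stable $\mathrm{Sp}(2)$-representation $\bigoplus_{\mathbb{N}} \mathbb{C}^4$ in \cref{NDirectSumOfSp2Representations}.
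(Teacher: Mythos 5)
Your proof follows the same three-stage factorization as the paper's: equivariance of $\mathbb{H}P^1 \to \{$projectors$\}$ via \cref{KernelOfConjugatedProjectionOperator}, equivariance of $\gamma$ as a $*$-homomorphism, and equivariance of $F_{(-)}$ by inspection of the block form \cref{FredholmOperatorFromProjection} compatibly with the stabilized representation \cref{NDirectSumOfSp2Representations}, with the descent to homotopy quotients via \cref{GEquivariantMapsAsSliceMapsOverBG}. This is exactly the paper's argument, and your extra remark that the $\mathrm{U}(1)$-center acts trivially by conjugation (so passage from $\UH$ to $\PUH$ is harmless) is a correct, if unstated, detail of the same proof.
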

\begin{proof}
  The first two maps in \cref{TheTautologicalParameterizedFredholmOperator} are tautologically equivariant, as around \cref{KernelOfConjugatedProjectionOperator}. The third map in \cref{TheTautologicalParameterizedFredholmOperator} is clearly equivariant by the block matrix form \cref{FredholmOperatorFromProjection} of $F_P$ being compatible with the direct sum of representations \cref{NDirectSumOfSp2Representations}:
  \begin{equation}
    \begin{aligned}
    &
    \big(
    \oplus_{\mathbb{N}}
    \gamma(G)
    \big)
    \left(
    \begin{matrix}
      \;P\; & 0 & 0 &   \cdots 
      \\
      1\!-\!P & \;P\; & 0 &   \cdots 
      \\
      0 & 1\!-\!P & \;P\; &   \cdots
      \\
      0 & 0 & 1\!-\!P &    \cdots
      \\[-4pt]
      \vdots & \vdots  & \vdots 
      & \ddots
    \end{matrix}
    \right)
    \big(
    \oplus_{\mathbb{N}}
    \gamma(G)^{-1}
    \big)
    \\[7pt]
    & =
    \left(
    \begin{matrix}
      \;\gamma(G) P \, \gamma(G)^{-1}\; & 0 & 0 &   \cdots 
      \\
      1\!-\!\gamma(G) P\, \gamma(G)^{-1} & \;\gamma(G) P\, \gamma(G)^{-1}\; & 0 &   \cdots 
      \\
      0 & 1\!-\!\gamma(G) P\, \gamma(G)^{-1} & \;\gamma(G) P \gamma(G)^{-1}\; &  \cdots
      \\
      0 & 0 & 1\!-\!\gamma(G) P\, \gamma(G)^{-1}  &   \cdots
      \\[-4pt]
      \vdots & \vdots & \vdots  
      & \ddots
    \end{matrix}
    \right)
    \mathrlap{.}
    \end{aligned}
  \end{equation}
  This establishes the claim.
\end{proof}

\subsubsection{The orienting homotopy}
\label{TheOrientingHomotopy}

With this in hand, we have the following Fredholm operator analogue of \cref{PullbackOfTautologicalLineBundleAlongHopfFibrationTrivializes}, cf. \cref{TrivializationOfPullbackOfTautologicalLineBundleAlongHopfFibrationInTermsOfClassifyingMaps}, which realizes the phenomenon of \cref{MeasuringRelativeChargesInQCohom,TwistedEquivariantOrientation} for $E = \mathrm{KU}$:
\begin{proposition}
  \label[proposition]{PullbackOfTautologicalFredholmOperatorAlongQuaternionicHopfFibrationTrivializes}
  The pullback of the reduced version \cref{ReducedTautologicalFredholmOperator} of the tautological $\mathbb{C}P^1/\mathbb{H}P^1$-parameterized Fredholm operator \cref{TheTautologicalParameterizedFredholmOperator} along the $\mathbb{C}/\mathbb{H}$-Hopf fibration \cref{TheHopfFibration} trivializes $\mathrm{U}(2)/\mathrm{Sp}(2)$-equivariantly, in that we have an equivariant homotopy \cref{EquivariantHomotopy} of this form:
  \begin{equation}
    \label{TheHomotopy}
    \begin{tikzcd}[
      column sep=25pt,
      ampersand replacement=\&
    ]
      S(\mathbb{C}^2)
      \ar[
        in=70+90,
        out=180-70+90,
        looseness=3,
        shift right=4pt,
        "{
          \mathrm{U}(2)
        }"{description}
      ]
      \ar[
        rr
      ]
      \ar[
        dd,
        "{
          \substack{
            v
            \\
            \rotatebox[origin=c]{-90}{$\mapsto$}
            \\
            v \cdot \mathbb{C}
          }
        }"{swap}
      ]
      \&\&
      \ast
      \ar[
        out=51-90,
        in=180-51-90,
        looseness=4,
        shift left=2pt,
        "{
          \mathrm{U}(2)
        }"{description}
      ]
      \ar[
        dd,
        "{
          F_0
        }"{description}
      ]
      \ar[
        ddll,
        shorten=15pt,
        Rightarrow,
        dashed,
        "{
          h_1^{\mathrm{KU}}
        }"{description}
      ]
      \\
      \\
      \mathbb{C}P^1
      \ar[
        in=65+90,
        out=180-65+90,
        looseness=3,
        shift right=2pt,
        "{
          \mathrm{U}(2)
        }"{description}
      ]
      \ar[
        rr,
        "{
          \FredholmOperator
            ^{\mathrm{taut}}
            _{\mathbb{C}P^1}
          \ominus
          \FredholmOperator
            ^{\mathbb{C}^1_{\mathrm{triv}}}
            _{\mathbb{C}P^1}
        }"{swap}
      ]
      \&\&
      \GradedFredholmOperators
      \ar[
        out=70-90,
        in=180-70-90,
        looseness=3,
        shift left=2pt,
        "{
          \mathrm{U}(2)
          \mathrlap{\,,}
        }"{description}
      ]
    \end{tikzcd}
    \;\;\;\;
    \begin{tikzcd}[
      column sep=25pt,
      ampersand replacement=\&
    ]
      S(\mathbb{H}^2)
      \ar[
        in=70+90,
        out=180-70+90,
        looseness=3,
        shift right=4pt,
        "{
          \mathrm{Sp}(2)
        }"{description}
      ]
      \ar[
        rr
      ]
      \ar[
        dd,
        "{
          \substack{
            v
            \\
            \rotatebox[origin=c]{-90}{$\mapsto$}
            \\
            v \cdot \mathbb{H}
          }
        }"{swap}
      ]
      \&\&
      \ast
      \ar[
        out=51-90,
        in=180-51-90,
        looseness=4,
        shift left=2pt,
        "{
          \mathrm{Sp}(2)
        }"{description}
      ]
      \ar[
        dd,
        "{
          F_0
        }"{description}
      ]
      \ar[
        ddll,
        shorten=15pt,
        Rightarrow,
        dashed,
        "{
          h_3^{\mathrm{KU}}
        }"{description}
      ]
      \\
      \\
      \mathbb{H}P^1
      \ar[
        in=65+90,
        out=180-65+90,
        looseness=3,
        shift right=2pt,
        "{
          \mathrm{Sp}(2)
        }"{description}
      ]
      \ar[
        rr,
        "{
          \FredholmOperator
            ^{\mathrm{taut}}
            _{\mathbb{H}P^1}
          \ominus
          \FredholmOperator
            ^{\mathbb{C}^2_{\mathrm{triv}}}
            _{\mathbb{H}P^1}
        }"{swap}
      ]
      \&\&
      \GradedFredholmOperators
      \ar[
        out=70-90,
        in=180-70-90,
        looseness=3,
        shift left=2pt,
        "{
          \mathrm{Sp}(2)
          \mathrlap{\,.}
        }"{description}
      ]
    \end{tikzcd}
  \end{equation}
\end{proposition}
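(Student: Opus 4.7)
My plan is to invoke \cref{HopfFibrationOnHomotopyQuotients}, which identifies the $\mathrm{Sp}(2)$-homotopy quotient of the quaternionic Hopf fibration with the map of deloopings $\mathbf{B}\mathrm{Sp}(1)_{\mathrm{stb}} \to \mathbf{B}(\mathrm{Sp}(1)^2)$. Combined with \cref{GEquivariantMapsAsSliceMapsOverBG} and \cref{LiftsOfBGThroughQuotientProjectionAreFixedPoints}, the sought equivariant homotopy \cref{TheHomotopy} corresponds to a path in the $\mathrm{Sp}(1)_{\mathrm{stb}}$-fixed subspace of $\GradedFredholmOperators$ connecting the basepoint operator $F_0$ to the image of the basepoint $v_0 := (1,0)^\top \in S(\mathbb{H}^2)$ under the reduced Fredholm family. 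This reduces the entire global homotopy problem to a single pointwise construction at the basepoint; the $\mathbb{C}$-case is entirely analogous, with $\mathrm{Sp}(2)\supset\mathrm{Sp}(1)_{\mathrm{stb}}$ replaced by $\mathrm{U}(2)\supset\mathrm{U}(1)_{\mathrm{stb}}$ and dimensions halved.

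\textbf{Computing the basepoint operator.} Via \cref{IdentifyingS4WithHP1}, the basepoint $v_0$ corresponds to the rank-one projector $P_0 = \bigl(\begin{smallmatrix} 0 & 0 \\ 0 & 1 \end{smallmatrix}\bigr) \in \BoundedOperators(\mathbb{H}^2)$, which is manifestly fixed under conjugation by $\mathrm{Sp}(1)_{\mathrm{stb}}$. Applying the representation $\gamma$ of \cref{QuaternionicMatricesAsComplexMatrices} and then the construction \cref{MapFromProjectorsToFredholmOperators} of \cref{FredholmFromProjector}, one has $F^{\mathrm{taut}}_{P_0} = F_{\gamma(P_0)}$ where $\gamma(P_0) \in \BoundedOperators(\mathbb{C}^4)$ is the orthogonal projection onto the $\mathrm{Sp}(1)_{\mathrm{stb}}$-subrepresentation $\mathbb{C}^2_{\mathrm{def}}$ within $\mathbb{C}^4 \simeq \mathbb{C}^2_{\mathrm{triv}} \oplus \mathbb{C}^2_{\mathrm{def}}$ of \cref{UnshufflingTrivialAndDefiningReps}. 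Applying the equivariant unshuffle unitary $U$ of \cref{RestrictedNDirectSumOfSp2Representations} together with the formula \cref{TransformedFredholmUnderRepUnshuffle} then gives the $\mathrm{Sp}(1)_{\mathrm{stb}}$-equivariant unitary identification
\[
U \circ F^{\mathrm{taut}}_{P_0} \circ U^{-1}
\;=\;
F^{\mathbb{C}^2_{\mathrm{triv}}} \,\oplus\, F^{0\cdot \mathbb{C}^2_{\mathrm{def}}},
\]
where the first summand is the shift Fredholm operator \cref{ShiftOperator} on $\HilbertSpace_{\mathrm{triv}}$ (carrying the trivial $\mathrm{Sp}(1)_{\mathrm{stb}}$-action) while the second summand is the basepoint operator $F_0$ on $\HilbertSpace_{\mathrm{def}}$ (since $\gamma(P_0)$ vanishes on the trivial summand).

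\textbf{Cancellation and null-homotopy.} Forming the virtual difference \cref{VirtualDifferenceOfGradedFredholmOps} with $F^{\mathbb{C}^2_{\mathrm{triv}}}$ then yields, up to equivariant unitary equivalence, the operator
$F^{\mathbb{C}^2_{\mathrm{triv}}} \oplus F_0^{\mathrm{def}} \oplus \bigl(\ominus F^{\mathbb{C}^2_{\mathrm{triv}}}\bigr)$, whose equivariant virtual kernel vanishes because the two summands on $\HilbertSpace_{\mathrm{triv}}$ cancel as $\mathrm{Sp}(1)_{\mathrm{stb}}$-representations. An explicit equivariant null-homotopy of the paired summand $F^{\mathbb{C}^2_{\mathrm{triv}}} \oplus \bigl(\ominus F^{\mathbb{C}^2_{\mathrm{triv}}}\bigr)$ is produced by the standard rotation
\[
\theta \in [0,\tfrac{\pi}{2}]
\;\longmapsto\;
\left(\begin{matrix}
\cos\theta \cdot f^{\mathbb{C}^2} & \sin\theta \cdot \mathrm{id}
\\
-\sin\theta \cdot \mathrm{id} & \cos\theta \cdot (f^{\mathbb{C}^2})^\dagger
\end{matrix}\right)
\]
on $\HilbertSpace_{\mathrm{triv}} \oplus \HilbertSpace_{\mathrm{triv}}$, interpolating between the off-diagonal block at $\theta = 0$ and an invertible operator at $\theta = \pi/2$, which in turn is contracted to $F_0$ inside the contractible space of invertible graded Fredholm operators. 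Equivariance of every stage is automatic because the scalars $\sin\theta,\cos\theta$ commute with the group action and the two $\HilbertSpace_{\mathrm{triv}}$-summands carry isomorphic trivial $\mathrm{Sp}(1)_{\mathrm{stb}}$-representations; the $F_0$-summand on $\HilbertSpace_{\mathrm{def}}$ is left untouched. The result is an equivariant path from $F^{\mathrm{taut}}_{P_0} \ominus F^{\mathbb{C}^2_{\mathrm{triv}}}$ to $F_0$ on $\HilbertSpace$, as required.

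\textbf{Main obstacle.} The genuine technical content lies in the first step: verifying that \cref{HopfFibrationOnHomotopyQuotients} together with the slice-mapping-stack formalism of \cref{SliceMappingStack} and \cref{GEquivariantMapsAsSliceMapsOverBG} really does package the entire global homotopy on $\mathbb{H}P^1$ into a single pointwise equivariant datum at the basepoint. Once this reduction is in hand via the geometric-homotopy machinery of \cref{SomeCohesiveHomotopyTheory}, the pointwise construction above is an elementary manipulation of explicit Fredholm operators.
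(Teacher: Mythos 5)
Your proof is correct and follows the same overall strategy as the paper's: invoke \cref{HopfFibrationOnHomotopyQuotients} to collapse the global equivariant homotopy over $\mathbb{H}P^1$ to a pointwise equivariant datum at the basepoint (equivalently, to a path in the $\mathrm{Sp}(1)_{\mathrm{stb}}$-fixed Fredholm operators), then use the unshuffle \cref{TransformedFredholmUnderRepUnshuffle} to identify the basepoint operator and observe that all nontrivial content now lives on $\HilbertSpace_{\mathrm{triv}}$, where the $\mathrm{Sp}(1)_{\mathrm{stb}}$-action is trivial, so the equivariance constraint drops out. The one genuine divergence is the final step: the paper observes that $F^{\mathbb{C}^2}\ominus F^{\mathbb{C}^2}$ and $F_0$ both have Fredholm index $0$ and appeals to \cref{ConnectedComponentsOfSpaceOfFredholmOperators} for the existence of a (plain, hence automatically equivariant) homotopy, whereas you construct the null-homotopy explicitly via the Atiyah rotation trick followed by a contraction through the invertible graded Fredholm operators. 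Both are valid; yours is more constructive and makes the homotopy fully explicit, at the cost of a slightly longer argument and the need to verify (which you do, correctly) that the rotation stays within the Fredholm operators and that the intermediate stages commute with the group action. Your citation of \cref{LiftsOfBGThroughQuotientProjectionAreFixedPoints} for the first reduction is a valid shortcut (taking $X=\ast$ there) in place of the paper's direct manipulation of the sliced diagram; the underlying reasoning is the same.
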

\begin{proof}
  We spell out the proof over $\mathbb{H}$; for $\mathbb{C}$ it is verbatim the same up to the evident substitution.
  
  After equivalent passage to $\mathrm{Sp}(2)$-homotopy quotients \cref{GEquivariantMapsAsSliceMapsOverBG},
  and under the equivalence of \cref{HopfFibrationOnHomotopyQuotients}, we are reduced to exhibiting $\mathbf{B}\mathrm{Sp}(2)$-sliced homotopy \cref{EquivariantHomotopy} 
  as indicated by the dashed arrow in this diagram:
  \begin{equation}
    \begin{tikzcd}[
      column sep=35pt,
      row sep=14pt
    ]
      \mathrm{Sp}(1)_{\mathrm{stb}}
      \backsslash \ast
      \ar[
        rr,
        equals
      ]
      \ar[
        dd,
        "{
          \iota_{\mathrm{stb}} 
            \backsslash 
          \ast
        }"{swap}
      ]
      &\phantom{-}&
      \mathrm{Sp}(1)_{\mathrm{stb}}
      \backsslash \ast
      \ar[
        dd,
        "{ 
          \mathrm{Sp}(1)_{\mathrm{stb}}
          \backsslash
          F_0 
        }"{description}
      ]
      \ar[
        ddll,
        Rightarrow,
        dashed,
        shorten=14pt,
        "{ \eta }"{swap},
      ]
      \ar[
        r,
        "{
          \iota_{\mathrm{stb}}
          \backsslash 
          \ast
        }"
      ]
      &
      \mathrm{Sp}(2)
      \backsslash \ast
      \ar[
        ddl,
        equals,
        shorten=10pt
      ]
      \ar[
        dd,
        "{ 
          \mathrm{Sp}(2)
          \backsslash
          F_0 
        }"
      ]      
      \\
      \\
      \mathrm{Sp}(1)^2 
      \backsslash
      \ast
      \ar[
        rr,
        "{
          \mathrm{Sp}(1)^2 
          \backsslash
          \big(
          \FredholmOperator_{P_0}
          \ominus
          \FredholmOperator
            ^{\mathbb{C}^2_{\mathrm{triv}}}
          \big)
        }"{swap}
      ]
      &&
      \mathrm{Sp}(1)_{\mathrm{stb}}
      \backsslash
      \GradedFredholmOperators
      \ar[
        r,
        "{
          \iota_{\mathrm{stb}}
          \backsslash
          \GradedFredholmOperators
        }"{swap, pos=.4}
      ]
      &
      \mathrm{Sp}(2)
      \backsslash
      \GradedFredholmOperators
      \mathrlap{\,.}
    \end{tikzcd}
  \end{equation}
  Here, in identifying the bottom map as shown, we have used that restriction along the equivalence $\begin{tikzcd}[sep=small]\mathrm{Sp}(1)^1 \backsslash \ast \ar[r, "{\sim}"] & \mathrm{Sp}(2) \backsslash \mathrm{Sp}(2)/\mathrm{Sp}(1)^1 \simeq \mathrm{Sp}(2) \backsslash \mathbb{H}P^1 \end{tikzcd}$ means \cref{HoQuotientOfGModHByG} to (restrict the isotropy action along $\begin{tikzcd}[sep=small] \mathrm{Sp}(1)^2 \ar[r, hook] & \mathrm{Sp}(2)\end{tikzcd}$ and) evaluate on the neutral coset $\mathrm{e}\cdot \mathrm{Sp}(1)^2 \in \mathrm{Sp}(2)/\mathrm{Sp}(1)^2$, hence equivalently (by \cref{CosetRealizationOfHopfFibration}) on the base point $P_0 \in \mathbb{H}P^1$.
  
  However, precisely because the left map comes from inclusion of the stabilizer subgroup, its composite with the bottom map produces the situation \cref{TransformedFredholmUnderRepUnshuffle}
  discussed in \cref{UnshufflingTrivialAndDefiningReps}, whence the above diagram is equivalently of this form:
  \begin{equation}
    \begin{tikzcd}[
      column sep=60pt,
      row sep=15pt
    ]
      \mathrm{Sp}(1)_{\mathrm{stb}}
      \backsslash \ast
      \ar[
        rr,
        equals
      ]
      \ar[
        dd
      ]
      \ar[
        ddrr,
        "{
          \mathrm{Sp}(1)
          \backsslash
          \big(
            F^{\mathbb{C}^2_{\mathrm{triv}}}
            \,\ominus\,
            F^{\mathbb{C}^2_{\mathrm{triv}}}
          \big)
        }"{description, sloped}
      ]
      &&
      \mathrm{Sp}(1)_{\mathrm{stb}}
      \backsslash \ast
      \ar[
        dd,
        "{
          \mathrm{Sp}(1)_{\mathrm{stb}}
          \backsslash
          F_0
        }"
      ]
      \ar[
        dl,
        Rightarrow,
        dashed,
        shorten=10pt
      ]
      \\
      &
      {}
      \ar[
        dl,
        Rightarrow,
        shorten=13pt
      ]
      \\
      \mathrm{Sp}(1)^2 
      \backsslash
      \ast
      \ar[
        rr,
        "{
          \mathrm{Sp}(1)^2 
          \backsslash
          \big(
          \FredholmOperator
            ^{\mathrm{taut}}
            _{\mathbb{H}P^1}
          \ominus
          \FredholmOperator
            ^{\mathbb{C}^2_{\mathrm{triv}}}
            _{\mathbb{H}P^1}
          \big)
        }"{swap}
      ]
      &&
      \mathrm{Sp}(1)_{\mathrm{stb}}
      \backsslash
      \GradedFredholmOperators
       \mathrlap{\,.}
    \end{tikzcd}
  \end{equation}
  But now the maps of the top right triangle both pick Fredholm operators that are nontrivial only on the first summand of $\HilbertSpace \simeq \HilbertSpace_{\mathrm{triv}} \oplus \HilbertSpace_{\mathrm{def}}$ \cref{RestrictedNDirectSumOfSp2Representations}, hence on which the group action is trivial. Therefore, the remaining dashed homotopy is obtained from any plain homotopy (no equivariance constraint) in
  \begin{equation}
    \begin{tikzcd}[
      row sep=8pt, 
      column sep=30pt]
      \ast
      \ar[
        rr, 
        equals
      ]
      \ar[
        ddrr,
        "{
          F^{\mathbb{C}^2}
          \!\ominus
          F^{\mathbb{C}^2}
        }"{description, sloped, name=t}
      ]
      && 
      \ast
      \ar[
        dd,
        "{ 
           F_0 
        }"{description, name=s}
      ]
      \ar[
        from=s, 
        to=t,
        Rightarrow,
        dashed,
        shorten=5pt
      ]
      \\
      & {}
      \\
      && 
      \GradedFredholmOperators
      \,.
    \end{tikzcd}
  \end{equation}
  This does exist, by \cref{ConnectedComponentsOfSpaceOfFredholmOperators}, since both these Fredholm operators have vanishing index.  
\end{proof}

\begin{remark}
In summary, by \cref{PullbackOfTautologicalFredholmOperatorAlongQuaternionicHopfFibrationTrivializes}, we have constructed a homotopy of topological groupoids:
  \begin{equation}
    \label{TheEquivariantHomotopy}
    \begin{tikzcd}[
      column sep=50pt, 
      row sep=10pt
    ]
      \mathrm{Sp}(2)
      \backsslash %
      S^7 %
      \ar[
        rr
      ]
      \ar[
        dd,
        "{
          \mathrm{Sp}(2)
          \sslash
          \QuaternionicHopfFibration
        }"{swap}
      ]
      &&
      \mathrm{Sp}(2)
      \sslash
      \ast 
      \ar[
        dd,
        "{
          \mathrm{Sp}(2)
          \backsslash 
          F_0
        }"
      ]
      \ar[
        ddll,
        shorten=15pt,
        Rightarrow,
        "{
          \mathrm{Sp}(2)
          \sslash 
          h_3^{\mathrm{KU}}
        }"{description}
      ]
      \\
      \\
      \mathrm{Sp}(2)
        \backsslash
      S^4
      \ar[
        rr,
        "{
          \mathrm{Sp}(2)
          \sslash
          \big(
          F
            ^{\mathrm{taut}}
            _{\mathbb{H}P^1}
            \ominus
         F
           ^{\mathbb{C}^2_{\mathrm{triv}}}
          _{S^4}
         \big)
        }"{swap}
      ]
      &&
      \mathrm{Sp}(2)
      \backsslash
      \GradedFredholmOperators %
       \mathrlap{\,.}
    \end{tikzcd}
  \end{equation}
This is the announced unstable equivariant quaternionic orientation in topological K-theory, according to \cref{TwistedEquivariantOrientation} (bottom row). 

The construction for the complex orientation is obtained essentially verbatim by restricting all of the above discussion along the inclusion $\mathbb{C} \hookrightarrow \mathbb{H}$.
\end{remark}



\section{Fragile Topological Phases and Microscopic Charges}
\label{FragilePhasesAndMicroscopicCharges}

We expand on the application of the construction in \cref{OrientationsInOrbiKTheory} to the understanding of fragile topological phases and of microscopic charges in physical systems.

\subsection{On Band Nodes and Branes}

First some general words on the mathematical reflection of \emph{band nodes} in crystalline quantum matter and of \emph{monopole branes} in (higher) gauge quantum systems, in their charged singular version as well as in their \emph{gapped} or \emph{probe} incarnation that we are concerned with here.

For the following, consider:
\begin{enumerate}
\item $X^d$ a manifold representing either the momentum space of a crystalline quantum material or the physical space hosting (higher) gauge fields.

\item $\phi : \Sigma^p \hookrightarrow  X^d$ a submanifold  representing the spatial part of the \emph{worldvolume} of a higher dimensional gauge monopole (``$p$-brane'').

\item $\hotype{A}$ the classifying space of gapped Bloch Hamiltonians or of gauge charges, respectively.

\item $G \acts \, (-)$ a smooth action of a Lie group on this data, being the crystalline point group symmetry or the orbifolding group, respectively.
\end{enumerate}

\subsubsection{Essential nodes and singular branes}

Then the topological charge of $\phi$ being:
\begin{itemize}
\item an essential \emph{band node}, where the Berry curvature would diverge
\end{itemize}
respectively: 
\begin{itemize}
\item a singular \emph{monopole $p$-brane}, where the bulk flux density would diverge, 
\end{itemize}
is detected/measured by the $G$-equivariant $\hotype{A}$-cohomology of the \emph{complement} space $X^d \setminus \Sigma^p \subset X^d$ (disregarding here any further twists, just not to notationally overburden the discussion at this point):
\begin{equation}
  H_G\big(
    X^d \setminus \Sigma^p
    ;\,
    \hotype{A}
  \big)
  =
  \pi_0
\Bigg\{\!
 \adjustbox{raise=-8pt}{
  \begin{tikzcd}
    X^d \setminus \Sigma^p
    \ar[
      out=60,
      in=180-60,
      looseness=4,
      "{ 
        \;\mathclap{G}\; 
      }"{description}
    ]
    \ar[
      r,
      dashed
    ]
    &
    \hotype{A}
    \ar[
      out=57,
      in=180-57,
      looseness=4,
      "{ 
        \;\mathclap{G}\; 
      }"{description}
    ]
  \end{tikzcd}
  }
 \! \Bigg\}
  \mathrlap{.}
\end{equation}

The historical and archetypical example is that of a Dirac monopole $\phi : \begin{tikzcd}[sep=small]\{0\} \ar[r, hook] & \mathbb{R}^3\end{tikzcd}$ whose magnetic charge is classified by $\hotype{A} \simeq B \mathrm{U}(1)$ as
\begin{equation}
  \begin{tikzcd}
    H\big(
      \mathbb{R}^3 
        \setminus
      \{ 0 \}
      ;
      B \mathrm{U}(1)
    \big)
    \simeq
    H^2\big(
      S^2
      ;\,
      \mathbb{Z}
    \big)
    \simeq
    \mathbb{Z}
    \mathrlap{\,,}
  \end{tikzcd}
\end{equation}
where 
\begin{equation}
  S^2 
  \;\underset{
    \mathclap{\mathrm{hmtp}}
  }{\simeq}\;
  S^2 \times \mathbb{R}_{> 0}
  \simeq
  \mathbb{R}^3 \setminus \{0\}
\end{equation}
arises as the 2-sphere around the monopole at the origin of space. The physical picture is hence that the cohomology of the complement $X^d \setminus \Sigma^p$ measures the charge reflected in the total field flux that emanates from the monopole brane (where its density diverges) and penetrates through the boundary of a tubular neighborhood.

Of course, of $X^d$ itself has nontrivial cohomology then this may contribute to the cohomology of the complement $X^d \setminus \Sigma^p$. 

\subsubsection{Gapped nodes and probe branes}
\label{GappedNodesAndProbeBranes}

Now, when the band node gets \emph{gapped} (by deforming the nature of the underlying quantum material), or respectively when the monopole brane is regarded in the \emph{probe limit} where the backreaction of its (small) charge onto the ambient space is negligible and hence neglected, then this must mean that the charge is well-defined (non-divergent) on all of $X^d$, classified by a map
$ X^d \xrightarrow{ \tau} \hotype{A}$, 
\emph{and} that as such it receives no contribution from the previous node/brane locus, hence that the composite map $\phi^\ast c$
\begin{equation}
  \begin{tikzcd}[row sep=12pt, column sep=large]
    \Sigma^p
    \ar[
      d,
      hook,
      "{ \phi }"{swap}
    ]
    \ar[
      dr,
      "{
        \phi^\ast c
      }"
    ]
    \\
    X^d
    \ar[
      r, 
      "{ \tau }"{swap}
    ]
    &
    \hotype{A}
  \end{tikzcd}
\end{equation}
has trivial class, in a suitable sense, namely that it can be equipped with whatever structure it is that reflects the undoing of the previous charge. For instance, for a band node its undoing is its \emph{gapping}, typically exhibited by a choice of \emph{mass term}.

Whatever the trivialization process is, it will itself have a topological class which should have some classifying space $\hotype{B}$, equipped with a fibration $\begin{tikzcd}[sep=small] \hotype{B} \ar[r, ->>, "{ p }"] & \hotype{A}\end{tikzcd}$ encoding which trivializations $b \in \hotype{B}$ concern which charges $p(b) \in \hotype{A}$.

In conclusion, the topological data of $\phi$ a \emph{gapped} band node or \emph{probe} brane, relative to a background charge $c$, should be classified by a map $\begin{tikzcd}[sep=small] \Sigma^p \ar[r, dashed] & \hotype{B} \end{tikzcd}$ making this diagram commute:
\begin{equation}
  \begin{tikzcd}[row sep=12pt, column sep=large]
    \Sigma^p
    \ar[
      d,
      hook,
      "{ \phi }"{swap}
    ]
    \ar[
      r,
      dashed
    ]
    &
    \hotype{B}
    \ar[
      d,
      ->>,
      "{\, p }"
    ]
    \\
    X^d 
    \ar[
      r,
      "{ \tau }"
    ]
    &
    \hotype{A}
  \end{tikzcd}
\end{equation}
The deformation classes (relative homotopy classes) of such maps form the \emph{twisted relative cohomology} of \cref{NotionsOfTwistedRelativeCohomology}.

Thus, a pair of topological phases on $X^d$ may superficially have the same topological class in $H_G(X^d;\, \hotype{A})$, but in reality arise from a pair of topologically distinct gapping procedures $m_1 \neq m_2 \in H^{\phi^\ast c}_G\big(\Sigma^p;\, \hotype{B}, \hotype{A}\big) $ (cf. \cref{NotionsOfTwistedRelativeCohomology}) of a given mother phase. These twisted cohomology classes $m$ on $\Sigma^p$ hence witness that the two phases are not actually deformable into each other, after all.

We discuss the example of 2-band Chern insulators, below in \cref{GappedNodalLinesIn2BandChernInsulators}.

\subsection{Revisiting Fragile Topological Phases}
\label{RevisitingFragileTopologicalPhases}

Applying the construction of \cref{TheEquivariantUnit} to the 
fragile crystalline topological phase \cite{nLab:UnstableTopologicalPhaseOfMatter} of 2-band Chern insulators \cite{nLab:ChernInsulator} recovers exactly the Bloch Hamiltonian (cf. \cite{SS25-FQAH, SS25-CrystallineChern}) and then its equivariant K-theory class (cf. \cite{FreedMoore2013, SS22-Ord}). At gapped nodal lines (cf. \cref{GappedNodesAndProbeBranes}), the construction in \cref{TheOrientingHomotopy} gives the corresponding \emph{relative} K-classification, a new prediction.

We proceed to say this in more detail.

\subsubsection{Bloch Hamiltonian Maps}

In solid state physics, \emph{Bloch's theorem} (cf. \parencites[\S XIII.16]{ReedSimon1978}[\S 5.1.3]{Sergeev2023}) entails that the Hamiltonian operator $H$ for single electrons propagating in a $d$-dimensional crystalline material is a direct integral 
\begin{equation}
  \label{BlochHamiltonian}
  \begin{aligned}
    \HilbertSpace
    & \simeq
    \int_{\widehat{T}{}^d}
    \HilbertSpace_{\mathrm{Blch}}
    \,
    \mathrm{d}\mu
    \\
      H 
        & =
      \int_{\widehat{T}{}^d}
      H_{[\vec k]}
      \,
      \mathrm{d}\mu
      \;:\;
      \HilbertSpace
      \longrightarrow
      \HilbertSpace
  \end{aligned}
\end{equation}
over crystal momenta $[\vec k]$ varying in the \emph{Brillouin torus} (cf. \parencites[p. 52]{FreedMoore2013}[\S 2.1]{Thiang2025})
\begin{equation}
  \label{BrillouinTorus}
  \widehat{T}{}^d 
    \simeq 
  \mathbb{R}^d/\mathbb{Z}^d
  \,,
\end{equation}
of a continuous family of \emph{Bloch Hamiltonians}
\begin{equation}
  \label{BlochHamiltonianMap}
  H_{(-)}
  :
  \begin{tikzcd}
    \widehat{T}{}^d
    \ar[
      r,
      dashed
    ]
    &
    \mathrm{End}\big(
      \HilbertSpace_{\mathrm{Blch}}
    \big)
    \,.
  \end{tikzcd}  
\end{equation}
acting on a fiber Hilbert space $\HilbertSpace_{\mathrm{Blch}}$.
These Bloch Hamiltonians have discrete real spectrum, and the graphs of eigenvalues of $H_{(-)}$ are called the \emph{energy bands} (cf. \parencites[Fig. XIII.13]{ReedSimon1978}[\S 2]{Seeger2004}).

In an \emph{insulator} ground state, electron states occupy the lowest $v \in \mathbb{N}$ of these bands below a given \emph{Fermi energy}: the \emph{valence bands}. Depending on external excitations, some number $c \in \mathbb{N}$ of further bands may be accessible to excited electrons, the \emph{conduction bands}. Hence in dependence on external parameters, the system's ground state and its accessible excitations are approximately described by finite-rank Bloch Hamiltonians, given by maps of this form (cf. \cite[Prop. D.13]{FreedMoore2013}):
\begin{equation}
  \label{FiniteRankBlochHamiltonianMap}
  H_{(-)}
  :
  \begin{tikzcd}[sep=18pt]
    \widehat{T}{}^d 
    \ar[
      r,
      dashed
    ]
    &
    \BoundedOperators\big(
      \mathbb{C}^{v+c}
    \big)
    \ar[r, hook]
    &
    \mathrm{End}(\HilbertSpace_{\mathrm{Blch}})
    \,,
  \end{tikzcd}
\end{equation}
where we use the notation $\BoundedOperators(-)$ (``bounded operators'') for notational brevity:
\begin{equation}
  \BoundedOperators(\mathbb{C}^n)
  \simeq
  \mathrm{Mat}_{n \times n}(\mathbb{C})
  \mathrlap{\,.}
\end{equation}

Now for \emph{gapped} ground states, hence with a positive \emph{energy gap} between the valence and the conduction bands, the Bloch Hamiltonian map factors further through the subspace shown on the right here:
\begin{equation}
  \label{SpaceOfGappedFiniteRankNlochHamiltonians}
  H_{(-)}
  :
  \begin{tikzcd}
  \widehat{T}{}^d
  \ar[
    r,
    dashed
  ]
  &
  \BoundedOperators\big(
    \mathbb{C}^{v + c}
  \big)_{\!\mathrm{gap}}
  :=
  \left\{
    H \in \BoundedOperators\big(
      \mathbb{C}^{v + c}
    \big)
    \,\middle\vert\,
    \substack{
      H^\dagger = H
      \\
      \mathrm{Eig}_{< 0}(H)
      \,\simeq\,
      \mathbb{C}^v
      \\
      \mathrm{Eig}_{> 0}(H)
      \,\simeq\,
      \mathbb{C}^c
    }
  \right\}
  ,
  \end{tikzcd}
\end{equation}
where we now made explicit that the Bloch Hamiltonians are hermitian and we have chosen the origin of the energy scale to be the Fermi energy, so that the $v$ valence bands are those of negative energy. Quantum materials whose ground state fills the valence bundle of Bloch Hamiltonians of the form \cref{SpaceOfGappedFiniteRankNlochHamiltonians} with a \emph{nontrivial homotopy class} $\big[H_{(-)}\big] \in \pi_0 \mathrm{Map}\big( \widehat{T}{}^d, \BoundedOperators(\mathbb{C}^{v+c})_{\mathrm{gap}} \big)$ are called \emph{topological insulators} \cite{nLab:TopologicalInsulator}: Insulators because of the gap to the conduction band, and ``topological'' because of the twist $\big[H_{(-)}\big]$ in the electron couplings which is locally trivial but globally non-trivial.

More generally, that a subgroup $G$ of the point group of the crystal's space group (cf. \parencites[\S 26]{Armstrong1988}[\S 2]{Hammond2015}) is respected by (common jargon: ``protects'') the Bloch Hamiltonians means that there is a unitary representation of the point group on the Bloch quantum states
\begin{equation}
  \label{UnitaryRepresentationOfPointGroup}
  U : 
  \begin{tikzcd}[sep=18pt]
    G
    \ar[r]
    &
    \mathrm{U}\big(
      \mathbb{C}^{v+c}
    \big),
  \end{tikzcd}
\end{equation}
such that (cf. \parencites{NeupertSchindler2018}[\S 5.2]{Stanescu2020})
\begin{equation}
  \label{SymmetryTransformationOfBlochHamiltonian}
  \forall_{
    \substack{
      [\vec k] \in \widehat{T}{}^d
      \\
      g \in G
    }
  }
  \;\;:\;\;
  H_{g \cdot [\vec k]}
  =
  U(g) 
    \circ 
  H_{[\vec k]}
    \circ
  U(g)^{-1}
  \mathrlap{.}
\end{equation}
We highlight that this says equivalently that the Bloch Hamiltonian map \cref{SpaceOfGappedFiniteRankNlochHamiltonians} is \emph{$G$-equivariant} \cref{Equivariance} with respect to the given point group action on the Brillouin torus \cref{BrillouinTorus} and the conjugation of action of \cref{UnitaryRepresentationOfPointGroup} on the Bloch Hamiltonians:
\begin{equation}
  \label{EquivariantBlochHamiltonianMap}
  \begin{tikzcd}
    \widehat{T}{}^d
    \ar[
      in=60,
      out=180-60,
      looseness=4,
      "{
        \;\mathclap{G}\;
      }"{description}
    ]
    \ar[
      rr,
      dashed,
      "{ H_{(-)} }"
    ]
    &&
    \BoundedOperators\big(
      \mathbb{C}^{v+c}
    \big)_{\!\mathrm{gap}}
    \mathrlap{\,.}
    \ar[
      in=57,
      out=180-57,
      looseness=4,
      "{
        \;\mathclap{G}\;
      }"{description}
    ]
  \end{tikzcd}
\end{equation}

Yet more generally, the crystallographic symmetries may be accompanied by \emph{time reversal symmetries} in $\mathbb{Z}_2$ (cf. \cite[(32)]{SS22-Ord}). This is exhibited by equipping the symmetry group with a homomorphism
\begin{equation}
  \label{TheGradingOnSymmetryGroup}
  \sigma : 
  \begin{tikzcd}    
    G \ar[r] & \mathbb{Z}_2
  \end{tikzcd}
\end{equation}
and generalizing \cref{SymmetryTransformationOfBlochHamiltonian} to

\begin{equation}
  \label{SymmetryTransformationOfBlochHamiltonianIncludingTimeReversal}
  \forall_{
    \substack{
      [\vec k] \in \widehat{T}{}^d
      \\
      g \in G
    }
  }
  \;\;:\;\;
  H_{g \cdot [\vec k]}
  =
  \left\{
  \begin{aligned}
  U(g) 
    \circ 
  H_{[\vec k]}
    \circ
  U(g)^{-1}
  & \;\; 
  \mbox{if} \; \sigma(g) = \mathrm{e}
  \\
  U(g) 
    \circ 
  \overline{H_{[\vec k]}}
    \circ
  U(g)^{-1}  
  & \;\; 
  \mbox{otherwise}
  \mathrlap{\,,}
  \end{aligned}
  \right.
\end{equation}
where $\overline{(-)}$ denotes component-wise complex conjugation.

In summary: 
\begin{standout}
\textit{The parameters/couplings of $d$-dimensional $G$-crystalline $(v,c)$-band quantum materials vary in the equivariant mapping space \cref{EquivariantMappingSpace} $\mathrm{Map}\big(\widehat{T}{}^d, \BoundedOperators(\mathbb{C}^{v+c})_{\mathrm{gap}}\big)^G$.}
\end{standout}

\subsubsection{Unstable topological phases of matter}

A continuous deformation of the crystalline material --- say by external tuning, heat or other noise --- changes the Bloch Hamiltonians \cref{BlochHamiltonian} continuously, hence is an \emph{equivariant homotopy} of the corresponding maps \cref{BlochHamiltonianMap,FiniteRankBlochHamiltonianMap}. That these deformations 
\begin{enumerate}
\item preserve the presence of an energy gap means that this homotopy, too, factors through the space \cref{SpaceOfGappedFiniteRankNlochHamiltonians} of gapped finite-rank Bloch Hamiltonians, 
\item preserve the $G$-symmetry
\cref{EquivariantBlochHamiltonianMap} (hence respect the symmetry protection) means that this is an equivariant homotopy \cref{EquivariantHomotopy}:
\end{enumerate}
\begin{equation}
  \begin{tikzcd}[
    column sep=30pt
  ]
    \widehat{T}{}^d
    \ar[
      in=60,
      out=180-60,
      looseness=4,
      "{
        \;\mathclap{G}\;
      }"{description}
    ]
    \ar[
      rr, 
      bend left=20,
      "{ H_{(-)} }"{description, name=s}
    ]
    \ar[
      rr, 
      bend right=20,
      "{ H'_{(-)} }"{description, name=t}
    ]
    \ar[
      from=s,
      to=t,
      dashed,
      Rightarrow
    ]
    &&
    \BoundedOperators\big(
      \mathbb{C}^{v+ c}
    \big)_{\!\mathrm{gap}}
    \ar[
      in=57,
      out=180-57,
      looseness=4,
      "{
        \;\mathclap{G}\;
      }"{description}
    ]
  \end{tikzcd}
  \;
  :\;\;
  \begin{tikzcd}[
   column sep=7pt
  ]
    \{0\}
    \ar[
      d,
      hook,
      "{ 0 }"
    ]
    \ar[
      drr,
      "{
        \widetilde{H_{(-)}}
      }"{sloped}
    ]
    \\
    {[0,1]}
    \ar[
      rr,
      dashed,
    ]
    &&
    \mathrm{Map}\big(
      \widehat{T}{}^d
      ,\,
      \BoundedOperators(
        \mathbb{C}^{v+c}
      )_{\mathrm{gap}}
    \big)^G
    \mathrlap{\,,}
    \\
    \{1\}
    \ar[
      u,
      hook',
      "{ 1 }"{swap}
    ]
    \ar[
      urr,
      "{
        \widetilde{H_{(-)}}
      }"{swap, sloped}
    ]
  \end{tikzcd}
\end{equation}
and hence that the \emph{deformation classes} of such gapped ground states are classified by the corresponding equivariant homotopy classes \cref{EquivariantHomotopy}: 
\begin{equation}
  \label{SpaceOfUnstablePhasesViaMaps}
  (v,c)\mathrm{Phases}(d)^G
  :=
  \pi_0 
  \,
  \mathrm{Map}\Big(
    \widehat{T}{}^d
    ,\,
    \BoundedOperators\big(
      \mathbb{C}^{v+c}
    \big)_{\!\mathrm{gap}}
  \Big)^{\!G}
  .
\end{equation}

To get a better handle on this classification, and since these equivariant homotopy classes of maps depend only on the equivariant homotopy type of the classifying space, it is useful to pass to a tighter model of the latter. The following is immediate, but worth making explicit:
\begin{lemma}
The classifying space \cref{SpaceOfGappedFiniteRankNlochHamiltonians} of $(v,c)$-gapped Bloch Hamiltonians is $\mathrm{U}(\mathbb{C}^{v+c}) \rtimes \mathbb{Z}_2$-equivariantly homotopy equivalent to the Grassmannian space $\mathrm{Gr}_{v}^{v+c}$ \textup{(cf. \cite{Bendokat2024})} of $v$-dimensional complex subspaces of $\mathbb{C}^{v+c}$: 
\begin{equation}
  \label{HomotopyEquivalenceBetweenGappedBlochHamiltoniansAndLinearSubspaces}
  \begin{tikzcd}
    \BoundedOperators\big(
      \mathbb{C}^{v+c}
    \big)_{\!\mathrm{gap}}
    \ar[
      in=60,
      out=180-60,
      looseness=4,
      "{
        \mathrm{U}(\mathbb{C}^{v+c})
        \rtimes 
        \mathbb{Z}_2
      }"
    ]
    \ar[
      rr,
      "{
        \sim
      }"
    ]
    &&    
    \mathrm{Gr}_{v}^{v+c}
    \mathrlap{\,,}
    \ar[
      in=60,
      out=180-60,
      looseness=4,
      "{
        \mathrm{U}(\mathbb{C}^{v+c})
        \rtimes
        \mathbb{Z}_2
      }"
    ]
  \end{tikzcd}
\end{equation}
{where $\mathrm{U}(\mathbb{C}^{v,c})$ acts by conjugation on the left and by left multiplication on the right, and $\mathbb{Z}_2$ acts by complex conjugation on both sides.}
\end{lemma}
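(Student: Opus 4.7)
The plan is to write down explicit equivariant homotopy inverses and connect them by a straight-line homotopy through the gapped locus, exploiting that the gap condition pins down a canonical orthogonal decomposition $\mathbb{C}^{v+c} = V \oplus V^\perp$ into valence and conduction subspaces.

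First I would define the forward map
$$
  \nu \,:\,
  \begin{tikzcd}[sep=small]
    \BoundedOperators(\mathbb{C}^{v+c})_{\!\mathrm{gap}}
    \ar[r]
    &
    \mathrm{Gr}_v^{v+c}
  \end{tikzcd},
  \qquad
  H \,\longmapsto\, \mathrm{Eig}_{<0}(H),
$$
sending a gapped Bloch Hamiltonian to its $v$-dimensional negative-eigenspace (the valence subspace). Continuity here is the one nonformal point of the argument: it follows from standard analytic perturbation theory since the spectral gap separating the negative from the positive eigenvalues guarantees that the spectral projector $P_{<0}(H)$ depends continuously on $H$ (e.g.\ via the Riesz contour integral $P_{<0}(H) = \tfrac{1}{2\pi \mathrm{i}}\oint_\Gamma (z - H)^{-1} \mathrm{d}z$ for a contour $\Gamma$ encircling the negative part of $\mathrm{spec}(H)$).

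Next I would define the backward ``flat-band'' map
$$
  \sigma \,:\,
  \begin{tikzcd}[sep=small]
    \mathrm{Gr}_v^{v+c}
    \ar[r]
    &
    \BoundedOperators(\mathbb{C}^{v+c})_{\!\mathrm{gap}},
  \end{tikzcd}
  \qquad
  V \,\longmapsto\, \mathbf{1} - 2 P_V,
$$
where $P_V$ is the orthogonal projection onto $V$. The resulting operator is self-adjoint with eigenvalues $-1$ on $V$ and $+1$ on $V^\perp$, so it lands in the gapped locus and manifestly satisfies $\nu \circ \sigma = \mathrm{id}_{\mathrm{Gr}_v^{v+c}}$.

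For the other composite I would exhibit the explicit straight-line homotopy
$$
  \begin{tikzcd}[sep=small]
    {[0,1]} \times \BoundedOperators(\mathbb{C}^{v+c})_{\!\mathrm{gap}}
    \ar[r, "{ \eta }"]
    &
    \BoundedOperators(\mathbb{C}^{v+c})_{\!\mathrm{gap}}
  \end{tikzcd},
  \quad
  \eta(t,H) \defneq (1-t)\, H \,+\, t\bigl(\mathbf{1} - 2 P_{\nu(H)}\bigr).
$$
The key observation is that $H$ and $\mathbf{1} - 2 P_{\nu(H)}$ share the same orthogonal decomposition $\mathbb{C}^{v+c} = \nu(H) \oplus \nu(H)^\perp$ into negative-definite and positive-definite subspaces; convex combinations therefore preserve this decomposition and remain gapped with unchanged valence subspace $\nu(H)$. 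Hence $\eta$ stays in $\BoundedOperators(\mathbb{C}^{v+c})_{\!\mathrm{gap}}$ and interpolates $\mathrm{id}$ to $\sigma \circ \nu$.

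Finally I would check equivariance of $\nu$, $\sigma$ and $\eta$ under $\mathrm{U}(\mathbb{C}^{v+c}) \rtimes \mathbb{Z}_2$. For $U \in \mathrm{U}(\mathbb{C}^{v+c})$ the identities $\nu(U H U^{-1}) = U \cdot \nu(H)$ and $U (\mathbf{1} - 2 P_V) U^{-1} = \mathbf{1} - 2 P_{U \cdot V}$ are immediate from $U P_V U^{-1} = P_{U \cdot V}$. For the generator of $\mathbb{Z}_2$ acting by entrywise complex conjugation, $\nu(\ComplexConjugation{H}) = \ComplexConjugation{\nu(H)}$ and $\ComplexConjugation{P_V} = P_{\ComplexConjugation{V}}$, so $\sigma$ and $\nu$ are both equivariant; the homotopy $\eta$ is built by equivariant operations from $H$ and $\nu(H)$ and hence inherits equivariance, giving the claimed equivariant homotopy equivalence \cref{HomotopyEquivalenceBetweenGappedBlochHamiltoniansAndLinearSubspaces}. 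The only real obstacle is the continuity of $\nu$, which as noted is handled by the Riesz-projector argument made possible by the open-ness of the gap condition.
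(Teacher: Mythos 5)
Your proof is correct and lands on essentially the same idea as the paper's: the gapped locus deformation-retracts onto the flat-band locus $\{1-2P_V\}$, which is a copy of the Grassmannian, and both the retraction and the homotopy are given by spectrally "straightening" eigenvalues to $\pm 1$. In spectral terms your homotopy $(1-t)H + t(1-2P_{\nu(H)})$ and the paper's $H\circ\bigl((1-t)\sqrt{H^2}{}^{-1}+t\bigr)$ are the same convex interpolation of $\lambda_i$ with $\mathrm{sign}(\lambda_i)$, just run in opposite directions, and the equivariance check is identical.

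The one place you do something genuinely different is in establishing continuity. You invoke the Riesz contour integral $P_{<0}(H) = \tfrac{1}{2\pi\mathrm{i}}\oint_\Gamma (z-H)^{-1}\,\mathrm{d}z$, which works but requires choosing a contour whose validity depends on a gap lower bound, so one has to argue that the contour can be chosen locally uniformly. The paper sidesteps this entirely by writing the flat-band operator as the algebraic formula $H\mapsto H\circ\sqrt{H^2}{}^{-1}$: since $H^2$ is positive-definite (gap condition), $\sqrt{-}$ and $(-)^{-1}$ are continuous there, and so continuity of both the retraction and the homotopy is manifest from composition of continuous operations, with no contour bookkeeping. If you want the spectral-projector language, you can still get it for free via $P_{\nu(H)} = \tfrac{1}{2}\bigl(1 - H\sqrt{H^2}{}^{-1}\bigr)$, which gives the Riesz conclusion by an explicit formula. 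This is a small but worthwhile simplification to adopt.
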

\begin{proof}
  Consider the following homeomorphic subspaces of $\BoundedOperators_v(\mathbb{C}^{v+c})$:
  \begin{equation}
    \label{SpaceOfNormalizedBlochHamiltonians}
    \begin{tikzcd}[row sep=-3pt,
      column sep=0pt
    ]
    \left\{
      N \in 
      \BoundedOperators\big(
        \mathbb{C}^{v+c}
      \big)
      \,\middle\vert\,
      \substack{
        N^\dagger = N
        \\
        \mathrm{Eig}_{-1}
        \simeq
        \mathbb{C}^{v}
        \\
        \mathrm{Eig}_{+1}
        \simeq
        \mathbb{C}^{c}
      }
    \right\}
    \ar[
      rr,
      <->,
      "{ \sim }"
    ]
    &&
    \left\{
      P \in
      \BoundedOperators\big(
        \mathbb{C}^{v+c}
      \big)
      \,\middle\vert\,
      \substack{
        P^\dagger = P
        \\
        P \circ P = P
        \\
        \mathrm{ker}(P)
        \simeq
        \mathbb{C}^v
      }
    \right\}
    \\
    N &\longmapsto&
    \tfrac{1}{2}\big(
      1 + N
    \big)
    \\
    2P - 1
    &\longmapsfrom&
    P
    \mathrlap{\,,}
    \end{tikzcd}
  \end{equation}
  where the homeomorphism evidently respects the given group action on both sides. The space of projectors on the right is moreover homeomorphic to the Grassmannian
  \begin{equation}
    \begin{tikzcd}[row sep=-3pt, column sep=0pt]
    \left\{
      P \in
      \BoundedOperators\big(
        \mathbb{C}^{v+c}
      \big)
      \,\middle\vert\,
      \substack{
        P^\dagger = P
        \\
        P \circ P = P
        \\
        \mathrm{ker}(P)
        \simeq
        \mathbb{C}^v
      }
    \right\}
    \ar[
      rr,
      <->,
      "{ \sim }"
    ]
    &&
    \grayoverbrace{
    \Big\{
      V \subset \mathbb{C}^{v+c}
      \,\Big\vert\,
      \mathrm{dim}(V) = v
    \Big\}
    }{
      \mathrm{Gr}_{v}^{v+c}
    }
    \\
    P &\longmapsto&
    \mathrm{ker}(P)
    \mathrlap{\,,}
    \end{tikzcd}
  \end{equation}
  and, again, this homeomorphism is evidently equivariant for the given group actions.
  Therefore, we are reduced to showing that the space of gapped Bloch Hamiltonians is equivariantly homotopy equivalent to the space of normalized Bloch Hamiltonians on the left of \cref{SpaceOfNormalizedBlochHamiltonians}.
  To this end, consider these maps:
  \begin{equation}
    \begin{tikzcd}[
      row sep=-3pt,
      column sep=0pt
    ]
    &&
    N 
      \ar[
        rr, 
        <-|,
        shorten=10pt
      ]
      &\phantom{---}&
    N
    \\
    \BoundedOperators\big(
      \mathbb{C}^{v+c}
    \big)_{\!\mathrm{gap}}
    &\defneq&
    \left\{
      H \in 
      \BoundedOperators\big(
        \mathbb{C}^{v+c}
      \big)
      \,\middle\vert\,
      \substack{
        H^\dagger = H
        \\
        \mathrm{Eig}_{< 0}
        \simeq
        \mathbb{C}^{v}
        \\
        \mathrm{Eig}_{> 0 }
        \simeq
        \mathbb{C}^{c}
      }
    \right\}
    \ar[
      rr,
      <-,
      shift left=5pt
    ]
    \ar[
      rr,
      shift right=5pt
    ]
    &&    
    \left\{
      N \in 
      \BoundedOperators\big(
        \mathbb{C}^{v+c}
      \big)
      \,\middle\vert\,
      \substack{
        N^\dagger = N
        \\
        \mathrm{Eig}_{-1}
        \simeq
        \mathbb{C}^{v}
        \\
        \mathrm{Eig}_{+1}
        \simeq
        \mathbb{C}^{c}
      }
    \right\}
    \\
    &&
    H 
    \ar[
      rr, 
      |->,
      shorten=10pt
    ]
      &&
    H \circ \sqrt{H^2}^{-1}
    \mathrlap{\,,}
    \end{tikzcd}
  \end{equation}
  where, on the right, $\sqrt{-}$ denotes the unique positive definite operator square root, and $(-)^{-1}$ its inverse operator.

  Now, the map going right-left-right is the identity, so that we are reduced to showing that the map going left-right-left is equivariantly homotopic to the identity:
  \begin{equation}
    \eta
    :
    \Big(
      H \mapsto
      H \circ \sqrt{H^2}^{-1}
    \Big)
    \Rightarrow
    \big(
      H \mapsto H
    \big)
    \mathrlap{\,.}
  \end{equation}
  But such a homotopy is, for instance, given by
  \begin{equation}
    \begin{tikzcd}[row sep=-3pt,
      column sep=5pt
    ]
      \BoundedOperators\big(
        \mathbb{C}^{v+c}
      \big)_{\!\mathrm{gap}}
      \times
      [0,1]
      \ar[
        rr,
        "{ \eta }"
      ]
      &&
      \BoundedOperators\big(
        \mathbb{C}^{v+c}
      \big)_{\!\mathrm{gap}}
      \\
      (H,t)
      &\longmapsto&
      H 
        \circ
      \Big(
        (1-t) \sqrt{H^2}^{-1}
        +
        t
      \Big)
      \mathrlap{\,,}
    \end{tikzcd}
  \end{equation}
  whose equivariance is again evident.
\end{proof}

In summary, we have seen that:
\begin{standout}
\textit{The $G$-symmetry protected $v$-band topological phases of $d$-dimensional crystalline gapped quantum materials with access to $c$ conduction bands are classified by $G$-equivariant nonabelian cohomology of the Brillouin torus with coefficients in $\mathrm{Gr}_v^{v+c}$.}
\end{standout}
in that, with \cref{SpaceOfUnstablePhasesViaMaps}:
\begin{equation}
  \label{ClassificationOfUnstablePhasesByNonabelianCohomology}
  (v,c)\mathrm{Phases}(d)^G
  \simeq
  H_G\big(
    \widehat{T}{}^d
    ,\,
    \mathrm{Gr}_{v}^{v+c}
  \big)
  \mathrlap{\,.}
\end{equation}
This classification is ``unstable'' (called ``fragile'' or ``delicate'', cf. \cite{nLab:UnstableTopologicalPhaseOfMatter}) in that in its assumption of fixed finite numbers $v$ and $c$ of valence and conductions bands accessible by the system, its classification may break down when more bands become accessible to the system, in particular in the stable K-theoretic limit where $v,c \to \infty$.

\begin{example}[Fragile crystalline 2-band insulator phases]
  \label[example]{FragileCrystalline2D2BandChernPhases}
  The prominent case of crystalline \emph{2-band Chern phases} (cf. \parencites{AndoFu2015}[\S8-9]{Sergeev2023} such as the \emph{Haldane model}, cf. \cite[\S 8.3]{Sergeev2023}) corresponds to setting $v \defneq c \defneq 1$ in \cref{ClassificationOfUnstablePhasesByNonabelianCohomology} and $\sigma \defneq \mathrm{e}$ in \cref{TheGradingOnSymmetryGroup}.
  In this case the fragile classifying space \cref{HomotopyEquivalenceBetweenGappedBlochHamiltoniansAndLinearSubspaces} happens to be given by the 2-sphere 
  \begin{equation}
    \mathrm{Gr}_{1}^2 
    \simeq
    \mathbb{C}P^1
    \simeq
    S^2
    \mathrlap{\,,}
  \end{equation}
  whence \cref{ClassificationOfUnstablePhasesByNonabelianCohomology} says \parencites[(17)]{SS25-FQAH}{SS25-CrystallineChern} that the most fine-grained \emph{fragile crystalline Chern-phases} are classified by the equivariant form \cite{Cruickshank2003, SS20-Tad} of \emph{Cohomotopy} cohomology theory (cf. \parencites[\S VII]{STHu59}[Ex. 2.7]{FSS23-Char}) 
  \begin{equation}
    \pi^n(-)
    :=
    \pi_0 \mathrm{Map}\big(
      -, S^2
    \big)
  \end{equation}
  in ``degree'' $n = 2$:
  \begin{equation}
    \label{CrystallineChernPhasesInEquivariantCohomotopy}
    \begin{aligned}
    (1,1)\mathrm{Phases}(d)^G
    & \simeq
    \pi^2_G\big(
      \widehat{T}{}^d
    \big)
    \\
    & \defneq
    \pi_0
    \mathrm{Map}\big(
      \widehat{T}{}^d
      ,\,
      \mathbb{C}P^1
    \big)^G.
    \end{aligned}
  \end{equation}
\end{example}

This is the example on which we will focus now. While Chern phases have received considerable attention, in particular in their 2-band form, actual analysis of their fragile crystalline phases \cref{CrystallineChernPhasesInEquivariantCohomotopy} seems to have found little to no attention before we brought up the issue in \parencites[(17)]{SS25-FQAH}{SS25-CrystallineChern}.

\subsubsection{Mass Terms gapping Nodal Lines}
\label{GappedNodalLinesIn2BandChernInsulators}

Following the discussion in \cref{GappedNodesAndProbeBranes}, we go one step further and take account of the topological class of the process by which the 2-band Chern insulator phases  \cref{FragileCrystalline2D2BandChernPhases}are obtained by ``gapping out''  \emph{nodal lines}. A nodal line in a \emph{topological semimetal} (cf. \cite{nLab:TopologicalSemiMetal}) is a curve in the Brillouin torus \cref{BrillouinTorus} over which the bulk energy gap between the valence and the conduction bands closes. A deformation of the material, reflected in a \emph{mass term} being added to its Bloch Hamiltonian, may lift the band energy degeneracy over the previous nodal curve to turn the topological semimetal into a topological insulator \cref{SpaceOfGappedFiniteRankNlochHamiltonians}.

\smallskip 
Typically, such band nodes are ``protected'' by a symmetry \cref{UnitaryRepresentationOfPointGroup}, such as by $\mathbb{Z}_2$ ``mirror symmetry'' (cf. \parencites[\S 2.A]{Fang2016}{Ma2018}), which acts on the Brillouin torus by reflection of the $x$-coordinate (say)
\begin{equation}
  \begin{tikzcd}[row sep=-2pt, column sep=2pt]
    \mathbb{Z}_2
    \times
    \widehat{T}{}^3
    \ar[
      rr
    ]
    &&
    \widehat{T}{}^3
    \\
    \big(
      [1]
      ,\,
      [k_x, k_y, k_z]
    \big)
    &\longmapsto&
    {[-k_x, k_y, k_z]}
  \end{tikzcd}
\end{equation}
and acts on the bands as the $X$ Pauli matrix \cref{MatrixRepresentationOfQuaternions}
\begin{equation}
  -\mathrm{i}
  \CliffordElement(\mathbf{i})
  \defneq
  \left(
  \begin{matrix}
    1 & ~0
    \\
    0 & -1
  \end{matrix}
  \right)
  \mathrlap{.}
\end{equation}
The nodal line then lies in one of the mirror planes $[k_x] = 0$, say at $k_x = 0$, where the band symmetry \cref{SymmetryTransformationOfBlochHamiltonian} implies that the Bloch Hamiltonian, which may generally be expanded in the Pauli matrices \cref{MatrixRepresentationOfQuaternions} as
\begin{equation}
  \begin{aligned}
  &
  H_{[\vec k]}
  =
  h_{[\vec k]}
  -
  \mathrm{i}\Big(
  h^{x}_{[\vec k]}
  \CliffordElement(\mathbf{i})
  +
  h^{y}_{[\vec k]}
  \CliffordElement(\mathbf{j})
  +
  h^{z}_{[\vec k]}
  \CliffordElement(\mathbf{k})
  \Big)
  \in
  \BoundedOperators\big(
    \mathbb{C}^{2}
  \big)
  \,,
  \\
  &
  h^{(-)}_{[\vec k]} \in \mathbb{R}
  \mathrlap{\,,}
  \end{aligned}
\end{equation}
is of the form
\begin{equation}
  H_{[0, k_y, k_z]}
  =
  h_{[0, k_y, k_z]}
  -
  \mathrm{i}
  h^{x}_{[0, k_y, k_z]}
  \CliffordElement(\mathbf{i})
  \mathrlap{\,,}
\end{equation}
with vanishing energy gap on the nodal line $\phi : \!\begin{tikzcd}[sep=small] S^1 \ar[r, hook] & \widehat{T}{}^3\end{tikzcd}$,
\begin{equation}
  h^x_{[0,k_y, k_z]}
  = 0
  \;\;\;
  \Leftrightarrow
  \;\;\;
  [0,k_y, k_z]
  \in
  \phi(S^1) 
  \subset
  \widehat{T}{^3}
  \mathrlap{.}
\end{equation}

Conversely, the \emph{mass term} perturbation which gaps this nodal curve (cf. \cite[p. 23]{SS22-Ord}) typically needs to anti-commute (cf. \parencites[(4)]{Morimoto2013}[p. 8]{Chiu2014}[Lem 9.55]{Freed2021}) with the Bloch Hamiltonian there, breaking the mirror symmetry. With the mass term relevant only on a tubular neighborhood of the nodal curve, we may therefore consider it as localized on the nodal curve, where it hence is of the form
\begin{equation}
  \label{MassTermOnNodalLine}
  \begin{aligned}
    &
    M_{s} 
      = 
    -\mathrm{i}
    \big(
      m^y_s \, \CliffordElement(\mathbf{j})
      +
      m^z_s \, \CliffordElement(\mathbf{k})
    \big)
    \\
    &
    (m^x_s,m^y_s) 
      \in 
    \mathbb{R}^2 \setminus \{0\} 
    \\
    &
    s \in S^1
    \mathrlap{\,,}
  \end{aligned}
\end{equation}
hence equivalently
\begin{equation}
  \label{MassTermRevolvingAroundNodalLine}
  \begin{aligned}
    &
    M_s
    =
    -
    m_s \, 
    e^{ 
      \tfrac{\alpha_s}{2}
      \CliffordElement(\mathbf{i})
    }
    \cdot
    \mathrm{i}
    \CliffordElement(\mathbf{j})
    \cdot
    e^{ 
      -
      \tfrac{\alpha_s}{2}  
      \CliffordElement(\mathbf{j})
    }
    \mathrlap{\,,}
    \\
    &
    m_s \in \mathbb{R}_{> 0}
    \mathrlap{\,,}
    \\
    &
    \alpha_s \in \mathbb{R} \;.
  \end{aligned}
\end{equation}

\subsubsection{Classifying Fibration for Mass Terms}
\label{ClassifyingFibrationForMassTerms}

The above analysis
\cref{MassTermOnNodalLine} shows (which may not previously have been appreciated) that there is topology in the choice of mass term, classified by a winding number. (Mathematically, this is of just the form familiar from the \emph{Su-Schrieffer-Heeger model}, cf. \parencites[\S 1]{Asboth2016}{nLab:SSHModel}, but the physical phenomena described in both cases are different.)

Concretely, the above formula \cref{MassTermRevolvingAroundNodalLine} makes manifest that, after picking any Bloch basis state for the valence bundle over the basepoint $[0] \in S^1$, the variation of the mass term along the nodal curve is equivalently reflected by the correspondingly varying family of its Bloch eigenstates, say 
\begin{equation}
  \label{FamilyOfStatesOfAMassTerm}
  s
\;  \longmapsto \;
  e^{%
    \tfrac{\alpha_s}{2}
    \CliffordElement(\mathbf{i})
  }
  \left(
  \begin{matrix}
    +1
    \\
    -1
  \end{matrix}
  \right)
  =
  \left(
  \begin{matrix}
    +
    e^{+\mathrm{i}\tfrac{\alpha_s}{2} }
    \\
    -
    e^{-\mathrm{i}\tfrac{\alpha_s}{2} }
  \end{matrix}
  \right).
\end{equation}

We observe now that this family of Bloch states over the (previous, now gapped) nodal curve, reflecting the choice of gapping process (mass term) giving rise to a fully gapped 2-band insulator phase $H_{(-)} : \begin{tikzcd}[sep=small]\widehat{T}{}^d \ar[r] & \mathbb{C}P^1,\end{tikzcd}$ is exactly the choice of a dashed arrow making the following diagram commute
\begin{equation}
  \label{MassTermAsRelativeLiftOfBlochHamiltonian}
  \begin{tikzcd}[
    ampersand replacement=\&,
    column sep=80pt
  ]
    S^1
    \ar[
      d, 
      hook,
      "{ \phi }"{swap}
    ]
    \ar[
      r, 
      dashed,
      "{
        s 
         \,\mapsto\,
        \scalebox{0.8}{$\left(
        \begin{matrix}
          +e^{+\mathrm{i}\tfrac{\alpha_{s}}{2}}
          \\
          -e^{-\mathrm{i}\tfrac{\alpha_{s}}{2}}
        \end{matrix}
        \right)
        \cdot 
        \mathbb{Z}_2
        $}
      }"
    ]
    \&
    \mathbb{R}P^3
    \ar[
      d, 
      ->>,
      "{ 
        \FactoredCHopfFibration
      }"
    ]
    \&[-80pt]
    v \cdot \mathbb{R}
    \ar[
      d,
      shorten=5pt,
      |->
    ]
    \\
    \widehat{T}{}^d
    \ar[
      r,
      "{
        H_{(-)}
      }"
    ]
    \&
    \mathbb{C}P^1
    \&
    v \cdot \mathbb{C}
    \mathrlap{\,,}
  \end{tikzcd}
\end{equation}
where the map on the right is the factor of the complex Hopf fibration from \cref{FactoringTheCHopfFibration}.

To note here how the commutativity of this diagram accurately reflects the aspects of the gapping process relevant for topological classification: The composite left-bottom map identifies the fibers of the valence bundle (after the gapping) over the locus where the nodal gap closure had been, and the composite top-right map identifies these with exactly the fibers that the mass term has produced by shifting away the degenerate conduction band. 

This indicates that the classification of topological insulator phases which takes into account their origin by gapping a given topological semimetal phase with a nodal curve $\phi$, is given not just by plain Cohomotopy as in \cref{CrystallineChernPhasesInEquivariantCohomotopy}, but by its relative twisted version
$
  H\big(
    \phi, 
    \FactoredCHopfFibration
  \big)^G
$ from \cref{MeasuringRelativeChargesInQCohom}.

Note that the above dashed map \cref{MassTermAsRelativeLiftOfBlochHamiltonian} has codomain $\mathbb{R}P^3 \simeq S(\mathbb{C}^2)/\mathbb{Z}_2$ instead of $S(\mathbb{C}^2)$ itself because with the mass term \cref{MassTermRevolvingAroundNodalLine} being a well defined function on $S^1$, the Bloch states \cref{FamilyOfStatesOfAMassTerm} are in general only periodic up to sign inversion.
Explicitly, iff the winding number of the mass term \cref{MassTermRevolvingAroundNodalLine} is \emph{even} then the dashed map \cref{MassTermAsRelativeLiftOfBlochHamiltonian}
 factors via $\begin{tikzcd}[sep=small]S(\mathbb{C}^2) \ar[r,->>] & \mathbb{R}P^3\end{tikzcd}$ \cref{FactoringTheCHopfFibration} through the actual $\mathbb{C}$-Hopf fibration
\begin{equation}
  \mbox{Even mass term winding}
  \;\;\;
  \Rightarrow
  \;\;\;
  \begin{tikzcd}[
    ampersand replacement=\&,
    column sep=80pt
  ]
    S^1
    \ar[
      d, 
      hook,
      "{ \phi }"{swap}
    ]
    \ar[
      r, 
      dashed,
      "{
        s 
         \,\mapsto\,
      \scalebox{0.8}{$  \left(
        \begin{matrix}
          +e^{+\mathrm{i}\tfrac{\alpha_{s}}{2}}
          \\
          -e^{-\mathrm{i}\tfrac{\alpha_{s}}{2}}
        \end{matrix}
        \right)
        $}
      }"
    ]
    \&
    S(\mathbb{C}^2)
    \ar[
      d, 
      ->>,
      "{ \ComplexHopfFibration }"
    ]
    \\
    \widehat{T}{}^d
    \ar[
      r,
      "{
        H_{(-)}
      }"
    ]
    \&
    \mathbb{C}P^1
    \mathrlap{\,.}
  \end{tikzcd}
\end{equation}

In conclusion, we find that 2-band gapped topological phases together with the gapping process that made room for it, of nodal curves $\phi$ in a parent semi-metal phase, are jointly classified by the relative twisted generalization of the (equivariant) Cohomotopy classification from \cref{FragileCrystalline2D2BandChernPhases} which is classified by the fibration $\FactoredCHopfFibration$ \cref{FactoringTheCHopfFibration}
\begin{equation}
  \label{RelatveTwistedCohomologyForNodalLines}
  H_G\big(
    \phi
    ;\,
    \FactoredCHopfFibration
  \big)
  =
  \pi_0
  \left\{
  \adjustbox{raise=3pt}{
  \begin{tikzcd}
    \Sigma^1
    \ar[
      rr,
      dashed
    ]
    \ar[
      d,
      hook,
      "{ \phi }"{swap}
    ]
    &&
    \mathbb{R}P^3
    \ar[
      d,
      ->>,
      "{ \FactoredCHopfFibration }"
    ]
    \\
    \widehat{T}{}^d
    \ar[
      rr,
      dashed,
      "{  H_{(-)} }"
    ]
    &&
    \mathbb{C}P^1
  \end{tikzcd}
  }
 \right\}^G.
\end{equation}
This describes the (potentially) practically relevant physical situation where the topological insulator phase remains close enough to the parent nodal line semimetal phase that its deformations cannot reach insulator phases that arise from topologically distinct gapping processes.

\subsubsection{Stabilization to K-Theory}

Still more popular in the current literature than unstable/fragile band topology \cref{ClassificationOfUnstablePhasesByNonabelianCohomology} is (cf. \cite{nLab:KTheoryClassOfTopPhases})
the coarser but \emph{stable} classification of crystalline topological phases in equivariant K-theory \cref{OrbiKReducingToEquivariantK}.

Our construction in \cref{TheEquivariantOrientation} now gives, first of all, the coarsening cohomology operation from the fine-grained but fragile crystalline phases classified by equivariant 2-Cohomotopy \cref{CrystallineChernPhasesInEquivariantCohomotopy} to coarse but stable classification by equivariant K-theory \cref{OrbiKReducingToEquivariantK}:

Given a rank=2 unitary representation $U$  \cref{UnitaryRepresentationOfPointGroup} of the crystal point group $G$, and hence the induced stable $G$-action on $\GradedFredholmOperators$ as in \cref{StableActionOnFred}
\begin{equation}
  \begin{tikzcd}[sep=15pt]
    G
    \ar[
      r,
      "{ U }"
    ]
    &
    \mathrm{U}(2)
    \ar[
      rr,
      "{
        \bigoplus_{\mathbb{N}}
        \,
        \gamma
      }"
    ]
    &&
    \mathrm{U}(\HilbertSpace)
    \ar[r]
    &
    \mathrm{Aut}\big(
      \GradedFredholmOperators
    \big)
    \mathrlap{\,,}
  \end{tikzcd}
\end{equation}
encoding the given band symmetry as per \cref{SymmetryTransformationOfBlochHamiltonian},
composition with the map which is the tautological $\mathbb{C}P^1$-parameterized Fredholm operator (as in \cref{EquivarianceOfTautologicalFred}) gives the cohomology operation from fragile equivariant 2-Cohomotopy to stable equivariant K-theory, shown at the bottom here:
\begin{equation}
  \begin{tikzcd}[
    column sep=50pt,
    row sep=0pt
  ]
    \mathbb{C}P^1
    \ar[
      out=60,
      in=180-60,
      looseness=4,
      "{
        \,\mathclap{G}\,
      }"{description}
    ]
    \ar[
      rr,
      "{
        F
          ^{\mathrm{taut}}
          _{\mathbb{C}P^1}
        \,\ominus\,
        F
          ^{\mathbb{C}^1}
          _{\mathrm{triv}}
      }"
    ]
    &&
    \GradedFredholmOperators
    \ar[
      out=60,
      in=180-60,
      looseness=4,
      "{
        \,\mathclap{G}\,
      }"{description}
    ]
    \\
    \pi_0 \, \mathrm{Map}\big(
      \widehat{T}{}^d
      ,\,
      \mathbb{C}P^1
    \big)^G
    \ar[
      d,
      equals
    ]
    \ar[
      rr,
      "{
        \big(
        F
          ^{\mathrm{taut}}
          _{\mathbb{C}P^1}
        \,\ominus\,
        F
          ^{\mathbb{C}^1}
          _{\mathrm{triv}}
        \big)_\ast
      }"{description}
    ]
    &&
    \pi_0 \, \mathrm{Map}\big(
      \widehat{T}{}^d
      ,\,
      \GradedFredholmOperators
    \big)^G
    \ar[
      d,
      equals
    ]
    \\[8pt]
    \pi^2_G\big(
      \widehat{T}{}^d
    \big)
    \ar[
      rr,
      "{
        \mathrm{ch}^{\pi/K}
      }"
    ]
    &&
    \mathrm{KU}_G\big(
      \widehat{T}{}^d
    \big)
    \mathrlap{\,.}
  \end{tikzcd}
\end{equation}

Recalling the construction of this map from \cref{TheEquivariantOrientation}, at $P \in \mathbb{C}P^1$ the tautological Fredholm operator is $F^{\mathrm{taut}}_P = F_{P}$ \cref{TheTautologicalParameterizedFredholmOperator}, which is the Fredholm operator incarnation \cref{FredholmOperatorFromProjector} of the projector $P$. Unwinding the definitions, this recovers the normalized Bloch Hamiltonian in its Pauli matrix expansion (cf. \cite[\S 6.2.3]{SS25-Bun}).

Secondly, we obtain the refinement of this stabilization construction to the relative twisted classification \cref{RelatveTwistedCohomologyForNodalLines} of 2-band insulator phases sensitive to the gapping procedure form their nodal line semimetal parent phase: The corresponding cohomology operation is now given by forming the pasting composite according to \cref{MeasuringRelativeChargesInQCohom} with the $\mathbb{R}P^3$-relative form of the homotopy $h_1^{\mathrm{KU}}$ from \cref{PullbackOfTautologicalFredholmOperatorAlongQuaternionicHopfFibrationTrivializes}:
\begin{equation}
  \begin{array}{c}
  \begin{tikzcd}[column sep=40pt]
    \hspace{-30pt}
    H_G\big(
      \phi;
      \,
      \FactoredCHopfFibration
    \big)
    \ar[rr]
    &\phantom{---}&
    H_G\big(
      \phi;
      \,
      \gamma_1^{\mathrm{KU}}
    \big)
  \end{tikzcd}
  \\
  \left(
  \adjustbox{raise=3pt}{
  \begin{tikzcd}[
    column sep=25pt,
    row sep=25pt
  ]
    \Sigma^1
    \ar[
      d,
      hook,
      "{ 
        \phi 
      }"{swap}
    ]
    \ar[
      r,
      dashed,
      "{ M_{(-)} }"
    ]
    &
    \mathbb{R}P^3
    \ar[
      d,
      ->>,
      "{
        \FactoredCHopfFibration
      }"{swap}
    ]
    \\
    \widehat{T}^{d}
    \ar[
      r,
      dashed,
      "{
        H_{(-)}
      }"
    ]
    &
    \mathbb{C}P^1
  \end{tikzcd}
  }
  \right)
  \longmapsto
  \left(
  \adjustbox{raise=3pt}{
  \begin{tikzcd}[
    column sep=34pt,
    row sep=25pt
  ]
    \Sigma^1
    \ar[
      d,
      hook,
      "{ 
        \phi 
      }"{swap}
    ]
    \ar[
      r,
      dashed,
      "{ M_{(-)} }"
    ]
    &
    \mathbb{R}P^3
    \ar[
      d,
      ->>,
      "{
        \FactoredCHopfFibration
      }"{swap}
    ]
    \ar[
      r,
      hook
    ]
    &[+10pt]
    \mathbb{R}P^4
    \ar[
      d,
      "{
        (f^{\mathbb{R}}_{\mathbb{C}})^\ast
        \gamma_1^{\mathrm{KU}}
      }"
    ]
    \ar[
      dl,
      shorten=10pt,
      Rightarrow,
      "{
        h_1^{\mathrm{KU}}
      }"{sloped}
    ]
    \\
    \widehat{T}^{d}
    \ar[
      r,
      dashed,
      "{
        H_{(-)}
      }"{swap}
    ]
    &
    \mathbb{C}P^1
    \ar[
      r,
      "{
        F^{\mathrm{tau}}_{\mathbb{C}P^1}
        \ominus
        F^{\mathbb{C}^1}_{\mathrm{triv}}
      }"{swap}
    ]
    &
    \GradedFredholmOperators
  \end{tikzcd}
  }
 \! \right)
  .
  \end{array}
\end{equation}

\subsection{Revisiting Brane Charges}
\label{RevisitingBraneCharges}

We here discuss (along the lines of \cref{MeasuringRelativeChargesInQCohom,TwistedEquivariantOrientation}) the measurement in topological K-theory of the charges on M5-brane worldvolumes (such as sourced by the singular self-dual string), which microscopically are in twisted relative Cohomotopy. This will also give us occasion to make precise and complete an old suggestion of \parencites[(3.7, 3.17)]{Horava1998}[p. 6-8]{Witten2001} (see \cref{ComparingToHoravaWitten} below) for how to exhibit D6-brane charge in K-theory, namely for how to construct an explict map from $S^3 \simeq \mathbb{R}^3_{\cpt}$ to the space of self-adjoint Fredholm operators which represents, under \cref{SubspacesOfFredAsClassifyingSpacesForK},  a generator of $\mathrm{KU}^1(S^3) \simeq \mathbb{Z}$.

\subsubsection{Recap of abelian (D-)brane charges}
\label{RecapOfAbelianBraneCharges}

In order to make contact with traditional discussion, we briefly recall the gist of the traditional idea of measuring (D-)brane charges in abelian cohomology (cf. \cite{nLab:DBraneChargeQuantizationInKTheory}). While ``well known'', in its totality the following is not always easy to glean from existing literature.

So let $E$ be an abelian cohomology theory (cf. \cref{TableOfNotionsOfCohomology}) such as:
\begin{itemize}
\item $E^n(-) \simeq H^n(-;A)$ --- ordinary cohomology,
\item $E^n(-) \simeq \mathrm{KU}^n(-)$--- complex topological K-theory.
\end{itemize}

Then:

\begin{enumerate}
\item A \textbf{singular magnetic brane} is a substantial source of flux. In the absence of twistings, the total flux through any sphere enclosing the brane is the same (Gau{\ss} law), hence the flux \emph{density} on the enclosing spheres scales with a negative power of their radius and thus diverges at the would-be locus of the brane, which hence is a \emph{singularity}. Therefore the charge/total flux of singular branes is to be measured on the non-singular  \emph{complement} of their worldvolume. For flat branes, this complement is homotopy equivalent to the enclosing sphere:
\begin{equation}
  \label{EnclosingSphereAroundSingularBrane}
  \mathbb{R}^{d} \setminus \mathbb{R}^p   \;\simeq\; 
  \mathbb{R}^p \times \mathbb{R}_{>0} \times S^{d-p-1} 
  \underset{
    \mathrm{hmtpy}
  }{\sim}
  S^{d-p-1}
  \mathrlap{\,.}
\end{equation}

The original (theoretical) example is the \emph{Dirac magnetic monopole} with $d = 3$ and $p = 0$, and with the magnetic flux that is sourced by the monopole measured in ordinary integral 2-cohomology:
\begin{equation}
  H^2\big(
    \mathbb{R}^3 
      \setminus
    \mathbb{R}^0
    ;
    \mathbb{Z}
  \big)
  \simeq
  H^2(S^2; \mathbb{Z})
  \simeq
  \mathbb{Z}
  \mathrlap{\,.}
\end{equation}

\begin{table}[htb]
\caption{
  \label{TransverseTopologyOfFlatDBranes}
  Topologies on which to measure the total charge associated with singular (middle row) and probe D-branes branes (bottom row). 
}
\vspace{-2mm} 
\hspace*{.7cm}
\scalebox{0.95}{$
\def\arraystretch{1.6}
\def\tabcolsep{4pt}
\begin{tabular}{|c||c|c|c|c|c||c|c|c|c|c|c|}
  \hline
  & 
  \multicolumn{5}{c||}{\bf Type IIA}
  &
  \multicolumn{6}{c|}{\bf Type IIB}
  \\
  \hline
  \begin{tabular}{@{}c@{}}
    D$p$-brane
    species
  \end{tabular}%
  &
  D0
  &
  D2
  & 
  D4
  &
  D6
  &
  D8
  &
  D(-1)
  &
  D1
  & 
  D3
  & 
  D5
  & 
  D7
  & 
  D9
  \\
  \hline
  \hline
  \def\arraystretch{.9}%
  \begin{tabular}{@{}c@{}}
    Enclosing sphere 
    \\
    \scalebox{.9}{$
      S^{9-p-1}
      \sim
      \mathbb{R}^{9} \setminus \mathbb{R}^p
    $}
  \end{tabular}%
  &
  $S^8$
  &
  $S^6$
  &
  $S^4$
  &
  $S^2$
  &
  $S^0$
  &
  $S^9$
  &
  $S^7$
  &
  $S^5$
  &
  $S^3$
  &
  $S^1$
  &
  $S^{-1}$
  \\
  \hline
  \def\arraystretch{.9}%
  \begin{tabular}{@{}c@{}}
    Transverse space
    \\
    \scalebox{.9}{$
     S^{9-p}
     \sim
     \mathbb{R}^{9-p}_{\cpt}
       \times 
     \mathbb{R}^{p}
    $
    }
  \end{tabular}
  & 
  $S^9$
  & 
  $S^7$
  & 
  $S^5$
  & 
  $S^3$
  & 
  $S^1$
  &
  $S^{10}$
  &
  $S^8$
  &
  $S^6$
  &
  $S^4$
  &
  $S^2$
  &
  $S^0$
  \\
  \hline
\end{tabular}
$}
\end{table}

Often overlooked is that the analogue remains true for singular type IIA/B D-branes (cf. \cite[Rem. 4.5]{SS23-Defect}), hence for $d = 9$ and $p = 2k$ or $p = 2k + 1$, respectively, with the total RR-flux that they source measured in $\mathrm{KU}^0$ or $\mathrm{KU}^1$, respectively (cf. the second row of \cref{TransverseTopologyOfFlatDBranes}): 
\begin{equation}
  \begin{tikzcd}[sep=-2pt]
  \mathrm{KU}^0\big(
    \mathbb{R}^9 \setminus
    \mathbb{R}^{p=2k}
  \big)
  &\simeq&
  \mathrm{KU}^0\big(
    S^{9-2k-1}
  \big)
  &\simeq&
  \mathbb{Z}
  \mathrlap{\,,}
  \\
  \mathrm{KU}^1\big(
    \mathbb{R}^9 \setminus
    \mathbb{R}^{p=2k+1}
  \big)
  &\simeq&
  \mathrm{KU}^1\big(
    S^{9-2k}
  \big)
  &\simeq&
  \mathbb{Z}
  \mathrlap{\,.}
  \end{tikzcd}
\end{equation}

Instead, the traditional literature insists that D-brane charge in type IIA/B is measured in $\mathrm{KU}^1$/$\mathrm{KU}^0$, respectively (degrees reversed) --- but this statement refers to the charge not of substantial singular branes but of ``probe branes'', which is of different conceptual nature:

\item An \textbf{electric probe brane} is like a fundamental particle in perturbation theory, hence not ``back-reacted''. The spacetime topology is hence unaffected by their presence, and their charge/number is the integral, in $E^{d-p}$, of a density on the compactified transverse space. For flat branes, this is homotopy equivalent to \emph{another} sphere:
\begin{equation}
  \mathbb{R}^p \times 
  \big(
    \mathbb{R}^{d-p}_{\cpt}
  \big)
  \simeq
  \mathbb{R}^p \times 
  \times
  S^{d-p}
  \underset{
    \mathrm{hmtpy}
  }{\sim}
  S^{d-p}
  \mathrlap{\,.}
\end{equation}

The archetypical example is electric fundamental particles (like electrons) whose total charge/number is measured in ordinary integral cohomology:
\begin{equation}
  H^3\big(
    \mathbb{R}^0
    \times 
    \mathbb{R}^3_{\cpt}
    ;
    \mathbb{Z}
  \big)
  \simeq
  H^3\big(
    S^3
    ;
    \mathbb{Z}
  \big)
  \simeq
  \mathbb{Z}
  \mathrlap{\,.}
\end{equation}

Applied to D-branes (cf. the bottom row in \cref{TransverseTopologyOfFlatDBranes}), this gives the formulas found in the traditional literature \cite{nLab:DBraneChargeQuantizationInKTheory} on ``D-brane charge'' :
\begin{equation}
  \begin{tikzcd}[sep=0pt]
    \mathrm{KU}^{9-2k}\Big(
      \mathbb{R}^{2k}
      \times
      \mathbb{R}^{9-2k}_{\cpt}
    \Big)
    &\simeq&
    \mathrm{KU}^1\big(
      S^{9-2k}
    \big)
    &\simeq&
    \mathbb{Z}
    \\
    \mathrm{KU}^{9-2k-1}\Big(
      \mathbb{R}^{2k+1}
      \times
      \mathbb{R}^{9-2k-1}_{\cpt}
    \Big)
    &\simeq&
    \mathrm{KU}^0\big(
      S^{9-2k-1}
    \big)
    &\simeq&
    \mathbb{Z}
  \end{tikzcd}
\end{equation}

It is, tacitly, this second set of formulas, for \emph{probe} D-branes, which led to the proposal \parencites[(3.7, 3.17)]{Horava1998}[pp. 6-8]{Witten2001} for formulas for D6-brane charge in $\mathrm{KU}^1(S^3)$ in terms of Fredholm operators parameterized over their transverse space $\mathbb{R}^3_{\cpt}$.
\end{enumerate}

Below in \cref{3SphereOfSelfAdjointFredholmOps} we complete these old arguments. But what actually motivates us here is that (\cref{ChargeOnM5ProbesOfFlatSpace}) the same mathematics also describes stable measurement of charges \emph{on M5-branes} (such as of the self-dual 1-brane there), along the lines of \cref{OrientationsMeasuringRelativeCharges}.

\subsubsection{Charge on M5 probes of flat space}
\label{ChargeOnM5ProbesOfFlatSpace}

Consider the simple but important case of \cref{TwistedEquivariantOrientation} where an M5 brane worldvolume $\Sigma^{1,5}$ probes flat Minkowski spacetime.\footnote{
  But the following depends only on the pullback of the bulk Cohomotopy charge to the M5-worldvolume being trivial, $\phi^\ast (F_4^{\pi}, F_7^\pi) \simeq 0$, which is the case for instance also for holographic embeddings of M5-branes into $\mathrm{AdS}_7 \times S^4$, cf. \cite{GSS25-Embedding}. 
}

Since this ambient bulk spacetime is contractible, its microscopic bulk charge is trivial and equivalently represented by a map constant on a point in $S^4$, whence the relative cohomology on the brane worldvolume is microscopically in the plain 3-Cohomotopy of $\Sigma^5$. Measuring this charge in relative $\mathrm{KU}^0$ is equivalent to measuring it in $\Omega \mathrm{KU}^0 \simeq \mathrm{KU}^1$ by pushforward along the unit 
$\begin{tikzcd}[sep=small] \Sigma^3 1^{\mathrm{KU}} : S^3 \ar[r] & \mathrm{KU}^1\end{tikzcd}$
\begin{equation}
  \begin{aligned}
  &
  \begin{tikzcd}[row sep=21pt, column sep=14pt]
    \Sigma^5
    \ar[
      r,
      dashed
    ]
    \ar[
      dd,
      hook,
      "{ \phi }"
    ]
    &
    S^3
    \ar[r, hook]
    \ar[
      dd
    ]
    \ar[
      ddr,
      phantom,
      "{ \lrcorner }"{pos=.1}
    ]
    & 
    S^7
    \ar[
      dd,
      "{
        p_{\mathbb{H}}
      }"{description, pos=.4}
    ]
    \ar[rr]
    &&
    \ast
    \ar[
      dd
    ]
    \ar[
      ddll,
      Rightarrow,
      shorten=6pt,
      "{
        h_3^{\mathrm{KU}}
      }"{description}
    ]
    \\
    \\
    \mathbb{R}^{10}\
    \ar[
      r,
      "{ \sim }"
    ]
    &
    \ast
    \ar[r]
    &
    S^4
    \ar[
      rr,
      "{
        \Sigma^4 1^{\mathrm{KU}}~
      }"{swap}
    ]
    &&
    \mathrm{KU}^0
  \end{tikzcd}
 \;\simeq \;
  \begin{tikzcd}
   [
     row sep=12.5pt, column sep=small
   ]
    \Sigma^5
    \ar[
      r,
      dashed,  
    ]
    \ar[
      dd,
      hook,
      "{ \phi }"
    ]
    &
    S^3
    \ar[rrrr]
    \ar[
      dd
    ]
    \ar[
      drr,
      shorten=-4pt,
      "{ 
        \Sigma^3 1^{\mathrm{KU}} 
      }"{sloped, swap}
    ]
    & &&
    &[-8pt]
    \ast
    \ar[
      dd
    ]
    \\
    & &&
    \mathrm{KU}^1
    \ar[urr]
    \ar[dll]
    \ar[
      drr,
      phantom,
      "{ \lrcorner }"{pos=.1}
    ]
    \\
    \mathbb{R}^{10}\
    \ar[
      r,  
      "{ \sim }"
    ]
    &
    \ast
    \ar[rrrr]
    &
    &&&
    \mathrm{KU}^0
    \mathrlap{.}
  \end{tikzcd}
  \end{aligned}
\end{equation}

\newpage 
\begin{example}
  The primary \emph{singular} brane (cf. \cref{RecapOfAbelianBraneCharges}) \emph{on} the M5-worldvolume is the 1-brane known as the \emph{self-dual string} or \emph{M-string} \cite{nLab:MString}, which is the source of the self-dual $H_3$-flux (cf. \cite[\S 3.3]{GSS25-M5}): Its enclosing sphere \cref{EnclosingSphereAroundSingularBrane} is the 3-sphere
  \begin{equation}
    \Sigma^5 
    \defneq
    \mathbb{R}^5 \setminus \mathbb{R}^1
    \simeq
    \mathbb{R}^1 \times \mathbb{R}_{\> 0}
    \times S^{3}
    \underset{
      \mathrm{hmtpy}
    }{\sim}
    S^3
    \mathrlap{\,,}
  \end{equation}
  so that its microscopic brane charge in Cohomotopy is 
  \begin{equation}
    \pi_0\big\{
      \begin{tikzcd}[sep=small]
        \Sigma^5 
        \ar[r, dashed]
        &
        S^3
      \end{tikzcd}
    \big\}
    \simeq
    \mathbb{Z}
    \mathrlap{\,,}
  \end{equation}
  as it should be. And, in fact, this is already equal to the stable brane charge as seen in K-theory, in that the coarsening cohomology operation is an isomorphism, in this case:
  \begin{equation}
    \pi^3(\Sigma^5)
    \simeq
    \pi_0\big\{
      \begin{tikzcd}[sep=small]
        \Sigma^5 
        \ar[r, dashed]
        &
        S^3
      \end{tikzcd}
    \big\}
    \begin{tikzcd}
      {}
      \ar[
        rr,
        "{
          (\Sigma^3 1^{\mathrm{KU}})_\ast
        }",
        "{ \sim }"{swap}
      ]
      &&
      {}
    \end{tikzcd}
    \pi_0\big\{
      \begin{tikzcd}[sep=small]
        \Sigma^5 
        \ar[r, dashed]
        &
        \mathrm{KU}^1
      \end{tikzcd}
    \big\}
    \simeq
    \mathrm{KU}^1(S^3)
    \mathrlap{\,.}
  \end{equation}
\end{example}

In order to understand this M5-worldvolume charge measurement in K-theory more generally and more deeply, we proceed to construct an explicit model for $\Sigma^3 1^{\mathrm{KU}}$ as a map to self-adjoint Fredholm operators:

\subsubsection{The 3-sphere of self-adjoint Fredholm operators}
\label{3SphereOfSelfAdjointFredholmOps}

By \cref{SubspacesOfFredAsClassifyingSpacesForK} we have
\begin{equation}
  \big\{
    \begin{tikzcd}[sep=small]
      S^3
      \ar[r, dashed]
      &
      \FredholmOperators_{\mathbb{C}}^+
    \end{tikzcd}
  \big\}
  \simeq
  \pi_0
  \,
  \mathrm{KU}^1(S^3)
  \simeq
  \mathbb{Z}
  \mathrlap{\,.}
\end{equation}

We ask now for an essentially explicit formula for the map on the left that corresponds to $1 \in \mathbb{Z}$ on the right. Basic as this question is, it does not seem to have been discussed in the mathematical literature. But a partial suggestion for how to go about this construction may be understood to have been proposed by \parencites[(3.7, 3.17)]{Horava1998}[pp. 6-8]{Witten2001}, there with the aim of describing the K-theory charge of D6-branes (whose transversal space is $\simeq S^3$). We will now complete this suggestion to a rigorous solution. The key to that is the following \cref{AtiyahSingerExponentialMap} 
(of which \parencites[(3.7)]{Horava1998} might have been an echo).

Before discussing this proposition and its consequences, we first recall some basic operator algebraic background:

\begin{definition}[Spectrum of self-adjoint operators]
\label[definition]{SepctrumOfSAOperators}
The \emph{spectrum} 
$\mathrm{spec}(A) \subset \mathbb{C}$
of a linear operator $A \in \BoundedOperators(\HilbertSpace)$ is the subset of $\lambda \in \mathbb{C}$ for which $A - \lambda \cdot \mathrm{id}$ does not have a bounded linear inverse. For self-adjoint operators, where the spectrum is real
\begin{equation}
  A^\dagger = A
  \;\;\;
  \Rightarrow
  \;\;\;
  \mathrm{spec}(A) \subset \mathbb{R}
  \mathrlap{\,,}
\end{equation}
this is (cf. \cite[Rem. 9.15    ]{Moretti2017}) the union of
\begin{enumerate}
\item the \emph{discrete spectrum} $\mathrm{spec}_{\mathrm{dsc}}(A) \subset \mathrm{spec}(A)$, consisting of those eigenvalues $\lambda$ of $A$ which are
\begin{enumerate}
  \item isolated: there exists $\epsilon \in \mathbb{R}_{> 0}$ such that $(\lambda-\epsilon, \lambda+ \epsilon) \subset \mathbb{R}$ contains no other point of $\mathrm{spec}(A)$,
  \item of finite geometric multiplicity: the eigenspace $\mathrm{ker}(A - \lambda \cdot \mathrm{id})$ is finite-dimensional,
\end{enumerate}
\item 
the \emph{essential spectrum} which is remainder:
\begin{equation}
  \label{EssentialSpectrumOfSAOperator}
  \mathrm{spec}_{\mathrm{ess}}(A)
  =
  \mathrm{spec}(A) 
    \setminus 
  \mathrm{spec}_{\mathrm{dsc}}(A)
  \mathrlap{\,.}
\end{equation}
\end{enumerate}
\end{definition}

\begin{definition}
  \label[definition]{UnitaryFredholmGroup}
  The \emph{unitary Fredholm group} (cf. \cite{nLab:FredholmGroup}) is the subgroup of unitary operators on $\HilbertSpace$ \cref{TheHilbertSpace} which differ from the identity by a compact operator:
  \begin{equation}
    \label{TheUnitaryFredholmGroup}
    \mathrm{U}^c(\HilbertSpace)
    :=
    \big\{
      U \in \UH
      \;\big\vert\;
      U - \mathrm{id}
      \in 
      \CompactOperators(\HilbertSpace)
    \big\}\,.
  \end{equation}
  We will also be interested in the space
  \begin{equation}
    \label{MinusTheUnitaryFredholmGroup}
    -\mathrm{U}^c(\HilbertSpace)
    :=
    \big\{
      U \in \UH
      \;\big\vert\;
      U + \mathrm{id}
      \in 
      \CompactOperators(\HilbertSpace)
    \big\}
    \mathrm{\,,}
  \end{equation}
  which is not a group, but as a space is homeomorphic to the unitary Fredholm group.
\end{definition}
\begin{lemma}
  \label[lemma]
    {WHEfromStableUnitaryToUnitaryFredholm}
  The group homomorphism from the stable unitary group to the unitary Fredholm group \cref{TheUnitaryFredholmGroup}, given by \emph{stabilizing} unitary matrices
  \begin{equation}
    \begin{tikzcd}[row sep=-3pt, column sep=0pt]
      \bigcup_{n \in \mathbb{N}}
      \,
      \mathrm{U}(\mathbb{C}^n)
      \ar[
        rr
      ]
      &&
      \mathrm{U}^c(\HilbertSpace)
      \\
      (U,n)
        &\longmapsto&
      U
      \oplus
      \bigoplus_{\mathbb{N}_{>0}}
      \mathrm{id}_{\mathbb{C}^n}
    \end{tikzcd}
  \end{equation}
  is a weak homotopy equivalence, in that it induces isomorphisms on all homotopy groups.
\end{lemma}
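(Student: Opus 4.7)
My strategy is to prove $\pi_n(j)$ is bijective for all $n \geq 0$ by showing that any continuous map from a compact CW complex $K$ into $\mathrm{U}^c(\mathcal{H})$ is based-homotopic to one factoring through $\mathrm{U}(\mathcal{H}_k) \subset U_\infty := \bigcup_k \mathrm{U}(\mathcal{H}_k)$ for some sufficiently large $k$. Specializing to $K = S^n$ gives surjectivity of $\pi_n(j)$ and $K = D^{n+1}$ gives injectivity, so everything reduces to this common approximation problem. For this purpose I fix once and for all a filtration $\mathcal{H}_1 \subset \mathcal{H}_2 \subset \cdots \subset \mathcal{H}$ of finite-dimensional subspaces with dense union, denote by $P_k$ the orthogonal projection onto $\mathcal{H}_k$, and identify $\mathrm{U}(\mathcal{H}_k) \hookrightarrow U_\infty \xrightarrow{j} \mathrm{U}^c(\mathcal{H})$ via $U \mapsto U \oplus \mathrm{id}_{\mathcal{H}_k^\perp}$.

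\textbf{Uniform finite-rank approximation.} The family $\{f(x) - \mathrm{id}\}_{x \in K}$ is a norm-compact subset of $\mathcal{K}(\mathcal{H})$, since $f$ is continuous and each $f(x) - \mathrm{id}$ is compact by definition of $\mathrm{U}^c(\mathcal{H})$. Because compact operators satisfy $P_k T P_k \to T$ in operator norm as $k \to \infty$, and because this convergence is uniform on norm-compact subfamilies (a standard $\epsilon/3$-net argument), for any $\epsilon > 0$ I can find $k$ with
\begin{equation*}
  \sup_{x \in K}\;\big\|\,(f(x) - \mathrm{id}) \,-\, P_k\,(f(x) - \mathrm{id})\,P_k\,\big\|_{\mathrm{op}} \;<\; \epsilon.
\end{equation*}
Writing $f(x) = \bigl(\begin{smallmatrix} a(x) & b(x) \\ c(x) & d(x) \end{smallmatrix}\bigr)$ in block form with respect to $\mathcal{H}_k \oplus \mathcal{H}_k^\perp$, this bound gives $\|b(x)\|,\|c(x)\| < \epsilon$ and $\|d(x) - \mathrm{id}\| < \epsilon$; unitarity of $f(x)$ then forces $a(x)^\dagger a(x) = \mathrm{id}_{\mathcal{H}_k} - c(x)^\dagger c(x)$ to be within $\epsilon^2$ of $\mathrm{id}_{\mathcal{H}_k}$, so $a(x)$ is invertible with continuously varying polar decomposition $a(x) = u_k(x)\cdot|a(x)|$, where $u_k(x) \in \mathrm{U}(\mathcal{H}_k)$.

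\textbf{Homotopy and main obstacle.} The target of the deformation will be $\tilde g(x) := u_k(x) \oplus \mathrm{id}_{\mathcal{H}_k^\perp} \in j(U_\infty)$. To connect it to $f$, consider the straight-line path $h_t(x) := (1-t)\,f(x) + t\, g(x)$, with $g(x) := a(x) \oplus \mathrm{id}_{\mathcal{H}_k^\perp}$. This family lies in $\mathrm{id} + \mathcal{K}(\mathcal{H})$, and a Schur-complement estimate (using $\|b\|,\|c\| < \epsilon$ and the uniform control on $a(x)$ and $d(x)$) shows $h_t(x)$ is invertible uniformly in $(t,x) \in [0,1] \times K$ once $\epsilon$ is small enough. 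Polar decomposition $h_t(x) = \tilde h_t(x)\,|h_t(x)|$ is jointly continuous on the open set of invertibles in $\mathcal{B}(\mathcal{H})$, and the resulting unitary $\tilde h_t(x)$ differs from $\mathrm{id}$ by a compact operator (since $|h_t(x)| \in \mathrm{id} + \mathcal{K}(\mathcal{H})$ by continuous functional calculus near $1$, whence so is $\tilde h_t(x) = h_t(x)\,|h_t(x)|^{-1}$), so the whole path stays in $\mathrm{U}^c(\mathcal{H})$ and links $\tilde h_0 = f$ to $\tilde h_1 = \tilde g \in j(U_\infty)$. The main technical obstacle I anticipate is verifying the uniform invertibility of $h_t(x)$ and the joint continuity of polar decomposition with the required compactness bookkeeping; a secondary cosmetic point is arranging basedness of the homotopy, handled by a preliminary small deformation making $f$ equal to the identity on a collar of the basepoint of $K$.
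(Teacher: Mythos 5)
The paper states this lemma without proof, tacitly treating it as a classical fact going back to Palais's 1965 \emph{Topology} paper on the homotopy type of operator groups; so there is no proof in the source to compare against. Your argument is a correct instance of the standard compact-approximation proof of that fact. The one structural point you should make explicit is how injectivity actually works: given a class in $\pi_n$ of the colimit mapping to zero, you apply the approximation to a nullhomotopy $F\colon D^{n+1}\to \mathrm{U}^c(\mathcal{H})$ whose restriction to $S^n$ already lands in some $\mathrm{U}(\mathcal{H}_m)$, and you need the resulting homotopy to be \emph{relative} to $S^n$. Happily your straight-line construction supplies this automatically: for $x\in S^n$ and $k'\geq m$ the off-diagonal blocks $b(x),c(x)$ vanish and $d(x)=\mathrm{id}$, hence $g(x)=f(x)$ and $\tilde h_t(x)$ is constant in $t$, so the deformed nullhomotopy in $\mathrm{U}(\mathcal{H}_{k'})$ still restricts to the original map on the boundary. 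The Schur-complement invertibility, the joint continuity of polar decomposition on the invertibles, and the observation that $|h_t|\in\mathrm{id}+\mathcal{K}(\mathcal{H})$ (via the spectrum of $h_t^\dagger h_t$ accumulating only at $1$) so that $\tilde h_t=h_t|h_t|^{-1}\in\mathrm{U}^c(\mathcal{H})$ are all correct; these, together with the basedness adjustment you flag at the end, are standard bookkeeping and do not hide a gap.
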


\begin{notation}
Recall from \cref{DisjointComponentsOfSAFred} that the non-contractible component of the space self-adjoint complex Fredholm operators, $\FredholmOperators_{\mathbb{C}}^+$, is that where these have both positive and negative essential spectrum (\cref{SepctrumOfSAOperators}). Denote the further subspaces with essential spectrum concentrated on $\pm 1$ (\cite[below (2.5)]{AtiyahSinger1969}) and with actual spectrum concentrated on $\pm 1$ as follows, respectively:
\begin{equation}
  \label{SAFredWithPositiveAndNegativeEssentialSpectrum}
  \begin{tikzcd}[row sep=small, 
    column sep=0pt]
  \mathcal{F}_0
  \ar[d, hook]
  &:=&
  \Big\{
    f 
      \in
    \FredholmOperators^+_{\mathbb{C}}
    \;\big\vert\;
      \substack{
      \mathrm{spec}(f)
      \\
      =
      \{+1, -1\}
      }
  \Big\}
  \\
  \mathcal{F}
  \ar[
    d, hook
  ]
  &
  :=
  &
  \Big\{
    f 
      \in
    \FredholmOperators^+_{\mathbb{C}}
    \;\big\vert\;
      \substack{
      \mathrm{spec}_{\mathrm{ess}}(f)
      =
      \{+1, -1\}
      \\
      \vert f \vert = 1
      }
  \Big\}
  \\
  \FredholmOperators^+_{\mathbb{C},\ast}
  &:=&
  \Big\{
    f 
      \in
    \FredholmOperators^+_{\mathbb{C}}
    \;\big\vert\;
    \substack{
      \mathrm{spec}_{\mathrm{ess}}(f)
      \cap \mathbb{R}_{> 0}
      \neq \varnothing
      \\
      \mathrm{spec}_{\mathrm{ess}}(f)
      \cap \mathbb{R}_{< 0}
      \neq \varnothing
    }
  \Big\}.
  \end{tikzcd}
\end{equation}
\end{notation}
\begin{remark}
  By functional calculus, the elements $f \in \mathcal{F}_0$ \cref{SAFredWithPositiveAndNegativeEssentialSpectrum} satisfy: 
  \begin{equation}
    \label{fInF0SquareToIdentity}
    f \in \mathcal{F}_0
    \;\;\;\;
    \Rightarrow
    \;\;\;\;
    f^2 = \mathrm{id}
    \mathrlap{\,.}
  \end{equation}
\end{remark}

\begin{example}
Let 
$
  D(\mathbb{H}_{\mathrm{im}}) \simeq D^3
$ denote the closed unit ball in the space of imaginary quaternions (\cref{Quaternions}). Under the representation of such quaternions $x \in \mathbb{H}_{\mathrm{im}}$ by Pauli matrices $\CliffordElement_x$ (\cref{PauliMatricesAsStarRepresentation}) these give self-adjoint elements $ - \mathrm{i} \CliffordElement_x \in \BoundedOperators(\mathbb{C}^2)$ of operator norm $\vert \CliffordElement_x \vert = \vert x \vert \in [0,1]$. Therefore, \emph{stabilizing} these matrices by forming their direct sum with infinitely many copies of $\mathrm{diag}(+1, -1) \in \BoundedOperators(\mathbb{C}^2)$, under \cref{HilbertSpaceAsDirectSumOfComplexVectorSpaces}, they constitute  elements of the space $\mathcal{F}$ \cref{SAFredWithPositiveAndNegativeEssentialSpectrum}. Precisely when $x \in D(\mathbb{H}_{\mathrm{im}})$ is on the boundary, hence if $\vert x \vert = 1$, then this stabilized element is actually in $\mathcal{F}_0 \subset \mathcal{F}$. Hence we have a commuting diagram of maps of this form:
\begin{equation}
  \label{UnitBallQuaternionsAsFredOperators}
  \begin{tikzcd}[
    sep=0pt
  ]
    x
    \ar[
      rr,
      shorten=12pt,
      |->
    ]
    &\phantom{--}&
    -\mathrm{i} \CliffordElement_x
    \oplus
    \bigoplus_{\mathbb{N}_{>0}}
    \scalebox{.8}{$
    \left(
    \begin{matrix}
      + 1 \! & \, 0
      \\
      \; 0 \!& -1
    \end{matrix}
    \right)
   $}
   \\
    D^3 
    \simeq
    D(\mathbb{H}_{\mathrm{im}})
    \ar[
      rr,
      "{ \mathrm{stab} }"
    ]
    &&
    \mathcal{F}
    \\[10pt]
    S^2 
      \simeq 
    \partial D(\mathbb{H}_{\mathrm{im}})
    \ar[
      rr
    ]
    \ar[
      u,
      shorten=-1.5pt,
      hook
    ]
    &&
    \mathcal{F}_0
    \mathrlap{\,,}
    \ar[
      u, 
      hook
    ]
  \end{tikzcd}
\end{equation}
and therefore a map of quotient spaces
\begin{equation}  \label{UnitBallQuaternionsAsFredOperatorsModuloBoundary}
  \begin{tikzcd}
    S^3 
      \simeq
    D(\mathbb{H}_{\mathrm{im}})/\partial
    \ar[
      rr,
      "{
        \mathrm{stab}
      }"
    ]
    &&
    \mathcal{F}/\mathcal{F}_0
    \mathrlap{\,.}
  \end{tikzcd}
\end{equation}
\end{example}
Our next goal is to show that this map represents the generator of $\mathrm{KU}^1(S^3) \simeq \mathbb{Z}$.

\begin{lemma}
  The second inclusion 
  in \cref{SAFredWithPositiveAndNegativeEssentialSpectrum}
  is a homotopy equivalence, as is the coprojection of the quotient by the first inclusion:
  \begin{equation}
    \label{ZigZagFromFModF0ToFredPlus}
    \begin{tikzcd}[column sep=large]
      \mathcal{F}/\mathcal{F}_0
      &
      \mathcal{F}
      \ar[
        l,
        ->>,
        "{\sim}"{swap},
        "{ \;\; \mathrm{hmtpy} }"
      ]
      \ar[
        r, 
        hook,
        "{ \sim }",
        "{ \mathrm{hmtpy} }"{swap}
      ]
      &
      \FredholmOperators^+_{\mathbb{C},\ast}
      \mathrlap{\,.}
    \end{tikzcd}
  \end{equation}
\end{lemma}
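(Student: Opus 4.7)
The plan is to establish the two homotopy equivalences of \cref{ZigZagFromFModF0ToFredPlus} separately, both via classical functional-calculus arguments in the spirit of \cite{AtiyahSinger1969}.

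\emph{Right inclusion $\mathcal{F} \hookrightarrow \FredholmOperators^+_{\mathbb{C},\ast}$.} I would exhibit a deformation retraction using continuous functional calculus. For $f \in \FredholmOperators^+_{\mathbb{C},\ast}$, the Fredholm property gives $0 \notin \mathrm{spec}_{\mathrm{ess}}(f)$ and the starred condition forces $\mathrm{spec}_{\mathrm{ess}}(f)$ to meet both half-lines, so there is a gap $\epsilon_f > 0$ with $\mathrm{spec}_{\mathrm{ess}}(f) \subset (-\infty,-\epsilon_f]\cup[\epsilon_f,\infty)$. The retraction is obtained by first normalizing via $f \mapsto f(1+f^2)^{-1/2}$, which lands in the bounded self-adjoint operators of norm $\leq 1$ and preserves discrete spectrum structure, and then applying a single one-parameter family of continuous functions $\phi_t:[-1,1]\to[-1,1]$, $t\in[0,1]$, with $\phi_0=\mathrm{id}$ and $\phi_1$ equal to $\mathrm{sgn}$ away from a prescribed neighborhood of $0$, producing by continuous functional calculus a map $f\mapsto \phi_t(f)$ that terminates in $\mathcal{F}$.

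\emph{Left projection $\mathcal{F} \twoheadrightarrow \mathcal{F}/\mathcal{F}_0$.} By standard homotopy theory, it suffices to show that $\mathcal{F}_0$ is contractible and that $\mathcal{F}_0 \hookrightarrow \mathcal{F}$ is a Hurewicz cofibration. For contractibility, \cref{fInF0SquareToIdentity} shows that elements of $\mathcal{F}_0$ are self-adjoint involutions, and membership in the ``$+$'' component of $\FredholmOperators^+_{\mathbb{C}}$ (recall \cref{DisjointComponentsOfSAFred}) forces both $\pm 1$-eigenspaces to be infinite-dimensional; thus $\mathcal{F}_0$ is the homogeneous space $\UH/\bigl(\UH\times\UH\bigr)$ of polarizations $\HilbertSpace \simeq \HilbertSpace_+ \oplus \HilbertSpace_-$, which is contractible by Kuiper's theorem \cref{KuiperTheorem}. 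For the cofibration property, I would construct an explicit neighborhood deformation retraction by applying the functional calculus $f \mapsto \psi_t(f)$ with $\psi_t$ continuously pushing all spectrum toward $\{\pm 1\}$ on operators lying in a neighborhood $U\supset\mathcal{F}_0$ where the full spectrum is already concentrated near $\{\pm 1\}$.

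\emph{Main obstacle.} The delicate point is the \emph{uniformity} of the functional-calculus deformation in the right-arrow construction: the gap $\epsilon_f$ is only lower-semicontinuous in $f$, so an $f$-dependent choice of cutoff will not glue into a continuous homotopy $\FredholmOperators^+_{\mathbb{C},\ast} \times [0,1] \to \FredholmOperators^+_{\mathbb{C},\ast}$. The fix is precisely to first pass to the normalized form $f(1+f^2)^{-1/2}$ so that a single cutoff function $\phi_t$, depending only on $t$ and not on the individual operator, does the job. Once this normalization is in place, the remaining verifications are routine continuity checks for the continuous functional calculus on the bounded self-adjoint operators.
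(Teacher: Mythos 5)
Your treatment of the left coprojection matches the paper's argument almost verbatim: you show $\mathcal{F}_0$ is contractible by identifying it, via its $\pm 1$-eigenspace decomposition, with the Grassmannian $\UH/\big(\UH\times\UH\big)$ and invoking Kuiper's theorem \cref{KuiperTheorem}; you then argue $\mathcal{F}_0 \hookrightarrow \mathcal{F}$ is a Hurewicz cofibration via an NDR and conclude by the standard fact that collapsing a cofibrantly-included contractible subspace is a homotopy equivalence. (The paper's choice of neighborhood is the invertible operators in $\mathcal{F}$, which is cleaner than ``spectrum concentrated near $\{\pm 1\}$'' because it avoids a quantitative condition, but the idea is the same.)

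For the right inclusion $\mathcal{F} \hookrightarrow \FredholmOperators^+_{\mathbb{C},\ast}$, however, there is a genuine gap. The paper simply cites \cite[below (2.5)]{AtiyahSinger1969} here, and for good reason. You correctly identify the obstacle --- the essential spectral gap $\epsilon_f$ is only lower-semicontinuous in $f$ --- but the proposed fix does not resolve it. Passing to $g = f(1+f^2)^{-1/2}$ gives a uniformly bounded family (indeed $\|g\| < 1$ strictly, which is itself a problem since $\mathcal{F}$ demands $\|f\| = 1$), yet the gap of $\mathrm{spec}_{\mathrm{ess}}(g)$ around $0$ remains $g$-dependent and can be arbitrarily small over $\FredholmOperators^+_{\mathbb{C},\ast}$. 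For a single continuous cutoff $\phi_1$ to push every $\mathrm{spec}_{\mathrm{ess}}(g)$ onto $\{+1,-1\}$ one would need $\phi_1(s) \in \{+1,-1\}$ for all $s \neq 0$, which forces a jump at $s=0$; so no operator-independent continuous functional-calculus endpoint exists, and the homotopy fails to terminate in $\mathcal{F}$. Making this deformation retraction work requires the more delicate Atiyah--Singer construction (essentially, handling the discrete spectrum accumulating at the essential spectrum with care, ultimately via the Cayley/exponential picture they develop around their (2.5)), which is precisely why the paper cites it rather than reproducing it.
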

\begin{proof}
  The first statement (concerning the right map) is \cite[below (2.5)]{AtiyahSinger1969}. We proceed to prove the second statement (concerning the left map). For that we invoke some basic homotopy theory which we have not reviewed here, but which may be found in standard textbooks, such as \cite{AguilarGitlerPrieto2002}.

  Namely, we claim that the first inclusion in \cref{SAFredWithPositiveAndNegativeEssentialSpectrum} is a \emph{Hurewicz cofibration} (\cite[Def. 4.1.5]{AguilarGitlerPrieto2002}) including a \emph{contractible space}:
  \begin{equation}
    \label{F0IsCofibrationOfContractibleIntoF}
    \begin{tikzcd}[column sep=large]
      \ast
      \ar[
        r, 
        <-, 
        "{ \sim }"{pos=.6},
        "{ \mathrm{hmtpy} }"{swap}
      ]
      &
      \mathcal{F}_0
      \ar[
        r,
        hook,
        "{
          \in \mathrm{Cof}
        }"{swap}
      ]
      &
      \mathcal{F}
      \mathrlap{\,.}
    \end{tikzcd}
  \end{equation}

  To see that $\mathcal{F}_0$ is contractible, note with \cref{fInF0SquareToIdentity} that, under passage to eigenspaces, the operators in this space correspond to choices of direct sum decompositions $\HilbertSpace \simeq \HilbertSpace_+ \oplus \HilbertSpace_-$. This shows that $\mathcal{F}_0$ is a Grassmannian homeomorphic to
  \begin{equation}
    \mathcal{F}_0
    \simeq
    \UH 
      / 
    \big( 
      \UH \times \UH 
    \big)
    \mathrlap{\,,}
  \end{equation}
  and hence its contractibility is a consequence of Kuiper's theorem \cref{KuiperTheorem}.

  To see that we have a cofibration: Since all spaces of operators in question are metric spaces (as subspaces of bounded operators with the operator norm topology) they are \emph{perfectly normal Hausdorff} spaces (cf. \cite[Def. 4.1.13]{AguilarGitlerPrieto2002}). Moreover, $\mathcal{F}_0 \subset \mathcal{F}$ is a closed subspace (being the preimage of $\{0\}$ under the map $\begin{tikzcd}[sep=small] \mathcal{F} \ar[r] & \BoundedOperators(\HilbertSpace) : f \mapsto f^2 - \mathrm{id}\end{tikzcd}$). Therefore it is sufficient (by \cite[Thm. 4.1.14]{AguilarGitlerPrieto2002}) to see that the inclusion is a \emph{strong deformation retract of a neighborhood} \cite[Def. 4.1.11]{AguilarGitlerPrieto2002}. That neighborhood may be taken to be the invertible operators among $\mathcal{F}$, and the retraction may then be given by functional calculus, shifting all points in the spectrum to their sign in $\{+1,-1\}$. 

  This implies the claim by the general fact (by \cite[Thm. 4.2.1]{AguilarGitlerPrieto2002}) that the quotient coprojection of a cofibrantly included contractible space is a homotopy equivalence.
\end{proof}

\begin{proposition}[The Atiyah-Singer exponential map]
\label[proposition]{AtiyahSingerExponentialMap}
  The following exponential map from the self-adjoint Fredholm operators in $\mathcal{F}$ \cref{SAFredWithPositiveAndNegativeEssentialSpectrum} to minus the unitary Fredholm group $-\mathrm{U}^c(\HilbertSpace)$ \cref{MinusTheUnitaryFredholmGroup} is a homotopy equivalence:
  \begin{equation}
    \label{TheAtiyahSingerExponentialMap}
    \exp\big(
      \mathrm{i}
      \pi
      (-)
    \big)
    :
    \begin{tikzcd}[column sep=large]
      \mathcal{F}
      \ar[
        r, 
        "{ \sim }",
        "{ \mathrm{hmtpy} }"{swap}
      ]
      &
      -
      \mathrm{U}^c(\HilbertSpace)
      \,,
    \end{tikzcd}
  \end{equation}
  which as such descends to the quotient by $\mathcal{F}_0$:
  \begin{equation}
    \label{DescendedAtiyahSingerExponentialMap}
    \exp\big(
      \mathrm{i}
      \pi
      (-)
    \big)
    :
    \begin{tikzcd}[column sep=large]
      \mathcal{F}/\mathcal{F}_0
      \ar[
        r, 
        "{ \sim }",
        "{ \mathrm{hmtpy} }"{swap}
      ]
      &
      -
      \mathrm{U}^c(\HilbertSpace)
      \,.
    \end{tikzcd}
  \end{equation}
\end{proposition}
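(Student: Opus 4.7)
The plan is to establish \eqref{TheAtiyahSingerExponentialMap} first and then derive \eqref{DescendedAtiyahSingerExponentialMap} as an immediate consequence: we already have from \eqref{F0IsCofibrationOfContractibleIntoF} that $\mathcal{F}_0 \hookrightarrow \mathcal{F}$ is a cofibration of a contractible subspace, so the quotient coprojection $\mathcal{F} \twoheadrightarrow \mathcal{F}/\mathcal{F}_0$ is a homotopy equivalence, and once we check that $\exp(\mathrm{i}\pi(-))$ is constant on $\mathcal{F}_0$ the descended map \eqref{DescendedAtiyahSingerExponentialMap} inherits the homotopy equivalence property from \eqref{TheAtiyahSingerExponentialMap} by two-out-of-three.

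The first routine step is well-definedness. Continuous functional calculus applied to the continuous function $g\colon [-1,1] \to S^1$, $g(t) := e^{\mathrm{i}\pi t}$, yields a continuous map $f \mapsto g(f)$ from $\mathcal{F}$ into $\UH$ (using the operator-norm topology and the fact that $|f|=1$ bounds the spectrum inside $[-1,1]$). To see the image lies in $-\mathrm{U}^c(\HilbertSpace)$, note that since $g(\pm 1) = -1$, the function $g+1$ vanishes on $\{-1,+1\} \supseteq \mathrm{spec}_{\mathrm{ess}}(f)$, and by the standard spectral characterization a bounded continuous function of a bounded self-adjoint operator is compact iff it vanishes on the essential spectrum; hence $g(f)+\mathrm{id} = (g+1)(f) \in \CompactOperators(\HilbertSpace)$. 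The descent to $\mathcal{F}/\mathcal{F}_0$ is then immediate: if $f \in \mathcal{F}_0$ then $\mathrm{spec}(f)=\{+1,-1\}$, so functional calculus gives $g(f) = -\mathrm{id}$, which is the basepoint of $-\mathrm{U}^c(\HilbertSpace)$.

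For the nontrivial step, the homotopy equivalence in \eqref{TheAtiyahSingerExponentialMap}, I would construct a homotopy inverse via a branch of the logarithm. Fix the continuous function $\ell\colon S^1 \setminus \{-1\} \to (-1, 1)$ with $\ell(e^{\mathrm{i}\pi t}) = t$ for $t \in (-1,1)$. For $U \in -\mathrm{U}^c(\HilbertSpace)$ the compactness of $U+\mathrm{id}$ forces $\mathrm{spec}_{\mathrm{ess}}(U) \subseteq \{-1\}$ with the rest of the spectrum being a discrete set of eigenvalues of finite multiplicity accumulating only at $-1$; Borel functional calculus with any extension of $\ell$ to $\{-1\}$ taking values in $\{+1,-1\}$ produces a self-adjoint operator $\Psi(U)$ of norm $\leq 1$ with essential spectrum in $\{-1,+1\}$, hence an element of $\mathcal{F}$. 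By construction $\exp(\mathrm{i}\pi\,\Psi(U)) = U$ on the nose, and conversely $\Psi(\exp(\mathrm{i}\pi f))$ agrees with $f$ away from the $\pm 1$ spectral subspaces; the straight-line homotopy $(f,s) \mapsto sf + (1-s)\Psi(\exp(\mathrm{i}\pi f))$ stays inside $\mathcal{F}$ because it only interpolates within the $\pm 1$ eigenspaces, which contribute to $\mathcal{F}_0$ and hence maintain the essential-spectrum constraint.

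The main obstacle will be continuity of $\Psi$: when an eigenvalue of $U$ crosses $-1$ the corresponding eigenvalue of $\Psi(U)$ jumps between $+1$ and $-1$, so $\Psi$ is genuinely discontinuous into $\mathcal{F}$ but continuous into $\mathcal{F}/\mathcal{F}_0$ precisely because such jumps change the operator by an element of $\mathcal{F}_0$. Rather than write out this quotient continuity by hand, I would finish by an indirect argument: combine \cref{SubspacesOfFredAsClassifyingSpacesForK} with \eqref{F0IsCofibrationOfContractibleIntoF} to conclude that $\mathcal{F}/\mathcal{F}_0 \simeq \mathcal{F} \simeq \FredholmOperators^+_{\mathbb{C},\ast}$ classifies $\mathrm{KU}^1$, and combine \cref{WHEfromStableUnitaryToUnitaryFredholm} with the translation isomorphism $-\mathrm{U}^c(\HilbertSpace) \simeq \mathrm{U}^c(\HilbertSpace)$ to see that the right-hand side also has the weak homotopy type of the stable unitary group $\mathrm{colim}_n \mathrm{U}(n)$, i.e.\ of $\mathrm{KU}^1$. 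Since both spaces have the homotopy type of CW-complexes (being loop spaces of infinite loop spaces representing $\mathrm{KU}$), it then suffices to verify that $\exp(\mathrm{i}\pi(-))$ induces an isomorphism on $\pi_1$ based at $-\mathrm{id}$: this is a direct computation by tracing the Bott generator, namely the loop $\theta \mapsto \mathrm{diag}(\theta/\pi, -1, -1, \ldots)_{\theta \in [-\pi,\pi]}$ in $\mathcal{F}/\mathcal{F}_0$ maps to the generator $\theta \mapsto \mathrm{diag}(e^{\mathrm{i}\theta}, -1, -1, \ldots)$ of $\pi_1\bigl(-\mathrm{U}^c(\HilbertSpace), -\mathrm{id}\bigr) \cong \mathbb{Z}$. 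Weak equivalence on $\pi_1$ combined with the identification of both sides as the classifying space of $\mathrm{KU}^1$ then upgrades to full weak equivalence, and hence (by CW type) to homotopy equivalence.
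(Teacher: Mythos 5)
Your descent argument is identical to the paper's: observe that $f \in \mathcal{F}_0$ forces $\exp(\mathrm{i}\pi f) = -\mathrm{id}$ by Euler's formula, invoke the homotopy equivalence $\mathcal{F} \twoheadrightarrow \mathcal{F}/\mathcal{F}_0$ from \cref{ZigZagFromFModF0ToFredPlus}, and apply two-out-of-three. The paper, however, does \emph{not} attempt to prove \eqref{TheAtiyahSingerExponentialMap}: it cites it outright as \cite[Prop.\ 3.3]{AtiyahSinger1969}. Your proposal instead tries to reprove that cited input, and that is where the gap lies.

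Your first attempt, via a Borel functional-calculus ``logarithm'' $\Psi$, you correctly diagnose as discontinuous when eigenvalues cross $-1$; and the proposed straight-line homotopy $(f,s)\mapsto sf + (1-s)\Psi(\exp(\mathrm{i}\pi f))$ is not a well-posed repair, since $\Psi$ itself is already undefined as a continuous map into $\mathcal{F}$, so the homotopy is not between continuous maps. The fallback argument then has a genuine logical flaw: knowing that both $\mathcal{F}/\mathcal{F}_0$ and $-\mathrm{U}^c(\HilbertSpace)$ have the homotopy type of CW complexes weakly equivalent to the stable unitary group, and that $\exp(\mathrm{i}\pi(-))$ is an isomorphism on $\pi_1$, does \emph{not} upgrade to a weak equivalence. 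A map between two CW complexes of the same homotopy type that is an isomorphism on $\pi_1$ need not be one on $\pi_3, \pi_5, \dots$ (consider a map that factors through the $1$-skeleton). The exponential is not an H-space map --- $\exp$ of a sum is not a product of $\exp$'s for non-commuting operators --- so there is no extra algebraic structure to bootstrap from $\pi_1$ to all homotopy groups. To close the gap you would either need to compute $\pi_{2k+1}$ for all $k$ (or equivalently exhibit compatibility with Bott maps), or else construct an honest (quotient-)continuous homotopy inverse, which is essentially what Atiyah and Singer do. Given that the paper treats this as a citable black box, simply invoking \cite[Prop.\ 3.3]{AtiyahSinger1969} at that point is the intended move.
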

\begin{proof}
  The first statement is 
  \cite[Prop. 3.3]{AtiyahSinger1969}. To see that the map as such passes to the quotient, as claimed in the second statement, recall that $f \in \mathcal{F}_0$ implies that $f^2 = \mathrm{id}$ \cref{fInF0SquareToIdentity}, whence Euler's formula gives that:
  \begin{equation}
    f^2 = \mathrm{id}
    \;\;\;\;
    \Rightarrow
    \;\;\;\;
    \left\{
    \begin{aligned}
    \exp\big(
      \mathrm{i}\pi\, f
    \big)
    &=
    \mathrm{cos}(\pi)
    \cdot
    \mathrm{id}
    + 
    \mathrm{sin}(\pi)
    \cdot
    f
    \\[-2pt]
    &=
    -\mathrm{id}
    \mathrlap{\,.}
    \end{aligned}
    \right.
  \end{equation}
  Therefore we have a commuting diagram
  \begin{equation}
    \begin{tikzcd}
      \mathcal{F}
      \ar[
        rr, 
        "{ \sim }",
        "{ \mathrm{hmtpy} }"{swap}
      ]
      \ar[
        d,
        ->>,
        "{
          \sim
        }"{sloped, swap, pos=.4},
        "{
          \mathrm{hmtpy}
        }"{sloped, swap, rotate=180, pos=.4}
      ]
      &&
      -\mathrm{U}^c(\HilbertSpace)
      \\[+10pt]
      \mathcal{F}/\mathcal{F}_0
      \mathrlap{\,,}
      \ar[
        urr
      ]
    \end{tikzcd}
  \end{equation}
  where the top map is a homotopy equivalence by \cref{TheAtiyahSingerExponentialMap} while the left map is a homotopy equivalence by \cref{ZigZagFromFModF0ToFredPlus}. By the ``2 out of 3''-property of homotopy equivalence, this implies the claim that also the diagonal map is a homotopy equivalence.
\end{proof}

Now we are ready to prove:
\begin{proposition}
  \label[proposition]
    {RepresentingTheGeneratorOfKU1S3}
  Under the equivalence of \cref{SubspacesOfFredAsClassifyingSpacesForK}, the generator of $\mathrm{KU}^1(S^3)$ is represented by forming stabilized Pauli matrices \cref{UnitBallQuaternionsAsFredOperatorsModuloBoundary}:
  \begin{equation}
    \begin{tikzcd}[row sep=-3pt, column sep=0pt]
    \pi_0\Big\{
        S^3
        \dashrightarrow
        \mathcal{F}/\mathcal{F}_0
    \Big\}
    &
      \underset{\mathclap{
        \scalebox{.7}{%
          \cref{ZigZagFromFModF0ToFredPlus}%
        }%
      }}{
        \simeq
      }
    &
    \pi_0\Big\{
        S^3
        \dashrightarrow
        \FredholmOperators
          ^+
          _{\mathbb{C}}
    \Big\}
    &
      \underset{\mathclap{
        \scalebox{.7}{%
         \cref{MatchingKTheoryToFredSubspaces}%
        }%
      }}{
        \simeq
      }
    &
    \mathrm{KU}^1(S^3)
    &\simeq&
    \mathbb{Z}
    \\
    \Big[
    D(\mathbb{H}_{\mathrm{im}})
      \overset{
        \mathrm{stab}
      }{\dashrightarrow}
    \mathcal{F}
    \Big]
    \ar[
      rrrrrr,
      |->,
      shorten=14pt
    ]
    &&&&&&
    1\,.
    \end{tikzcd}
  \end{equation}
\end{proposition}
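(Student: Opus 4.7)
The plan is to transport the statement across the Atiyah--Singer exponential homotopy equivalence $\exp(\mathrm{i}\pi(-)) \colon \mathcal{F}/\mathcal{F}_0 \xrightarrow{\sim} -\mathrm{U}^c(\mathcal{H})$ of \cref{AtiyahSingerExponentialMap}, and there reduce to an explicit calculation in the stable unitary group. Since functional calculus on the orthogonal summands of $\mathcal{H}$ commutes with direct sums, on the $\mathbb{N}_{>0}$-indexed copies of $\mathrm{diag}(+1,-1) \in \BoundedOperators(\mathbb{C}^2)$ the exponential map gives $\exp(\mathrm{i}\pi\,\mathrm{diag}(+1,-1)) = \mathrm{diag}(-1,-1) = -\mathrm{id}_{\mathbb{C}^2}$, while on the first $\mathbb{C}^2$-summand it sends $-\mathrm{i}\CliffordElement_x$ (whose square equals $\vert x \vert^2 \mathrm{id}$) to the unitary matrix
\begin{equation}
  \exp\bigl(\mathrm{i}\pi(-\mathrm{i}\CliffordElement_x)\bigr)
  \;=\;
  \cos(\pi r)\,\mathrm{id} \,+\, \sin(\pi r)\,\CliffordElement_u
  \,, \qquad x = r\,u,\; u \in S(\mathbb{H}_{\mathrm{im}}),\; r = \vert x\vert,
\end{equation}
by the Euler identity for $\CliffordElement_u^2 = -\mathrm{id}$. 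Hence the composite $\exp(\mathrm{i}\pi\,(-)) \circ \mathrm{stab}$ lands in $-\mathrm{U}^c(\mathcal{H})$ (differing from $-\mathrm{id}$ only in a rank-$\leq 2$ block), and collapses the boundary $\partial D(\mathbb{H}_{\mathrm{im}})$ (where $r=1$) to the basepoint $-\mathrm{id}$.

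Next I would translate the identifications on the right of \cref{ZigZagFromFModF0ToFredPlus,DescendedAtiyahSingerExponentialMap} and \cref{WHEfromStableUnitaryToUnitaryFredholm} into the statement $\mathrm{KU}^1(S^3) \simeq [S^3,\,\mathrm{U}]_\ast \simeq \pi_3(\mathrm{U})$, so that the class we want to compute is the homotopy class of the pointed map
\begin{equation}
  \widetilde{f} \;\colon\;
  \begin{tikzcd}[row sep=-3pt, column sep=0pt]
    D(\mathbb{H}_{\mathrm{im}})/\partial \;\simeq\; S^3
    \ar[rr]
    && \mathrm{SU}(2) \,\hookrightarrow\, \mathrm{U}
    \\
    x = ru
    &\longmapsto&
    -\cos(\pi r)\,\mathrm{id} - \sin(\pi r)\,\CliffordElement_u\,,
  \end{tikzcd}
\end{equation}
obtained by shifting the overall sign so that $-\mathrm{id}$ becomes the basepoint $\mathrm{id}$ and the additional $-\mathrm{id}$ summands become identities (the latter under the weak equivalence $\mathrm{U}^c(\mathcal{H}) \simeq \mathrm{U}$ of \cref{WHEfromStableUnitaryToUnitaryFredholm}). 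Under the Pauli isomorphism $\CliffordElement\colon S(\mathbb{H}) \xrightarrow{\sim} \mathrm{SU}(2)$ of \cref{UnitQuaternionsAsSU2}, the polar parametrization $(r,u) \mapsto -\cos(\pi r) - \sin(\pi r) u$ is precisely the standard ``latitude'' homeomorphism $D^3/\partial D^3 \xrightarrow{\sim} S(\mathbb{H})$ sending the center of the ball to the south pole $-1$ and sweeping out the 3-sphere exactly once with degree $+1$.

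I would close by invoking the classical identifications $\pi_3(\mathrm{U}) \simeq \pi_3(\mathrm{SU}(2)) \simeq \pi_3(S^3) \simeq \mathbb{Z}$, with generator represented by the defining inclusion $S^3 \xrightarrow{\sim} \mathrm{SU}(2) \hookrightarrow \mathrm{U}$; the preceding paragraph identifies $\widetilde{f}$ with this generator (up to a sign convention that can be absorbed into the orientation of $S^3$), so the stabilized Pauli map represents $\pm 1 \in \mathbb{Z}$. The main obstacle, and the only genuinely delicate point, is bookkeeping the zoo of basepoint conventions and sign choices along the zigzag of \cref{ZigZagFromFModF0ToFredPlus,DescendedAtiyahSingerExponentialMap}, so that ``representing $1 \in \mathbb{Z}$'' is asserted with a fixed orientation and a fixed choice of Bott generator --- once these are pinned down (for instance by normalizing the Bott isomorphism via the class of the tautological line bundle on $S^2 \simeq \mathbb{C}P^1$, compatibly with \cref{TautologicalHP1IndexedFredholm}), the degree computation above concludes the proof.
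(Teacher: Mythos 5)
Your proposal is correct and follows essentially the same route as the paper's own proof: transport the map across the Atiyah--Singer exponential equivalence of \cref{DescendedAtiyahSingerExponentialMap}, observe that after stabilization it reduces to the exponential $\mathfrak{su}(2) \to \mathrm{SU}(2)$ collapsing the boundary of the unit ball, and conclude by the standard identification $\pi_3(\mathrm{U}) \simeq \pi_3(\mathrm{SU}(2)) \simeq \mathbb{Z}$ with generator the inclusion $\mathrm{SU}(2) \hookrightarrow \mathrm{U}$. You spell out the polar-coordinate degree computation and the basepoint shift from $-\mathrm{id}$ to $\mathrm{id}$ more explicitly than the paper does, which compresses these steps into the single remark that the composite is ``up to stabilization, just the exponential map from the unit ball in $\mathfrak{su}(2)$ onto $\mathrm{SU}(2) \simeq S^3$''; your caveat about pinning down the sign/orientation convention is a fair observation that the paper elides.
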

\begin{proof}
Since the exponential map \cref{DescendedAtiyahSingerExponentialMap} is a homotopy equivalence, it is sufficient to see that the composite
\begin{equation}
  \label{TheMapFromS3ToUnitaryFredholmGroup}
  \begin{tikzcd}[
    row sep=-1pt, column sep=3pt,
    ampersand replacement=\&
  ]
    S^3
    \simeq
    D(\mathbb{H}_{\mathrm{im}})/
    \partial
    \ar[
      rr,
      "{ \mathrm{stab} }"
    ]
    \&\& 
    \mathcal{F}/\mathcal{F}_0
    \ar[
      rr,
      "{
        \exp\left(
          \mathrm{i}\pi(-)
        \right)
      }"
    ]
    \&\&
    -\mathrm{U}^c(\HilbertSpace)
    \\
    x 
      \&\longmapsto\& 
    -\mathrm{i} \CliffordElement_x
    \oplus
    \bigoplus_{\mathbb{N}}
    \scalebox{.8}{$
    \left(
    \begin{matrix}
      + 1 & ~0
      \\
      ~ 0 & -1
    \end{matrix}
    \right)
    $}
    \&\longmapsto\&
    e^{\pi \CliffordElement_x}
    \oplus
    (- \mathrm{id}
      _{\bigoplus_{\mathbb{N}}\mathbb{C}^2})
  \end{tikzcd}
\end{equation}
represents the generator of $\pi_3\big(\mathrm{U}^c(\HilbertSpace)\big) \simeq \mathbb{Z}$. But, as seen from its components shown in the second row, this map is, up to stabilization, just the exponential map from the unit ball in the Lie algebra $\mathfrak{su}(2)$ onto $\mathrm{SU}(2) \simeq S^3$ and as such represents the generator of $\pi_0\big\{S^3 \dashrightarrow S^3\big\} \simeq \mathbb{Z}$, by \cref{WHEfromStableUnitaryToUnitaryFredholm}.
\end{proof}

\begin{remark}
  \label[remark]
    {ComparingToHoravaWitten}
  The terms $-\mathrm{i}\gamma_x$ and  $e^{\pi \gamma_x}$ in \eqref{TheMapFromS3ToUnitaryFredholmGroup} clearly want to correspond with the expressions in \parencites[(3.7, 3.17)]{Horava1998}[(3.2)]{Witten2001}, under the syntactic translation
  $$
    -\mathrm{i}\gamma_x
    \;\leftrightarrow\;
    \frac{\vec \sigma \cdot \vec x}
    {\vert x \vert}
    \mathrlap{\,.}
  $$
  In \cite{Horava1998}, it had remained open in which spaces these expressions are meant to take values and how they actually represent the claimed K-theory classes. In particular, \cite{Witten2001} pointed out that some argument for trivialization of these expressions at $\vert x \vert  = 1$ was missing, and suggested that the resolution has to do with Kuiper's theorem. But the spaces in which this would happen were still not declared, nor a reason given for why the result represents a class in $\mathrm{KU}^1(S^3)$, and specifically the generator. We suggest that \cref{RepresentingTheGeneratorOfKU1S3} fills these gaps and thereby completes this old argument.  
\end{remark}



\section{Conclusion and Outlook}

Motivated by a survey (\cref{OverviewChargesInCohomology}) of
\begin{enumerate}
\item
how \emph{fragile} topological phases and \emph{microscopic} brane charges are classified by twisted relative \emph{nonabelian}/\emph{unstable} generalized orbi-cohomology, 
\\
specifically by \emph{Cohomotopy}, for 2-band insulators and for M-branes;

\item 
how coarsened measurement of these in \emph{abelian}/\emph{stable} generalized orbi-cohomology is equivalent to four/ten-dimensional universal \emph{complex orientations}, 
\end{enumerate}
we laid out a detailed and explicit unraveling of what this means in the case of measuring twisted relative Cohomotopy charges in relative topological K-theory.

To this end we:
\begin{enumerate}
\item
gave a streamlined pedagogical introduction to topological groupoids, stacks and orbifolds (\cref{TopologicalGroupoidsAndStacks})

\item 
culminating in a general discussion of (generalized nonabelian) orbifold cohomology (\cref{OrbifoldCohomology}),

\item
specialized this to a new elegant model for \emph{twisted orbi-orientifold K-theory}, based on classifying spaces of Fredholm operators equivariant under quantum symmetries (\cref{OrbifoldKTheory}),

\item
spelled out in this model explicitly the four/ten-dimensional equivariant complex/quaternionic orientation, by lifting the tautological line bundles over the $\mathbb{C}$/$\mathbb{H}$ Hopf fibrations to Fredholm operators (\cref{TheEquivariantLineBundle,TheEquivariantOrientation}).
\end{enumerate}

As examples and applications we discussed:
\begin{enumerate}
\item
a refined classification of 2-band topological insulators taking into account the topology of the gapping process of their nodal line semi-metal parent phase (\cref{RevisitingFragileTopologicalPhases}),
\item
the measurement of brane charges on/in M5- probe worldvolumes (such as sourced by \emph{M-strings}), in 3-Cohomotopy and coarsened in $\mathrm{KU}^1$ (\cref{RevisitingBraneCharges}, in the process completing an old argument for how to describe D6-brane charge in $\mathrm{KU}^1$, in \cref{3SphereOfSelfAdjointFredholmOps}).
\end{enumerate}

These developments serve to help bridge that gap between:
\begin{enumerate}
\item the relatively more popular discussion of topological phases/charges measured in abelian/stable generalized cohomology theories such as K-theory,

\item their fine-grained fragile/microscopic classification in nonabelian/unstable generalized cohomology theories such as Cohomotopy.
\end{enumerate}

In particular, the motivating result --- that measurement of twisted relative Cohomotopy charges in stable cohomology $E$ corresponds to low-dimensional complex $E$-orientation --- indicates that the role of topological K-theory here is generally to be regarded as just the first approximation step in a \emph{chromatic tower} of complex-oriented cohomology theories, which continues with \emph{elliptic cohomology}, proceeds over a tower of \emph{Morava K-theories}, and ultimately culminates in \emph{complex cobordism cohomology}. 

This perspective, rooted in unstable Cohomotopy, may help better understand existing proposals for the role of higher chromatic cohomology in M-theory and, by the general correspondence to topological quantum systems, may bring these higher chromatic concepts to bear also on the analysis of topological phases of matter.



\clearpage

\printbibliography

\end{document}